\documentclass[10pt]{article}

\usepackage[
  top=0.85in,
  bottom=0.85in,
  left=0.9in,
  right=0.9in
]{geometry}

\usepackage[noadjust]{cite}
\usepackage{amsmath,amssymb,amsthm,mathtools}
\usepackage{bm}
\usepackage{enumitem}
\usepackage{xcolor}
\usepackage{listings}
\usepackage{silence}
\usepackage{titlesec}

\allowdisplaybreaks[4]

\setlist{itemsep=1pt,topsep=2pt,parsep=0pt,partopsep=0pt}
\setlist[enumerate,1]{label=(\roman*),ref=\roman*}
\setlist[enumerate,2]{label=(\alph*),ref=\theenumi(\alph*)}

\titlespacing*{\section}
{0pt}{10pt}{5pt}

\titlespacing*{\subsection}
{0pt}{7pt}{3pt}

\makeatletter

\@addtoreset{equation}{section}
\makeatother

\newtheorem{theorem}{Theorem}[section]
\newtheorem{lem}[theorem]{Lemma}
\newtheorem{cor}[theorem]{Corollary}
\newtheorem{prop}[theorem]{Proposition}
\newenvironment{lemma}{\begin{lem}}{\end{lem}}
\newenvironment{proposition}{\begin{prop}}{\end{prop}}
\theoremstyle{definition}
\newtheorem{Definition}[theorem]{Definition}
\newtheorem*{example}{Example}

\newtheorem*{remarknonumber}{Remark}

\newcommand{\F}{\mathbb{F}}
\newcommand{\Hull}{\operatorname{Hull}}
\newcommand{\N}{\operatorname{N}}
\newcommand{\Span}{\operatorname{Span}}

\lstdefinestyle{magma}{
  basicstyle=\ttfamily\footnotesize,
  keywordstyle=\color{blue!60!black},
  commentstyle=\color{green!45!black},
  stringstyle=\color{purple!70!black},
  breaklines=true,
  frame=single,
  columns=fullflexible,
  showstringspaces=false
}

\begin{document}
	
\title{Subcodes with Prescribed Hull Dimensions in Linear Codes: Enumeration, Containment, and Asymptotic Distributions\thanks{E-mail addresses: xlh@mails.ccnu.edu.cn (L. Xiong), hwliu@ccnu.edu.cn (H. Liu), l\_r\_lin86@163.com (L. Lin)}}
\author{
Linhan Xiong$^{1}$,~Hongwei Liu$^{1}$,~Liren Lin$^{2}$
}

\date{
\small
$^{1}$School of Mathematics and Statistics,
Central China Normal University,
Wuhan 430079, China\\
$^{2}$School of Cyber Science and Technology,
Hubei University,
Wuhan 430062, China
}
\maketitle
\begin{abstract}
Mass formulas for prescribed hull dimensions have mainly been studied for linear codes viewed as subspaces of the whole space. In this paper, we consider the corresponding internal enumeration problem for an arbitrary fixed parent code. Let $C$ be an $[n,k]_q$ linear code with an $\ell$-dimensional hull, and let $s$ denote the dimension of the subcodes being counted. We determine the number of subcodes with prescribed dimension, hull dimension, and type over finite fields of both even and odd characteristic. Our approach is based on decomposing the parent code into an {\rm LCD} complement and its hull, and then using projection to reduce the enumeration problem to the corresponding problem inside the {\rm LCD} complement. This yields explicit formulas refined by complete type and shows that the resulting counts depend only on the relevant parameters, rather than on the particular parent code or its length. The known whole-space mass formulas and the enumeration of {\rm LCD} subcodes of {\rm LCD} codes arise as special cases.

Motivated by code extension, we also study the dual containment problem: for a fixed code, we enumerate larger codes containing it with prescribed hull dimension and type. By duality, this problem is converted into the subcode enumeration problem already solved above, leading to corresponding conditional mass formulas.

Finally, we investigate the asymptotic distributions arising from the exact enumeration formulas. When both the subcode dimension $s$ and $k-\ell-s$ tend to infinity, the hull distribution is governed by the {\rm LCD} complement, and the limiting proportions are explicit constants determined by the field, the prescribed hull dimension, and the relevant type. When $s$ approaches or exceeds the dimension $k-\ell$ of the {\rm LCD} complement, we determine the asymptotic proportion of subcodes attaining the minimum possible hull dimension and establish the associated concentration behavior. Analogous asymptotic results are obtained for the containment problem.
\end{abstract}

\noindent\textbf{Keywords:} hull; {\rm LCD} codes; subcodes; mass formula; containment enumeration; asymptotic distribution

2020 \emph{Mathematics Subject Classification}:  94B05, 05A15

\section{Introduction}\label{sec-intro}

The hull of a linear code was introduced by Assmus and Key~\cite{hull}. 
For a linear code $C\subseteq \mathbb{F}_q^n$, its Euclidean hull is
\[
\operatorname{Hull}(C)=C\cap C^\perp.
\]
It records the self-orthogonal part contained in $C$. The two extreme
cases are classical: $\operatorname{Hull}(C)=\{0\}$ gives a linear
complementary dual (LCD) code, whereas $C\subseteq C^\perp$ gives a
self-orthogonal code. Prescribing the hull dimension therefore provides
a natural way to interpolate between LCD and self-orthogonal codes.
Hulls also arise in the construction of entanglement-assisted quantum
error-correcting codes; see, for example,~\cite{Dcc-k}. In addition, LCD
codes have applications in countermeasures against side-channel and
fault-injection attacks~\cite{C2,lcd-application}.

LCD codes were introduced by Massey~\cite{LCD-Massey}. Sendrier investigated
hull dimensions and proved that LCD codes meet the asymptotic
Gilbert--Varshamov bound~\cite{sendrier hull,GV bound}. A renewed interest in LCD codes
was stimulated by Carlet et al.~\cite{C1}, who showed that every linear
code over $\mathbb{F}_q$ with $q>3$ is equivalent to a Euclidean LCD
code. Since then, LCD codes and linear codes with prescribed hull
dimensions have been studied from structural, constructive,
classification, enumerative, and distance-optimal viewpoints; see, for
instance,~\cite{A1,A2,A3,An-LCD-embed,BLCD,binaryLCD,Galvez-LCD,Li cc,FFA,Wang}.

Subcodes of structured codes form another natural object in coding
theory. They occur in the study of optimum distance profiles and in
constructions that pass from a known code to smaller codes with
prescribed properties. Luo et al.~\cite{ODPs} introduced optimum distance
profiles. Freibert and Kim studied optimal subcodes and optimum distance
profiles first for formally self-dual codes and then for self-dual
codes~\cite{subcode-formal,subcode}, while Harada and Saito~\cite{Hcc} considered
subcodes of LCD codes. These works motivate studying how a
coding-theoretic property is distributed inside a fixed parent code,
rather than only among all subspaces of an ambient vector space.

A related line of work concerns the largest possible minimum distance
under orthogonality or hull constraints. For binary LCD codes, Galvez
et al.~\cite{Galvez-LCD} determined exact optimal distances for all lengths up
to twelve. Kim~\cite{Kim-hull-opt} studied binary optimal linear codes with
prescribed hull dimensions, and Li, Shi and Kim~\cite{Li-Shi-Kim-h1} developed a
detailed theory for the one-dimensional-hull case. At the
self-orthogonal boundary, shortest embeddings and optimal minimum
distances were investigated in~\cite{Kim-Kim-Lee-SO,Kim-Choi-SO,Shi-Li-Kim-SO,Shi-Li-Helleseth-Kim-SO}, while related
classifications of binary self-orthogonal codes were obtained
in~\cite{Shi-Liu-Kim-class}. More recently, An et al.~\cite{An-LCD-embed} developed shortest LCD
embeddings over binary, ternary, and quaternary fields and used them to
construct distance-optimal LCD codes. Additive self-dual codes over
fields of even order provide a broader related setting in which several
notions of duality and self-duality can be considered~\cite{Dougherty-Kim-Lee-additive}. These
distance-oriented results are not needed for our enumeration formulas,
but they provide useful context for the computational examples below,
where the best minimum distance attained in each prescribed-hull
subcode layer is compared with the corresponding global fixed-hull
optimum.

Mass formulas form an important enumerative line of research in coding
theory. Pless obtained classical mass formulas for self-orthogonal
codes using finite-geometric methods~\cite{Pless-mass-F2,Pless-mass-F}.
Sendrier subsequently developed a general mass formula for linear codes
with prescribed Euclidean hull dimension~\cite{sendrier hull}. The LCD
case, corresponding to trivial hull, was further studied by Carlet
et al.~\cite{binaryLCD}, who derived explicit mass formulas and related
structural and asymptotic results, while Liu and Wang~\cite{FFA}
obtained further structural descriptions and enumerative formulas.
Building on the prescribed-hull enumeration initiated by Sendrier, Li
and Shi~\cite{Li-hull-even} and Li, Shi and Ling~\cite{Li-hull-odd}
derived more explicit closed mass formulas in even and odd
characteristic, respectively, together with refinements and
classification consequences.

A broader finite-geometric formulation was developed by
Gluesing-Luerssen and Ravagnani~\cite{bilinear-space}, who enumerated
$\ell$-complementary subspaces in arbitrary nondegenerate finite
bilinear spaces. Thus, their framework contains, at the aggregate
level, prescribed-hull enumeration whenever the ambient code itself
carries a nondegenerate restricted bilinear form, as is the case for
an LCD code. Bouyuklieva, Bouyukliev and
Ozbudak~\cite{BBO-hull-sequence} studied the sequence of numbers of
linear codes of fixed length and dimension as the hull dimension
varies. They proved that, for $n\geq 2k$, the number of $[n,k]_q$
linear codes decreases as the hull dimension increases, and also
presented classification results for binary and ternary LCD and
self-orthogonal codes for several parameter sets.

The coding-theoretic mass formulas cited above count subspaces of the
whole Euclidean space $\mathbb{F}_q^n$. By contrast, the problem in this
paper is internal to a fixed parent code. We fix an arbitrary linear
code $C$, which is not assumed to be LCD, and ask how many of its
subcodes have prescribed dimension, hull dimension, and type. From this
viewpoint, the usual whole-space mass formula corresponds to the special
choice $C=\mathbb{F}_q^n$, whereas the LCD-subcode enumeration studied
in~\cite{x-lcd} corresponds to the special situation in which both the
parent code and the subcodes being counted have zero hull.

More precisely, let $C$ be an $[n,k]_q$ linear code with an
$\ell$-dimensional hull and type $\varepsilon$. For every admissible
subcode type $\tau$, we determine
\[
N_q\bigl[(s,t)_\tau;(k,\ell)_\varepsilon\bigr],
\]
the number of $s$-dimensional subcodes $D\subseteq C$ satisfying
\[
\dim\operatorname{Hull}(D)=t
\]
and having type $\tau$. The formulas show that this number depends only
on
\[
q,\qquad (s,t)_\tau,\qquad (k,\ell)_\varepsilon,
\]
and is independent of the particular representative of the parent code
and, in particular, of its length. Thus the prescribed-hull
distribution inside a fixed parent code is determined entirely by the
displayed parameters.

The main structural ingredient is the decomposition
\[
C={\rm L}\oplus \operatorname{Hull}(C),
\]
where ${\rm L}$ is an LCD complement of the hull. Let
\[
\Pi_{\rm L}:C\longrightarrow {\rm L}
\]
denote the projection onto ${\rm L}$ along $\operatorname{Hull}(C)$. For every
subcode $D\subseteq C$, the difference between the dimension of $D$ and
the dimension of its hull is preserved under projection. Equivalently,
\[
\dim\operatorname{Hull}(D)
=
\dim\bigl(D\cap\operatorname{Hull}(C)\bigr)
+
\dim\operatorname{Hull}\bigl(\Pi_{\rm L}(D)\bigr).
\]
Consequently, if
\[
m=\dim\bigl(D\cap\operatorname{Hull}(C)\bigr),
\]
then $\Pi_{\rm L}(D)$ has dimension $s-m$ and hull dimension $t-m$. For a
fixed projected subcode, the number of corresponding subcodes of $C$
is given by a Gaussian-coefficient factor. The fixed-parent problem is
therefore reduced to prescribed-hull enumeration inside the LCD
complement together with the enumeration of possible intersections
with $\operatorname{Hull}(C)$.

We first carry out this program in even characteristic. In this case we
retain the complete types ${\bf oo}$, ${\bf oe}$, ${\bf eo}$, and ${\bf ee}$. Starting from the
whole-space mass formulas and the LCD-subcode formulas, we determine
the prescribed-hull distributions inside LCD parent codes and then use
the projection described above to obtain the corresponding formulas
inside arbitrary parent codes. We then treat odd characteristic, where
the two types $+$ and $-$ are retained. The same projection principle
again reduces the arbitrary-parent problem to the corresponding
enumeration inside an LCD complement. In both characteristics, the
resulting formulas recover the known whole-space prescribed-hull mass
formulas when $C=\mathbb{F}_q^n$ and recover the LCD-subcode problem
when $t=0$.

Motivated by code extension, we also consider the dual containment
problem. Let $D\subseteq\mathbb{F}_q^n$ be a fixed code, and suppose that
one wishes to count $k$-dimensional codes $C$ satisfying
\[
D\subseteq C,\qquad \dim\operatorname{Hull}(C)=\ell,
\]
with a prescribed type. This problem is naturally related to the
subcode enumeration problem through
\[
D\subseteq C
\quad\Longleftrightarrow\quad
C^\perp\subseteq D^\perp.
\]
Since duality preserves hull dimension and transforms the relevant
types explicitly, the fixed-parent subcode formulas yield corresponding
conditional mass formulas for codes containing a prescribed subcode.
If $D$ has parameters $(s,t)_\tau$, we denote the resulting extension
count by
\[
N_{q,n}\bigl[(k,\ell)_\varepsilon\mid (s,t)_\tau\bigr].
\]

A further part of the paper is devoted to the asymptotic consequences
of the exact enumeration formulas. For a parent code with parameters
$(k,\ell)$, put
$
a=k-\ell-s.
$
The parameter $a$ measures the difference between the dimension of the
LCD complement and the dimension of the subcodes being counted, and it
organizes the asymptotic behavior into several regimes.

When both $s$ and $a$ tend to infinity, subcodes meeting
$\operatorname{Hull}(C)$ nontrivially become asymptotically negligible
after normalization. The hull distribution is therefore governed by
the LCD complement, yielding explicit limiting constants and the
corresponding dominant-type statements. For an LCD parent, this reduces
to the usual regime in which both the subcode dimension and its
codimension tend to infinity.

The behavior changes near the threshold $s=k-\ell$. If $a>0$ is fixed
while both $k-\ell$ and $\ell$ grow, we determine the asymptotic
proportion of subcodes attaining the minimum possible hull dimension.
At the threshold and above it, namely when $a\leq 0$, every
$s$-dimensional subcode has hull dimension at least $-a$. We determine
the exact number of subcodes attaining this minimum, obtain explicit
limits when $a$ is fixed, and show that when $a\to-\infty$ almost all
subcodes attain the minimum possible hull dimension. Under additional
dimension assumptions, the complete type also concentrates. Thus the exact formulas lead to a three-regime asymptotic picture for the principal scaling limits, organized by the difference
$
a=k-\ell-s.
$

The same asymptotic picture transfers to the containment problem by
duality. If the prescribed subcode $D$ has dimension $s$ and hull
dimension $t$, the corresponding parameter is
$
b=k-s-t.
$
When the number of added dimensions is sufficiently large relative to
$t$, the extension distribution is governed by the LCD complement of
$D^\perp$. Near the transition range one obtains explicit
minimum-hull proportions, whereas for $b\to-\infty$ almost all
extensions attain the minimum possible hull dimension $-b$.

The paper is organized as follows. Section~2 introduces the notation, type classifications, puncturing and projection tools, and LCD-complement decompositions used throughout the paper. Sections~3 and~4 derive the exact subcode enumeration formulas in even and odd characteristic, respectively. Section~5 develops the containment enumeration through duality and then studies the asymptotic behavior of prescribed-hull subcodes and extensions across the finite-dimensional, large-dimensional, fixed-difference, and concentration regimes. Section~6 concludes the paper. Appendix~A contains the proofs of the main results from Subsections~5.2 and~5.3, while Appendix~B presents binary and ternary computational examples illustrating the prescribed-hull subcode distributions and the best minimum distances attained in the corresponding subcode layers.

\section{Preliminaries}\label{sec-1}

In this section, we fix the notation used throughout the paper and
collect several structural results on hulls, types, puncturing maps,
projection maps, and LCD complements of hulls.

\subsection{Basic notation and definitions}

Throughout this paper, we adopt the following notation. Let $O$ denote a zero matrix of an appropriate size, $\mathbf{1}_n$ the all-one vector of length $n$, and $E_n$ the identity matrix of order $n$. For any matrix $A=(a_{ij})$ over $\F_q$, its transpose is denoted by $A^T=(a_{ji})$.

Let $\F_q$ be the finite field with $q$ elements, where $q$ is a prime power, and let $\F_q^*=\F_q\setminus\{0\}$ denote its multiplicative group. The vector space spanned by $\bm\alpha_1,\bm\alpha_2,\ldots,\bm\alpha_r$ is denoted by
\[
\Span\{\bm\alpha_1,\bm\alpha_2,\ldots,\bm\alpha_r\}.
\]
For a finite set $S$, its cardinality is denoted by $|S|$.
For a real number $x$, the floor $\lfloor x\rfloor$ denotes the greatest
integer not exceeding $x$, while the ceiling $\lceil x\rceil$ denotes the
least integer not smaller than $x$.

For integers $n$ and $k$, the $q$-binomial coefficient, also called the Gaussian coefficient, is defined by
\[
\genfrac{[}{]}{0pt}{}{n}{k}_q=
\begin{cases}
\displaystyle
\frac{(q^n-1)(q^n-q)\cdots(q^n-q^{k-1})}
{(q^k-1)(q^k-q)\cdots(q^k-q^{k-1})},
&\text{if $1\leq k\leq n$},\\[3mm]
1,&\text{if $k=0$ and $n\geq0$},\\
0,&\text{otherwise.}
\end{cases}
\]
The Gaussian coefficient $\genfrac{[}{]}{0pt}{}{n}{k}_q$ is the number of $k$-dimensional subspaces of $\F_q^n$. It satisfies
\[
\genfrac{[}{]}{0pt}{}{n}{k}_q=\genfrac{[}{]}{0pt}{}{n}{n-k}_q.
\]

Let $\operatorname{GL}_n(\F_q)$ denote the general linear group of degree $n$ over $\F_q$, consisting of all $n\times n$ invertible matrices over $\F_q$. The orthogonal group is defined by
\[
\mathbb{O}_n(q)
=
\left\{
Q\in\operatorname{GL}_n(\F_q)
\mid
QQ^T=E_n
\right\}.
\]

An $[n,k]_q$ linear code $C$ is a $k$-dimensional subspace of $\F_q^n$, and a subcode of $C$ is a linear subspace of $C$. A generator matrix of $C$ is a $k\times n$ matrix whose rows form a basis of $C$. For a matrix $G$, its row space is denoted by $\operatorname{Row}(G)$.

For a vector $\mathbf{c}\in\F_q^n$, its Hamming weight, denoted by $\operatorname{wt}(\mathbf{c})$, is the number of its nonzero coordinates. The minimum distance of a nonzero linear code $C$ is defined by
\[
d(C)
=
\min\left\{
\operatorname{wt}(\mathbf{c})
\mid
\mathbf{c}\in C\setminus\{\mathbf{0}\}
\right\}.
\]

If $d(C)=d$, then $C$ is also called an $[n,k,d]_q$ linear code.

For two linear codes $C_1,C_2\subseteq\F_q^n$, their sum is defined by
\(
C_1+C_2
=
\left\{
\mathbf{c}_1+\mathbf{c}_2
\mid
\mathbf{c}_1\in C_1,\ 
\mathbf{c}_2\in C_2
\right\}.
\)
If $C_1\cap C_2=\{\mathbf{0}\}$, then the sum is called a direct sum and is denoted by $C_1\oplus C_2$.

For any linear code $C\subseteq\F_q^n$ and any $n\times n$ matrix $Q$ over $\F_q$, the linear code $CQ$ is defined by
\(
CQ
=
\left\{
\mathbf{c}Q
\mid
\mathbf{c}\in C
\right\}.
\)

The Euclidean inner product of two vectors $\mathbf{x}=(x_1,\ldots,x_n)$ and $\mathbf{y}=(y_1,\ldots,y_n)$ in $\F_q^n$ is defined by
\(
\langle\mathbf{x},\mathbf{y}\rangle
=
\sum_{i=1}^n x_i y_i.
\)
A vector $\mathbf{x}$ is self-orthogonal if
\(
\langle\mathbf{x},\mathbf{x}\rangle=0,
\)
and is non-self-orthogonal otherwise.

For any linear code $C\subseteq\F_q^n$, its Euclidean dual code is defined by
\[
C^\perp
=
\left\{
\mathbf{x}\in\F_q^n
\mid
\langle\mathbf{x},\mathbf{c}\rangle=0
\text{ for all }\mathbf{c}\in C
\right\}.
\]
A linear code $C$ is self-orthogonal if $C\subseteq C^\perp$.

The hull of a linear code $C$ is defined by
\(
\Hull(C)
=
C\cap C^\perp.
\)
If $\Hull(C)=\{\mathbf{0}\}$, then $C$ is called a linear complementary dual ({\rm LCD}) code.

When $q$ has even characteristic, a vector $\mathbf{x}=(x_1,\ldots,x_n)$ is even-like if
\(
\sum_{i=1}^n x_i=0,
\)
and is odd-like otherwise. A linear code is even-like if all its codewords are even-like, and is odd-like otherwise.

We use $C$ for a fixed parent code and $D$ for a subcode being counted. When only the parameters and type of the parent matter, $C^{k,\ell}_{\varepsilon}$ denotes an arbitrary fixed $[n,k]_q$ code of type $\varepsilon$ with an $\ell$-dimensional hull; the length is suppressed when it is clear from the context. The notation ${\rm L}^k_\varepsilon$ is reserved for an {\rm LCD} code of type $\varepsilon$. Whenever a decomposition $C={\rm L}\oplus\Hull(C)$ is used, ${\rm L}$ denotes an {\rm LCD} complement of the hull. The letters $V$ and $W$ are used only in the general direct-sum and projection statements of Section~\ref{sec-1}; in their application to a hull decomposition one takes $W=\Hull(C)$ and $V={\rm L}$. Thus generic statements are written in terms of $C$ and $D$, whereas $C^{k,\ell}_{\varepsilon}$ and ${\rm L}^k_\varepsilon$ are used when a particular type or a standard representative is required. The admissible
type sets are
\[
\{{\bf oo},\quad {\bf oe},\quad {\bf eo},\quad {\bf ee}\}
\quad\text{if $q$ is even},
\qquad
\{+,\quad -\}
\quad\text{if $q$ is odd}.
\]
The precise type definitions are given in
Subsections~\ref{subsec:even-types} and~\ref{subsec:odd-types}.

Let
\[
\mathcal{S}^t(s;C)
=
\left\{
D\subseteq C
\mid
\dim(D)=s,\ 
\dim(\Hull(D))=t
\right\}
\]
denote the set of all $s$-dimensional subcodes of $C$ with a $t$-dimensional hull, and let
\(
\N^t(s;C)
=
|\mathcal{S}^t(s;C)|.
\)
For a complete type $\tau$, let
\[
\mathcal{S}^t_{\tau}(s;C)
=
\left\{
D\in\mathcal{S}^t(s;C)
\mid
D\text{ is of type }\tau
\right\},
\]
and define
\(
\N^t_{\tau}(s;C)
=
|\mathcal{S}^t_{\tau}(s;C)|.
\)
These set-based symbols will be retained in proofs when a specific
representative is needed. At that level,
\(
\N^t(s;C)
=
\sum_{\tau}\N^t_{\tau}(s;C),
\)
where the sum is taken over all complete types.

For the final parameterized counts, let $C$ be an $[n,k]_q$ linear
code with an $\ell$-dimensional hull and type $\varepsilon$. We use
\[
\N_q\bigl[(s,t)_\tau;(k,\ell)_\varepsilon\bigr]
\]
to denote the number of $s$-dimensional subcodes $D\subseteq C$ having
a $t$-dimensional hull and type $\tau$. Here $(s,t)_\tau$ describes
the subcode being counted, whereas $(k,\ell)_\varepsilon$ describes
the parent code; the two parameter groups are separated by a semicolon.
The formulas established in Sections~\ref{sec-2} and~\ref{sec-3} show
that this number depends only on
\[
q,\qquad (s,t)_\tau,\qquad (k,\ell)_\varepsilon,
\]
and is independent of the particular choice and the length of $C$.
When the subcode type is not distinguished, we use the type-summed
shorthand
\[
\N_q\bigl[(s,t);(k,\ell)_\varepsilon\bigr]
:=
\sum_\tau
\N_q\bigl[(s,t)_\tau;(k,\ell)_\varepsilon\bigr],
\]
where the sum is over all admissible complete subcode types. Thus, for
a particular representative $C^{k,\ell}_\varepsilon$, this shorthand
equals $\N^t(s;C^{k,\ell}_\varepsilon)$.

For the containment problem, let $D\subseteq\F_q^n$ be a fixed
$[n,s]_q$ linear code with a $t$-dimensional hull and type $\tau$. We
use
\[
\N_{q,n}\bigl[(k,\ell)_\varepsilon\bigm|(s,t)_\tau\bigr]
\]
to denote the number of $k$-dimensional codes $C\subseteq\F_q^n$
having an $\ell$-dimensional hull and type $\varepsilon$ such that
$D\subseteq C$. In this notation, $(k,\ell)_\varepsilon$ describes
the extension code being counted, while $(s,t)_\tau$ describes the
fixed contained code. The vertical bar indicates the containment
condition. Unlike the subcode count, the containment count generally
depends on the length $n$, and hence $n$ is retained in the subscript.
The corresponding type-summed shorthand is
\[
\N_{q,n}\bigl[(k,\ell)\bigm|(s,t)_\tau\bigr]
:=
\sum_\varepsilon
\N_{q,n}\bigl[(k,\ell)_\varepsilon\bigm|(s,t)_\tau\bigr].
\]
The duality theorem in Section~\ref{sec-4} shows that these extension
counts depend on the fixed code $D$ only through
$n,q,(s,t)_\tau$.

For a fixed parent
code $C\subseteq\F_q^n$, we define
\[
d_q(s,t;C)
=
\max\left\{
d(D):
D\subseteq C,\ 
\dim(D)=s,\ 
\dim\Hull(D)=t
\right\}.
\] Thus $d_q(s,t;C)$ is the largest minimum distance among all
$[n,s]_q$ subcodes of $C$ with a $t$-dimensional hull.
This notation is used only when \(\mathcal{S}^t(s;C)\neq\varnothing\).
\subsection{Gram matrices and hull dimensions}
\label{subsec:gram-hull}

The following standard result characterizes the hull dimension of a
linear code in terms of the rank of the Gram matrix of its generator
matrix. It will be used repeatedly to compare hull dimensions under
puncturing and projection and to interpret the type classifications
below.

\begin{lem}[{\cite[Proposition 3.1]{Dcc-k}}]\label{GG^T}
Let $C$ be an $[n,k]_q$ linear code with generator matrix $G$. Then $C$
has an $\ell$-dimensional hull if and only if
\(
\operatorname{rank}(GG^T)=k-\ell.
\)
\end{lem}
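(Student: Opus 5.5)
The plan is to identify $\Hull(C)$ with the kernel of the linear map $\mathbb{F}_q^k\to\mathbb{F}_q^k$, $\mathbf{x}\mapsto \mathbf{x}GG^T$, via the coordinate isomorphism $\phi:\mathbb{F}_q^k\to C$, $\mathbf{x}\mapsto \mathbf{x}G$, and then apply rank--nullity. Since the rows of $G$ form a basis of $C$, the map $\phi$ is a linear bijection. The whole argument reduces to checking that $\phi$ carries the left null space of $GG^T$ exactly onto $C\cap C^\perp$.

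The first step is the identification. A codeword $\mathbf{c}=\mathbf{x}G$ lies in $C^\perp$ if and only if $\langle \mathbf{c},\mathbf{g}_i\rangle=0$ for every row $\mathbf{g}_i$ of $G$, because the rows span $C$. Collecting these $k$ inner products as a row vector, this condition reads $\mathbf{x}GG^T=\mathbf{0}$. Hence
\[
\Hull(C)=\phi\bigl(\{\mathbf{x}\in\mathbb{F}_q^k \mid \mathbf{x}GG^T=\mathbf{0}\}\bigr).
\]

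The second step is the dimension count. Because $\phi$ is injective, we get
\[
\dim\Hull(C)=\dim\{\mathbf{x}\mid \mathbf{x}GG^T=\mathbf{0}\}=k-\operatorname{rank}(GG^T)
\]
by rank--nullity for the $k\times k$ matrix $GG^T$, using that row rank equals column rank. Therefore $\dim\Hull(C)=\ell$ holds if and only if $\operatorname{rank}(GG^T)=k-\ell$, which gives both directions at once.

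There is no real obstacle here. The only point needing care is to use that the rows of $G$ \emph{span} $C$, so that orthogonality to the rows is equivalent to orthogonality to all of $C$. It is also worth noting that the result does not depend on the choice of generator matrix: replacing $G$ by $AG$ with $A\in\operatorname{GL}_k(\mathbb{F}_q)$ changes $GG^T$ to $A\,GG^T A^T$, which has the same rank. This independence is automatic from the argument above, but it is the form in which the lemma will be reused later for projections and puncturings.
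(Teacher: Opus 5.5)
Your proof is correct and complete. It pulls back $\Hull(C)$ through the coordinate isomorphism $\mathbf{x}\mapsto\mathbf{x}G$ to the left null space of $GG^T$ and then applies rank--nullity, which is the standard argument; the paper gives no proof of its own and simply cites \cite[Proposition 3.1]{Dcc-k}, using the lemma in exactly the form $\dim\Hull(D)=\dim D-\operatorname{rank}(G_DG_D^T)$ that your computation yields.
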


\subsection{Types in even characteristic}
\label{subsec:even-types}

Assume that $q$ is even. For a linear code $C$, we use the following
complete type terminology:
\begin{itemize}
    \item $C$ is of type ${\bf oo}$ if $C$ is odd-like and $C^\perp$ is odd-like;
    \item $C$ is of type ${\bf oe}$ if $C$ is odd-like and $C^\perp$ is even-like;
    \item $C$ is of type ${\bf eo}$ if $C$ is even-like and $C^\perp$ is odd-like;
    \item $C$ is of type ${\bf ee}$ if $C$ is even-like and $C^\perp$ is even-like.
\end{itemize}
The four symbols above are the complete types in even characteristic.
The single symbols $\mathbf{o}$ and $\mathbf{e}$ will be used only as
parity labels, indicating that the code itself is odd-like or even-like,
respectively, without specifying the parity of its dual. Accordingly,
for $0\leq t\leq s$, we write
\[
\N^t_\mathbf{o}(s;C)
=
\sum_{\tau\in\{{\bf oo},{\bf oe}\}}
\N^t_\tau(s;C),
\quad
\N^t_\mathbf{e}(s;C)
=
\sum_{\tau\in\{{\bf eo},{\bf ee}\}}
\N^t_\tau(s;C).
\]
Thus $\N^t_\mathbf{o}(s;C)$ and
$\N^t_\mathbf{e}(s;C)$ count odd-like and even-like
subcodes, respectively. These two symbols are abbreviations for
parity-label counts, not complete-type counts. Similarly,
$C_{\mathbf{o}}^{k,\ell}$ and
$C_{\mathbf{e}}^{k,\ell}$ denote odd-like and even-like $[n,k]_q$
linear codes with an $\ell$-dimensional hull, respectively, with no
condition imposed on the parity of their dual codes unless this is
stated explicitly.
In parameterized formulas we use the corresponding type-summed
abbreviations
\[
\N_q\bigl[(s,t)_{\mathbf{o}};(k,\ell)_\varepsilon\bigr]
:=
\sum_{\tau\in\{{\bf oo},{\bf oe}\}}
\N_q\bigl[(s,t)_\tau;(k,\ell)_\varepsilon\bigr],
\quad
\N_q\bigl[(s,t)_{\mathbf{e}};(k,\ell)_\varepsilon\bigr]
:=
\sum_{\tau\in\{{\bf eo},{\bf ee}\}}
\N_q\bigl[(s,t)_\tau;(k,\ell)_\varepsilon\bigr].
\]

\begin{lem}[{\cite{Huffman}}]\label{1_n}
Assume that $C$ is an $[n,k]_q$ linear code and $q$ is even. Then
$C$ is even-like if and only if $\mathbf{1}_n\in C^\perp$.
\end{lem}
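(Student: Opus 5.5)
The plan is to argue directly from the definitions, since the statement is elementary. The key observation is that in even characteristic, the all-one vector detects parity through the inner product. For any $\mathbf{x}=(x_1,\ldots,x_n)\in\F_q^n$,
\[
\langle\mathbf{x},\mathbf{1}_n\rangle=\sum_{i=1}^n x_i .
\]
This holds over any field, so a vector $\mathbf{x}$ is even-like if and only if $\langle\mathbf{x},\mathbf{1}_n\rangle=0$. In other words, the even-like vectors are exactly the elements of $\Span\{\mathbf{1}_n\}^\perp$.

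First, I would unwind both conditions. $C$ is even-like precisely when every codeword $\mathbf{c}\in C$ satisfies $\sum_i c_i=0$. By the observation above, this says $\langle\mathbf{c},\mathbf{1}_n\rangle=0$ for all $\mathbf{c}\in C$. By the definition of the Euclidean dual, that is exactly the statement $\mathbf{1}_n\in C^\perp$.

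Both implications then follow from this single chain of equivalences, so no separate argument is needed for each direction. Equivalently, $C$ is even-like if and only if $C\subseteq\Span\{\mathbf{1}_n\}^\perp$, which by taking duals (using $(C^\perp)^\perp=C$ in finite dimension) is the same as $\Span\{\mathbf{1}_n\}\subseteq C^\perp$.

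There is no genuine obstacle here. The only point requiring care is to note that the hypothesis that $q$ is even enters only through the definition of ``even-like,'' which the paper makes in even characteristic. The equivalence itself does not use the characteristic at all.
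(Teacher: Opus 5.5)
Your argument is correct and complete. The identity $\langle\mathbf{x},\mathbf{1}_n\rangle=\sum_i x_i$ turns ``every codeword of $C$ is even-like'' directly into ``$\mathbf{1}_n$ is orthogonal to every codeword,'' that is, $\mathbf{1}_n\in C^\perp$.

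The paper gives no proof of its own and simply cites \cite{Huffman}, where the fact rests on this same definitional argument, so your route is essentially the expected one. Your appeal to $(C^\perp)^\perp=C$ is not needed: $C\subseteq\Span\{\mathbf{1}_n\}^\perp\Leftrightarrow\mathbf{1}_n\in C^\perp$ already follows from the symmetry of the inner product.
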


In particular, every self-orthogonal code over a field of even
characteristic is even-like.

The next lemma relates the even-like/odd-like distinction to canonical
Gram matrices and separates the nondegenerate part of a code from its
hull.

\begin{lem}[{\cite[Theorem 13]{FFA}}]\label{gmatrix}
Let $C$ be an $[n,k]_q$ linear code over a finite field of even
characteristic, and suppose that
$
\dim\Hull(C)=\ell.
$
Then the following hold.
\begin{enumerate}
    \item If $C$ is odd-like, then there exists a generator matrix $G$ of
    $C$ such that
    \[
    GG^T
    =
    \operatorname{diag}\{\underbrace{1,1,\ldots,1}_{k-\ell},0,\ldots,0\}.
    \]

    \item If $C$ is even-like, then $k-\ell$ is even, and there exists a
    generator matrix $G$ of $C$ such that
    \[
    GG^T
    =
    \operatorname{diag}\{\underbrace{J_2,J_2,\ldots,J_2}_{(k-\ell)/2},0,\ldots,0\},
    \qquad
    J_2=
    \begin{pmatrix}
    0&1\\
    1&0
    \end{pmatrix}.
    \]
\end{enumerate}
\end{lem}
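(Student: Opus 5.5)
The plan is to reduce the statement to the classification of symmetric bilinear forms over a finite field of characteristic $2$. Let $G_0$ be any generator matrix of $C$ and put $M=G_0G_0^T$, a symmetric $k\times k$ matrix. By Lemma~\ref{GG^T}, $\operatorname{rank}M=k-\ell$. Replacing $G_0$ by $PG_0$ with $P\in\operatorname{GL}_k(\F_q)$ replaces $M$ by $PMP^T$, and every generator matrix of $C$ arises this way. So it suffices to find $P$ putting $M$ into the stated normal form, and to identify which form occurs in terms of the parity of $C$.

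First I would split off the radical. Choose a basis of $C$ whose last $\ell$ vectors span $\Hull(C)$; the hull is exactly the radical of the restricted form on $C$. Then $M=\operatorname{diag}\{M_1,O\}$ with $M_1$ a nonsingular symmetric $(k-\ell)\times(k-\ell)$ matrix. This matrix is the Gram matrix of a complement ${\rm L}$ of $\Hull(C)$ in $C$. The next step is to connect parity with the diagonal of $M_1$. Since $q$ is even, $\langle\mathbf{x},\mathbf{x}\rangle=\sum x_i^2=(\sum x_i)^2$, so $\mathbf{x}$ is even-like if and only if it is self-orthogonal. By Lemma~\ref{1_n}, or directly from this identity, $C$ is even-like if and only if every codeword is self-orthogonal, that is, if and only if the form on $C$ is alternating. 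Hull vectors are self-orthogonal, so $C$ is even-like exactly when $M_1$ has zero diagonal (is alternating).

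In the even-like case, $M_1$ is a nonsingular alternating matrix. The standard symplectic basis argument then gives $k-\ell$ even and a congruence of $M_1$ to $\operatorname{diag}\{J_2,\ldots,J_2\}$. Concretely, pick $\mathbf{u}$ and then $\mathbf{v}$ with $\langle\mathbf{u},\mathbf{v}\rangle=1$. The span of $\mathbf{u},\mathbf{v}$ has Gram matrix $J_2$ and is nondegenerate, so ${\rm L}$ splits as this span plus its orthogonal complement in ${\rm L}$, and we induct. The step I expect to be the main obstacle is the odd-like case, where $M_1$ is nonsingular and not alternating and we must show it is congruent to $E_{k-\ell}$. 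I would argue as follows.
\begin{enumerate}
\item Pick $\mathbf{x}\in{\rm L}$ with $\langle\mathbf{x},\mathbf{x}\rangle=a\neq0$. Since every element of $\F_q$ is a square in characteristic $2$, rescaling gives $\langle\mathbf{x},\mathbf{x}\rangle=1$.
\item Split ${\rm L}=\Span\{\mathbf{x}\}\oplus\mathbf{x}^{\perp_{\rm L}}$. If the complement is non-alternating, continue inductively.
\item If the complement is alternating, it is a sum of hyperbolic planes. Then use the identity that $\langle1\rangle\perp J_2$ is congruent to $\operatorname{diag}\{1,1,1\}$. Explicitly, with $\mathbf{x}$ of norm $1$ and $\mathbf{u},\mathbf{v}$ a hyperbolic pair orthogonal to $\mathbf{x}$, the vectors $\mathbf{x}+\mathbf{u}$, $\mathbf{x}+\mathbf{v}$, $\mathbf{x}+\mathbf{u}+\mathbf{v}$ form a basis with Gram matrix $E_3$, which one checks by direct computation.
\end{enumerate}
Repeating step (iii) diagonalizes $M_1$ to $E_{k-\ell}$.

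Finally, I would assemble the basis of ${\rm L}$ obtained above followed by the basis of $\Hull(C)$. This gives a generator matrix $G$ of $C$ with $GG^T$ of the required block-diagonal form in each case. The assertion that $k-\ell$ is even in case (ii) follows because a nonsingular alternating matrix has even size.
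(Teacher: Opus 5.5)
The paper gives no proof of this lemma; it is imported from \cite[Theorem 13]{FFA}, so there is no internal argument to compare against. Your proof is correct and self-contained.

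You split off $\Hull(C)$ as the radical of the restricted form. The identity $\langle\mathbf{x},\mathbf{x}\rangle=(\sum_i x_i)^2$ then shows that $C$ is even-like exactly when the nondegenerate part ${\rm L}$ is alternating, since the cross terms with hull vectors vanish because $\Hull(C)\subseteq C^\perp$. The symplectic-basis induction settles the alternating case, including the evenness of $k-\ell$. In the non-alternating case, every element of $\F_q$ is a square, and together with the congruence $\langle 1\rangle\perp J_2\cong E_3$ this reduces everything to $E_{k-\ell}$. Your vectors $\mathbf{x}+\mathbf{u}$, $\mathbf{x}+\mathbf{v}$, $\mathbf{x}+\mathbf{u}+\mathbf{v}$ do have Gram matrix $E_3$, and they span $\Span\{\mathbf{x},\mathbf{u},\mathbf{v}\}$ because their sum is $\mathbf{x}$. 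This is the standard characteristic-$2$ classification of nondegenerate symmetric bilinear forms, applied to ${\rm L}$.

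The only step worth tightening is ``repeating step (iii)''. State explicitly that the new norm-one vector $\mathbf{x}+\mathbf{u}+\mathbf{v}$ is still orthogonal to the remaining hyperbolic planes, so it can be paired with the next one. Equivalently, its span together with those planes is again nondegenerate and non-alternating of smaller dimension, so induction on $\dim{\rm L}$ closes.

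Your route also exhibits explicitly the {\rm LCD} complement ${\rm L}$ of $\Hull(C)$ with the same parity label as $C$. That is exactly what the paper later extracts from this lemma in Proposition~\ref{lcd-complement-types}.
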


Taking $\ell=0$ in Lemma~\ref{gmatrix}{\rm (ii)} shows, in particular,
that every even-like {\rm LCD} code has even dimension. 

\subsection{Types in odd characteristic}
\label{subsec:odd-types}

Throughout this subsection, $q$ is a power of an odd prime. We begin
with a standard fact concerning squares in finite fields.

\begin{lem}[{\cite{Finite fields}}]\label{minus-one-square}
The element $-1$ is a square in $\F_q^*$ if and only if
$q\equiv1\pmod4$.
\end{lem}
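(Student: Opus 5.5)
The plan is the classical argument with the multiplicative group $\F_q^*$. Since $q$ is odd, $\F_q^*$ is cyclic of even order $q-1$. The squaring map $x\mapsto x^2$ is therefore a homomorphism with kernel $\{\pm1\}$ of order $2$, so the squares form the unique subgroup of index $2$, of order $(q-1)/2$. I will show that $-1$ is a square exactly when $(-1)^{(q-1)/2}=1$ in $\F_q$, and then read off the congruence condition.

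First I fix a generator $g$ of $\F_q^*$. The only element of order $2$ is $-1$, because $x^2=1$ has at most the two roots $\pm1$ and $-1\neq1$ in odd characteristic. Hence $-1=g^{(q-1)/2}$. An element $g^j$ is a square if and only if $j$ is even, since $q-1$ is even and the squares are exactly the even powers of $g$. So $-1$ is a square if and only if $(q-1)/2$ is even, that is, $4\mid q-1$, that is, $q\equiv1\pmod4$. Because $q$ is odd, the only alternative is $q\equiv3\pmod4$, and in that case $(q-1)/2$ is odd and $-1$ is a nonsquare.

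An equivalent check avoids generators and uses Euler's criterion. For $a\in\F_q^*$, $a$ is a square if and only if $a^{(q-1)/2}=1$: the squares lie in the kernel of $x\mapsto x^{(q-1)/2}$, and that kernel has at most $(q-1)/2$ elements, which is exactly the number of squares. Applying this to $a=-1$ gives $(-1)^{(q-1)/2}=1$ if and only if $(q-1)/2$ is even, using again that $1\neq-1$.

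There is no real obstacle here. The only point needing care is that $-1\neq1$, which is why odd characteristic is required, together with the cyclicity of $\F_q^*$, which is standard.
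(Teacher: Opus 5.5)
Your argument is correct: with $q$ odd (the standing assumption of that subsection, which you rightly identify as essential, since otherwise $-1=1$), the element $-1=g^{(q-1)/2}$ is a square exactly when $(q-1)/2$ is even, and your Euler-criterion variant is equally valid. The paper gives no proof of its own and simply cites the standard finite-fields textbook, whose standard justification is this cyclic-group/Euler-criterion argument, so your proposal supplies the intended proof.
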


The following characterization will be used to define the type of a
linear code with a prescribed hull dimension.

\begin{prop}[{\cite[Theorem 14]{FFA}}]\label{hull-matrix}
Let $C$ be an $[n,k]_q$ linear code. Then $C$ has an
$\ell$-dimensional hull if and only if it has a generator matrix $G$
such that
\[
GG^T=\operatorname{diag}
(\underbrace{1,1,\ldots,\delta}_{k-\ell},0,\ldots,0),
\]
where $\ell<k$, $\delta=1$ or $\delta$ is a fixed nonsquare in $\F_q^*$.
\end{prop}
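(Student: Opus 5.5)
The plan is to reduce the statement to diagonalizing the symmetric bilinear form restricted to $C$, and then to use the classification of nondegenerate symmetric forms over $\F_q$ with $q$ odd. Fix any generator matrix $G_0$ of $C$ and put $A=G_0G_0^T$. This is a symmetric $k\times k$ matrix. Changing generator matrix by $P\in\operatorname{GL}_k(\F_q)$ replaces $A$ by $PAP^T$. Lemma~\ref{GG^T} gives $\dim\Hull(C)=\ell$ if and only if $\operatorname{rank}(A)=k-\ell$. Since rank is invariant under congruence, the ``if'' direction is immediate: a generator matrix with the displayed Gram matrix has $\operatorname{rank}(GG^T)=k-\ell$.

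For the ``only if'' direction, the first step is to split off the radical of $A$. The radical $\{\mathbf{x}\in\F_q^k:\mathbf{x}A=\mathbf{0}\}$ has dimension $\ell$ and corresponds to $\Hull(C)$. Choose a basis of $\F_q^k$ consisting of a complement basis followed by a radical basis. This gives $P_1$ with
\[
P_1AP_1^T=\begin{pmatrix}B&O\\O&O\end{pmatrix},
\]
where $B$ is a nonsingular symmetric matrix of order $k-\ell$. In code language, this is $C=\mathrm{L}\oplus\Hull(C)$ with $\mathrm{L}$ an LCD complement.

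The second step is to diagonalize $B$. Because $q$ is odd, $2$ is invertible, and the standard polarization argument applies. Since $B\neq O$ is symmetric, some vector $\mathbf{x}$ has $\mathbf{x}B\mathbf{x}^T\neq0$: if every diagonal entry vanished, then $b_{ij}\neq0$ would give $(\mathbf{e}_i+\mathbf{e}_j)B(\mathbf{e}_i+\mathbf{e}_j)^T=2b_{ij}\neq0$. One splits off $\Span\{\mathbf{x}\}$ and its orthogonal complement and inducts, so $B$ is congruent to $\operatorname{diag}(d_1,\ldots,d_{k-\ell})$ with all $d_i\neq0$. Each $d_i$ can be rescaled by squares, so one may assume $d_i\in\{1,\delta\}$ for a fixed nonsquare $\delta$.

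The key remaining step is to show that $\operatorname{diag}(\delta,\delta)$ is congruent to $\operatorname{diag}(1,1)$. This lets one pair off nonsquares until at most one $\delta$ remains, placed last. Every element of $\F_q$ is a sum of two squares: the sets $\{a^2\}$ and $\{\delta-b^2\}$ each have $(q+1)/2$ elements, so they intersect. Hence one can choose $a,b$ with $a^2+b^2=\delta$. Then the vectors $(a,b)$ and $(-b,a)$ give Gram matrix $\operatorname{diag}(\delta,\delta)$ with respect to the identity form. Inverting this congruence converts $\operatorname{diag}(\delta,\delta)$ into $\operatorname{diag}(1,1)$.

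Combining the two steps, there is $P$ with $PAP^T=\operatorname{diag}(1,\ldots,1,\delta',0,\ldots,0)$, where $\delta'\in\{1,\delta\}$. Setting $G=PG_0$ gives the desired generator matrix. The hypothesis $\ell<k$ is needed only so that the nondegenerate block is nonempty. The main point needing care is the two-squares/pairing step, but it is routine.
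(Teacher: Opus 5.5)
Your proof is correct. The paper does not prove this proposition itself: it imports it from \cite{FFA} (Theorem 14) and uses it only to define the types $+$ and $-$, so there is no in-paper argument to compare against.

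What you give is the standard self-contained route. By Lemma~\ref{GG^T}, the hull dimension is the corank of $A=G_0G_0^T$, and changing the generator matrix replaces $A$ by the congruent matrix $PAP^T$. You then apply the classification of nondegenerate symmetric forms over $\F_q$ with $q$ odd: split off the radical, diagonalize by polarization, reduce each diagonal entry to one of the square classes $\{1,\delta\}$, and cancel pairs of $\delta$'s using $a^2+b^2=\delta$.

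Your argument pinpoints exactly where odd characteristic enters. It is used in $2b_{ij}\neq0$ in the polarization step and in the count $(q+1)/2$ of squares. These are precisely what fail in even characteristic and force the $J_2$ blocks of Lemma~\ref{gmatrix}. Your two-squares step is also the same device the paper uses later to build $G_-^{k,n}$ with $\delta=a^2+b^2$.

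One point is not part of the statement, and your proof does not need it. The paper implicitly relies on it when it calls $\delta$ the type of $C$: namely, that $\delta$ is determined by $C$. If you want this, note that $\det B$ modulo squares is a congruence invariant of the nondegenerate form induced on $C/\Hull(C)$, since $\det(PBP^T)=\det(P)^2\det B$.
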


Accordingly, an $[n,k]_q$ linear code with an $\ell$-dimensional hull
is said to be of type $+$ if $\delta=1$, and of type $-$ if $\delta$
is a nonsquare in $\F_q^*$. If $\ell=k$, then $C$ is self-orthogonal,
and we define it to be of type $+$.

For odd characteristic, the two types are denoted throughout by
$+$ and $-$, and their corresponding counts are written as
$\N_+^t(s;C)$ and $\N_-^t(s;C)$, respectively. The same signs are
used in both representative-level and parameterized formulas.

\subsection{Puncturing maps}

Let $C$ be a linear code of length $n$, and let
\(
S\subseteq\{1,2,\ldots,n\}.
\)
The punctured code of $C$ on $S$, denoted by $C^S$, is obtained by deleting all coordinates indexed by $S$ from each codeword of $C$. Similarly, for a codeword $\mathbf{c}\in C$, we denote by $\mathbf{c}^S$ the vector obtained by deleting the coordinates indexed by $S$.

Let
\[
p_S:C\longrightarrow C^S,\qquad
p_S(\mathbf{c})=\mathbf{c}^S,
\]
be the puncturing map. Let $\mathcal{L}(C)$ denote the set of all subcodes of $C$. The map $p_S$ induces a surjective map
\[
f_S:\mathcal{L}(C)\longrightarrow\mathcal{L}(C^S),
\qquad
D\longmapsto D^S.
\]
If $\dim(C)=\dim(C^S)$, then $p_S$ is a linear isomorphism. Consequently, $f_S$ is bijective and
\(
\dim(D)=\dim(D^S)
\)
for every $D\in\mathcal{L}(C)$.

\begin{lem}\label{pun matrix}
Let $C$ be an $[n,k]_q$ linear code, and let $C^S$ be its punctured code on $S\subseteq\{1,2,\ldots,n\}$. Suppose that
\(
\langle\mathbf{c}_1,\mathbf{c}_2\rangle
=
\langle\mathbf{c}_1^S,\mathbf{c}_2^S\rangle
\)
for all $\mathbf{c}_1,\mathbf{c}_2\in C$ and that
\(
\dim(C)=\dim(C^S).
\)
Then, for every subcode $D\subseteq C$, we have
\[
\dim(\Hull(D))
=
\dim(\Hull(D^S)).
\]
In particular, $C$ has an $\ell$-dimensional hull if and only if $C^S$ has an $\ell$-dimensional hull.
\end{lem}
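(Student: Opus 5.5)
The plan is to transport a generator matrix of $D$ through the puncturing map and compare Gram matrices, using Lemma~\ref{GG^T}. Fix a subcode $D\subseteq C$ of dimension $r$, and let $G_D$ be an $r\times n$ generator matrix of $D$ with rows $\mathbf{d}_1,\ldots,\mathbf{d}_r$. Let $G_D^S$ be the matrix obtained from $G_D$ by deleting the columns indexed by $S$; its rows are $\mathbf{d}_1^S,\ldots,\mathbf{d}_r^S$.

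First I show that $G_D^S$ is a generator matrix of $D^S$. Since $\dim(C)=\dim(C^S)$, the puncturing map $p_S$ is an injective linear map on $C$, hence on $D$. Therefore $\mathbf{d}_1^S,\ldots,\mathbf{d}_r^S$ are linearly independent, and they span $p_S(D)=D^S$. In particular $\dim(D^S)=r$.

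Next I compare Gram matrices. The $(i,j)$ entry of $G_D G_D^T$ is $\langle\mathbf{d}_i,\mathbf{d}_j\rangle$, and the $(i,j)$ entry of $G_D^S(G_D^S)^T$ is $\langle\mathbf{d}_i^S,\mathbf{d}_j^S\rangle$. By hypothesis these are equal for all $i,j$, since $\mathbf{d}_i,\mathbf{d}_j\in C$. Hence
\[
G_DG_D^T=G_D^S\bigl(G_D^S\bigr)^T .
\]

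Now apply Lemma~\ref{GG^T} to both codes. It gives $\dim\Hull(D)=r-\operatorname{rank}(G_DG_D^T)$ and $\dim\Hull(D^S)=r-\operatorname{rank}\bigl(G_D^S(G_D^S)^T\bigr)$. The Gram matrices are equal and the two dimensions are both $r$, so $\dim(\Hull(D))=\dim(\Hull(D^S))$. The statement about $C$ is the special case $D=C$.

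No step here is a real obstacle. The only point needing care is that $G_D^S$ must have full row rank, so that Lemma~\ref{GG^T} applies to $D^S$ with the same dimension $r$. This is exactly where the hypothesis $\dim(C)=\dim(C^S)$ is used.
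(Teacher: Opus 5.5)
Your proposal is correct and follows essentially the same route as the paper: you use $\dim(C)=\dim(C^S)$ to make $G_D^S$ a generator matrix of $D^S$, observe $G_DG_D^T=G_D^S(G_D^S)^T$ from the inner-product hypothesis, and conclude via Lemma~\ref{GG^T}, with $D=C$ giving the final assertion.
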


\begin{proof}
Since $\dim(C)=\dim(C^S)$, the puncturing map
$
p_S:C\longrightarrow C^S
$
is a linear isomorphism. Consequently, the induced map $f_S$ is bijective and preserves the dimension of every subcode.

Let $D$ be a subcode of $C$, and let $G_D$ be a generator matrix of $D$. Denote by $G_D^S$ the matrix obtained from $G_D$ by deleting the columns indexed by $S$. Since
$
\dim(D)=\dim(D^S),
$
the rows of $G_D^S$ are linearly independent. Hence, $G_D^S$ is a generator matrix of $D^S$.

By the assumption that puncturing preserves the Euclidean inner product on $C$, the inner product of any two rows of $G_D$ is equal to the inner product of their corresponding punctured rows. Therefore,
$
G_DG_D^T
=
G_D^S(G_D^S)^T.
$
It follows that
$
\operatorname{rank}(G_DG_D^T)
=
\operatorname{rank}\left(G_D^S(G_D^S)^T\right).
$
By Lemma \ref{GG^T}, we have
\[
\begin{aligned}
\dim(\Hull(D))
&=
\dim(D)-\operatorname{rank}(G_DG_D^T)\\
&=
\dim(D^S)
-
\operatorname{rank}\left(G_D^S(G_D^S)^T\right)\\
&=
\dim(\Hull(D^S)).
\end{aligned}
\]
Taking $D=C$ gives the final assertion.
\end{proof}

\subsection{Projection maps}

Suppose that a linear code $C$ admits a direct-sum decomposition
\(
C=V\oplus W,
\)
where $V$ and $W$ are subcodes of $C$. Then every codeword $\mathbf{c}\in C$ can be uniquely written as
\(
\mathbf{c}=\mathbf{v}+\mathbf{w},
\)
where $\mathbf{v}\in V$ and $\mathbf{w}\in W$. We define the projection mapping from $C$ onto $V$ along $W$ by
\[
\Pi_V:C\longrightarrow V,\qquad
\Pi_V(\mathbf{v}+\mathbf{w})=\mathbf{v}.
\]
Clearly,
\[
\ker(\Pi_V)=W
\qquad\text{and}\qquad
\operatorname{Im}(\Pi_V)=V.
\]
For any subcode $D\subseteq C$, the image
$
\Pi_V(D)
=
\left\{
\Pi_V(\mathbf{c})
\mid
\mathbf{c}\in D
\right\}
$
is a subcode of $V$, and the restriction of $\Pi_V$ to $D$ has kernel $D\cap W$. Therefore,
\[
\dim(D)
=
\dim(\Pi_V(D))
+
\dim(D\cap W).
\]

The following lemma is a projection-form version of the standard formula for counting subspaces with a prescribed intersection dimension with a fixed subspace; see, for example, {\cite[p.~162]{Oxley}} or {\cite[Lemma~34]{Gao}}. We give a self-contained proof because the projection formulation and its refinements will be used repeatedly.

\begin{lem}\label{proj}
Let $C$ be a linear code over $\F_q$ admitting a direct-sum decomposition
$C=V\oplus W$, where $V$ and $W$ are subcodes of $C$ and
$\dim(W)=\ell$. Let $s$ and $m$ be nonnegative integers satisfying
$0\leq m\leq\min\{s,\ell\}$, and let $U\subseteq V$ be an
$(s-m)$-dimensional subcode. Then the number of subcodes
$D\subseteq C$ satisfying $\Pi_V(D)=U$, $\dim(D)=s$, and
$\dim(D\cap W)=m$ is
\[
q^{(s-m)(\ell-m)}
\genfrac{[}{]}{0pt}{}{\ell}{m}_q.
\]
\end{lem}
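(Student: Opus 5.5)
We count the subcodes $D$ in two stages: first choose the intersection $M=D\cap W$, then count the ways of lifting $U$ to $C$ given $M$. Fix an $m$-dimensional subspace $M\subseteq W$; there are $\genfrac{[}{]}{0pt}{}{\ell}{m}_q$ such choices. Every $D$ in the lemma determines exactly one $M$, namely $M=D\cap W$. It therefore suffices to show that, for fixed $M$, the number of $s$-dimensional $D\subseteq C$ with $\Pi_V(D)=U$ and $D\cap W=M$ equals $q^{(s-m)(\ell-m)}$. Multiplying by the number of choices of $M$ then gives the formula.

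For fixed $M$, we identify such $D$ with linear maps. Any such $D$ satisfies $D\subseteq U\oplus W$, since $\Pi_V(D)=U$. Because $M\subseteq D$, the code $D$ contains $M$, so we may pass to the quotient $(U\oplus W)/M\cong U\oplus (W/M)$. There $D/M$ is a subspace meeting $W/M$ trivially and projecting onto $U$, with $\dim(D/M)=s-m=\dim U$. Subspaces of $U\oplus W'$ (with $W'=W/M$) that have trivial intersection with $W'$ and project onto $U$ are exactly the graphs $\{\mathbf{u}+\phi(\mathbf{u})\mid \mathbf{u}\in U\}$ of linear maps $\phi:U\to W'$. Conversely, each $\phi$ yields, by taking the preimage in $U\oplus W$, a subcode $D\supseteq M$ with the following properties:
\begin{itemize}
\item $\Pi_V(D)=U$;
\item $D\cap W=M$, because the graph meets $W'$ trivially;
\item $\dim D=(s-m)+m=s$.
\end{itemize}
The correspondence $D\leftrightarrow\phi$ is a bijection. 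The number of linear maps $U\to W/M$ is $q^{\dim U\cdot\dim(W/M)}=q^{(s-m)(\ell-m)}$.

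The only delicate point is verifying that the correspondence is bijective, and in particular that $D\cap W=M$ exactly, rather than merely $D\cap W\supseteq M$. This follows because an element of $D\cap W$ maps into the graph intersected with $W'$, which is $\{0\}$, so it lies in $M$. One could alternatively argue with a direct matrix parametrization: choose a basis $\mathbf{u}_1,\dots,\mathbf{u}_{s-m}$ of $U$ and lift it to $\mathbf{u}_i+\mathbf{w}_i$ with $\mathbf{w}_i$ determined modulo $M$. The quotient formulation avoids having to track this ambiguity in the lifts, so I would use it.
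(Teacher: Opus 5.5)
Your proof is correct and takes essentially the same route as the paper. Both partition the subcodes by $K=D\cap W$, which gives $\genfrac{[}{]}{0pt}{}{\ell}{m}_q$ choices, and then biject the remaining data with linear maps $U\to W/K$. Your quotient-and-graph formulation is a repackaging of the paper's explicit construction $f(\mathbf{u})=\mathbf{w}+K$ and of its inverse $D=\{\mathbf{u}+\mathbf{w}\mid \mathbf{w}+K=f(\mathbf{u})\}$.
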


\begin{proof}
We construct a bijection between the set of such subcodes $D$ and
the set of pairs $(K,f)$, where $K$ is an $m$-dimensional subcode of
$W$ and $f:U\longrightarrow W/K$ is a linear map.

Let $D\subseteq C$ satisfy $\Pi_V(D)=U$, $\dim(D)=s$, and
$\dim(D\cap W)=m$. Put $K=D\cap W$. Then $K$ is an
$m$-dimensional subcode of $W$.

For each $\mathbf{u}\in U$, there exists $\mathbf{w}\in W$ such that
$\mathbf{u}+\mathbf{w}\in D$. If both
$\mathbf{u}+\mathbf{w}_1$ and $\mathbf{u}+\mathbf{w}_2$ belong to
$D$, then $\mathbf{w}_1-\mathbf{w}_2\in D\cap W=K$. Hence
$\mathbf{w}_1+K=\mathbf{w}_2+K$, and we may define
\[
f:U\longrightarrow W/K,\qquad
f(\mathbf{u})=\mathbf{w}+K.
\]
The map $f$ is well-defined and linear.

Conversely, let $K\subseteq W$ be an $m$-dimensional subcode and let
$f:U\longrightarrow W/K$ be a linear map. Define
\[
D
=
\left\{
\mathbf{u}+\mathbf{w}
\mid
\mathbf{u}\in U,\ 
\mathbf{w}\in W,\ 
\mathbf{w}+K=f(\mathbf{u})
\right\}.
\]
Then $D$ is a subcode of $C$ satisfying
$\Pi_V(D)=U$, $D\cap W=K$, and
$\dim(D)=\dim(U)+\dim(K)=(s-m)+m=s$.
Thus, the required subcodes $D$ are in bijection with the pairs
$(K,f)$.

The number of $m$-dimensional subcodes $K$ of $W$ is
$\genfrac{[}{]}{0pt}{}{\ell}{m}_q$. For each fixed $K$, we have
$\dim(W/K)=\ell-m$. Since $\dim(U)=s-m$, the number of linear maps
from $U$ to $W/K$ is $q^{(s-m)(\ell-m)}$. Consequently, the number
of such subcodes $D$ is
\(
q^{(s-m)(\ell-m)}
\genfrac{[}{]}{0pt}{}{\ell}{m}_q.
\)
\end{proof}

The next two corollaries are refinements of Lemma~\ref{proj} obtained
by imposing an additional containment condition.

\begin{cor}\label{projee}
Under the assumptions of Lemma~\ref{proj}, suppose that
$0\leq m\leq\min\{s-1,\ell\}$, and let
$\bm\alpha\in U$ with $\bm\alpha\neq\mathbf{0}$. Then the number of
subcodes $D\subseteq C$ satisfying $\Pi_V(D)=U$,
$\dim(D)=s$, $\dim(D\cap W)=m$, and $\bm\alpha\in D$ is
\[
q^{(s-m-1)(\ell-m)}
\genfrac{[}{]}{0pt}{}{\ell}{m}_q.
\]
\end{cor}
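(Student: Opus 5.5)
We reuse the bijection from the proof of Lemma~\ref{proj}. The subcodes $D$ with $\Pi_V(D)=U$, $\dim(D)=s$ and $\dim(D\cap W)=m$ correspond bijectively to pairs $(K,f)$. Here $K\subseteq W$ is an $m$-dimensional subcode, $f:U\to W/K$ is linear, and
\[
D=\{\mathbf{u}+\mathbf{w}\mid \mathbf{u}\in U,\ \mathbf{w}\in W,\ \mathbf{w}+K=f(\mathbf{u})\}.
\]
The plan is to translate the extra condition $\bm\alpha\in D$ into a condition on $f$, and then count the admissible $f$ for each fixed $K$.

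Since $\bm\alpha\in U\subseteq V$, its decomposition in $C=V\oplus W$ is $\bm\alpha=\bm\alpha+\mathbf{0}$ with $\mathbf{0}\in W$. By the description of $D$, the condition $\bm\alpha\in D$ holds if and only if $\mathbf{0}+K=f(\bm\alpha)$, that is, $f(\bm\alpha)=0$ in $W/K$. So the constraint is exactly that $f$ vanishes on the one-dimensional subspace $\Span\{\bm\alpha\}$.

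Next we count these maps. Because $\bm\alpha\neq\mathbf{0}$, we may extend it to a basis $\bm\alpha,\bm\beta_2,\ldots,\bm\beta_{s-m}$ of $U$; this uses $s-m\geq1$, which is guaranteed by the hypothesis $m\leq s-1$. A linear map $f$ with $f(\bm\alpha)=0$ is determined freely by the images of the remaining $s-m-1$ basis vectors in $W/K$, a space of dimension $\ell-m$. Equivalently, such maps are the linear maps $U/\Span\{\bm\alpha\}\to W/K$. Hence there are $q^{(s-m-1)(\ell-m)}$ of them.

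Summing over the $\genfrac{[}{]}{0pt}{}{\ell}{m}_q$ choices of $K$ gives the stated count. No step is a real obstacle. The only point needing care is that the $W$-component of $\bm\alpha$ is zero, which is what makes $\bm\alpha\in D$ equivalent to $f(\bm\alpha)=0$, independently of $K$.
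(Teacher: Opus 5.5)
Your proposal is correct and follows essentially the same argument as the paper. You reuse the $(K,f)$ bijection from Lemma~\ref{proj}, reduce $\bm\alpha\in D$ to $f(\bm\alpha)=0$, and count linear maps $U/\Span\{\bm\alpha\}\to W/K$ for each of the $\genfrac{[}{]}{0pt}{}{\ell}{m}_q$ choices of $K$. Your explicit remark that the $W$-component of $\bm\alpha$ is zero usefully justifies the equivalence, which the paper asserts without comment.
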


\begin{proof}
As in the proof of Lemma~\ref{proj}, every such subcode $D$
corresponds to a pair $(K,f)$, where $K$ is an $m$-dimensional
subcode of $W$ and $f:U\longrightarrow W/K$ is a linear map.

Since $\bm\alpha\in U$, the condition $\bm\alpha\in D$ is
equivalent to $f(\bm\alpha)=0$. Thus, $f$ induces a linear map
\[
\overline{f}:
U/\Span\{\bm\alpha\}
\longrightarrow W/K.
\]
Since
$\dim(U/\Span\{\bm\alpha\})=s-m-1$ and
$\dim(W/K)=\ell-m$, the number of such maps for each fixed $K$ is
$q^{(s-m-1)(\ell-m)}$. The number of choices for $K$ is
$\genfrac{[}{]}{0pt}{}{\ell}{m}_q$. Hence, the number of the required subcodes is
\(
q^{(s-m-1)(\ell-m)}
\genfrac{[}{]}{0pt}{}{\ell}{m}_q.
\)
\end{proof}

\begin{cor}\label{projeee}
Under the assumptions of Lemma~\ref{proj}, suppose that
$1\leq m\leq\min\{s,\ell\}$, and let
$\bm\alpha\in W$ with $\bm\alpha\neq\mathbf{0}$. Then the number of
subcodes $D\subseteq C$ satisfying $\Pi_V(D)=U$,
$\dim(D)=s$, $\dim(D\cap W)=m$, and $\bm\alpha\in D$ is
\[
q^{(s-m)(\ell-m)}
\genfrac{[}{]}{0pt}{}{\ell-1}{m-1}_q.
\]
\end{cor}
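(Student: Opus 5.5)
The plan is to use the bijection from the proof of Lemma~\ref{proj} and track what the extra condition $\bm\alpha\in D$ means on the pair side. Recall that each subcode $D\subseteq C$ with $\Pi_V(D)=U$, $\dim(D)=s$ and $\dim(D\cap W)=m$ corresponds to a unique pair $(K,f)$. Here $K=D\cap W$ is an $m$-dimensional subcode of $W$, and $f:U\longrightarrow W/K$ is linear. The inverse map is
\[
D=\{\mathbf{u}+\mathbf{w}\mid \mathbf{u}\in U,\ \mathbf{w}\in W,\ \mathbf{w}+K=f(\mathbf{u})\}.
\]

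Now suppose $\bm\alpha\in W$ is nonzero. Then $\bm\alpha\in D$ holds if and only if $\bm\alpha\in D\cap W=K$.

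\begin{itemize}
\item If $\bm\alpha\in D$, then $\bm\alpha$ lies in $W$, so it lies in $D\cap W=K$.
\item Conversely, suppose $\bm\alpha\in K$. Write $\bm\alpha=\mathbf{0}+\bm\alpha$ with $\mathbf{0}\in U$ and $\bm\alpha\in W$. Since $\bm\alpha+K=K=f(\mathbf{0})$, the construction of $D$ shows $\bm\alpha\in D$.
\end{itemize}

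So the condition $\bm\alpha\in D$ constrains only the choice of $K$ and leaves $f$ free. This is where the proof differs from Corollary~\ref{projee}, where the condition constrained $f$ instead.

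It remains to count. We need the number of $m$-dimensional subspaces $K$ of the $\ell$-dimensional space $W$ that contain the fixed nonzero vector $\bm\alpha$. Passing to the quotient, $K\mapsto K/\Span\{\bm\alpha\}$ is a bijection onto the $(m-1)$-dimensional subspaces of the $(\ell-1)$-dimensional space $W/\Span\{\bm\alpha\}$. This gives $\genfrac{[}{]}{0pt}{}{\ell-1}{m-1}_q$ choices, and the hypothesis $m\geq1$ is exactly what makes this quotient argument valid.

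For each such $K$, $\dim(W/K)=\ell-m$ and $\dim(U)=s-m$, so there are $q^{(s-m)(\ell-m)}$ linear maps $f$, exactly as in Lemma~\ref{proj}. Multiplying the two counts gives the stated formula.

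The only step needing care is the equivalence $\bm\alpha\in D\iff\bm\alpha\in K$. It relies on $\bm\alpha$ lying in $W$, the kernel of $\Pi_V$. Everything else is the counting already done in Lemma~\ref{proj}.
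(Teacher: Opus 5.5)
Your proposal is correct and follows the paper's proof essentially verbatim. You reduce $\bm\alpha\in D$ to $\bm\alpha\in K$ via the bijection of Lemma~\ref{proj}, count the $m$-dimensional $K\supseteq\Span\{\bm\alpha\}$ through the quotient $W/\Span\{\bm\alpha\}$, and multiply by the $q^{(s-m)(\ell-m)}$ free choices of $f$; your explicit check of both directions of $\bm\alpha\in D\iff\bm\alpha\in K$ is a small detail the paper only asserts.
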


\begin{proof}
As in the proof of Lemma~\ref{proj}, every such subcode $D$
corresponds to a pair $(K,f)$, where $K=D\cap W$ is an
$m$-dimensional subcode of $W$ and
$f:U\longrightarrow W/K$ is a linear map.

Since $\bm\alpha\in W$, the condition $\bm\alpha\in D$ is
equivalent to $\bm\alpha\in K$. Thus, $K$ must be an
$m$-dimensional subcode of $W$ containing
$\Span\{\bm\alpha\}$. The number of such subcodes is
$\genfrac{[}{]}{0pt}{}{\ell-1}{m-1}_q$, because they correspond to the
$(m-1)$-dimensional subspaces of
$W/\Span\{\bm\alpha\}$.

For each fixed $K$, we have $\dim(U)=s-m$ and
$\dim(W/K)=\ell-m$. Therefore, the number of linear maps from $U$
to $W/K$ is $q^{(s-m)(\ell-m)}$. Consequently, the number of the
required subcodes is
\(
q^{(s-m)(\ell-m)}
\genfrac{[}{]}{0pt}{}{\ell-1}{m-1}_q.
\)
\end{proof}

Having established the enumeration of subcodes with a prescribed projection, we now consider the behavior of the hull dimension under the projection mapping.

\begin{prop}\label{invariance}
Let $C$ be an $[n,k]_q$ linear code admitting a direct-sum decomposition $C=V\oplus W$, where $V$ and $W$ are subcodes of $C$. Suppose that $W$ is an $\ell$-dimensional self-orthogonal code and $V\subseteq W^\perp$. Then, for any subcode $D\subseteq C$, we have
\[
\dim(D)-\dim(\Hull(D))
=
\dim(\Pi_V(D))
-
\dim(\Hull(\Pi_V(D))).
\]
\end{prop}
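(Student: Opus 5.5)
The plan is to compare Gram matrices through Lemma~\ref{GG^T}: the quantity $\dim(D)-\dim(\Hull(D))$ equals $\operatorname{rank}(G_DG_D^T)$ for any generator matrix $G_D$ of $D$. It therefore suffices to show that suitable generator matrices of $D$ and of $\Pi_V(D)$ have Gram matrices of the same rank.

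The key observation is that the inner product on $C$ is unchanged by projection. Write $\mathbf{c}_i=\mathbf{v}_i+\mathbf{w}_i$ with $\mathbf{v}_i\in V$ and $\mathbf{w}_i\in W$. Then
\[
\langle\mathbf{c}_1,\mathbf{c}_2\rangle
=\langle\mathbf{v}_1,\mathbf{v}_2\rangle
+\langle\mathbf{v}_1,\mathbf{w}_2\rangle
+\langle\mathbf{w}_1,\mathbf{v}_2\rangle
+\langle\mathbf{w}_1,\mathbf{w}_2\rangle
=\langle\Pi_V(\mathbf{c}_1),\Pi_V(\mathbf{c}_2)\rangle .
\]
The two cross terms vanish because $V\subseteq W^\perp$, and the last term vanishes because $W$ is self-orthogonal.

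Next fix $D$ and set $m=\dim(D\cap W)$, $s=\dim(D)$, and $U=\Pi_V(D)$, so that $\dim(U)=s-m$. Choose a basis of $D$ of the form $\mathbf{d}_1,\dots,\mathbf{d}_{s-m},\mathbf{k}_1,\dots,\mathbf{k}_m$, where the $\mathbf{k}_j$ form a basis of $D\cap W$. The images $\Pi_V(\mathbf{d}_1),\dots,\Pi_V(\mathbf{d}_{s-m})$ then form a basis of $U$. Let $G_D$ be the matrix with these $s$ rows and $G_U$ the matrix whose rows are the projected $\mathbf{d}_i$. Since $\Pi_V(\mathbf{k}_j)=0$, the identity above gives
\[
G_DG_D^T=\begin{pmatrix} G_UG_U^T & O\\ O & O\end{pmatrix}.
\]
Hence $\operatorname{rank}(G_DG_D^T)=\operatorname{rank}(G_UG_U^T)$.

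Applying Lemma~\ref{GG^T} to both $D$ and $U$ then yields the claimed equality. The degenerate cases $U=\{\mathbf{0}\}$ (that is, $m=s$) and $m=0$ are covered by the same computation, with the convention that the rank of an empty Gram matrix is $0$. I do not expect a genuine obstacle. The only points needing care are choosing the basis adapted to $D\cap W$ and checking that the projected vectors are linearly independent; the latter holds because the kernel of $\Pi_V$ restricted to $D$ is exactly $D\cap W$.
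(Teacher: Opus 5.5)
Your proposal is correct and follows essentially the same route as the paper: choose a basis of $D$ adapted to $D\cap W$, observe that the Gram matrix of $D$ is block-diagonal with the Gram matrix of $\Pi_V(D)$ as its only nonzero block (because $W$ is self-orthogonal and $V\subseteq W^\perp$), and conclude with Lemma~\ref{GG^T}. The only difference is that you isolate the identity $\langle\mathbf{c}_1,\mathbf{c}_2\rangle=\langle\Pi_V(\mathbf{c}_1),\Pi_V(\mathbf{c}_2)\rangle$ explicitly and order the basis vectors differently, which does not change the argument.
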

\begin{proof}
Put $K=D\cap W$ and let $\dim K=r$. Choose a basis
$\mathbf{w}_1,\ldots,\mathbf{w}_r$ of $K$ and extend it to a basis
$\mathbf{w}_1,\ldots,\mathbf{w}_r,\bm\alpha_1,\ldots,\bm\alpha_m$
of $D$. Write
$
\bm\alpha_i=\mathbf{v}_i+\mathbf{u}_i
$
with $\mathbf{v}_i\in V$ and $\mathbf{u}_i\in W$. Since the kernel of
$\Pi_V|_{D}$ is $K$, the vectors
$\mathbf{v}_1,\ldots,\mathbf{v}_m$ form a basis of $\Pi_V(D)$.

Let $G_D$ be the generator matrix of $D$ with rows
$\mathbf{w}_1,\ldots,\mathbf{w}_r,\bm\alpha_1,\ldots,\bm\alpha_m$,
and let $G_{\rm proj}$ be the generator matrix of $\Pi_V(D)$ with rows
$\mathbf{v}_1,\ldots,\mathbf{v}_m$. Since $W$ is self-orthogonal and
$V\subseteq W^\perp$, all inner products involving a row from $K$ vanish,
and
$
\langle\bm\alpha_i,\bm\alpha_j\rangle
=
\langle\mathbf{v}_i,\mathbf{v}_j\rangle.
$
Hence
\[
G_DG_D^T
=
\begin{pmatrix}
O&O\\
O&G_{\rm proj}G_{\rm proj}^T
\end{pmatrix}.
\]
Thus
$
\operatorname{rank}(G_DG_D^T)
=
\operatorname{rank}(G_{\rm proj}G_{\rm proj}^T).
$
The desired equality now follows from Lemma~\ref{GG^T}.
\end{proof}

\begin{cor}\label{invariance-cor}
With the notation as in Proposition \ref{invariance}, for any subcode $D\subseteq C$, we have
\[
\dim(\Hull(D))
=
\dim(\Hull(\Pi_V(D)))
+
\dim(D\cap W).
\]
\end{cor}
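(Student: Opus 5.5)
The corollary should follow directly from Proposition~\ref{invariance} combined with the dimension formula for the projection $\Pi_V$ recorded just before Lemma~\ref{proj}. No new computation is needed; the two identities only have to be combined.

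First, I restrict $\Pi_V$ to $D$. Its kernel is $D\cap W$, so
\[
\dim(D)=\dim(\Pi_V(D))+\dim(D\cap W).
\]
Second, I rearrange Proposition~\ref{invariance}, which holds under the standing hypotheses that $W$ is self-orthogonal and $V\subseteq W^\perp$:
\[
\dim(\Hull(D))=\dim(D)-\dim(\Pi_V(D))+\dim(\Hull(\Pi_V(D))).
\]
Substituting the first identity for $\dim(D)-\dim(\Pi_V(D))$ in the second gives
\[
\dim(\Hull(D))=\dim(\Hull(\Pi_V(D)))+\dim(D\cap W),
\]
which is exactly the claim.

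There is no real obstacle here. The only point to check is that the hypotheses of Proposition~\ref{invariance} are the ones assumed "with the notation as in" it, so the substitution is legitimate for every subcode $D\subseteq C$. This includes the degenerate cases $D\cap W=\{\mathbf 0\}$ and $D\subseteq W$. In the latter case $\Pi_V(D)=\{\mathbf 0\}$ and the formula reads $\dim\Hull(D)=\dim D$, which is consistent with $W$ being self-orthogonal.
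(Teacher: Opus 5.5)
Your proof is correct and follows essentially the same route as the paper. Rank--nullity for $\Pi_V|_{D}$, whose kernel is $D\cap W$, gives $\dim(D)-\dim(\Pi_V(D))=\dim(D\cap W)$, and substituting this into Proposition~\ref{invariance} yields the claimed identity.
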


\begin{proof}
Since $\ker(\Pi_V|_{D})=D\cap W$, the rank-nullity theorem gives
$
\dim(D)-\dim(\Pi_V(D))=\dim(D\cap W).
$
Combining this equality with Proposition~\ref{invariance} gives the
assertion.
\end{proof}

\begin{remarknonumber}
Let $D$ be an $s$-dimensional subcode of $C$ with a $t$-dimensional hull, and put
\(
m=\dim(D\cap W).
\)
By Corollary \ref{invariance-cor}, we have
\[
\dim(\Pi_V(D))=s-m
\text{\quad and \quad}
\dim(\Hull(\Pi_V(D)))=t-m.
\]
Therefore, the enumeration of the $s$-dimensional subcodes of $C$ with a $t$-dimensional hull can be reduced to the enumeration of the $(s-m)$-dimensional subcodes of $V$ with a $(t-m)$-dimensional hull. For each prescribed projected subcode, the corresponding subcodes of $C$ are counted by Lemma \ref{proj}.
\end{remarknonumber}

The next two lemmas record the type behavior of the projection map.

\begin{lem}\label{same-type}
Assume that $q$ is odd. Let $C=V\oplus W$, where
$V\subseteq W^\perp$ and $W$ is
self-orthogonal. Then, for every subcode $D\subseteq C$, the
projection $\Pi_V(D)$ of $D$ onto $V$ along $W$ has the same type
as $D$.
\end{lem}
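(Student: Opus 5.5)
We reuse the Gram-matrix computation from the proof of Proposition~\ref{invariance}. Put $K=D\cap W$ with $\dim K=r$. Choose a basis $\mathbf{w}_1,\ldots,\mathbf{w}_r$ of $K$ and extend it to a basis $\mathbf{w}_1,\ldots,\mathbf{w}_r,\bm\alpha_1,\ldots,\bm\alpha_m$ of $D$. Write $\bm\alpha_i=\mathbf{v}_i+\mathbf{u}_i$ with $\mathbf{v}_i\in V$ and $\mathbf{u}_i\in W$. Then $\mathbf{v}_1,\ldots,\mathbf{v}_m$ is a basis of $\Pi_V(D)$. Let $G_D$ and $G_{\rm proj}$ be the corresponding generator matrices. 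Since $W$ is self-orthogonal and $V\subseteq W^\perp$, we get
\[
G_DG_D^T=\begin{pmatrix}O&O\\ O&G_{\rm proj}G_{\rm proj}^T\end{pmatrix},\qquad \langle\bm\alpha_i,\bm\alpha_j\rangle=\langle\mathbf{v}_i,\mathbf{v}_j\rangle .
\]
So the symmetric bilinear form restricted to $D$ is the form on $\Pi_V(D)$ pulled back along the surjection $D\to\Pi_V(D)$, extended by a radical $K$. The task is to show that the type, as defined via Proposition~\ref{hull-matrix}, depends only on this nondegenerate quotient.

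First I would make the type intrinsic. For a code $E$ with $\dim E=e$ and $\dim\Hull(E)=h<e$, the form induces a nondegenerate form on $E/\Hull(E)$. I claim the type of $E$ equals the square class of the determinant of this induced form: $+$ if it is a square, $-$ otherwise. To see this, take the generator matrix $G$ from Proposition~\ref{hull-matrix}. Its last $h$ rows span a self-orthogonal subspace that is orthogonal to all of $E$, hence lies in $\Hull(E)$. Comparing dimensions, these rows span exactly $\Hull(E)$. The images of the first $e-h$ rows then form a basis of $E/\Hull(E)$ with Gram matrix $\operatorname{diag}(1,\ldots,1,\delta)$, whose determinant is $\delta$. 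A change of basis multiplies the determinant by a nonzero square, so the square class is well defined. Consequently, $\delta$ is a square exactly when $E$ has type $+$, which proves the claim. The self-orthogonal case $h=e$ is assigned type $+$ by convention, and I will treat it separately.

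Next I apply this to $D$ and $\Pi_V(D)$. By the block form above and Corollary~\ref{invariance-cor}, $\Hull(D)$ is spanned by $K$ together with those vectors $\sum c_i\bm\alpha_i$ for which $\sum c_i\mathbf{v}_i\in\Hull(\Pi_V(D))$. Hence $\Pi_V$ induces a linear isomorphism
\[
D/\Hull(D)\;\longrightarrow\;\Pi_V(D)/\Hull(\Pi_V(D)),
\]
and it preserves the induced forms because $\langle\bm\alpha_i,\bm\alpha_j\rangle=\langle\mathbf{v}_i,\mathbf{v}_j\rangle$. Isometric nondegenerate spaces have determinants in the same square class, so $D$ and $\Pi_V(D)$ have the same type. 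In the degenerate case, $D$ is self-orthogonal if and only if $\Pi_V(D)$ is, since the two hull codimensions agree by Proposition~\ref{invariance}; then both have type $+$ by convention.

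The step I expect to need the most care is the intrinsic reformulation of the type. It requires checking that the $\delta$ in Proposition~\ref{hull-matrix} is determined by the code, so that it cannot be both $1$ and a nonsquare. This follows from the invariance of the determinant square class. One must also confirm that the last $\ell$ rows of the canonical generator matrix span the hull, so that the quotient form is the one with Gram matrix $\operatorname{diag}(1,\ldots,1,\delta)$. Once this is in place, the rest is the block Gram computation already done in Proposition~\ref{invariance}.
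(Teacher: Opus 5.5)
Your proof is correct, but it takes a somewhat different route from the paper's. The paper works directly with a normal-form basis. It takes the generator matrix of $D$ from Proposition~\ref{hull-matrix}, whose Gram matrix is $\operatorname{diag}(1,\ldots,1,\delta,0,\ldots,0)$, and notes that the projected rows have the same Gram matrix, so the first $s-t$ of them are independent. It then completes these to a basis of $\Pi_V(D)$ using $t'$ of the remaining projected rows, which have zero Gram rows. This gives a normal-form generator matrix of $\Pi_V(D)$ with the same $\delta$.

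You instead identify the type with an invariant: the square class of the discriminant of the nondegenerate form on $E/\Hull(E)$. You then show that $\Pi_V$ induces an isometry $D/\Hull(D)\to\Pi_V(D)/\Hull(\Pi_V(D))$; in fact $\Hull(D)$ is exactly the preimage of $\Hull(\Pi_V(D))$ under $\Pi_V|_D$. Both arguments rest on the same computation, $\langle\bm\alpha_i,\bm\alpha_j\rangle=\langle\mathbf{v}_i,\mathbf{v}_j\rangle$.

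The paper's version is shorter and needs no extra setup. However, it takes for granted, as part of the definition, that the type is well defined, i.e.\ that no code admits normal forms with both $\delta=1$ and $\delta$ a nonsquare. Your discriminant argument proves this along the way and is basis-free. It therefore applies verbatim to any surjective linear map that preserves the bilinear form.
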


\begin{proof}
Suppose that $D$ has dimension $s$ and a $t$-dimensional hull, and
that $\Pi_V(D)$ has dimension $s'$ and a $t'$-dimensional hull. By
Proposition \ref{invariance},
$
s-t=s'-t'.
$
If $s=t$, then $D$ is self-orthogonal. Proposition~\ref{invariance}
gives $s'=t'$, so $\Pi_V(D)$ is also self-orthogonal. Hence both
codes are of type $+$. We may therefore assume that $s>t$.

By Proposition \ref{hull-matrix}, $D$ has a generator matrix $G_D$
satisfying
\[
G_DG_D^T
=\operatorname{diag}
(\underbrace{1,1,\ldots,\delta}_{s-t},0,\ldots,0),
\]
where $\delta=1$ or $\delta$ is a fixed nonsquare in $\F_q^*$.
Let the rows of $G_D$ be
$\bm{\alpha}_1,\ldots,\bm{\alpha}_s$, and write
\[
\bm{\alpha}_i
=\widetilde{\bm{\alpha}}_i+\mathbf{w}_i,
\qquad
\widetilde{\bm{\alpha}}_i\in V,
\quad
\mathbf{w}_i\in W.
\]
Since $V\subseteq W^\perp$ and $W$ is self-orthogonal,
$
\left\langle
\widetilde{\bm{\alpha}}_i,
\widetilde{\bm{\alpha}}_j
\right\rangle
=
\left\langle
\bm{\alpha}_i,
\bm{\alpha}_j
\right\rangle
$
for all $i,j$. Hence
\begin{equation}\label{GPI}
\begin{pmatrix}
\widetilde{\bm{\alpha}}_1\\
\vdots\\
\widetilde{\bm{\alpha}}_s
\end{pmatrix}
\begin{pmatrix}
\widetilde{\bm{\alpha}}_1\\
\vdots\\
\widetilde{\bm{\alpha}}_s
\end{pmatrix}^{T}
=
\operatorname{diag}
(\underbrace{1,1,\ldots,\delta}_{s-t},
\underbrace{0,\ldots,0}_{t}).
\end{equation}
In particular,
$\widetilde{\bm{\alpha}}_1,\ldots,
\widetilde{\bm{\alpha}}_{s-t}$ are linearly independent.
Since $s'-(s-t)=t'$, they can be extended to a basis of
$\Pi_V(D)$ by choosing $t'$ vectors among the remaining projected
vectors. Equation \eqref{GPI} then shows that this basis has the same
nonzero diagonal part as $G_DG_D^T$. Thus $\Pi_V(D)$ and $D$ have
the same type.
\end{proof}

\begin{lem}\label{even-parity-proj}
Assume that $q$ is even. Let $C=V\oplus W$, where
$V\subseteq W^\perp$ and $W$ is self-orthogonal. Then, for every
subcode $D\subseteq C$, the projection $\Pi_V(D)$ of $D$ onto $V$
along $W$ has the same parity label as $D$. That is, $D$ is odd-like,
respectively even-like, if and only if $\Pi_V(D)$ is odd-like,
respectively even-like.
\end{lem}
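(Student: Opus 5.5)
We use the criterion of Lemma~\ref{1_n}: a code $E$ is even-like if and only if $\langle\mathbf{x},\mathbf{1}_n\rangle=0$ for all $\mathbf{x}\in E$. In even characteristic, $\langle\mathbf{x},\mathbf{1}_n\rangle=\sum_i x_i$. We also use the Frobenius identity
\[
\langle\mathbf{x},\mathbf{x}\rangle=\sum_i x_i^2=\Bigl(\sum_i x_i\Bigr)^2 ,
\]
so a vector is self-orthogonal if and only if it is even-like. (This is the observation behind the remark that self-orthogonal codes are even-like.)

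First we show that every vector of $W$ is even-like. Since $W$ is self-orthogonal, each $\mathbf{w}\in W$ satisfies $\langle\mathbf{w},\mathbf{w}\rangle=0$, hence $\bigl(\sum_i w_i\bigr)^2=0$, hence $\sum_i w_i=0$.

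Now take $\mathbf{c}=\mathbf{v}+\mathbf{w}\in D$ with $\mathbf{v}=\Pi_V(\mathbf{c})\in V$ and $\mathbf{w}\in W$. By linearity of the coordinate sum,
\[
\sum_i c_i=\sum_i v_i+\sum_i w_i=\sum_i v_i .
\]
So $\mathbf{c}$ is even-like if and only if $\Pi_V(\mathbf{c})$ is even-like. Since $\Pi_V$ maps $D$ onto $\Pi_V(D)$, every vector of $\Pi_V(D)$ is the image of some $\mathbf{c}\in D$. Hence all vectors of $D$ are even-like if and only if all vectors of $\Pi_V(D)$ are even-like. This is exactly the claim that $D$ is even-like if and only if $\Pi_V(D)$ is even-like, and the odd-like statement follows by complementation.

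The only point needing care is the first step, that $W\subseteq\mathbf{1}_n^\perp$, which rests on the Frobenius identity in characteristic $2$. Note that the hypothesis $V\subseteq W^\perp$ is not actually needed for this lemma. Alternatively, one could use $\langle\mathbf{v},\mathbf{v}\rangle=\langle\mathbf{c},\mathbf{c}\rangle$, which holds because $V\subseteq W^\perp$ and $W$ is self-orthogonal, together with the Frobenius identity; this gives the same conclusion directly.
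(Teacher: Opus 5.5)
Your proof is correct and follows the same route as the paper. Every vector of $W$ is even-like, so $\mathbf{c}$ and $\Pi_V(\mathbf{c})$ have the same coordinate sum, and the surjectivity of $\Pi_V$ onto $\Pi_V(D)$ then gives the equivalence. You also supply the justification $\langle\mathbf{w},\mathbf{w}\rangle=\bigl(\sum_i w_i\bigr)^2$ that the paper leaves implicit, and your remark that the hypothesis $V\subseteq W^\perp$ is not needed for this lemma is correct.
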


\begin{proof}
Since $W$ is self-orthogonal and $q$ is even, every vector in $W$ is
even-like. For any $\mathbf{x}\in D$, write
$
\mathbf{x}=\mathbf{v}+\mathbf{w},
$
where $\mathbf{v}=\Pi_V(\mathbf{x})\in\Pi_V(D)$ and $\mathbf{w}\in W$.
Since $\mathbf{w}$ is even-like, $\mathbf{x}$ and $\mathbf{v}$ have the
same coordinate sum. Hence all codewords of $D$ are even-like if and
only if all codewords of $\Pi_V(D)$ are even-like. The odd-like
assertion follows immediately.
\end{proof}
\subsection{LCD complements of hulls}
\label{subsec:lcd-complements}

We now construct the {\rm LCD} complements used to reduce enumeration
inside an arbitrary fixed code to enumeration inside an {\rm LCD} code.
Lemma~\ref{gmatrix} supplies the required canonical forms in even
characteristic, while Proposition~\ref{hull-matrix} supplies their
counterparts in odd characteristic. We begin with the complete-type
refinement needed for a parent code of type ${\bf oe}$.

\begin{lem}\label{exist-oeLCD}
Let $C$ be an $[n,k]_q$ linear code with an $\ell$-dimensional hull and suppose that $C$ has type ${\bf oe}$. Then there exists a $(k-\ell)$-dimensional {\rm LCD} subcode ${\rm L}_{\bf oe}^{k-\ell}$ such that $C=\Hull(C)\oplus {\rm L}_{\bf oe}^{k-\ell}$.
\end{lem}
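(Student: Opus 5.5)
The plan is to start from the canonical Gram form of Lemma~\ref{gmatrix}(i) and then twist the resulting complement by a linear map into the hull, so that it contains $\mathbf{1}_n$. The type ${\bf oe}$ will then follow from Lemma~\ref{1_n}.

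First, translate the type condition. Since $C^\perp$ is even-like, Lemma~\ref{1_n} applied to $C^\perp$ gives $\mathbf{1}_n\in (C^\perp)^\perp=C$. Since $C$ is odd-like, the same lemma gives $\mathbf{1}_n\notin C^\perp$. Because $C$ is odd-like, Lemma~\ref{gmatrix}(i) supplies a generator matrix with rows $\mathbf{g}_1,\ldots,\mathbf{g}_k$ such that $GG^T=\operatorname{diag}\{1,\ldots,1,0,\ldots,0\}$, with $k-\ell$ ones.

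Next, identify the pieces of this decomposition. Put ${\rm L}_0=\Span\{\mathbf{g}_1,\ldots,\mathbf{g}_{k-\ell}\}$ and $H=\Span\{\mathbf{g}_{k-\ell+1},\ldots,\mathbf{g}_k\}$.
\begin{itemize}
\item The last $\ell$ rows are orthogonal to every row, so $H\subseteq C\cap C^\perp=\Hull(C)$. Comparing dimensions gives $H=\Hull(C)$.
\item The Gram matrix of ${\rm L}_0$ is $E_{k-\ell}$, so ${\rm L}_0$ is LCD by Lemma~\ref{GG^T}.
\item We have $C={\rm L}_0\oplus\Hull(C)$ and ${\rm L}_0\subseteq \Hull(C)^\perp$.
\end{itemize}
Now write $\mathbf{1}_n=\mathbf{v}+\mathbf{h}$ with $\mathbf{v}\in{\rm L}_0$ and $\mathbf{h}\in\Hull(C)$. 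Here $\mathbf{v}\neq\mathbf{0}$: otherwise $\mathbf{1}_n\in\Hull(C)\subseteq C^\perp$, contradicting $\mathbf{1}_n\notin C^\perp$. In particular $k-\ell\geq 1$.

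The key step is the twist. Choose a linear map $f:{\rm L}_0\to\Hull(C)$ with $f(\mathbf{v})=\mathbf{h}$, which exists because $\mathbf{v}\neq\mathbf{0}$. Set ${\rm L}=\{\mathbf{x}+f(\mathbf{x})\mid \mathbf{x}\in{\rm L}_0\}$. Then:
\begin{itemize}
\item ${\rm L}$ is a complement: it has dimension $k-\ell$, it meets $\Hull(C)$ trivially, and ${\rm L}\oplus\Hull(C)=C$.
\item ${\rm L}$ is LCD: $\Hull(C)$ is self-orthogonal and orthogonal to ${\rm L}_0$, so $\langle\mathbf{x}+f(\mathbf{x}),\mathbf{y}+f(\mathbf{y})\rangle=\langle\mathbf{x},\mathbf{y}\rangle$. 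Hence the rows $\mathbf{g}_i+f(\mathbf{g}_i)$ again have Gram matrix $E_{k-\ell}$, and Lemma~\ref{GG^T} applies. This is the same computation as in Proposition~\ref{invariance}.
\item $\mathbf{1}_n=\mathbf{v}+f(\mathbf{v})\in{\rm L}$.
\end{itemize}

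Finally, determine the type of ${\rm L}$. Since $\mathbf{1}_n\in{\rm L}=({\rm L}^\perp)^\perp$, Lemma~\ref{1_n} shows that ${\rm L}^\perp$ is even-like. If ${\rm L}$ were even-like, then $\mathbf{1}_n\in{\rm L}^\perp$, so $\mathbf{1}_n\in\Hull({\rm L})=\{\mathbf{0}\}$, which is absurd. Hence ${\rm L}$ is odd-like, and ${\rm L}={\rm L}^{k-\ell}_{\bf oe}$ is the required complement.

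The only real obstacle is forcing $\mathbf{1}_n$ into the complement, since the canonical complement ${\rm L}_0$ need not contain it. The twist by $f$ handles this, and it works precisely because $\mathbf{v}\neq\mathbf{0}$, which is where the hypothesis that $C$ is odd-like is used.
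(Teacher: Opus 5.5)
Your proof is correct, but it reaches the complement by a longer route than the paper.

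You start from the canonical Gram form of Lemma~\ref{gmatrix}(i) to obtain a specific complement ${\rm L}_0$. You then twist it by a linear map $f:{\rm L}_0\to\Hull(C)$ so that $\mathbf{1}_n$ lies in the graph.

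The paper never uses the canonical form. After noting $\mathbf{1}_n\in C\setminus\Hull(C)$, it extends $\{\mathbf{1}_n\}$ together with a basis of $\Hull(C)$ to a basis of $C$. It takes ${\rm L}$ to be the span of $\mathbf{1}_n$ and the added vectors. The Gram matrix of this basis of $C$ is $\operatorname{diag}(G_{\rm L}G_{\rm L}^T,O_\ell)$. By Lemma~\ref{GG^T} it has rank $k-\ell$, so $G_{\rm L}G_{\rm L}^T$ is nonsingular.

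In other words, \emph{every} complement of $\Hull(C)$ in $C$ is LCD, which is exactly the inner-product computation you carry out for the graph of $f$. Every complement is such a graph over ${\rm L}_0$, so your twist is just a parametrization of all complements. The canonical form is therefore not needed; it buys you only the explicit Gram matrix $E_{k-\ell}$ for ${\rm L}$.

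The final parity step also differs slightly. You derive a contradiction from $\mathbf{1}_n\in{\rm L}\cap{\rm L}^\perp$. The paper instead notes that an even-like ${\rm L}$ would make $C={\rm L}\oplus\Hull(C)$ even-like, since self-orthogonal vectors are even-like in characteristic two. Both arguments are valid.
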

\begin{proof}
Since $C^\perp$ is even-like,
Lemma~\ref{1_n} gives $\mathbf{1}_n\in C$. Since $C$ is odd-like,
we have $\mathbf{1}_n\notin C^\perp$, and hence
$\mathbf{1}_n\notin \Hull(C)$. Therefore,
$
\dim\bigl(\Hull(C)\oplus\Span\{\mathbf{1}_n\}\bigr)
=\ell+1.
$
We may choose $\bm\alpha_2,\ldots,\bm\alpha_{k-\ell}\in C$
such that
\[
C
=
\Hull(C)\oplus
\Span
\{\mathbf{1}_n,\bm\alpha_2,\ldots,\bm\alpha_{k-\ell}\}.
\]
Put
$
{\rm L}=\Span
\{\mathbf{1}_n,\bm\alpha_2,\ldots,\bm\alpha_{k-\ell}\}.
$
Then $\dim({\rm L})=k-\ell$ and $C=\Hull(C)\oplus {\rm L}$.

Let $\bm\beta_1,\ldots,\bm\beta_\ell$ be a basis of $\Hull(C)$, and
let $G_{\rm L}$ and $G_H$ be the matrices whose rows are
$\mathbf{1}_n,\bm\alpha_2,\ldots,\bm\alpha_{k-\ell}$ and
$\bm\beta_1,\ldots,\bm\beta_\ell$, respectively. Then
$
G=\begin{pmatrix}G_{\rm L}^T&G_H^T\end{pmatrix}^T
$
is a generator matrix of $C$. Since $\Hull(C)\subseteq C^\perp$, we have
\[
GG^T
=
\begin{pmatrix}
G_{\rm L}G_{\rm L}^T&O\\
O&O_\ell
\end{pmatrix}.
\]
By Lemma~\ref{GG^T},
$
\operatorname{rank}(GG^T)=k-\ell.
$
Hence
$
\operatorname{rank}(G_{\rm L}G_{\rm L}^T)=k-\ell,
$
so $G_{\rm L}G_{\rm L}^T$ is nonsingular. Applying Lemma~\ref{GG^T} to ${\rm L}$
shows that ${\rm L}$ is {\rm LCD}.

Since $\mathbf{1}_n\in {\rm L}$, Lemma~\ref{1_n} shows that ${\rm L}^\perp$
is even-like. Moreover, ${\rm L}$ cannot be even-like; otherwise
$C$ is even-like. Thus ${\rm L}$ is odd-like and has type
${\bf oe}$. Taking
$
{\rm L}_{{\bf oe}}^{k-\ell}={\rm L}
$
completes the proof.
\end{proof}

\begin{prop}\label{lcd-complement-types}
Let $C$ be an $[n,k]_q$ linear code with an $\ell$-dimensional hull.
Then the following statements hold.
\begin{enumerate}
\item Suppose that $q$ is even. If $C$ is of type ${\bf oo}$,
${\bf oe}$, or ${\bf eo}$, then there exists an {\rm LCD}
subcode ${\rm L}$ such that
\[
C={\rm L}\oplus \Hull(C),
\]
and ${\rm L}$ has the same type as $C$. If $C$ is of type ${\bf ee}$,
then there exists an {\rm LCD} subcode ${\rm L}$ of type ${\bf eo}$ such
that
\[
C={\rm L}\oplus \Hull(C),
\]
and in this case $\mathbf{1}_n\in\Hull(C)$.

\item Suppose that $q$ is odd. If $C$ is of type $+$ or type $-$, then
there exists an {\rm LCD} subcode ${\rm L}$ such that
\[
C={\rm L}\oplus \Hull(C),
\]
and ${\rm L}$ has the same type as $C$ whenever $k>\ell$. If $k=\ell$, then
we take ${\rm L}=\{\mathbf{0}\}$.
\end{enumerate}
\end{prop}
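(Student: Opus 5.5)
The argument rests on one basic observation. If $C=\Hull(C)\oplus{\rm L}$ for any complement ${\rm L}$ of $\Hull(C)$ inside $C$, then ${\rm L}$ is LCD. The proof of Lemma~\ref{exist-oeLCD} shows this: take a generator matrix of $C$ whose rows are a basis of ${\rm L}$ followed by a basis of $\Hull(C)$. Its Gram matrix is block diagonal with a zero $\ell\times\ell$ block, since $\Hull(C)\subseteq C^\perp$. Lemma~\ref{GG^T} then forces $G_{\rm L}G_{\rm L}^T$ to have rank $k-\ell$, so ${\rm L}$ is LCD. Existence of a complement is therefore never the issue. The work lies in choosing ${\rm L}$ with the correct type, which I handle case by case using the canonical Gram matrices.

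For (i), type ${\bf oe}$ is exactly Lemma~\ref{exist-oeLCD}. For the other types I take the generator matrix $G$ from Lemma~\ref{gmatrix}. Its last $\ell$ rows have zero Gram rows and columns, so they lie in $C\cap C^\perp=\Hull(C)$ and, being $\ell$ independent vectors, span it. I let ${\rm L}$ be the span of the first $k-\ell$ rows.

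\begin{itemize}
\item If $C$ is odd-like, then ${\rm L}$ has Gram matrix $E_{k-\ell}$, so ${\rm L}$ contains a row $\bm\alpha$ with $\langle\bm\alpha,\bm\alpha\rangle=1$. Since $\langle\bm\alpha,\bm\alpha\rangle=(\sum\alpha_i)^2$ in characteristic two, $\bm\alpha$ is odd-like and ${\rm L}$ is odd-like.
\item For the dual parity I use Lemma~\ref{1_n}: ${\rm L}^\perp$ is even-like iff $\mathbf{1}_n\in{\rm L}$. If $C$ has type ${\bf oo}$, then $\mathbf{1}_n\notin C\supseteq{\rm L}$, so ${\rm L}$ has type ${\bf oo}$.
\item If $C$ is even-like (type ${\bf eo}$ or ${\bf ee}$), every vector of $C$ is orthogonal to $\mathbf{1}_n$, so ${\rm L}\subseteq C$ is even-like.
\item For type ${\bf eo}$, $\mathbf{1}_n\notin C$, so $\mathbf{1}_n\notin{\rm L}$ and ${\rm L}$ has type ${\bf eo}$.
\item For type ${\bf ee}$, $\mathbf{1}_n\in C\cap C^\perp=\Hull(C)$, because $\mathbf{1}_n\in C$ (as $C^\perp$ is even-like) and $\mathbf{1}_n\in C^\perp$ (as $C$ is even-like). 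Then ${\rm L}\cap\Hull(C)=\{\mathbf{0}\}$ gives $\mathbf{1}_n\notin{\rm L}$, so ${\rm L}^\perp$ is odd-like and ${\rm L}$ has type ${\bf eo}$.
\end{itemize}

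For (ii), if $k=\ell$ there is nothing to prove. Otherwise I use the generator matrix of Proposition~\ref{hull-matrix} and split it in the same way, with ${\rm L}$ spanned by the first $k-\ell$ rows. The Gram matrix of these rows is $\operatorname{diag}(1,\ldots,1,\delta)$ with the same $\delta$ as for $C$, so ${\rm L}$ is LCD. As a code with $0$-dimensional hull, its type is read off from this same matrix, hence equals the type of $C$.

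The only delicate step is the ${\bf ee}$ case. There the complement cannot keep the type ${\bf ee}$, since $\mathbf{1}_n$ is forced into the hull, and the argument depends on noticing this. Everything else is bookkeeping with Lemmas~\ref{1_n}, \ref{gmatrix} and Proposition~\ref{hull-matrix}.
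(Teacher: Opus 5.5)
Your proposal is correct and follows essentially the same route as the paper: you split the canonical generator matrix from Lemma~\ref{gmatrix} (resp.\ Proposition~\ref{hull-matrix}) into the nondegenerate rows and the hull rows, invoke Lemma~\ref{exist-oeLCD} for type ${\bf oe}$, and read off the parities via Lemma~\ref{1_n}. The only cosmetic difference is in the ${\bf ee}$ case, where you exclude $\mathbf{1}_n$ from ${\rm L}$ using ${\rm L}\cap\Hull(C)=\{\mathbf{0}\}$, whereas the paper notes that an even-like {\rm LCD} code cannot have an even-like dual; both arguments rest on the same fact.
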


\begin{proof}
Assume first that $q$ is even. By Lemma~\ref{gmatrix}, there exists a
$(k-\ell)$-dimensional {\rm LCD} subcode ${\rm L}$ such that
$C={\rm L}\oplus\Hull(C)$, and ${\rm L}$ has the same parity as $C$.
If $C$ is of type ${\bf oe}$, the stronger assertion follows from
Lemma~\ref{exist-oeLCD}. If $C$ is of type ${\bf oo}$, then
$\mathbf{1}_n\notin C$, and hence $\mathbf{1}_n\notin {\rm L}$; by
Lemma~\ref{1_n}, the dual ${\rm L}^\perp$ is odd-like. Thus ${\rm L}$ is of type
${\bf oo}$. If $C$ is even-like, then ${\rm L}$ is even-like. Since an
even-like {\rm LCD} code cannot have an even-like dual, ${\rm L}$ is of type
${\bf eo}$. This proves the assertion for types ${\bf eo}$ and
${\bf ee}$. Moreover, if $C$ is of type ${\bf ee}$, then
Lemma~\ref{1_n} applied to both $C$ and $C^\perp$ gives
$\mathbf{1}_n\in C^\perp$ and $\mathbf{1}_n\in C$, so
$\mathbf{1}_n\in\Hull(C)$.

Now assume that $q$ is odd. If $k=\ell$, then $C$ is self-orthogonal and
we take ${\rm L}=\{\mathbf{0}\}$. If $k>\ell$, Proposition~\ref{hull-matrix}
gives a generator matrix whose Gram matrix has nonzero part
$\operatorname{diag}(1,\ldots,1,\delta)$ and zero part of size $\ell$.
The rows corresponding to the nonzero part span an {\rm LCD} subcode
${\rm L}$, while the remaining rows span $\Hull(C)$. Hence
$C={\rm L}\oplus\Hull(C)$, and the value of $\delta$ shows that
${\rm L}$ has the same type as $C$.
\end{proof}

\section{Enumeration of subcodes with prescribed hull dimensions in even characteristic}\label{sec-2}

Throughout this section, $q$ has even characteristic. We use the type
notation and the LCD-complement decompositions introduced in
Section~\ref{sec-1}.
The formulas established below show that parent codes with the same
$q$ and the same parameters $(k,\ell)_\varepsilon$ contain the same
number of subcodes with any prescribed admissible parameters
$(s,t)_\tau$. Thus the principal results are stated using the
parameterized notation introduced in Section~\ref{sec-1}; specific
representatives are retained only in the intermediate reductions.

\subsection{Enumeration of subcodes with prescribed hull dimensions in {\rm LCD} codes}

We begin with the whole-space formulas that serve as the initial values
for the recursive subcode counts. The {\rm LCD} case is due to Carlet
\emph{et al.}, and the prescribed-hull extension is due to Li
\emph{et al.}

\begin{lem}[{\cite[Theorem 15]{binaryLCD}}]\label{lem:even-lcd-mass}
Assume that $q$ is even. Let $k$ and $n$ be positive integers with $k<n$. Then
    \begin{enumerate}
        \item $\N^0_{\bf oo}(k;\F_q^n)= \begin{cases}
 q^{\frac{nk-k^2+n-1}{2}} {\frac{n}{2}-1 \brack \frac{k-1}{2}}_{q^2}, &\text{if $k$ is odd and $n$ is even};
 \\  q^{\frac{(n-k)(k-1)}{2}}   (q^{n-k}-1) {\frac{n-1}{2} \brack \frac{k-1}{2}}_{q^2}, &\text{if $k$ is odd and $n$ is odd};
 \\  q^{\frac{k(n-k-1)}{2}}   (q^{k}-1) {\frac{n-1}{2} \brack \frac{k}{2}}_{q^2}, &\text{if $k$ is even and $n$ is odd};
 \\  q^{\frac{k(n-k)}{2}}   (q^{k}-1) {\frac{n}{2}-1 \brack \frac{k}{2}}_{q^2}, &\text{if $k$ is even and $n$ is even}.
  \end{cases}$
  \item $\N^0_{\bf oe}(k;\F_q^n)= \begin{cases}
 q^{\frac{(k-1)(n-k)}{2}} {\frac{n-1}{2} \brack \frac{k-1}{2}}_{q^2}, &\text{if $k$ is odd and $n$ is odd};
 \\  q^{\frac{k(n-k)}{2}}   {\frac{n}{2}-1 \brack \frac{k}{2}-1}_{q^2}, &\text{if $k$ is even and $n$ is even};
\\  0, & \text{otherwise.}
 \end{cases}$
 \item $\N^0_{\bf eo}(k;\F_q^n)= \begin{cases}
 q^{\frac{k(n-k-1)}{2}} {\frac{n-1}{2} \brack \frac{k}{2}}_{q^2}, &\text{if $k$ is even and $n$ is odd};
 \\  q^{\frac{k(n-k)}{2}}   {\frac{n}{2}-1 \brack \frac{k}{2}}_{q^2}, &\text{if $k$ is even and $n$ is even};
\\  0, & \text{otherwise.}
 \end{cases}$
    \end{enumerate}
\end{lem}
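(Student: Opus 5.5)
The plan is to avoid an orbit computation for the type $\mathbf{oo}$ directly. Instead I would reduce all three counts to counts of nondegenerate subspaces of symplectic spaces, whose sizes are the classical Gaussian-coefficient-in-$q^2$ expressions.

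\emph{Step 1: separate the types using $\mathbf{1}_n$.} For an {\rm LCD} code $C\subseteq\F_q^n$ we have $\F_q^n=C\oplus C^\perp$, and Lemma~\ref{1_n} gives two equivalences: $C$ is even-like if and only if $\mathbf{1}_n\in C^\perp$, and $C^\perp$ is even-like if and only if $\mathbf{1}_n\in C$. Since $C\cap C^\perp=\{\mathbf{0}\}$, type $\mathbf{ee}$ is impossible. The remaining types are characterized as follows: $\mathbf{oe}$ means $\mathbf{1}_n\in C$; $\mathbf{eo}$ means $\mathbf{1}_n\in C^\perp$; $\mathbf{oo}$ means neither. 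Because $C\mapsto C^\perp$ is a bijection between {\rm LCD} codes of dimensions $k$ and $n-k$ that swaps $\mathbf{oe}$ and $\mathbf{eo}$, we get
\[
\N^0_{\bf eo}(k;\F_q^n)=\N^0_{\bf oe}(n-k;\F_q^n).
\]
Substituting $k\mapsto n-k$ into (ii) and using the symmetry of Gaussian coefficients should give (iii) directly. Finally, $\N^0_{\bf oo}$ is the total number of $k$-dimensional {\rm LCD} codes minus $\N^0_{\bf oe}+\N^0_{\bf eo}$.

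\emph{Step 2: the count of type $\mathbf{oe}$.} The key observation is that $\mathbf{1}_n^\perp$ is the even-like subspace. On it, $\langle\mathbf{x},\mathbf{x}\rangle=(\sum x_i)^2=0$, so the form restricted to $\mathbf{1}_n^\perp$ is alternating.

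If $n$ is odd, then $\mathbf{1}_n$ is non-self-orthogonal. Every {\rm LCD} code $C\ni\mathbf{1}_n$ splits as $C=\Span\{\mathbf{1}_n\}\perp C'$, where $C'=C\cap\mathbf{1}_n^\perp$ is a nondegenerate $(k-1)$-subspace of the $(n-1)$-dimensional symplectic space $\mathbf{1}_n^\perp$. This forces $k$ to be odd. Invoking the standard count of nondegenerate $2r$-subspaces of a $2m$-dimensional symplectic space, $q^{2r(m-r)}{m\brack r}_{q^2}$, then yields $q^{(k-1)(n-k)/2}{\frac{n-1}{2}\brack\frac{k-1}{2}}_{q^2}$.

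If $n$ is even, then $\mathbf{1}_n$ is isotropic. I would pass to the $(n-2)$-dimensional symplectic space $\mathbf{1}_n^\perp/\Span\{\mathbf{1}_n\}$. The map $C\mapsto (C\cap\mathbf{1}_n^\perp)/\Span\{\mathbf{1}_n\}$ sends $C$ to a nondegenerate $(k-2)$-subspace, which forces $k$ to be even. It then remains to count the fibre: choose a hyperbolic partner of $\mathbf{1}_n$ modulo the lifted data. I expect the fibre to have size $q^{n-k}$, so that the total is $q^{k(n-k)/2}{\frac n2-1\brack\frac k2-1}_{q^2}$. In both parities, the parity obstructions give the zero cases.

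\emph{Step 3: the total {\rm LCD} count, which I expect to be the main obstacle.} I would count ordered bases $(\mathbf{g}_1,\dots,\mathbf{g}_k)$ with $GG^T$ invertible recursively. Given a nondegenerate $i$-dimensional span $U$, the next vector $\mathbf{g}$ must satisfy that $U+\Span\{\mathbf{g}\}$ is nondegenerate. Writing $\mathbf{g}=\mathbf{u}+\mathbf{u}'$ with $\mathbf{u}'\in U^\perp$, this amounts to $\langle\mathbf{u}',\mathbf{u}'\rangle\neq0$, that is, $\mathbf{u}'$ odd-like in the ambient sense. The difficulty is that $U^\perp$ may be either odd-like or even-like (alternating) depending on the type of $U$, so the recursion has to track types. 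To handle this, I would instead partition by whether $\mathbf{1}_n\in C$, $\mathbf{1}_n\in C^\perp$, or neither, and treat the last case directly. If $\mathbf{1}_n=\mathbf{c}+\mathbf{c}'$ with both components nonzero, then $C\cap\mathbf{1}_n^\perp$ is the hyperplane $\mathbf{c}^\perp\cap C$. From here one can reduce, as in Step 2, to symplectic counts in $\mathbf{1}_n^\perp$ (for $n$ odd) or in $\mathbf{1}_n^\perp/\Span\{\mathbf{1}_n\}$ (for $n$ even), summing over the choice of the projection $\mathbf{c}$.

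The four parity cases of $(n,k)$ in (i) would then be verified separately. The factors $(q^{k}-1)$ and $(q^{n-k}-1)$ in (i) should arise as the number of admissible choices of the component $\mathbf{c}$ (respectively $\mathbf{c}'$) in the relevant case.
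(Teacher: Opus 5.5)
Your Steps 1 and 2 are correct, and they prove parts (ii) and (iii) by a route different from the source. The paper gives no proof here; it imports the lemma from Carlet \emph{et al.}, where the counts come from the orthogonal-equivalence theorem (Proposition~\ref{even-form}) through an orbit count under $\mathbb{O}_n(q)$. You instead count directly inside the alternating hyperplane $\mathbf{1}_n^\perp$, which needs only the classical count of nondegenerate subspaces of a symplectic space. The fibre you only ``expect'' in Step 2 is indeed $q^{n-k}$. Let $H\ni\mathbf{1}_n$ be the preimage of a nondegenerate $(k-2)$-subspace and let $\mathbf{g}\notin\mathbf{1}_n^\perp$. The Gram determinant of $H\oplus\Span\{\mathbf{g}\}$ is $\langle\mathbf{1}_n,\mathbf{g}\rangle^2$ times the nonzero Gram determinant of a complement of $\mathbf{1}_n$ in $H$. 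Hence all $q^{n-k}$ spaces $C\supseteq H$ of dimension $k$ with $C\not\subseteq\mathbf{1}_n^\perp$ are LCD.

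The gap is part (i). As organized, Step 3 is circular. Step 1 obtains $\N^0_{\mathbf{oo}}$ as the total minus $\N^0_{\mathbf{oe}}+\N^0_{\mathbf{eo}}$, while Step 3 obtains the total by ``treating the last case directly'', that is, by counting type $\mathbf{oo}$. The ordered-basis recursion you abandon would also fail as stated, since a basis with invertible $GG^T$ need not have nondegenerate initial segments. So (i) rests entirely on a direct $\mathbf{oo}$ count that is never carried out, and ``as in Step 2'' does not supply it.

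What is missing beyond your hyperplane observation is the following dichotomy. For $\mathbf{x}\in C$ one has $\langle\mathbf{x},\mathbf{x}\rangle=\langle\mathbf{x},\mathbf{1}_n\rangle^2=\langle\mathbf{x},\mathbf{c}\rangle^2$, so $\langle\mathbf{c},\mathbf{c}\rangle\in\{0,1\}$. Since $H=C\cap\mathbf{c}^\perp$ is alternating with radical $H\cap\Span\{\mathbf{c}\}$, it is nondegenerate when $k$ is odd and has the \emph{varying} radical $\Span\{\mathbf{c}\}$ when $k$ is even. The remaining counts are then:
\begin{enumerate}
\item A nondegenerate $H$ has $q^{n-k}$ odd-like extensions, all LCD. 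Exactly one of them contains $\mathbf{1}_n$ if $n$ is odd, and none does if $n$ is even.
\item An $H$ with radical $\Span\{\mathbf{r}\}$, where $\mathbf{r}\notin\Span\{\mathbf{1}_n\}$, has $(q-1)q^{n-k-1}$ odd-like LCD extensions. These are the lines of $\F_q^n/H$ avoiding both $\mathbf{1}_n^\perp/H$ and $\mathbf{r}^\perp/H$.
\item For $n$ even, the admissible $H$ are $q^{k-1}$-fold isometric lifts from $\mathbf{1}_n^\perp/\Span\{\mathbf{1}_n\}$.
\item In the even-$k$ case, count $H$ by first choosing its radical line. After cancelling $q-1$, this produces $q^{n-1}-1$ for $n$ odd and $q^{n-2}-1$ for $n$ even.
\end{enumerate}
The factor $q^k-1$ in (i) then appears only through the identity
\[
(q^{2M}-1)\genfrac{[}{]}{0pt}{}{M-1}{r-1}_{q^2}=(q^{2r}-1)\genfrac{[}{]}{0pt}{}{M}{r}_{q^2},
\]
with $r=k/2$ and $M=(n-1)/2$ or $M=n/2-1$. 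It does not arise as a number of choices of $\mathbf{c}$, as your closing remark suggests. With these ingredients all four cases of (i) do come out, so your route is viable, but as written part (i) is not proved.
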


\begin{lemma}[{\cite[Theorem 5.10]{Li-hull-even}}]\label{thm:even-whole-space-mass}
Let $\ell,k$ and $n$ be positive integers such that $\ell\leq k\leq n-\ell$. Put $k_0=k-\ell$ and
\[
A_{n,k,\ell}=\frac{q^{n-k}-q^{\ell-1}}{(q^\ell-1)q^{k-1}},
\qquad
B_{n,k,\ell}=\frac{q^{n-k}-q^\ell}{(q^\ell-1)q^{k-1}}.
\]
Then
\begin{enumerate}
    \item
    $
    \N_{\bf oo}^{\ell}(k;\F_q^n)=
    \begin{cases}
    \left(\prod_{i=1}^{\ell}A_{n,k_0+i,i}\right)
    \N^0_{\bf oo}(k_0;\F_q^n),&\text{if $n-k_0$ is odd},\\[1mm]
    \left(\prod_{i=1}^{\ell}B_{n,k_0+i,i}\right)
    \N^0_{\bf oo}(k_0;\F_q^n),&\text{if $n-k_0$ is even};
    \end{cases}
    $
    \item
    $
    \N_{\bf oe}^{\ell}(k;\F_q^n)=
    \begin{cases}
    \left(\dfrac{\prod_{i=1}^{\ell-1}B_{n,k_0+i,i}}
    {(q^\ell-1)q^{k-\ell-1}}\right)
    \N^0_{\bf oo}(k_0;\F_q^n),&\text{if $n-k_0$ is even},\\[2mm]
    0,&\text{otherwise};
    \end{cases}
    $
    \item
    $
    \N_{\bf eo}^{\ell}(k;\F_q^n)=
    \begin{cases}
    \left(\prod_{i=1}^{\ell}A_{n,k_0+i,i}\right)
    \N^0_{\bf eo}(k_0;\F_q^n),&\text{if $n$ is odd and $k_0$ is even},\\[1mm]
    \left(\prod_{i=1}^{\ell}B_{n,k_0+i,i}\right)
    \N^0_{\bf eo}(k_0;\F_q^n),&\text{if $n$ is even and $k_0$ is even},\\[1mm]
    0,&\text{otherwise};
    \end{cases}
     $
    \item
    $
    \N_{\bf ee}^{\ell}(k;\F_q^n)=
    \begin{cases}
    \left(\dfrac{\prod_{i=1}^{\ell-1}B_{n,k_0+i,i}}{q^{k-1}}\right)
    \N^0_{\bf eo}(k_0;\F_q^n),&\text{if $n$ is even and $k_0$ is even},\\[2mm]
    0,&\text{otherwise}.
    \end{cases}
    $
\end{enumerate}
\end{lemma}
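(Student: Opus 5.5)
The plan is to prove the formulas by induction on $\ell$. The inductive step is a double count that relates codes with an $\ell$-dimensional hull to codes with an $(\ell-1)$-dimensional hull and the same LCD rank $k_0=k-\ell$. Iterating it $\ell$ times reduces everything to the LCD counts of Lemma~\ref{lem:even-lcd-mass}. Fix $k$, $\ell\ge1$, and a complete type $\varepsilon$.

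\textbf{The incidence set.} Consider pairs $(D',D)$ with
\[
D'\subset D,\qquad \dim D=k,\quad \dim\Hull(D)=\ell,\qquad \dim D'=k-1,\quad \dim\Hull(D')=\ell-1,
\]
where $D'\cap\Hull(D)$ is a hyperplane of $\Hull(D)$.

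\textbf{Counting from $D$ downward.} Decompose $D={\rm L}\oplus\Hull(D)$ as in Proposition~\ref{lcd-complement-types}. By Corollary~\ref{invariance-cor}, the admissible $D'$ are exactly the subcodes with $\Pi_{\rm L}(D')={\rm L}$ and $\dim(D'\cap\Hull(D))=\ell-1$. Lemma~\ref{proj} then says each $D$ contains
\[
q^{k_0}\genfrac{[}{]}{0pt}{}{\ell}{\ell-1}_q=\frac{q^{k_0}(q^\ell-1)}{q-1}
\]
such $D'$.

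\textbf{Counting from $D'$ upward.} Given $D'$, we have $D=D'+\Span\{\mathbf{x}\}$, where $\mathbf{x}\notin D'$ may be chosen in $D'^\perp$ with $\langle\mathbf{x},\mathbf{x}\rangle=0$. Then $\Hull(D)=\Hull(D')\oplus\Span\{\mathbf{x}\}$ has dimension $\ell$.

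In characteristic $2$ we have $\langle\mathbf{x},\mathbf{x}\rangle=(\sum x_i)^2$. So self-orthogonality of $\mathbf{x}$ is the same as $\mathbf{x}\in\mathbf{1}_n^\perp$, and the admissible $\mathbf{x}$ are the vectors of $D'^\perp\cap\mathbf{1}_n^\perp$ lying outside $\Hull(D')$. By Lemma~\ref{1_n}, the space $D'^\perp\cap\mathbf{1}_n^\perp$ has size $q^{n-k+1}$ if $\mathbf{1}_n\in D'$ and $q^{n-k}$ otherwise.

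Each resulting $D$ is produced by exactly $|\Hull(D)\setminus\Hull(D')|=q^{\ell}-q^{\ell-1}$ vectors $\mathbf{x}$. Hence each $D'$ lies in
\[
\frac{q^{n-k+1}-q^{\ell-1}}{(q-1)q^{\ell-1}}
\qquad\text{or}\qquad
\frac{q^{n-k}-q^{\ell-1}}{(q-1)q^{\ell-1}}
\]
codes $D$, according to whether $\mathbf{1}_n\in D'$ or not.

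\textbf{The recursion.} Equating the two counts and using $q^{k_0}q^{\ell-1}=q^{k-1}$ gives
\[
\N^{\ell}(k)=\N^{\ell-1}(k-1)\cdot\frac{q^{n-k}-q^{\ell-1}}{(q^\ell-1)q^{k-1}}
\]
in the generic case, which is the factor $A_{n,k,\ell}$. The shifted exponent occurring when $\mathbf{1}_n$ interacts with $D'$ or $\Hull(D)$ should produce $B_{n,k,\ell}$, after the subcase where $\mathbf{1}_n$ is excluded from, or forced into, the allowed $\mathbf{x}$ has been subtracted.

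\textbf{Type bookkeeping (main obstacle).} The parity label is preserved in passing between $D$ and $D'$, by Lemma~\ref{even-parity-proj}. The dual parity is governed by whether $\mathbf{1}_n\in D$, and this is where types mix.
\begin{enumerate}
\item For types ${\bf oo}$ and ${\bf eo}$ we have $\mathbf{1}_n\notin D$. So $D'$ has the same type as $D$, and one must restrict to $\mathbf{x}$ with $\mathbf{1}_n\notin D'+\Span\{\mathbf{x}\}$. Whether $\mathbf{1}_n\in D'^\perp$, and hence which of $A$ or $B$ appears, is determined by the parity of $n-k_0$ for ${\bf oo}$, and by the parity of $n$ with $k_0$ even for ${\bf eo}$. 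This parity dependence comes from the canonical Gram matrices of Lemma~\ref{gmatrix}.
\item For type ${\bf ee}$ we have $\mathbf{1}_n\in\Hull(D)$, and for type ${\bf oe}$ we have $\mathbf{1}_n\in D\setminus D^\perp$. Here I would run the same double count but restrict to hyperplanes of $\Hull(D)$ avoiding $\mathbf{1}_n$. Alternatively, one can descend once to an ${\bf oo}$ or ${\bf eo}$ code, which yields the single exceptional denominator $(q^\ell-1)q^{k-\ell-1}$ or $q^{k-1}$. After that the ${\bf oo}$ or ${\bf eo}$ recursion is applied $\ell-1$ times.
\end{enumerate}

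The delicate part is to verify in each case exactly which $D'$ and which $\mathbf{x}$ keep $\mathbf{1}_n$ inside or outside the relevant space. This accounting is what matches the stated parity conditions and the vanishing cases.
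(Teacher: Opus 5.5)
The paper only quotes this lemma from Li and Shi, so your argument has to stand on its own. The skeleton is sound: it is the Sendrier-type recursion that the paper itself uses for odd $q$ in Lemma~\ref{cons} and Proposition~\ref{NS-subcode-odd}. Your downward count $q^{k_0}(q^\ell-1)/(q-1)$ is correct. So is your description of the admissible $\mathbf{x}$ as the self-orthogonal vectors of $D'^\perp\setminus\Hull(D')$, and so is the multiplicity $q^\ell-q^{\ell-1}$ (because $D\cap D'^\perp=\Hull(D)$).

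\textbf{The gap.} You stop exactly where the content of the lemma lies, and the criterion you give for $A$ versus $B$ is wrong. In the $\mathbf{oo}$ and $\mathbf{eo}$ recursions every $D'$ has the same type as $D$, so $\mathbf{1}_n\notin D'$ throughout and the size $q^{n-k+1}$ never occurs. Moreover, $\mathbf{1}_n\in D'^\perp$ just says $D'$ is even-like. It always fails for $\mathbf{oo}$ and always holds for $\mathbf{eo}$, regardless of the parity of $n-k_0$ or of $n$. Neither membership can therefore produce $B_{n,k,\ell}$, the parity conditions in (i) and (iii), or the vanishing conditions in (ii) and (iv).

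\textbf{Where $B$ comes from.} The missing step is the explicit description of the vectors you must exclude. Let $\mathbf{1}_n\notin D'$ and $\mathbf{x}\in D'^\perp\setminus D'$. Then $\mathbf{1}_n\in D'+\Span\{\mathbf{x}\}$ if and only if $\mathbf{x}\in\F_q^*(\mathbf{1}_n+\mathbf{d})$ for some $\mathbf{d}\in D'$ with $\mathbf{1}_n+\mathbf{d}\in D'^\perp$.
\begin{enumerate}
\item Such $\mathbf{d}$ exist, because the functional $\mathbf{y}\mapsto\langle\mathbf{1}_n,\mathbf{y}\rangle$ on $D'$ vanishes on $\Hull(D')$. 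They form a single coset $\mathbf{d}_0+\Hull(D')$.
\item This gives $(q-1)q^{\ell-1}$ distinct candidates. They are self-orthogonal if and only if the coordinate sum of $\mathbf{1}_n+\mathbf{d}_0$ vanishes.
\item With the canonical basis of Lemma~\ref{gmatrix}, for $\mathbf{oo}$ you may take $\mathbf{d}_0$ to be the sum of the $k_0$ orthonormal rows, each of coordinate sum $1$. The condition becomes $n-k_0$ even.
\item For $\mathbf{eo}$ you may take $\mathbf{d}_0=\mathbf{0}$, and the condition becomes $n$ even.
\end{enumerate}
Exactly in these cases the admissible count drops from $q^{n-k}-q^{\ell-1}$ to $q^{n-k}-q^{\ell}$. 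That drop is the factor $B_{n,k,\ell}$.

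\textbf{The descents for $\mathbf{oe}$ and $\mathbf{ee}$.} The same computation is needed here, and it is also not carried out. Under the parity condition above, an $\mathbf{oo}$ (respectively $\mathbf{eo}$) code $D'$ lies in exactly one $\mathbf{oe}$ (respectively $\mathbf{ee}$) code of hull dimension $\ell$, namely $D'+\Span\{\mathbf{1}_n\}$; otherwise it lies in none. Counting from above:
\begin{enumerate}
\item An $\mathbf{oe}$ code contains $q^{k_0-1}(q^\ell-1)$ such $D'$. The condition is $f(\mathbf{1}_n)\neq 0$, where $\mathbf{1}_n\in{\rm L}$ as in Lemma~\ref{exist-oeLCD}. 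Your restriction to hyperplanes of $\Hull(D)$ avoiding $\mathbf{1}_n$ is vacuous here, since $\mathbf{1}_n\notin\Hull(D)$.
\item An $\mathbf{ee}$ code contains $q^{k_0}q^{\ell-1}=q^{k-1}$ such $D'$, coming from the hyperplanes of $\Hull(D)$ that avoid $\mathbf{1}_n$.
\end{enumerate}
These counts give $\N^{\ell}_{\mathbf{oe}}(k;\F_q^n)=\N^{\ell-1}_{\mathbf{oo}}(k-1;\F_q^n)/((q^\ell-1)q^{k_0-1})$ and $\N^{\ell}_{\mathbf{ee}}(k;\F_q^n)=\N^{\ell-1}_{\mathbf{eo}}(k-1;\F_q^n)/q^{k-1}$. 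Combined with (i) and (iii) at level $\ell-1$, they yield (ii) and (iv), including their vanishing cases.
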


For finite fields of even characteristic, Xiong and Liu \cite{x-lcd}  established mass formulas for {\rm LCD} subcodes of {\rm LCD} codes. We extend their results to subcodes with prescribed hull dimensions in {\rm LCD} codes. To this end, we first present several preliminary results.

Let $G^{k,n}_{\bf oo}$, $G^{k,n}_{\bf oe}$ and $G^{k,n}_{\bf eo}$ be the following $k\times n$ matrices:
\begin{align}\label{even}
G^{k,n}_{\bf oo}
=
\left(\begin{array}{c}
\mathbf{e}_{1}\\
\mathbf{e}_{2}\\
\vdots\\
\mathbf{e}_{k}
\end{array}\right),
\qquad
G^{k,n}_{\bf oe}
=
\left(\begin{array}{c}
\mathbf{e}_{1}\\
\vdots\\
\mathbf{e}_{k-1}\\
\sum_{i=k}^{n}\mathbf{e}_{i}
\end{array}\right),
\qquad
G^{k,n}_{\bf eo}
=
\left(\begin{array}{c}
\mathbf{e}_{1}+\mathbf{e}_{2}\\
\mathbf{e}_{2}+\mathbf{e}_{3}\\
\vdots\\
\mathbf{e}_{k}+\mathbf{e}_{k+1}
\end{array}\right),
\end{align}
where $\mathbf{e}_i=(0,\ldots,0,1,0,\ldots,0)$ denotes the vector of length $n$ with $1$ in the $i$-th coordinate. Let ${\rm L}^{k}_{\bf oo}$, ${\rm L}^{k}_{\bf oe}$ and ${\rm L}^{k}_{\bf eo}$ be the $[n,k]_q$ linear codes generated by $G^{k,n}_{\bf oo}$, $G^{k,n}_{\bf oe}$ and $G^{k,n}_{\bf eo}$, respectively. Then
\[
{\rm L}^{k}_{\bf oo}\in\mathcal{S}^0_{\bf oo}(k;\F_q^n),\qquad
{\rm L}^{k}_{\bf oe}\in\mathcal{S}^0_{\bf oe}(k;\F_q^n)
\quad\text{if $n-k$ is even},
\]
and
\[
{\rm L}^{k}_{\bf eo}\in\mathcal{S}^0_{\bf eo}(k;\F_q^n)
\quad\text{if $k$ is even}.
\]

\begin{prop}[{\cite[Theorem 11]{binaryLCD}}]\label{even-form}
Let $C_1$ and $C_2$ be two $[n,k]_q$ {\rm LCD} codes of the same type. Then there exists an orthogonal matrix $Q\in\mathbb{O}_n(q)$ such that $C_2=C_1Q$. Conversely, for any $[n,k]_q$ {\rm LCD} code $C$ and any $Q\in\mathbb{O}_n(q)$, the code $CQ$ is also {\rm LCD} and has the same type as $C$.
\end{prop}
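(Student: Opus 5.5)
The converse direction is the easier half, so I would handle it first. Let $C$ be an $[n,k]_q$ LCD code with generator matrix $G$, and let $Q\in\mathbb{O}_n(q)$. Then $GQ$ generates $CQ$, and $(GQ)(GQ)^T=GQQ^TG^T=GG^T$. By Lemma~\ref{GG^T}, $CQ$ is therefore LCD; more generally, its hull dimension equals that of $C$. For the type, I would first note that $(CQ)^\perp=C^\perp Q$: since $Q$ is orthogonal, $\langle \mathbf{x}Q,\mathbf{y}Q\rangle=\langle\mathbf{x},\mathbf{y}\rangle$. Next, the condition $QQ^T=E_n$ over characteristic $2$ gives $\mathbf{1}_nQ=\mathbf{1}_n$. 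To see this, observe that each row $\mathbf{r}_i$ of $Q$ satisfies $\langle\mathbf{r}_i,\mathbf{r}_i\rangle=1$, so the sum of its entries is $1$ (in characteristic $2$, $\sum r_{ij}^2=(\sum r_{ij})^2$, and squaring is injective). Hence $Q\mathbf{1}_n^T=\mathbf{1}_n^T$, and the same argument applied to $Q^T$, which is also orthogonal, gives $\mathbf{1}_nQ=\mathbf{1}_n$. By Lemma~\ref{1_n}, the parity of $C$ is determined by whether $\mathbf{1}_n\in C^\perp$, and $\mathbf{1}_n\in C^\perp$ holds if and only if $\mathbf{1}_n=\mathbf{1}_nQ\in C^\perp Q=(CQ)^\perp$. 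The same reasoning applied to $C^\perp$ shows that the parities of $CQ$ and $(CQ)^\perp$ agree with those of $C$ and $C^\perp$, so $CQ$ has the same type as $C$.

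For the forward direction, I would use a Witt-type extension argument. Since $C_1$ and $C_2$ are LCD, $\F_q^n=C_i\oplus C_i^\perp$, and the restriction of the form to each of the four summands is nondegenerate. The types encode the isometry classes of the restricted symmetric forms. By Lemma~\ref{gmatrix} with $\ell=0$, an odd-like LCD code has a generator matrix with Gram matrix $E_k$, while an even-like one has Gram matrix $\operatorname{diag}(J_2,\ldots,J_2)$. Thus $C_1$ and $C_2$ have the same type exactly when $C_1\cong C_2$ isometrically and $C_1^\perp\cong C_2^\perp$ isometrically, the latter using the same dichotomy in dimension $n-k$ applied to the LCD codes $C_i^\perp$. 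Choose isometries $\phi:C_1\to C_2$ and $\psi:C_1^\perp\to C_2^\perp$ realized by matching these canonical bases. Then $\phi\oplus\psi$ is a linear bijection of $\F_q^n$ preserving the inner product, because cross terms vanish on both sides. Its matrix $Q$ with respect to the standard basis therefore satisfies $QQ^T=E_n$ and maps $C_1$ onto $C_2$.

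The step needing care is the claim that equality of types yields isometric complements, since a type records only parities. For an LCD code, however, the parity of a code determines its Gram class by Lemma~\ref{gmatrix}, and the type also records the parity of the dual. This is exactly the required data, so the extension goes through without appeal to a general Witt theorem in characteristic $2$, which is where I would otherwise expect trouble.
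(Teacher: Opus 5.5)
Your proof is correct and complete. The paper gives no proof of this proposition and simply imports it from \cite[Theorem~11]{binaryLCD}, so there is no internal argument to match. What you supply is a self-contained derivation using only Lemmas~\ref{GG^T}, \ref{1_n} and~\ref{gmatrix}, which are already in the paper.

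For the forward direction, your isometry can be written explicitly as $Q=M_1^{-1}M_2$. Here $M_i$ stacks a canonical generator matrix of $C_i$ on top of a canonical generator matrix of $C_i^\perp$, where the Gram matrix is $E$ or $\operatorname{diag}(J_2,\ldots,J_2)$ according to parity. The matrix $M_i$ is invertible because $\F_q^n=C_i\oplus C_i^\perp$, and $M_1M_1^T=M_2M_2^T$ gives $QQ^T=E_n$. The point you flag as delicate is handled correctly. Since $\langle\mathbf{x},\mathbf{x}\rangle=(\sum_i x_i)^2$ in characteristic~$2$, an LCD type records exactly whether each of the two restricted forms, on $C$ and on $C^\perp$, is alternating. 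By Lemma~\ref{gmatrix} with $\ell=0$, this together with the dimensions determines both isometry classes, so no Witt-type extension theorem is needed.

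For the converse, compare the paper's own treatment of the same invariance in the proof of Lemma~\ref{same-type-count}. There, parity preservation is read off from the diagonal of the Gram matrix: $D$ is even-like if and only if $\langle\mathbf{x},\mathbf{x}\rangle=0$ on $D$, which $Q$ visibly preserves. That argument is then reapplied to $(DQ)^\perp=D^\perp Q$. Your route through $\mathbf{1}_nQ=\mathbf{1}_n$ is equally short and valid. It has the side benefit of showing that orthogonal matrices fix $\mathbf{1}_n$, which explains directly why the condition $\mathbf{1}_n\in D$ is an orthogonal invariant.
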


\begin{lem}\label{same-type-count}
Let $C$ and $\widetilde{C}$ be two $[n,k]_q$ {\rm LCD} codes of the same type.
Then, for all nonnegative integers $s$, $t$ and $\tau \in \{\bf oo,\bf oe,\bf eo,\bf ee\}$,
$$\N_{\tau}^t(s;C)=\N^t_{\tau}(s;\widetilde{C}).$$
\end{lem}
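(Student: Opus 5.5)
By Proposition~\ref{even-form}, there is an orthogonal matrix $Q\in\mathbb{O}_n(q)$ with $\widetilde{C}=CQ$. The plan is to show that the map
\[
\phi_Q:\mathcal{L}(C)\longrightarrow\mathcal{L}(\widetilde{C}),\qquad D\longmapsto DQ,
\]
is a bijection between subcodes that preserves the dimension, the hull dimension, and the complete type. Then $\phi_Q$ restricts to a bijection $\mathcal{S}^t_\tau(s;C)\to\mathcal{S}^t_\tau(s;\widetilde{C})$, and the two counts agree.

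First, since $Q$ is invertible, right multiplication by $Q$ is a linear automorphism of $\F_q^n$ carrying $C$ onto $\widetilde{C}$. It therefore induces a dimension-preserving bijection of subcode lattices, with inverse $D'\mapsto D'Q^{-1}=D'Q^T$. Second, for any $\mathbf{x},\mathbf{y}$ we have $\langle\mathbf{x}Q,\mathbf{y}Q\rangle=\mathbf{x}QQ^T\mathbf{y}^T=\langle\mathbf{x},\mathbf{y}\rangle$. This gives $(DQ)^\perp=D^\perp Q$, because $\mathbf{z}Q\perp DQ$ if and only if $\mathbf{z}\perp D$, and both sides have dimension $n-\dim D$. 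Hence $\Hull(DQ)=DQ\cap D^\perp Q=\Hull(D)Q$, so the hull dimension is preserved. Alternatively, if $G_D$ generates $D$, then $G_DQ$ generates $DQ$ and $(G_DQ)(G_DQ)^T=G_DG_D^T$, so Lemma~\ref{GG^T} gives the same conclusion.

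The main point is that the complete type is preserved, since types involve the vector $\mathbf{1}_n$, which is not obviously fixed by $Q$. I would first prove that $\mathbf{1}_nQ=\mathbf{1}_n$ for every $Q\in\mathbb{O}_n(q)$ when $q$ is even. Let $\mathbf{q}_i$ denote the $i$th row of $Q$. Then $\langle\mathbf{q}_i,\mathbf{q}_i\rangle=1$, and in characteristic $2$ we have $\langle\mathbf{x},\mathbf{x}\rangle=\bigl(\sum_j x_j\bigr)^2$. Since squaring is injective on $\F_q$, each row sum of $Q$ equals $1$, which means $Q\mathbf{1}_n^T=\mathbf{1}_n^T$. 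Applying the same argument to $Q^T\in\mathbb{O}_n(q)$ gives $\mathbf{1}_nQ=\mathbf{1}_n$.

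With this in hand, Lemma~\ref{1_n} shows that $D$ is even-like if and only if $\mathbf{1}_n\in D^\perp$. This holds if and only if $\mathbf{1}_n=\mathbf{1}_nQ\in D^\perp Q=(DQ)^\perp$, that is, if and only if $DQ$ is even-like. Similarly, $D^\perp$ is even-like if and only if $\mathbf{1}_n\in D$, which holds if and only if $\mathbf{1}_n\in DQ$, which holds if and only if $(DQ)^\perp$ is even-like. Thus $D$ and $DQ$ have the same complete type $\tau$, and $\phi_Q$ maps $\mathcal{S}^t_\tau(s;C)$ bijectively onto $\mathcal{S}^t_\tau(s;\widetilde{C})$. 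This yields $\N^t_\tau(s;C)=\N^t_\tau(s;\widetilde{C})$.
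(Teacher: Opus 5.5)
Your proof is correct. Its skeleton matches the paper's: transport along $D\mapsto DQ$ with $Q$ from Proposition~\ref{even-form}, with the hull dimension preserved because $(G_DQ)(G_DQ)^T=G_DG_D^T$, or equivalently because $(DQ)^\perp=D^\perp Q$.

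The one place you argue differently is parity. The paper reads off ``$D$ is even-like if and only if $DQ$ is even-like'' directly from the equality of Gram matrices. This tacitly uses the characteristic-$2$ identity $\langle\mathbf{x},\mathbf{x}\rangle=\bigl(\sum_i x_i\bigr)^2$, so that a code is even-like exactly when its Gram matrix has zero diagonal. The paper then repeats this argument for $D^\perp$ using $(DQ)^\perp=D^\perp Q$.

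You apply the same Frobenius identity to the rows and columns of $Q$ instead. This proves that every $Q\in\mathbb{O}_n(q)$ fixes $\mathbf{1}_n$, and you then transport the membership criteria of Lemma~\ref{1_n}. Your route makes explicit the step the paper leaves implicit. It also handles the dual parity through the simple equivalence $\mathbf{1}_n\in D\iff\mathbf{1}_n\in DQ$, rather than through Gram matrices of $D^\perp$.

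The paper's version is shorter and stays within the Gram-matrix framework it uses throughout. Yours yields the reusable fact $\mathbf{1}_nQ=\mathbf{1}_n$ as a by-product.
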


\begin{proof}
By Proposition \ref{even-form}, there exists an orthogonal matrix
$Q\in \mathbb{O}_n(q)$ such that
$
\widetilde C=CQ.
$

Let $D\subseteq C$, and put $\widetilde D=DQ\subseteq\widetilde C$.
If $G_D$ is a generator matrix of $D$, then $G_DQ$ is a generator
matrix of $\widetilde D$. Since $Q Q^T=E_n$, we have
\[
G_DG_D^T=G_DQQ^TG_D^T.
\]
Thus $D$ is even-like if and only if $\widetilde D$ is even-like.
Moreover, by Lemma~\ref{GG^T},
\[
\dim\Hull(\widetilde D)
=
\dim\Hull(D).
\]
Since $(DQ)^\perp=D^\perp Q$, the same argument applied to $D^\perp$
shows that $D^\perp$ is even-like if and only if
$\widetilde D^\perp$ is even-like. Hence the complete type of $D$ is
the same as that of $\widetilde D$.

Consequently, the map
\[
D\longmapsto DQ
\]
gives a bijection between the $s$-dimensional subcodes of $C$ with
$t$-dimensional hull and any prescribed type, and the corresponding
subcodes of $\widetilde C$. This proves the four equalities.
\end{proof}

The preceding invariance result reduces the enumeration inside an {\rm LCD}
parent code to a convenient standard representative. We first treat
the case in which puncturing preserves all four complete types.

\begin{proposition}\label{oe number}
Let ${\rm L}^{k}_{\bf oe}$ be the $[n,k]_q$ {\rm LCD} code of type ${\bf oe}$. Let $s$ and $t$ be nonnegative integers with $0\leq t\leq s<k\leq n$. Then for all $\tau \in \{\bf oo,\bf oe,\bf eo,\bf ee\}$, we have  \(\N^t_{\tau}(s;{\rm L}^{k}_{\bf oe})=\N^t_{\tau}(s;\F_q^k).\)
\end{proposition}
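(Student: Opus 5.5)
We work with the standard representative ${\rm L}^{k}_{\bf oe}$ generated by $G^{k,n}_{\bf oe}$; this is an LCD code of type ${\bf oe}$ when $n-k$ is even. The idea is to puncture away the last $n-k$ coordinates so that ${\rm L}^{k}_{\bf oe}$ is identified with $\F_q^k$. Let $S=\{k+1,\ldots,n\}$. Then the punctured code $({\rm L}^{k}_{\bf oe})^S$ is generated by $\mathbf{e}_1,\ldots,\mathbf{e}_{k-1},\mathbf{e}_k$ (now of length $k$), so it equals $\F_q^k$. Since $\dim=k$ is preserved, $p_S$ is a linear isomorphism and $f_S$ is a bijection between subcodes of ${\rm L}^{k}_{\bf oe}$ and subcodes of $\F_q^k$ that preserves dimension.

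\textbf{Step 1: puncturing preserves the inner product.} Write a codeword as $\mathbf{c}=\sum_{i<k}x_i\mathbf{e}_i+x_k\sum_{i\ge k}\mathbf{e}_i$. For two such codewords, the inner product is
\[
\sum_{i<k}x_iy_i+(n-k+1)x_ky_k .
\]
Since $n-k$ is even and $q$ is even, $(n-k+1)x_ky_k=x_ky_k$, which is exactly the inner product of the punctured vectors. Lemma~\ref{pun matrix} then gives $\dim\Hull(D)=\dim\Hull(D^S)$ for every subcode $D$.

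\textbf{Step 2: the parity label of $D$ is preserved.} The coordinate sum of $\mathbf{c}$ is
\[
\sum_{i<k}x_i+(n-k+1)x_k=\sum_{i\le k}x_i,
\]
which is the coordinate sum of $\mathbf{c}^S$. Hence $D$ is even-like if and only if $D^S$ is even-like.

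\textbf{Step 3: the parity of the dual is preserved.} This is the main obstacle, since $D^\perp$ (in $\F_q^n$) and $(D^S)^\perp$ (in $\F_q^k$) live in different spaces. By Lemma~\ref{1_n}, $D^\perp$ is even-like if and only if $\mathbf{1}_n\in D$, and $(D^S)^\perp$ is even-like if and only if $\mathbf{1}_k\in D^S$.

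Now $\mathbf{1}_n\in{\rm L}^{k}_{\bf oe}$: take all $x_i=1$. Its puncture is $\mathbf{1}_n^S=\mathbf{1}_k$. Because $p_S$ is injective on ${\rm L}^{k}_{\bf oe}$, for any subcode $D$ we have $\mathbf{1}_n\in D$ if and only if $\mathbf{1}_k=p_S(\mathbf{1}_n)\in p_S(D)=D^S$. Thus the parity of the dual is preserved as well.

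\textbf{Conclusion.} Combining the three steps, $f_S$ restricts to a bijection
\[
\mathcal{S}^t_\tau(s;{\rm L}^{k}_{\bf oe})\longrightarrow\mathcal{S}^t_\tau(s;\F_q^k)
\]
for each complete type $\tau$, which gives the stated equality. If one wants the statement for an arbitrary type-${\bf oe}$ LCD code of length $n$, Lemma~\ref{same-type-count} transfers the count from the standard representative. The only delicate points are the parity condition $n-k$ even, which the existence of ${\rm L}^{k}_{\bf oe}$ already forces, and the dual-type argument in Step 3.
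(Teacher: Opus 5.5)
Your proof is correct and follows the paper's argument step for step. You reduce to the standard representative, puncture on $S=\{k+1,\ldots,n\}$ to obtain $\F_q^k$, use that $n-k$ is even to get inner-product (hence hull-dimension) preservation via Lemma~\ref{pun matrix}, and use $\mathbf{1}_n\in D\iff\mathbf{1}_k\in D^S$ with Lemma~\ref{1_n} for the dual parity. The only cosmetic difference is that you check the parity label through the coordinate sum directly, whereas the paper infers it from inner-product preservation (implicitly using $\langle\mathbf{x},\mathbf{x}\rangle=(\sum_i x_i)^2$ in characteristic two).
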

\begin{proof}
By Proposition \ref{even-form} and Lemma \ref{same-type-count}, it is
enough to prove the assertion for the standard representative
${\rm L}_{\bf oe}^{k}$.

Let
$
S=\{k+1,k+2,\ldots,n\},
$
and denote by $({\rm L}_{\bf oe}^{k})^S$ the punctured code obtained by
deleting the coordinates in $S$. Then
$
({\rm L}_{\bf oe}^{k})^S=\F_q^k.
$

For every codeword
$
\mathbf{x}=(x_1,\ldots,x_n)\in {\rm L}_{\bf oe}^{k},
$
the last $n-k+1$ coordinates are equal. Since $n-k$ is even, it follows
that, for all $\mathbf{c}_1,\mathbf{c}_2\in {\rm L}_{\bf oe}^{k}$,
$
\langle \mathbf{c}_1,\mathbf{c}_2\rangle
=
\langle \mathbf{c}_1^S,\mathbf{c}_2^S\rangle .
$
Moreover,
$
\dim {\rm L}_{\bf oe}^{k}
=
\dim ({\rm L}_{\bf oe}^{k})^S.
$
Hence, by Lemma \ref{pun matrix}, puncturing on $S$ preserves the hull
dimension of every subcode. More precisely, for every subcode
$D\subseteq {\rm L}_{\bf oe}^{k}$,
$
\dim\Hull(D)
=
\dim\Hull(D^S).
$
Therefore
\[
D\in \mathcal{S}^t(s;{\rm L}_{\bf oe}^{k})
\Longleftrightarrow
D^S\in \mathcal{S}^t(s;\F_q^k).
\]

It remains to check that the type of a subcode is also preserved. Since
the puncturing map preserves the inner product, a subcode $D$ is
even-like if and only if its punctured code $D^S$ is even-like.

Furthermore, because the last $n-k+1$ coordinates of every codeword in
${\rm L}_{\bf oe}^{k}$ are identical, we have
\[
\mathbf{1}_n\in D
\Longleftrightarrow
\mathbf{1}_k\in D^S.
\]
By Lemma \ref{1_n}, this is equivalent to
\[
D^\perp \text{ is even-like}
\Longleftrightarrow
(D^S)^\perp \text{ is even-like}.
\]
Thus puncturing on $S$ preserves both the parity of the subcode and the
parity of its dual.

Consequently, the map
$
D\longmapsto D^S
$
gives a bijection between the $s$-dimensional subcodes of
${\rm L}_{\bf oe}^{k}$ with $t$-dimensional hull and any prescribed type,
and the corresponding subcodes of $\F_q^k$. 
This proves all four identities.
\end{proof}

The type-${\bf oo}$ case has the same puncturing reduction, but the
absence of $\mathbf{1}_n$ forces every subcode dual to be odd-like.

\begin{proposition}\label{oo number}
Let ${\rm L}^{k}_{\bf oo}$ be the $[n,k]_q$ {\rm LCD} code of type ${\bf oo}$. Let $s$ and $t$ be nonnegative integers with $0\leq t\leq s<k\leq n$. Then, 
\begin{enumerate}
    \item $\displaystyle\N^t_{\bf oo}(s;{\rm L}^{k}_{\bf oo})=\N^t_{\bf oo}(s;\F_q^k)+\N^t_{\bf oe}(s;\F_q^k);$
   
    \item $\displaystyle\N^t_{\bf eo}(s;{\rm L}^{k}_{\bf oo})=\N^t_{\bf eo}(s;\F_q^k)+\N^t_{\bf ee}(s;\F_q^k);$
    \item $\displaystyle\N^t_{\bf oe}(s;{\rm L}^{k}_{\bf oo})=\N^t_{\bf ee}(s;{\rm L}^{k}_{\bf oo})=0$
    .
\end{enumerate}
\end{proposition}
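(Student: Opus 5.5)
We follow the pattern of Proposition~\ref{oe number}, using the standard representative ${\rm L}^k_{\bf oo}$ generated by $G^{k,n}_{\bf oo}$, that is, ${\rm L}^k_{\bf oo}=\Span\{\mathbf{e}_1,\ldots,\mathbf{e}_k\}$. By Proposition~\ref{even-form} and Lemma~\ref{same-type-count}, it suffices to treat this representative. We puncture on
\[
S=\{k+1,\ldots,n\}.
\]
Every codeword of ${\rm L}^k_{\bf oo}$ vanishes on $S$. Hence puncturing on $S$ preserves all inner products, and $({\rm L}^k_{\bf oo})^S=\F_q^k$ has the same dimension $k$. Lemma~\ref{pun matrix} then shows that $D\mapsto D^S$ is a bijection from the $s$-dimensional subcodes of ${\rm L}^k_{\bf oo}$ to the $s$-dimensional subspaces of $\F_q^k$ that preserves hull dimension. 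Moreover, the coordinate sum of $\mathbf{c}$ equals that of $\mathbf{c}^S$, so $D$ is even-like if and only if $D^S$ is even-like. Thus the parity label of the subcode itself is preserved.

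The difference from Proposition~\ref{oe number} lies in the parity of the dual. Since $k<n$, every codeword of ${\rm L}^k_{\bf oo}$ has a zero in coordinate $n$, so $\mathbf{1}_n\notin{\rm L}^k_{\bf oo}$. Hence $\mathbf{1}_n\notin D$ for every subcode $D$. By Lemma~\ref{1_n}, applied to $D^\perp$ (whose dual is $D$), the code $D^\perp$ is always odd-like. Therefore every subcode of ${\rm L}^k_{\bf oo}$ has type ${\bf oo}$ or ${\bf eo}$, which proves (iii).

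For (i), the odd-like subcodes $D\subseteq{\rm L}^k_{\bf oo}$ with $\dim D=s$ and $\dim\Hull(D)=t$ are all of type ${\bf oo}$. Under the bijection they correspond exactly to the odd-like $s$-dimensional subspaces of $\F_q^k$ with $t$-dimensional hull, of either dual parity. These number $\N^t_{\bf oo}(s;\F_q^k)+\N^t_{\bf oe}(s;\F_q^k)$. Item (ii) follows in the same way from the even-like subcodes, which correspond to types ${\bf eo}$ and ${\bf ee}$ in $\F_q^k$.

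The only delicate point is the dual-parity argument. One must check that Lemma~\ref{1_n} is applied to $D^\perp$, which is legitimate since $(D^\perp)^\perp=D$. One must also use $k<n$ to guarantee $\mathbf{1}_n\notin{\rm L}^k_{\bf oo}$; the hypothesis $s<k\le n$ must be read with $k<n$, as it is for an LCD code of type ${\bf oo}$ in Lemma~\ref{lem:even-lcd-mass}. Everything else is a direct transcription of the proof of Proposition~\ref{oe number}.
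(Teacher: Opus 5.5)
Your proof is correct and follows essentially the same route as the paper. You reduce to the standard representative, puncture on $\{k+1,\ldots,n\}$ to get a bijection with the subspaces of $\F_q^k$ that preserves hull dimension and parity, and then use $\mathbf{1}_n\notin{\rm L}^k_{\bf oo}$ together with Lemma~\ref{1_n} to force every subcode to have an odd-like dual. The only cosmetic difference is that you read off $\mathbf{1}_n\notin{\rm L}^k_{\bf oo}$ from the explicit generators and $k<n$, whereas the paper deduces it from the odd-like dual via Lemma~\ref{1_n}.
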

\begin{proof}
By Proposition \ref{even-form} and Lemma \ref{same-type-count}, it
suffices to consider the standard representative
${\rm L}^{k}_{\bf oo}$.
Puncturing on $S=\{k+1,\ldots,n\}$ gives $\F_q^k$ and preserves
inner products, dimensions, hull dimensions, and the even-like or
odd-like property of subcodes, by Lemma \ref{pun matrix}.

It remains only to note the difference from Proposition \ref{oe number}.
Since ${\rm L}_{\bf oo}^{k}$ has odd-like dual, Lemma \ref{1_n}
implies $\mathbf{1}_n\notin {\rm L}_{\bf oo}^{k}$. Hence no subcode contains
$\mathbf{1}_n$, and so every subcode has odd-like dual. Thus subcodes of
types ${\bf oe}$ and ${\bf ee}$ do not occur. Under the above
puncturing map, odd-like subcodes correspond to all odd-like subcodes
of $\F_q^k$, while even-like subcodes correspond to all
even-like subcodes of $\F_q^k$. This completes the proof of the three assertions.
\end{proof}

\begin{cor}\label{oe=oo}
Let ${\rm L}_{{\bf oe}}^{k}$ and ${\rm L}_{{\bf oo}}^{k}$ be
$[n,k]_q$ {\rm LCD} codes of types ${\bf oe}$ and
${\bf oo}$, respectively. Then, for all nonnegative integers
$s$ and $t$ satisfying $0\leq t\leq s\leq k$, we have
\[
\N_q\bigl[(s,t);(k,0)_{{\bf oe}}\bigr]
=
\N_q\bigl[(s,t);(k,0)_{{\bf oo}}\bigr].
\]
Moreover, for $\rho\in\{\mathbf{o},\mathbf{e}\}$,
\[
\N_q\bigl[(s,t)_\rho;(k,0)_{{\bf oe}}\bigr]
=
\N_q\bigl[(s,t)_\rho;(k,0)_{{\bf oo}}\bigr].
\]
\end{cor}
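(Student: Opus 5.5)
The plan is to read both parent counts off the two preceding propositions and compare them at the level of parity labels. The type-summed identity then follows by adding the two parity labels. By Proposition~\ref{even-form} and Lemma~\ref{same-type-count}, the counts inside an {\rm LCD} parent depend only on its type, so it is enough to work with the standard representatives ${\rm L}^k_{\bf oe}$ and ${\rm L}^k_{\bf oo}$. Here $n$ is chosen so that ${\rm L}^k_{\bf oe}$ exists, i.e.\ $n-k$ even with $n>k$; the parameterized counts are independent of length, as the representative-level identities show.

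First treat $s<k$. Fix $\rho=\mathbf{o}$. By Proposition~\ref{oe number},
\[
\N_q\bigl[(s,t)_{\mathbf{o}};(k,0)_{\bf oe}\bigr]=\N^t_{\bf oo}(s;\F_q^k)+\N^t_{\bf oe}(s;\F_q^k).
\]
By Proposition~\ref{oo number}(i) together with the vanishing of type ${\bf oe}$ in (iii), this sum is also the count of odd-like subcodes of ${\rm L}^k_{\bf oo}$.

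For $\rho=\mathbf{e}$, Proposition~\ref{oe number} gives the sum $\N^t_{\bf eo}(s;\F_q^k)+\N^t_{\bf ee}(s;\F_q^k)$ for the ${\bf oe}$ parent. Proposition~\ref{oo number}(ii),(iii) give the same sum for the ${\bf oo}$ parent.

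Conceptually, the puncturing map onto $\F_q^k$ preserves hull dimension and parity label in both cases. The only difference between the two parents is how odd/even subcodes of $\F_q^k$ split according to the parity of the dual, and that information is forgotten once $\tau$ is summed within a parity label.

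The remaining case is the boundary $s=k$, which the two propositions exclude. The only $k$-dimensional subcode is the parent itself, which is {\rm LCD}. Hence both counts equal $1$ if $t=0$ and $\rho=\mathbf{o}$, since both parents are odd-like, and $0$ otherwise. Checking this edge case is the only point needing care; the rest is a direct combination of the cited propositions.
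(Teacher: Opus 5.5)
Your proof is correct and follows the paper's argument: the paper likewise derives the corollary from Propositions~\ref{oe number} and~\ref{oo number} by summing the complete types that share a parity label. Your separate treatment of the boundary case $s=k$, where the only subcode is the odd-like {\rm LCD} parent itself, is a worthwhile addition. Those propositions assume $s<k$, and the paper's one-line proof passes over this case without comment.
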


\begin{proof}
This follows from Propositions~\ref{oe number} and~\ref{oo number} by
summing the corresponding complete types with the same parity label.
\end{proof}

\begin{proposition}\label{eo number}
Let ${\rm L}^{k}_{\bf eo}$ be the $[n,k]_q$ {\rm LCD} code of type ${\bf eo}$.  Let $s$ and $t$ be nonnegative integers with $0\leq t\leq s<k$. Then
\begin{enumerate}
    \item
    $\displaystyle\N^t_{\bf eo}(s;{\rm L}^{k}_{\bf eo})=\N^t_\mathbf{e}(s;\F_q^{k+1});$
    \item
    $\displaystyle\N^t_{\bf oo}(s;{\rm L}^{k}_{\bf eo})=\N^t_{\bf oe}(s;{\rm L}^{k}_{\bf eo}) = \N^t_{\bf ee}(s;{\rm L}^{k}_{\bf eo})=0.
    $
\end{enumerate}
\end{proposition}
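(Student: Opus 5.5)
By Proposition~\ref{even-form} and Lemma~\ref{same-type-count}, it suffices to work with the standard representative ${\rm L}^{k}_{\bf eo}$ generated by $G^{k,n}_{\bf eo}$. Here $k$ is even, and the rows are $\mathbf{e}_i+\mathbf{e}_{i+1}$ for $1\le i\le k$. Every codeword is supported on the first $k+1$ coordinates and has coordinate sum zero. Put $S=\{k+2,\ldots,n\}$ and puncture on $S$. Since all codewords vanish on $S$, the puncturing map preserves Euclidean inner products and dimension. By Lemma~\ref{pun matrix}, $D\mapsto D^S$ is then a bijection from subcodes of ${\rm L}^{k}_{\bf eo}$ onto subcodes of $({\rm L}^{k}_{\bf eo})^S$, and it preserves both dimension and hull dimension.

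Next I will identify the image. The punctured code $({\rm L}^{k}_{\bf eo})^S$ is spanned by the $k$ vectors $\mathbf{e}_i+\mathbf{e}_{i+1}$ in $\F_q^{k+1}$. This is exactly the even-like code $\{\mathbf{x}\in\F_q^{k+1}\mid\sum x_i=0\}=\mathbf{1}_{k+1}^\perp$, by a dimension count. Hence the subcodes $D^S$ range over all even-like subcodes of $\F_q^{k+1}$. In particular,
\[
\sum_{\tau}\N^t_{\tau}(s;{\rm L}^{k}_{\bf eo})=\N^t_{\mathbf{e}}(s;\F_q^{k+1}),
\]
and no subcode of ${\rm L}^{k}_{\bf eo}$ is odd-like. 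This already gives $\N^t_{\bf oo}=\N^t_{\bf oe}=0$.

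It remains to show that no subcode $D\subseteq{\rm L}^{k}_{\bf eo}$ has type ${\bf ee}$, i.e.\ that $D^\perp$ is always odd-like. By Lemma~\ref{1_n}, $D^\perp$ is even-like exactly when $\mathbf{1}_n\in D$. But $\mathbf{1}_n\notin{\rm L}^{k}_{\bf eo}$, since ${\rm L}^{k}_{\bf eo}$ has type ${\bf eo}$ and so its dual is odd-like. One can also see this directly: $\mathbf{1}_n$ is not even supported on the first $k+1$ coordinates unless $n=k+1$, and then its coordinate sum is $k+1$, which is odd. Hence every subcode has odd-like dual, so $\N^t_{\bf ee}(s;{\rm L}^{k}_{\bf eo})=0$. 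Combining this with the displayed identity gives (i), and (ii) follows.

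The only delicate point is the type bookkeeping. The bijection must be matched with $\N^t_{\mathbf{e}}(s;\F_q^{k+1})$, which counts even-like subcodes of either complete type in $\F_q^{k+1}$. So we must not try to match complete types across the puncturing. On the $\F_q^{k+1}$ side, a subcode may contain $\mathbf{1}_{k+1}$ even though it never lies in the image; that does not matter, because we only use the parity label of $D^S$ itself and the dual parity of $D$ computed inside $\F_q^n$. Once the image is identified as $\mathbf{1}_{k+1}^\perp$, the rest is routine.
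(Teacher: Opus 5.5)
Your proposal is correct and follows essentially the same route as the paper: reduce to the standard representative, puncture on $S=\{k+2,\ldots,n\}$ to identify ${\rm L}^{k}_{\bf eo}$ with the even-like hyperplane of $\F_q^{k+1}$ via Lemma~\ref{pun matrix}, and then use $\mathbf{1}_n\notin{\rm L}^{k}_{\bf eo}$ together with Lemma~\ref{1_n} to show every subcode has odd-like dual. Your closing caution about matching complete types is harmless but unnecessary: since $k$ is even, $\mathbf{1}_{k+1}$ is odd-like, so no even-like subcode of $\F_q^{k+1}$ contains it anyway.
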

\begin{proof}
By Proposition \ref{even-form} and Lemma \ref{same-type-count}, it
suffices to consider the standard representative ${\rm L}^{k}_{\bf eo}$.
Let $S=\{k+2,\ldots,n\}$. Then the punctured code
$({\rm L}_{\bf eo}^{k})^S$ is
\[
E=\left\{(x_1,\ldots,x_{k+1})\in \F_q^{k+1}\mid
x_1+\cdots+x_{k+1}=0\right\}.
\]
Moreover, this puncturing preserves inner products and dimensions.
Hence, by Lemma \ref{pun matrix}, subcodes with a prescribed hull
dimension are preserved under this puncturing.

Since $E$ is precisely the set of all even-like vectors in
$\F_q^{k+1}$, every subcode of $E$ is even-like, and every
even-like subcode of $\F_q^{k+1}$ is contained in $E$. Therefore
the $s$-dimensional subcodes of ${\rm L}_{\bf eo}^{k}$ with
$t$-dimensional hull correspond exactly to the $s$-dimensional
even-like subcodes of $\F_q^{k+1}$ with $t$-dimensional hull.
Thus
\[
\N^t_{\bf eo}(s;{\rm L}_{\bf eo}^{k})
=
\N^t_\mathbf{e}(s;\F_q^{k+1}).
\]

It remains to determine the type of the duals. Since
$ {\rm L}_{\bf eo}^{k}$ is of type ${\bf eo}$, its dual is odd-like.
By Lemma \ref{1_n}, this implies
\(
\mathbf{1}_n\notin {\rm L}_{\bf eo}^{k}.
\)
Hence no subcode of ${\rm L}_{\bf eo}^{k}$ contains $\mathbf{1}_n$, and
therefore every subcode has odd-like dual, again by Lemma
\ref{1_n}. Consequently, no subcodes of types ${\bf oo}$,
${\bf oe}$, or ${\bf ee}$ occur. Hence
\[
\N^t_{\bf oo}(s;{\rm L}_{\bf eo}^{k})
=
\N^t_{\bf oe}(s;{\rm L}_{\bf eo}^{k})
=
\N^t_{\bf ee}(s;{\rm L}_{\bf eo}^{k})
=
0.
\]
This completes the proof.
\end{proof}

\subsection{Enumeration of subcodes with prescribed hull dimensions in linear codes}
In this subsection, we systematically determine the number of
subcodes with a prescribed hull dimension inside a given linear code.
Our general strategy proceeds in two phases: first, applying the
decomposition $C={\rm L}\oplus W$ from Lemma~\ref{gmatrix} to project the
subcode onto the {\rm LCD} component ${\rm L}$; second, enumerating the
possible extensions back into the hull $W$ by using the projection
techniques developed in Section~\ref{sec-1}.

\begin{lem}\label{even-projection-count}
Assume that $q$ is even. Let $C=V\oplus W$, where
$V\subseteq W^\perp$, $W$ is self-orthogonal, and $\dim W=\ell$.
Then, for $0\leq t\leq s$, one has
\[
\N^t(s;C)
=
\sum_{m=0}^{\min\{t,\ell\}}
q^{(s-m)(\ell-m)}
\genfrac{[}{]}{0pt}{}{\ell}{m}_q
\N^{t-m}(s-m;V).
\]
Moreover, for each parity label
$\epsilon\in\{\mathbf{o},\mathbf{e}\}$,
\[
\N^t_\epsilon(s;C)
=
\sum_{m=0}^{\min\{t,\ell\}}
q^{(s-m)(\ell-m)}
\genfrac{[}{]}{0pt}{}{\ell}{m}_q
\N^{t-m}_\epsilon(s-m;V).
\]
\end{lem}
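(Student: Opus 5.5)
The plan is a double-counting argument organized by the projection $\Pi_V$ and the intersection dimension $m=\dim(D\cap W)$. Let $D\in\mathcal{S}^t(s;C)$. By Corollary~\ref{invariance-cor}, which applies because $W$ is self-orthogonal and $V\subseteq W^\perp$, we have
\[
\dim\Hull(D)=\dim\Hull(\Pi_V(D))+m .
\]
Since $D\cap W\subseteq W$ and $D\cap W\subseteq D$, the value $m$ satisfies $0\le m\le\min\{s,\ell\}$. Moreover $D\cap W\subseteq\Hull(D)$, because $W$ is self-orthogonal and $D\subseteq C\subseteq W^\perp$. This sharpens the range to $0\le m\le\min\{t,\ell\}$, which matches the summation range in the statement. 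For each such $m$, the projection $U=\Pi_V(D)$ is a subcode of $V$ with $\dim U=s-m$ and $\dim\Hull(U)=t-m$.

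Next I partition $\mathcal{S}^t(s;C)$ according to the pair $(m,U)$. The key observation is that the hull condition on $D$ depends only on $(m,U)$. Conversely, take any $U\in\mathcal{S}^{t-m}(s-m;V)$ and any $D\subseteq C$ with $\Pi_V(D)=U$, $\dim D=s$ and $\dim(D\cap W)=m$. Corollary~\ref{invariance-cor} then gives $\dim\Hull(D)=(t-m)+m=t$, so $D\in\mathcal{S}^t(s;C)$. Hence the fibre over $(m,U)$ is exactly the set counted in Lemma~\ref{proj}, and it has size
\[
q^{(s-m)(\ell-m)}\genfrac{[}{]}{0pt}{}{\ell}{m}_q ,
\]
independently of $U$. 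Summing over the $\N^{t-m}(s-m;V)$ choices of $U$, and then over $m$, gives the first formula. When $m>s$ the factor $\N^{t-m}(s-m;V)$ vanishes, so no boundary issues arise. Cases where $U$ is empty or zero-dimensional are also covered by Lemma~\ref{proj}, whose statement allows $s-m=0$.

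For the parity-refined formula I repeat the same partition, now restricted to $D$ with a fixed parity label $\epsilon$. Lemma~\ref{even-parity-proj} says that $D$ and $\Pi_V(D)$ carry the same parity label. Therefore, in the fibre over a fixed $U$, either every $D$ has label $\epsilon$ or none does, according to the label of $U$. This replaces $\N^{t-m}(s-m;V)$ by $\N^{t-m}_\epsilon(s-m;V)$ and leaves the fibre size unchanged.

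The only delicate point is the exact correspondence between the fibres and the hull condition. Specifically, one must confirm that every $D$ counted in Lemma~\ref{proj} over an admissible $U$ really has hull dimension $t$, and not just that hulls project correctly. Corollary~\ref{invariance-cor} gives this at once because it holds for every subcode $D\subseteq C$. I would state this equivalence explicitly, since only the parity label, and not the complete type, is claimed to be preserved.
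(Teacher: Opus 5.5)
Your proposal is correct and follows essentially the same route as the paper. The paper also partitions by $m=\dim(D\cap W)\le\min\{t,\ell\}$ and by $U=\Pi_V(D)$, uses Corollary~\ref{invariance-cor} in both directions to match the hull condition, takes the fibre size from Lemma~\ref{proj}, and obtains the parity refinement from Lemma~\ref{even-parity-proj}.
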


\begin{proof}
Let $D\subseteq C$ be an $s$-dimensional subcode with a
$t$-dimensional hull, and put $m=\dim(D\cap W)$. Since
$W\subseteq C^\perp$, we have $D\cap W\subseteq\Hull(D)$,
so $0\leq m\leq\min\{t,\ell\}$. By Corollary~\ref{invariance-cor},
the projected subcode $\Pi_V(D)$ has dimension $s-m$ and hull
dimension $t-m$.

Conversely, fix an $(s-m)$-dimensional subcode $U\subseteq V$ with
hull dimension $t-m$. Lemma~\ref{proj} gives exactly
$
q^{(s-m)(\ell-m)}\genfrac{[}{]}{0pt}{}{\ell}{m}_q
$
subcodes $D\subseteq C$ such that $\Pi_V(D)=U$ and
$\dim(D\cap W)=m$. For every such $D$, Corollary~\ref{invariance-cor}
gives $\dim\Hull(D)=t$. Summing over all admissible
$m$ and all possible $U$ gives the first formula. The second formula
follows from the same argument together with Lemma~\ref{even-parity-proj}.
\end{proof}
The  following lemma is a direct consequence of Lemma~\ref{1_n}.
\begin{lem}\label{even-type-bookkeeping}
Assume that $q$ is even, and let $D\subseteq C$ be a subcode. Then
\begin{enumerate}
\item if $C$ is even-like, then $D$ is even-like;
\item if $C^\perp$ is odd-like, then $D^\perp$ is odd-like;
\item $D^\perp$ is even-like if and only if $\mathbf{1}_n\in D$.
\end{enumerate}
\end{lem}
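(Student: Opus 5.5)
The three assertions all reduce to Lemma~\ref{1_n} combined with the inclusion $D\subseteq C$ and its dual form $C^\perp\subseteq D^\perp$. I will prove them in the order (iii), (i), (ii), since (iii) is used in (ii).

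For (iii), apply Lemma~\ref{1_n} to the code $D^\perp$. It says that $D^\perp$ is even-like if and only if $\mathbf{1}_n\in(D^\perp)^\perp$. Since $(D^\perp)^\perp=D$ for any linear code in $\F_q^n$, this is exactly the condition $\mathbf{1}_n\in D$.

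For (i), assume $C$ is even-like. Then every codeword of $C$ has coordinate sum zero. Every codeword of $D$ is a codeword of $C$, so $D$ is even-like. Alternatively, Lemma~\ref{1_n} gives $\mathbf{1}_n\in C^\perp\subseteq D^\perp$, and applying the lemma again shows $D$ is even-like.

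For (ii), assume $C^\perp$ is odd-like. By part (iii) applied to $C$ itself, $\mathbf{1}_n\notin C$. Since $D\subseteq C$, also $\mathbf{1}_n\notin D$. By part (iii) applied to $D$, the dual $D^\perp$ is not even-like, that is, it is odd-like.

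None of these steps is a real obstacle. The only point needing care is that odd-like is defined as the negation of even-like, so ``not even-like'' and ``odd-like'' mean the same thing. The fact $(D^\perp)^\perp=D$ is standard by nondegeneracy of the Euclidean form.
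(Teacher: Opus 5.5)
Your proof is correct and takes the same approach as the paper, which states the lemma without proof as a direct consequence of Lemma~\ref{1_n}, combined with $D\subseteq C$ and $C^\perp\subseteq D^\perp$. For (ii) you could also skip (iii) entirely: an odd-like codeword of $C^\perp$ already lies in $D^\perp$.
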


\begin{theorem}\label{all-hull}
Assume that $q$ is even. Let $n,k,\ell,s$, and $t$ be
nonnegative integers such that
\[
\ell\leq k\leq n-\ell,
\qquad
0\leq t\leq s\leq k.
\]
Then the following statements hold.
\begin{enumerate}
    \item For every admissible
    $\varepsilon\in\{{\bf oo},{\bf oe}\}$,
    \begin{enumerate}
    \item $\displaystyle
    \N_q\bigl[(s,t);(k,\ell)_\varepsilon\bigr]
    =
    \sum_{m=0}^{\min\{t,\ell\}}
    q^{(s-m)(\ell-m)}
    \genfrac{[}{]}{0pt}{}{\ell}{m}_q
    \N_q\bigl[(s-m,t-m);(k-\ell,0)_\varepsilon\bigr];
   $
   \item $\displaystyle
    \N_q\bigl[(s,t)_{\mathbf{o}};(k,\ell)_\varepsilon\bigr]
    =
    \sum_{m=0}^{\min\{t,\ell\}}
    q^{(s-m)(\ell-m)}
    \genfrac{[}{]}{0pt}{}{\ell}{m}_q
    \N_q\bigl[(s-m,t-m)_{\mathbf{o}};
    (k-\ell,0)_\varepsilon\bigr];
   $
   \item $\displaystyle
    \N_q\bigl[(s,t)_{\mathbf{e}};(k,\ell)_\varepsilon\bigr]
    =
    \sum_{m=0}^{\min\{t,\ell\}}
    q^{(s-m)(\ell-m)}
    \genfrac{[}{]}{0pt}{}{\ell}{m}_q
    \N_q\bigl[(s-m,t-m)_{\mathbf{e}};
    (k-\ell,0)_\varepsilon\bigr].
   $
    \end{enumerate}
    \item For every admissible
    $\varepsilon\in\{{\bf eo},{\bf ee}\}$,
    \[
    \N_q\bigl[(s,t)_{\mathbf{e}};(k,\ell)_\varepsilon\bigr]
    =\N_q\bigl[(s,t);(k,\ell)_\varepsilon\bigr]
    =
    \sum_{m=0}^{\min\{t,\ell\}}
    q^{(s-m)(\ell-m)}
    \genfrac{[}{]}{0pt}{}{\ell}{m}_q
    \N_q\bigl[(s-m,t-m);(k-\ell,0)_{{\bf eo}}\bigr].
    \]
\end{enumerate}
A summand is understood to be zero if the indicated subcode does not
exist.
\end{theorem}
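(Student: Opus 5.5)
The plan is to fix an arbitrary representative $C=C^{k,\ell}_\varepsilon$ and decompose it with an {\rm LCD} complement. I then apply Lemma~\ref{even-projection-count} with $W=\Hull(C)$ and $V={\rm L}$. Finally, I identify the counts inside ${\rm L}$ with the parameterized {\rm LCD}-parent counts $\N_q[(\cdot,\cdot);(k-\ell,0)_{\varepsilon'}]$, which Lemma~\ref{same-type-count} shows are independent of the particular {\rm LCD} representative.

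The hypotheses of Lemma~\ref{even-projection-count} hold for such a decomposition. $W=\Hull(C)$ is self-orthogonal of dimension $\ell$. Since $\Hull(C)\subseteq C^\perp$, we have ${\rm L}\subseteq C\subseteq W^\perp$.

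\emph{Part (i).} Proposition~\ref{lcd-complement-types}(i) gives ${\rm L}$ of the same type $\varepsilon\in\{{\bf oo},{\bf oe}\}$ as $C$ and of dimension $k-\ell$. Formula (a) is then the first identity of Lemma~\ref{even-projection-count}. Formulas (b) and (c) are its parity-label refinement, which rests on Lemma~\ref{even-parity-proj}. The only point to check is that $\N^{t-m}(s-m;{\rm L})$ equals $\N_q[(s-m,t-m);(k-\ell,0)_\varepsilon]$ with the type summed, and likewise for the parity labels. This is Lemma~\ref{same-type-count} applied to ${\rm L}$ and any other {\rm LCD} code of the same type and length. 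Summands with no admissible subcode are zero by convention.

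\emph{Part (ii).} If $C$ is of type ${\bf eo}$ or ${\bf ee}$, then $C$ is even-like, so every subcode is even-like by Lemma~\ref{even-type-bookkeeping}(i). This gives the first equality $\N_q[(s,t)_{\mathbf e};\cdot]=\N_q[(s,t);\cdot]$. Proposition~\ref{lcd-complement-types}(i) supplies an {\rm LCD} complement ${\rm L}$ of type ${\bf eo}$ in both cases; for type ${\bf ee}$ this uses $\mathbf 1_n\in\Hull(C)$. The first identity of Lemma~\ref{even-projection-count}, combined with Lemma~\ref{same-type-count} for type ${\bf eo}$, then gives the displayed sum.

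The main obstacle is the bookkeeping of independence from the length $n$. Lemma~\ref{same-type-count} only compares {\rm LCD} codes of the same length $n$. I would handle this using the puncturing results, Propositions~\ref{oe number}, \ref{oo number}, and~\ref{eo number}. These express every {\rm LCD}-parent count in terms of $\F_q^{k-\ell}$ or $\F_q^{k-\ell+1}$, with no dependence on $n$, so the notation $\N_q[\cdot;(k-\ell,0)_\varepsilon]$ is well defined. The boundary case $s=k-\ell$, which those propositions exclude, is immediate: the only subcode is the {\rm LCD} code itself.
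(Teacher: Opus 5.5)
Your proposal is correct and follows essentially the same route as the paper. You decompose $C={\rm L}\oplus\Hull(C)$, apply Lemma~\ref{even-projection-count} with $V={\rm L}$ and $W=\Hull(C)$, and for part~(ii) use that every subcode of an even-like parent is even-like. The differences are minor: you take ${\rm L}$ of exactly type $\varepsilon$ via Proposition~\ref{lcd-complement-types}, whereas the paper uses only the parity label of ${\rm L}$ together with Corollary~\ref{oe=oo}, and you spell out the length independence of the {\rm LCD}-parent counts via Propositions~\ref{oe number}, \ref{oo number}, and~\ref{eo number}, which the paper leaves implicit.
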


\begin{proof}
By the remark following Proposition~\ref{lcd-complement-types}, every $[n,k]_q$ linear
code $C^{k,\ell}$ with an $\ell$-dimensional hull admits a decomposition
\[
C^{k,\ell}
=
{\rm L}^{k-\ell}\oplus\Hull(C^{k,\ell}),
\qquad
{\rm L}^{k-\ell}\subseteq\Hull(C^{k,\ell})^\perp,
\]
where ${\rm L}^{k-\ell}$ is an {\rm LCD} subcode with the same parity
label as $C^{k,\ell}$. Applying Lemma~\ref{even-projection-count} with
$V={\rm L}^{k-\ell}$ and $W=\Hull(C^{k,\ell})$ gives the
total and parity-label formulas.

If $C^{k,\ell}$ is odd-like, then ${\rm L}^{k-\ell}$ is an odd-like
{\rm LCD} code. By Corollary~\ref{oe=oo}, the subcode counts with
prescribed dimension, hull dimension, and parity label are independent
of whether this odd-like LCD complement is of type ${\bf oo}$ or
type ${\bf oe}$. Thus it may be denoted by
${\rm L}_{\mathbf{o}}^{k-\ell}$, proving part {\rm (i)}.

If $C^{k,\ell}$ is even-like, then ${\rm L}^{k-\ell}$ is an even-like
{\rm LCD} code, denoted by ${\rm L}_{\mathbf{e}}^{k-\ell}$. Since every
subcode of an even-like code is even-like,
$\N_{\mathbf{e}}^t(s;C_{\mathbf{e}}^{k,\ell})
=\N^t(s;C_{\mathbf{e}}^{k,\ell})$. This proves part
{\rm (ii)}.
\end{proof}

In particular, every right-hand side in Theorem~\ref{all-hull} depends
only on $q$, the displayed subcode parameters, and the displayed parent
parameters. Hence the corresponding total and parity-label counts are
independent of the chosen parent representative and its length.

Having obtained the total number of subcodes with a prescribed hull
dimension, we next refine the enumeration by distinguishing the possible
types of these subcodes.

\begin{cor}\label{HLOO}
Let $n,k,\ell,s$ and $t$ be nonnegative integers such that $\ell\leq k\leq n-\ell$ and $0\leq t\leq s\leq k$. Then
\begin{enumerate}
    \item
    $\displaystyle
    \N_q\bigl[(s,t)_{\mathbf{o}};(k,\ell)_{{\bf oo}}\bigr]
    =
    \N_q\bigl[(s,t)_{{\bf oo}};(k,\ell)_{{\bf oo}}\bigr]
    =
    \sum_{m=0}^{\min\{t,\ell\}}
    q^{(s-m)(\ell-m)}
    \genfrac{[}{]}{0pt}{}{{\ell}}{m}_q
    \N_q\bigl[(s-m,t-m)_{{\bf oo}};
    (k-\ell,0)_{{\bf oo}}\bigr];
    $
    \item
    $\displaystyle
    \N_q\bigl[(s,t)_{\mathbf{e}};(k,\ell)_{{\bf oo}}\bigr]
    =
    \N_q\bigl[(s,t)_{{\bf eo}};(k,\ell)_{{\bf oo}}\bigr]
    =
    \sum_{m=0}^{\min\{t,\ell\}}
    q^{(s-m)(\ell-m)}
    \genfrac{[}{]}{0pt}{}{{\ell}}{m}_q
    \N_q\bigl[(s-m,t-m)_{{\bf eo}};
    (k-\ell,0)_{{\bf oo}}\bigr];
    $
    \item
    $\displaystyle
    \N_q\bigl[(s,t)_{{\bf oe}};(k,\ell)_{{\bf oo}}\bigr]
    =
    \N_q\bigl[(s,t)_{{\bf ee}};(k,\ell)_{{\bf oo}}\bigr]
    =0.
    $
\end{enumerate}
As usual, a summand is understood to be zero if the indicated subcode
does not exist.
\end{cor}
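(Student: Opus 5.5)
We fix a representative $C=C^{k,\ell}_{{\bf oo}}$ and use Proposition~\ref{lcd-complement-types}(i) to write $C={\rm L}\oplus\Hull(C)$ with ${\rm L}\subseteq\Hull(C)^\perp$ an {\rm LCD} subcode of type ${\bf oo}$ and dimension $k-\ell$. The plan is to settle part (iii) first by a type-bookkeeping argument, then derive (i) and (ii) from the parity-label formulas of Theorem~\ref{all-hull}(i)(b),(c) applied with $\varepsilon={\bf oo}$, and finally rewrite the parity-label counts inside ${\rm L}$ as complete-type counts.

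For part (iii), we use that $C$ is of type ${\bf oo}$, so $C^\perp$ is odd-like. By Lemma~\ref{even-type-bookkeeping}(ii), every subcode $D\subseteq C$ has odd-like dual. Hence no subcode of $C$ has type ${\bf oe}$ or ${\bf ee}$, and both counts in (iii) vanish.

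For the first equalities in (i) and (ii), the odd-like subcodes of $C$ are partitioned into types ${\bf oo}$ and ${\bf oe}$. By (iii) the second class is empty, so $\N_q[(s,t)_{\mathbf{o}};(k,\ell)_{{\bf oo}}]=\N_q[(s,t)_{{\bf oo}};(k,\ell)_{{\bf oo}}]$. In the same way, the even-like subcodes reduce to type ${\bf eo}$.

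For the sums, Theorem~\ref{all-hull}(i)(b) with $\varepsilon={\bf oo}$ expresses the odd-like count as
\[
\sum_{m=0}^{\min\{t,\ell\}}q^{(s-m)(\ell-m)}\genfrac{[}{]}{0pt}{}{\ell}{m}_q\N_q\bigl[(s-m,t-m)_{\mathbf{o}};(k-\ell,0)_{{\bf oo}}\bigr].
\]
Since ${\rm L}$ is itself of type ${\bf oo}$, Proposition~\ref{oo number}(iii) gives that ${\rm L}$ has no subcodes of types ${\bf oe}$ or ${\bf ee}$. So the parity-label count $\N_q[(s-m,t-m)_{\mathbf{o}};(k-\ell,0)_{{\bf oo}}]$ equals the complete-type count $\N_q[(s-m,t-m)_{{\bf oo}};(k-\ell,0)_{{\bf oo}}]$, which gives (i). The even-like case (ii) follows identically from Theorem~\ref{all-hull}(i)(c), with ${\bf eo}$ in place of ${\bf oo}$.

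The one point that needs care is that the parity-label formulas in Theorem~\ref{all-hull} hold for any odd-like {\rm LCD} complement. We must check that here the complement can be taken of type ${\bf oo}$, so that the right-hand sides are literally indexed by $(k-\ell,0)_{{\bf oo}}$. This is exactly what Proposition~\ref{lcd-complement-types}(i) provides. It is also consistent with Corollary~\ref{oe=oo}, which says the parity-label counts do not depend on the choice of complement.

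The boundary case $s=k$ needs no separate treatment: Proposition~\ref{oo number} is stated for $s<k$, but for $s-m=k-\ell$ the only subcode of ${\rm L}$ of that dimension is ${\rm L}$ itself, which is of type ${\bf oo}$, so the identification still holds. Throughout, summands for nonexistent subcodes are taken to be zero.
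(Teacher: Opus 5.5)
Your proposal is correct and follows essentially the paper's own argument: you obtain part (iii) from Lemma~\ref{even-type-bookkeeping}, then apply the projection count to $C={\rm L}_{\bf oo}^{k-\ell}\oplus\Hull(C)$, and turn parity labels into the complete types ${\bf oo}$ and ${\bf eo}$ because every subcode of ${\rm L}_{\bf oo}^{k-\ell}$ has odd-like dual. The only cosmetic differences are that you reach the projection count through Theorem~\ref{all-hull}(i)(b),(c) rather than citing Lemma~\ref{even-projection-count} directly, and that your separate check of the boundary case $s-m=k-\ell$ is unnecessary, since Lemma~\ref{even-type-bookkeeping}(ii) already applies to every subcode of ${\rm L}$ regardless of dimension.
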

\begin{proof}
By Lemma~\ref{even-type-bookkeeping}, every subcode of
$C_{{\bf oo}}^{k,\ell}$ has odd-like dual. Hence no subcode of
types ${\bf oe}$ or ${\bf ee}$ can occur, proving part
{\rm (iii)}.

By Proposition~\ref{lcd-complement-types}, write
\[
C_{{\bf oo}}^{k,\ell}
=
{\rm L}_{{\bf oo}}^{k-\ell}
\oplus\Hull(C_{{\bf oo}}^{k,\ell}).
\]
Lemma~\ref{even-projection-count} gives the corresponding
odd-like and even-like subcode counts. Since every subcode of
${\rm L}_{{\bf oo}}^{k-\ell}$ also has odd-like dual, the odd-like
projected subcodes are exactly of type ${\bf oo}$, and the
even-like projected subcodes are exactly of type ${\bf eo}$.
This proves parts {\rm (i)} and {\rm (ii)}.
\end{proof}

\begin{cor}\label{HLEO}
Let $n,k,\ell,s$ and $t$ be nonnegative integers such that $\ell\leq k\leq n-\ell$ and $0\leq t\leq s\leq k$. Then
\begin{enumerate}
    \item $\displaystyle
    \N_q\bigl[(s,t)_{{\bf eo}};(k,\ell)_{{\bf eo}}\bigr]
    =
    \sum_{m=0}^{\min\{t,\ell\}}
    q^{(s-m)(\ell-m)}
    \genfrac{[}{]}{0pt}{}{{\ell}}{m}_q
    \N_q\bigl[(s-m,t-m)_{{\bf eo}};
    (k-\ell,0)_{{\bf eo}}\bigr];
    $
    \item
    $\displaystyle
    \N_q\bigl[(s,t)_{{\bf oo}};(k,\ell)_{{\bf eo}}\bigr]
    =
    \N_q\bigl[(s,t)_{{\bf oe}};(k,\ell)_{{\bf eo}}\bigr]
    =
    \N_q\bigl[(s,t)_{{\bf ee}};(k,\ell)_{{\bf eo}}\bigr]
    =0.
    $
\end{enumerate}
As usual, a summand is understood to be zero if the indicated subcode
does not exist.
\end{cor}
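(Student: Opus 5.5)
The plan mirrors the proof of Corollary~\ref{HLOO}, with the parent now of type ${\bf eo}$. Part (ii) comes first, since it needs no decomposition. Because $C_{{\bf eo}}^{k,\ell}$ is even-like, Lemma~\ref{even-type-bookkeeping}(i) shows that every subcode $D$ is even-like. This excludes types ${\bf oo}$ and ${\bf oe}$. Because $(C_{{\bf eo}}^{k,\ell})^\perp$ is odd-like, Lemma~\ref{even-type-bookkeeping}(ii) shows that $D^\perp$ is odd-like, which excludes type ${\bf ee}$. So only type ${\bf eo}$ survives, and the three counts in (ii) vanish.

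For part (i), I will invoke Proposition~\ref{lcd-complement-types}(i) for a parent of type ${\bf eo}$. It gives a decomposition
\[
C_{{\bf eo}}^{k,\ell}={\rm L}_{{\bf eo}}^{k-\ell}\oplus\Hull(C_{{\bf eo}}^{k,\ell}),
\]
where the LCD complement has the same type ${\bf eo}$ and lies in $\Hull(C_{{\bf eo}}^{k,\ell})^\perp$. I then apply Lemma~\ref{even-projection-count} with $V={\rm L}_{{\bf eo}}^{k-\ell}$, $W=\Hull(C_{{\bf eo}}^{k,\ell})$ and parity label $\mathbf{e}$. This yields
\[
\N^t_{\mathbf{e}}(s;C_{{\bf eo}}^{k,\ell})=\sum_{m=0}^{\min\{t,\ell\}}q^{(s-m)(\ell-m)}\genfrac{[}{]}{0pt}{}{\ell}{m}_q\N^{t-m}_{\mathbf{e}}(s-m;{\rm L}_{{\bf eo}}^{k-\ell}).
\]

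It remains to replace the parity-label counts by complete-type counts on both sides. On the left, part (ii) gives $\N^t_{\mathbf{e}}=\N^t_{{\bf eo}}$ for the parent. On the right, the same two bookkeeping facts apply to ${\rm L}_{{\bf eo}}^{k-\ell}$, because it is even-like with odd-like dual. Hence every subcode of it has type ${\bf eo}$, which also agrees with Proposition~\ref{eo number}(ii). Therefore $\N^{t-m}_{\mathbf{e}}(s-m;{\rm L}_{{\bf eo}}^{k-\ell})=\N_q[(s-m,t-m)_{{\bf eo}};(k-\ell,0)_{{\bf eo}}]$.

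Independence from the chosen LCD representative and from the length follows from Lemma~\ref{same-type-count} and Proposition~\ref{eo number}(i). The one point I would check carefully is the type claim in Proposition~\ref{lcd-complement-types}, namely that the complement of an ${\bf eo}$ parent is of type ${\bf eo}$ rather than merely even-like. This is already established there, since an even-like LCD code has odd-like dual. I would also check that the edge cases $s-m=k-\ell$ and $k=\ell$ are covered by the convention that nonexistent summands are zero.
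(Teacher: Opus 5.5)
Your proposal is correct and is essentially the paper's own argument. Part (ii) comes from Lemma~\ref{even-type-bookkeeping}, since an even-like parent with odd-like dual forces every subcode to be of type ${\bf eo}$. Part (i) comes from the ${\bf eo}$-complement decomposition of Proposition~\ref{lcd-complement-types} together with Lemma~\ref{even-projection-count}, using that every subcode of ${\rm L}_{{\bf eo}}^{k-\ell}$ is again of type ${\bf eo}$; your extra checks on representative independence and degenerate cases are harmless additions that the paper leaves implicit.
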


\begin{proof}
By Lemma~\ref{even-type-bookkeeping}, every subcode of
$C_{{\bf eo}}^{k,\ell}$ is even-like and has odd-like dual. Thus
only type ${\bf eo}$ can occur, proving part {\rm (ii)}.

By Proposition~\ref{lcd-complement-types}, write
\[
C_{{\bf eo}}^{k,\ell}
=
{\rm L}_{{\bf eo}}^{k-\ell}
\oplus\Hull(C_{{\bf eo}}^{k,\ell}).
\]
Applying Lemma~\ref{even-projection-count} and using the fact that
every subcode of ${\rm L}_{{\bf eo}}^{k-\ell}$ is also of type
${\bf eo}$ gives part {\rm (i)}.
\end{proof}

The type-${\bf oe}$ case requires one additional distinction:
an extension has even-like dual precisely when it contains
$\mathbf{1}_n$. Lemma~\ref{exist-oeLCD} ensures that the {\rm LCD}
complement can be chosen with the same type as the parent code.

\begin{cor}\label{HLOE}
Let $n,k,\ell,s$ and $t$ be nonnegative integers such that
$\ell\leq k\leq n-\ell$ and $0\leq t\leq s\leq k$. Then
\begin{enumerate}
    \item \(\displaystyle
    \N_q\bigl[(s,t)_{{\bf oe}};(k,\ell)_{{\bf oe}}\bigr]
    =
    \sum_{m=0}^{\min\{t,\ell,s-1\}}
    q^{(s-m-1)(\ell-m)}
    \genfrac{[}{]}{0pt}{}{{\ell}}{m}_q \cdot
    \N_q\bigl[(s-m,t-m)_{{\bf oe}};
    (k-\ell,0)_{{\bf oe}}\bigr];
    \)

    \item \(\displaystyle
    \N_q\bigl[(s,t)_{{\bf ee}};(k,\ell)_{{\bf oe}}\bigr]
    =
    \sum_{m=0}^{\min\{t,\ell,s-1\}}
    q^{(s-m-1)(\ell-m)}
    \genfrac{[}{]}{0pt}{}{{\ell}}{m}_q \cdot
    \N_q\bigl[(s-m,t-m)_{{\bf ee}};
    (k-\ell,0)_{{\bf oe}}\bigr];
    \)

    \item $\N_q\bigl[(s,t)_{{\bf oo}};(k,\ell)_{{\bf oe}}\bigr]
    =\N_q\bigl[(s,t)_{\mathbf{o}};(k,\ell)_{{\bf oe}}\bigr]
    -\N_q\bigl[(s,t)_{{\bf oe}};(k,\ell)_{{\bf oe}}\bigr];$

    \item $\N_q\bigl[(s,t)_{{\bf eo}};(k,\ell)_{{\bf oe}}\bigr]
    =\N_q\bigl[(s,t)_{\mathbf{e}};(k,\ell)_{{\bf oe}}\bigr]
    -\N_q\bigl[(s,t)_{{\bf ee}};(k,\ell)_{{\bf oe}}\bigr].$
\end{enumerate}
As usual, a summand is understood to be zero if the indicated subcode
does not exist.
\end{cor}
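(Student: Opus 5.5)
By Lemma~\ref{exist-oeLCD}, I will write $C=C^{k,\ell}_{\bf oe}={\rm L}\oplus W$ with ${\rm L}={\rm L}^{k-\ell}_{\bf oe}$, $W=\Hull(C)$ and $\mathbf{1}_n\in{\rm L}$. Since $W\subseteq C^\perp$, we have ${\rm L}\subseteq W^\perp$, and $W$ is self-orthogonal. The idea is to run the argument of Lemma~\ref{even-projection-count} again, with one extra condition: by Lemma~\ref{even-type-bookkeeping}(iii), a subcode $D\subseteq C$ has even-like dual if and only if $\mathbf{1}_n\in D$. So parts (i) and (ii) amount to counting $s$-dimensional subcodes $D$ with $\dim\Hull(D)=t$, a given parity label, and $\mathbf{1}_n\in D$.

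Fix $D$ with these properties and put $m=\dim(D\cap W)$. As in Lemma~\ref{even-projection-count}, $m\le\min\{t,\ell\}$, and by Corollary~\ref{invariance-cor} the projection $U=\Pi_{\rm L}(D)$ has dimension $s-m$ and hull dimension $t-m$. By Lemma~\ref{even-parity-proj}, $U$ has the same parity label as $D$. Since $\mathbf{1}_n\in{\rm L}$, we get $\Pi_{\rm L}(\mathbf{1}_n)=\mathbf{1}_n$, so $\mathbf{1}_n\in D$ forces $\mathbf{1}_n\in U$. Hence $U$ is nonzero, $s-m\ge1$, and $m\le s-1$. This explains the upper limit $\min\{t,\ell,s-1\}$. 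Also, $\mathbf{1}_n\in U$ means that $U^\perp$ is even-like, by Lemma~\ref{1_n}. So $U$ has type ${\bf oe}$ when $D$ is odd-like and type ${\bf ee}$ when $D$ is even-like: $U$ has the same complete type as $D$.

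For the converse, I will fix $m$ and a subcode $U\subseteq{\rm L}$ of dimension $s-m$, hull dimension $t-m$, and type ${\bf oe}$ (respectively ${\bf ee}$). Then $\mathbf{1}_n\in U$, and I apply Corollary~\ref{projee} with $\bm\alpha=\mathbf{1}_n$. This gives exactly $q^{(s-m-1)(\ell-m)}\genfrac{[}{]}{0pt}{}{\ell}{m}_q$ subcodes $D$ with $\Pi_{\rm L}(D)=U$, $\dim D=s$, $\dim(D\cap W)=m$, and $\mathbf{1}_n\in D$. Each such $D$ has hull dimension $t$ by Corollary~\ref{invariance-cor}, the parity label of $U$ by Lemma~\ref{even-parity-proj}, and an even-like dual because $\mathbf{1}_n\in D$. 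So $D$ has the required complete type. The count of possible $U$ is $\N^{t-m}_{\bf oe}(s-m;{\rm L}^{k-\ell}_{\bf oe})$ (respectively the ${\bf ee}$ count). By Lemma~\ref{same-type-count}, this depends only on the type of the LCD code, which identifies it with $\N_q[(s-m,t-m)_{\bf oe};(k-\ell,0)_{\bf oe}]$. Summing over $m$ and $U$ gives (i) and (ii).

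Parts (iii) and (iv) follow directly from the definitions, since $\N_q[(s,t)_{\mathbf{o}};\cdot]$ is the sum of the ${\bf oo}$ and ${\bf oe}$ counts, and $\N_q[(s,t)_{\mathbf{e}};\cdot]$ is the sum of the ${\bf eo}$ and ${\bf ee}$ counts. The step that needs the most care is the bijection argument in (i) and (ii). It works only because the complement ${\rm L}$ contains $\mathbf{1}_n$, which is why Lemma~\ref{exist-oeLCD} is needed rather than an arbitrary LCD complement. With that choice, the condition $\mathbf{1}_n\in D$ becomes $f(\mathbf{1}_n)=0$ in the $(K,f)$ parametrization of Lemma~\ref{proj}. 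Finally, I need to check that independence from the representative still holds after restricting types, which is what Lemma~\ref{same-type-count} provides.
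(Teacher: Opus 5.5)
Your proposal is correct and follows the paper's own proof step for step. The shared steps are the decomposition $C={\rm L}^{k-\ell}_{\bf oe}\oplus\Hull(C)$ from Lemma~\ref{exist-oeLCD}, the projection count with parity transferred by Lemma~\ref{even-parity-proj}, and the criterion that $D^\perp$ is even-like exactly when $\mathbf{1}_n\in D$. They also include Corollary~\ref{projee} with $\bm\alpha=\mathbf{1}_n$ (which is what forces $m\leq s-1$) and the partition by parity label for (iii) and (iv).

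Your observation that $\Pi_{\rm L}(\mathbf{1}_n)=\mathbf{1}_n$, so that $D$ and $\Pi_{\rm L}(D)$ have the same complete type, makes explicit a step the paper leaves implicit. The only quibble is that independence from the length of ${\rm L}$ comes from Proposition~\ref{oe number}, since Lemma~\ref{same-type-count} only compares {\rm LCD} codes of the same length.
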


\begin{proof}
Let $C^{k,\ell}_{\bf oe}$ be an $[n,k]_q$ linear code of type
${\bf oe}$ with an $\ell$-dimensional hull. By
Lemma~\ref{exist-oeLCD}, write
\[
C^{k,\ell}_{\bf oe}
=
{\rm L}_{\bf oe}^{k-\ell}
\oplus
\Hull(C^{k,\ell}_{\bf oe}).
\]
As in the proof of Theorem~\ref{all-hull}, for an $s$-dimensional
subcode $D\subseteq C^{k,\ell}_{\bf oe}$ with a $t$-dimensional
hull, put
$m=\dim(D\cap\Hull(C^{k,\ell}_{\bf oe}))$. Then
\[
\dim\Pi_{{\rm L}_{\bf oe}^{k-\ell}}(D)=s-m,
\qquad
\dim\Hull
\bigl(\Pi_{{\rm L}_{\bf oe}^{k-\ell}}(D)\bigr)=t-m.
\]
Conversely, for each fixed projected subcode
$U\subseteq{\rm L}_{\bf oe}^{k-\ell}$ with these two dimensions,
Lemma~\ref{proj} gives exactly
\(
q^{(s-m)(\ell-m)}\genfrac{[}{]}{0pt}{}{\ell}{m}_q
\)
subcodes $D\subseteq C^{k,\ell}_{\bf oe}$ satisfying
$\Pi_{{\rm L}_{\bf oe}^{k-\ell}}(D)=U$ and
$\dim(D\cap\Hull(C^{k,\ell}_{\bf oe}))=m$.
By Lemma~\ref{even-parity-proj}, $D$ and $U$ have the same parity
label. Therefore,
\[
\N^t_{\mathbf{o}}(s;C^{k,\ell}_{\bf oe})
=
\sum_{m=0}^{\min\{t,\ell\}}
q^{(s-m)(\ell-m)}\genfrac{[}{]}{0pt}{}{\ell}{m}_q
\N^{t-m}_{\mathbf{o}}
(s-m;{\rm L}_{\bf oe}^{k-\ell})
\]
and
\[
\N^t_{\mathbf{e}}(s;C^{k,\ell}_{\bf oe})
=
\sum_{m=0}^{\min\{t,\ell\}}
q^{(s-m)(\ell-m)}\genfrac{[}{]}{0pt}{}{\ell}{m}_q
\N^{t-m}_{\mathbf{e}}
(s-m;{\rm L}_{\bf oe}^{k-\ell}).
\]

It remains to determine the dual parity. Lemma~\ref{1_n} gives
$\mathbf{1}_n\in{\rm L}_{\bf oe}^{k-\ell}$, and
Lemma~\ref{even-type-bookkeeping} shows that $D^\perp$ is even-like
if and only if $\mathbf{1}_n\in D$.
If $\mathbf{1}_n\in U$, then $U\neq\{\mathbf0\}$, and hence
$s-m\geq1$. By Corollary~\ref{projee}, with
$\bm\alpha=\mathbf{1}_n$, the number of extensions of $U$
containing $\mathbf{1}_n$ is
\(
q^{(s-m-1)(\ell-m)}\genfrac{[}{]}{0pt}{}{\ell}{m}_q.
\)
Projected subcodes of types ${\bf oe}$ and ${\bf ee}$,
respectively, therefore give the formulas in parts {\rm (i)} and
{\rm (ii)}. Notice that only $m\leq s-1$ occurs in these two
formulas, since a projected subcode containing $\mathbf{1}_n$
cannot be zero.

Finally, odd-like subcodes are partitioned into those of types
${\bf oo}$ and ${\bf oe}$, whereas even-like subcodes are
partitioned into those of types ${\bf eo}$ and ${\bf ee}$.
This prove parts {\rm (iii)} and {\rm (iv)}.
\end{proof}
\begin{cor}\label{HLEE}
Let $n,k,\ell,s$ and $t$ be nonnegative integers such that $1\leq \ell \leq k\leq n-\ell$ and $0\leq t\leq s\leq k$. Then
\begin{enumerate}
    \item
    $\displaystyle
    \N_q\bigl[(s,t)_{{\bf ee}};(k,\ell)_{{\bf ee}}\bigr]
    =
    \sum_{m=1}^{\min\{t,\ell\}}
    q^{(s-m)(\ell-m)}
    \genfrac{[}{]}{0pt}{}{{\ell-1}}{{m-1}}_q
    \N_q\bigl[(s-m,t-m)_{\mathbf{e}};
    (k-\ell,0)_{{\bf eo}}\bigr];
    $
    \item
    $\N_q\bigl[(s,t)_{{\bf eo}};(k,\ell)_{{\bf ee}}\bigr]
    =\N_q\bigl[(s,t);(k,\ell)_{{\bf ee}}\bigr]
    -\N_q\bigl[(s,t)_{{\bf ee}};(k,\ell)_{{\bf ee}}\bigr];$

    \item
    $\displaystyle
    \N_q\bigl[(s,t)_{{\bf oo}};(k,\ell)_{{\bf ee}}\bigr]
    =
    \N_q\bigl[(s,t)_{{\bf oe}};(k,\ell)_{{\bf ee}}\bigr]
    =0.
    $
\end{enumerate}
As usual, a summand is understood to be zero if the indicated subcode
does not exist.
\end{cor}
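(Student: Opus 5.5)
We mirror the proof of Corollary~\ref{HLOE}, with the role of $\mathbf{1}_n$ now played by the hull rather than by the {\rm LCD} complement. Let $C=C^{k,\ell}_{\bf ee}$ be a representative. By Proposition~\ref{lcd-complement-types}(i), write
\[
C={\rm L}^{k-\ell}_{\bf eo}\oplus\Hull(C),
\]
where $\mathbf{1}_n\in\Hull(C)$. Since $\Hull(C)\subseteq C^\perp$, we also have ${\rm L}^{k-\ell}_{\bf eo}\subseteq\Hull(C)^\perp$. Because $C$ is even-like, Lemma~\ref{even-type-bookkeeping}(i) shows that every subcode of $C$ is even-like. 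This gives part (iii) at once.

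Next we note that Theorem~\ref{all-hull}(ii) already gives the total count $\N_q\bigl[(s,t);(k,\ell)_{\bf ee}\bigr]$. Every subcode has type ${\bf eo}$ or ${\bf ee}$, so part (ii) follows by subtraction once part (i) is established.

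For part (i), Lemma~\ref{even-type-bookkeeping}(iii) says that a subcode $D$ has type ${\bf ee}$ exactly when $\mathbf{1}_n\in D$. We count such $D$ by their projection. Put $W=\Hull(C)$ and $m=\dim(D\cap W)$. By Corollary~\ref{invariance-cor}, $U=\Pi_{{\rm L}}(D)$ is a subcode of ${\rm L}^{k-\ell}_{\bf eo}$ with dimension $s-m$ and hull dimension $t-m$. Since $\mathbf{1}_n\in W$, the condition $\mathbf{1}_n\in D$ forces $m\geq 1$; also $m\leq\min\{t,\ell\}$ as in Lemma~\ref{even-projection-count}. Conversely, fix such a $U$. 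Corollary~\ref{projeee} with $\bm\alpha=\mathbf{1}_n\in W$ shows that exactly
\[
q^{(s-m)(\ell-m)}\genfrac{[}{]}{0pt}{}{\ell-1}{m-1}_q
\]
subcodes $D$ satisfy $\Pi_{\rm L}(D)=U$, $\dim(D\cap W)=m$ and $\mathbf{1}_n\in D$. Each of them has hull dimension $t$ by Corollary~\ref{invariance-cor}.

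By Proposition~\ref{eo number}, every subcode $U$ of ${\rm L}^{k-\ell}_{\bf eo}$ is even-like, indeed of type ${\bf eo}$. So the number of admissible $U$ is $\N_q\bigl[(s-m,t-m)_{\mathbf{e}};(k-\ell,0)_{\bf eo}\bigr]$. Summing over $m$ from $1$ to $\min\{t,\ell\}$ gives part (i).

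The only delicate point is the bookkeeping. Corollary~\ref{projeee} is used here rather than Corollary~\ref{projee}, because $\mathbf{1}_n$ now lies in $W$ rather than in $U$. This is exactly why the lower limit is $m=1$ and the Gaussian factor is $\genfrac{[}{]}{0pt}{}{\ell-1}{m-1}_q$.

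Finally, the right-hand side depends only on $q$, $(s,t)$ and $(k,\ell)$. Hence the count does not depend on the chosen representative, which justifies the parameterized notation.
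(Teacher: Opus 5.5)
Your proposal is correct and follows the paper's proof essentially step for step. You use the same decomposition $C={\rm L}^{k-\ell}_{\bf eo}\oplus\Hull(C)$ with $\mathbf{1}_n\in\Hull(C)$, the same reduction of type ${\bf ee}$ to the condition $\mathbf{1}_n\in D\cap\Hull(C)$ counted via Corollary~\ref{projeee} (which forces $m\geq1$), and the same partition argument for parts (ii) and (iii).
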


\begin{proof}
Let $C^{k,\ell}_{\bf ee}$ be an $[n,k]_q$ linear code of type
${\bf ee}$ with an $\ell$-dimensional hull. By
Proposition~\ref{lcd-complement-types}, write
\[
C^{k,\ell}_{\bf ee}
=
{\rm L}_{\bf eo}^{k-\ell}
\oplus
\Hull(C^{k,\ell}_{\bf ee}),
\]
and $\mathbf{1}_n\in\Hull(C^{k,\ell}_{\bf ee})$.
By Lemma~\ref{even-type-bookkeeping}, every subcode of
$C^{k,\ell}_{\bf ee}$ is even-like. Hence no subcodes of type
${\bf oo}$ or ${\bf oe}$ occur, proving part {\rm (iii)}.

We use the same projection-counting argument as in Theorem~\ref{all-hull}.
For $D\subseteq C^{k,\ell}_{\bf ee}$, put 
$m=\dim(D\cap \Hull(C^{k,\ell}_{\bf ee}))$. Then
\[
\dim \Pi_{{\rm L}_{\bf eo}^{k-\ell}}(D)=s-m,
\qquad
\dim\Hull(\Pi_{{\rm L}_{\bf eo}^{k-\ell}}(D))=t-m.
\]
and a fixed projected subcode $U$ has
$q^{(s-m)(\ell-m)}\genfrac{[}{]}{0pt}{}{\ell}{m}_q$ extensions by Lemma~\ref{proj}.
By Lemma~\ref{even-type-bookkeeping}, $D$ has even-like dual if and
only if $\mathbf{1}_n\in D$. Since
$\mathbf{1}_n\in\Hull(C^{k,\ell}_{\bf ee})$, this is
equivalent to
$\mathbf{1}_n\in D\cap\Hull(C^{k,\ell}_{\bf ee})$.

For type ${\bf ee}$ we must impose $\mathbf{1}_n\in D\cap \Hull(C^{k,\ell}_{\bf ee})$.
Corollary~\ref{projeee}, with $\bm\alpha=\mathbf{1}_n$, gives
\(
q^{(s-m)(\ell-m)}\genfrac{[}{]}{0pt}{}{{\ell-1}}{{m-1}}_q
\)
extensions, and therefore $m\geq1$. Summing over the projected even-like
subcodes $U$ gives part {\rm (i)}.

Since all subcodes are even-like, those not of type
${\bf ee}$ are of type ${\bf eo}$. Hence part
{\rm (ii)} follows.
\end{proof}
Together, Corollaries~\ref{HLOO}--\ref{HLEE} show that every
complete-type count in even characteristic is determined by
$q,(s,t)_\tau$, and $(k,\ell)_\varepsilon$.

\section{Enumeration of subcodes with prescribed hull dimensions in odd characteristic}\label{sec-3}

Throughout this section, $q$ has odd characteristic. As in the even
case, the explicit formulas show that the subcode count depends only
on $q,(s,t)_\tau$, and $(k,\ell)_\varepsilon$, and not on a chosen
representative or its length. We therefore use the parameterized
notation in the principal results and retain the signs $+$ and $-$ only
in the representative-level formulas inherited from the literature.

We use the types $+$ and $-$ introduced in Section~\ref{sec-1}. We first
recall the classification and enumeration of {\rm LCD} codes of these two
types. We
then determine the numbers of subcodes with prescribed dimension, hull
dimension, and type inside a fixed {\rm LCD} code. Finally, we extend these
formulas to an arbitrary fixed linear code by using a projection onto an
{\rm LCD} complement of its hull.

Let $\mathbf{e}_i\in\F_q^n$ denote the $i$-th standard
basis vector. Choose $a,b\in\F_q$ such that
$\delta=a^2+b^2$ is a nonsquare, and define
\[
G_+^{k,n}
=\begin{pmatrix}
\mathbf{e}_1\\
\vdots\\
\mathbf{e}_k
\end{pmatrix},
\qquad
G_-^{k,n}
=\begin{pmatrix}
\mathbf{e}_1\\
\vdots\\
\mathbf{e}_{k-1}\\
a\mathbf{e}_k+b\mathbf{e}_{k+1}
\end{pmatrix}.
\]
Here $G_+^{k,n}$ is defined for $k\leq n$, whereas $G_-^{k,n}$
requires $k<n$. The codes generated by these matrices are {\rm LCD} codes
of types $+$ and $-$, respectively.
 The code length is
suppressed whenever it is clear from the context.
\begin{prop}[{\cite{binaryLCD}}]\label{canonical}
Let $C$ be an $[n,k]_q$ {\rm LCD} code. If $C$ is of type $+$
(resp. type $-$), then there exist a generator matrix $G$ of $C$
and an orthogonal matrix $Q\in\mathbb{O}_n(q)$ such that
\[
G=G_+^{k,n}Q
\qquad
(\text{resp. }G=G_-^{k,n}Q).
\]
Conversely, the codes generated by $G_+^{k,n}Q$ and $G_-^{k,n}Q$
are {\rm LCD} codes of types $+$ and $-$, respectively.
\end{prop}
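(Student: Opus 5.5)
The plan is to work with Gram matrices and reduce to a canonical diagonal form, following the even-characteristic argument behind Proposition~\ref{even-form}. For an $[n,k]_q$ LCD code $C$, Proposition~\ref{hull-matrix} with $\ell=0$ gives a generator matrix $G_0$ with
\[
G_0G_0^T=\operatorname{diag}(1,\ldots,1,\delta),
\]
where $\delta=1$ for type $+$ and $\delta$ is the fixed nonsquare for type $-$. Direct computation gives
\[
G_+^{k,n}(G_+^{k,n})^T=E_k,
\qquad
G_-^{k,n}(G_-^{k,n})^T=\operatorname{diag}(1,\ldots,1,a^2+b^2).
\]
Since $a^2+b^2$ is a nonsquare, this settles the converse direction at once. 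Indeed, $(G_\pm^{k,n}Q)(G_\pm^{k,n}Q)^T=G_\pm^{k,n}(G_\pm^{k,n})^T$ for $Q\in\mathbb{O}_n(q)$. This Gram matrix is nonsingular, so the code is LCD by Lemma~\ref{GG^T}. Its determinant class modulo squares fixes the type: the type is an invariant of the code because the determinant of the Gram matrix changes by $\det(P)^2$ under a change of basis $P$.

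For the forward direction, I first rescale the last row of $G_0$. Both $\delta$ and $a^2+b^2$ are nonsquares, so $\delta=c^2(a^2+b^2)$ for some $c\in\F_q^*$. Replacing the last row by $c^{-1}$ times itself gives a generator matrix $G$ with
\[
GG^T=G_\pm^{k,n}(G_\pm^{k,n})^T=:M.
\]
The rows of $G$ and the rows of $G_\pm^{k,n}$ now span two $k$-dimensional subspaces $C$ and $C_0$ of the quadratic space $(\F_q^n,\langle\cdot,\cdot\rangle)$. The linear map $\phi:C_0\to C$ sending the $i$-th row of $G_\pm^{k,n}$ to the $i$-th row of $G$ is an isometry, since the two Gram matrices agree.

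The key step is Witt's extension theorem. Because $q$ is odd, the symmetric bilinear form on $\F_q^n$ is nondegenerate, and $C_0$ is a nondegenerate subspace since $M$ is invertible. Witt's theorem then extends $\phi$ to an isometry $\Phi$ of the whole space $\F_q^n$. An isometry of the standard dot product is given by right multiplication by some $Q$ with $QQ^T=E_n$, so $G=G_\pm^{k,n}Q$, as required. The only care needed is to record that Witt's theorem applies in characteristic $\neq 2$ to any isometry between subspaces, so nondegeneracy of $C_0$ is not even needed.

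If I wanted a self-contained argument instead of citing Witt, I would do it more concretely. Since $C$ and $C_0$ are nondegenerate, $\F_q^n=C\oplus C^\perp=C_0\oplus C_0^\perp$. The complements $C^\perp$ and $C_0^\perp$ are then nondegenerate of dimension $n-k$. Their discriminants are equal modulo squares, because $\operatorname{disc}(\F_q^n)=1=\operatorname{disc}(C)\operatorname{disc}(C^\perp)$, and likewise for $C_0$. Over a finite field of odd characteristic, nondegenerate quadratic spaces are classified by dimension and discriminant. Hence there is an isometry $C_0^\perp\to C^\perp$, and combining it with $\phi$ gives $\Phi$.

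The main obstacle is exactly this classification step, namely that dimension plus discriminant determines a form over $\F_q$. It is standard and follows from the fact that every nondegenerate form of dimension at least $2$ represents every element of $\F_q^*$, together with diagonalization. Everything else is bookkeeping of Gram matrices.
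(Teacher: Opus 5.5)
Your proof is correct. The paper gives no proof of this proposition; it is quoted from Carlet et al.\ without argument. So there is no in-paper argument to compare against, and what you supply is a complete justification.

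Your main route through Witt's extension theorem is the most economical one. Once the Gram matrix of $C$ is normalized to $G_\pm^{k,n}(G_\pm^{k,n})^T$, the row-to-row map is an isometry $C_0\to C$. Witt's theorem (valid in characteristic $\neq 2$ for arbitrary subspaces of a nondegenerate space) extends it, and the extension is right multiplication by some $Q\in\mathbb{O}_n(q)$. The determinant-modulo-squares remark then settles the converse. It also shows that the type defined via Proposition~\ref{hull-matrix} is well defined for {\rm LCD} codes, which the paper's definition tacitly assumes.

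Your alternative route uses the splitting $\F_q^n=C\oplus C^\perp=C_0\oplus C_0^\perp$, equality of the discriminants of the complements, and the classification of nondegenerate forms over $\F_q$ by dimension and discriminant. It trades Witt's theorem for that classification lemma and leads to a more explicit matrix construction: stack generator matrices of $C$ and $C^\perp$. Either route suffices.

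Two small points:
\begin{itemize}
\item The dot product on $\F_q^n$ is nondegenerate in every characteristic. Oddness of $q$ enters only through Witt's theorem (or the classification), not through nondegeneracy, so reword the sentence that says otherwise.
\item Add one line noting that a type-$-$ {\rm LCD} code automatically has $k<n$. Indeed $\F_q^n$ itself has Gram matrix $E_n$ of square determinant, so $G_-^{k,n}$ is always defined when you need it.
\end{itemize}
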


\subsection{Subcodes with prescribed hull dimensions in {\rm LCD} codes}

We first recall the numbers of {\rm LCD} subcodes of each type in an {\rm LCD}
code. These formulas will serve as the initial values in the recursive
enumeration of subcodes with positive hull dimension.

Let $\eta$ denote the quadratic
character of $\F_q$, extended by $\eta(0)=0$.
\begin{lemma}[{\cite[Theorem 4.9]{x-lcd}}]\label{SubLCD-S}
Let ${\rm L}_+^k$ be an $[n,k]_q$ {\rm LCD} code of type $+$, and let
$1\leq s<k$. Then
  \begin{align*} \N^0_+(s;{\rm L}_+^k)= \begin{cases}
 \frac{1}{2}  q^{\frac{s(k-s)-1}{2}}   \left (q^{\frac{k}{2}}- \eta \left( (-1)^{\frac{k}{2}}    \right )  \right )             {\frac{k}{2}-1 \brack \frac{s-1}{2}}_{q^2}, &\text{if $s$ is odd and $k$ is even};
 \\  \frac{1}{2}  q^{\frac{s(k-s)}{2}}   \left (q^{\frac{k-s}{2}}+ \eta \left( (-1)^{\frac{k-s}{2}}    \right )  \right )             {\frac{k-1}{2} \brack \frac{s-1}{2}}_{q^2}, &\text{if $s$ is odd and $k$ is odd};
 \\  \frac{1}{2}  q^{\frac{s(k-s)}{2}}   \left (q^{\frac{s}{2}}+ \eta \left( (-1)^{\frac{s}{2}}    \right )  \right )  {\frac{k-1}{2} \brack \frac{s}{2}}_{q^2}, &\text{if $s$ is even and $k$ is odd};
 \\  \frac{1}{2}  q^{\frac{s(k-s)}{2}}    \frac{\left (q^{\frac{s}{2}}+ \eta \left( (-1)^{\frac{s}{2}}    \right )  \right )
 \left (q^{\frac{k-s}{2}}+ \eta \left( (-1)^{\frac{k-s}{2}}    \right )  \right )  }{\left (q^{\frac{k}{2}}+ \eta \left( (-1)^{\frac{k}{2}}    \right )  \right ) }
     {\frac{k}{2} \brack \frac{s}{2}}_{q^2},&\text{if $s$ is even and $k$ is even}.
  \end{cases}
  \end{align*}
  
\begin{align*}\N^0_-(s;{\rm L}_+^k)= \begin{cases}
 \frac{1}{2}  q^{\frac{s(k-s)-1}{2}}   \left (q^{\frac{k}{2}}- \eta \left( (-1)^{\frac{k}{2}}    \right )  \right )             {\frac{k}{2}-1 \brack \frac{s-1}{2}}_{q^2}, &\text{if $s$ is odd and $k$ is even};
 \\  \frac{1}{2}  q^{\frac{s(k-s)}{2}}   \left (q^{\frac{k-s}{2}}- \eta \left( (-1)^{\frac{k-s}{2}}    \right )  \right )             {\frac{k-1}{2} \brack \frac{s-1}{2}}_{q^2}, &\text{if $s$ is odd and $k$ is odd};
 \\  \frac{1}{2}  q^{\frac{s(k-s)}{2}}   \left (q^{\frac{s}{2}}- \eta \left( (-1)^{\frac{s}{2}}    \right )  \right )  {\frac{k-1}{2} \brack \frac{s}{2}}_{q^2}, &\text{if $s$ is even and $k$ is odd};
 \\  \frac{1}{2}  q^{\frac{s(k-s)}{2}}    \frac{\left (q^{\frac{s}{2}}- \eta \left( (-1)^{\frac{s}{2}}    \right )  \right )
 \left (q^{\frac{k-s}{2}}- \eta \left( (-1)^{\frac{k-s}{2}}    \right )  \right )  }{\left (q^{\frac{k}{2}}+ \eta \left( (-1)^{\frac{k}{2}}    \right )  \right ) }
     {\frac{k}{2} \brack \frac{s}{2}}_{q^2},&\text{if $s$ is even and $k$ is even}.
     \end{cases}
     \end{align*}
\end{lemma}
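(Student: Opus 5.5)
Since this lemma is quoted from \cite[Theorem 4.9]{x-lcd}, the plan is to reconstruct its proof by reducing to a count over nondegenerate quadratic spaces. First, by Proposition~\ref{canonical}, every type-$+$ {\rm LCD} code equals ${\rm L}_+^kQ$ for some $Q\in\mathbb{O}_n(q)$. The map $D\mapsto DQ$ preserves Gram matrices, so it preserves hull dimension and type. Hence we may take the standard representative generated by $G_+^{k,n}$. Puncturing on $\{k+1,\ldots,n\}$ preserves inner products and dimension, so Lemma~\ref{pun matrix} identifies ${\rm L}_+^k$ with $\F_q^k$ carrying the standard form $x_1^2+\cdots+x_k^2$. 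Thus $\N^0_\pm(s;{\rm L}_+^k)$ equals the number of $s$-dimensional subspaces $U\subseteq\F_q^k$ on which the restricted form is nondegenerate and has discriminant class $\pm$, i.e. $\det(G_UG_U^T)$ is a square, resp.\ a nonsquare. Here we use Proposition~\ref{hull-matrix}, which says the type is the square class of the Gram determinant.

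Next we count by orbits. The orthogonal group $O(V)$ of $V=\F_q^k$ acts transitively on nondegenerate $s$-subspaces of a fixed isometry class, by Witt's theorem. The stabilizer of such a $U$ is $O(U)\times O(U^\perp)$, and since $V=U\perp U^\perp$, the class of $U^\perp$ is determined by the class of $U$ and that of $V$. Therefore
\[
\N^0_\pm(s;{\rm L}_+^k)=\frac{|O_k^{+}(q)|}{|O_s^{\pm}(q)|\,|O_{k-s}^{\pm'}(q)|},
\]
where $\pm'$ is chosen so that the discriminants multiply to $1$. The superscripts should be read as discriminant classes, which in even dimension translate to Witt type via $\eta((-1)^{m/2}\det)$. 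We then insert the standard orders: $|O_{2m+1}(q)|=2q^{m^2}\prod_{i=1}^m(q^{2i}-1)$, which is independent of the discriminant, and $|O^{\epsilon}_{2m}(q)|=2q^{m(m-1)}(q^m-\epsilon)\prod_{i=1}^{m-1}(q^{2i}-1)$, where $\epsilon=\eta((-1)^m\Delta)$ and $\Delta$ is the discriminant.

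The final step splits into the four parity cases of $(s,k)$. In each case, form the quotient and regroup the products $\prod(q^{2i}-1)$ into the $q^2$-Gaussian coefficients that appear in the statement. The factor $\tfrac12$ comes from the mismatch of the two factors $2$ between numerator and denominator. The power of $q$ comes from comparing $m^2$ and $m(m-1)$ exponents; for instance, when $s$ is odd and $k$ is even the exponent $(s(k-s)-1)/2$ appears. The terms $q^{m}\pm\eta(\cdot)$ are the leftover Witt-type factors of the even-dimensional groups.

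The main obstacle is the sign bookkeeping. For each case we must determine the Witt sign of $U$ and of $U^\perp$ given that $\det U$ is a square or a nonsquare and that $\det V=1$. This produces the $\eta((-1)^{s/2})$, $\eta((-1)^{(k-s)/2})$ and $\eta((-1)^{k/2})$ factors with the correct signs. When exactly one of $U$, $U^\perp$ is odd-dimensional, the even-dimensional one carries the sign. For example, with $s$ odd and $k$ even, both $U$ and $U^\perp$ are odd-dimensional and only $V$ contributes the factor $q^{k/2}-\eta((-1)^{k/2})$; this explains why both types give the same count there. As a check, we will verify that $\N^0_++\N^0_-$ agrees with the known total count of nondegenerate subspaces, using Lemma~\ref{minus-one-square} for $\eta(-1)$.
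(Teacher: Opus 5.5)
Your proposal is correct, and the sign bookkeeping you defer does come out exactly as stated. The paper gives no proof of this lemma; it imports it from \cite[Theorem 4.9]{x-lcd}. Only your first step matches the paper's own machinery: the reduction via Proposition~\ref{canonical} and puncturing to $\F_q^k$ with the form $x_1^2+\cdots+x_k^2$ is exactly how the paper handles type-$+$ parents (Proposition~\ref{S-subcode-odd}). Where the paper cites, you use Witt's theorem and orbit--stabilizer, and this checks out. Write $s=2u$ or $2u+1$ and $k-s=2v$ or $2v+1$. In the order of the four cases, the $q$-exponents reduce to $2uv+u+v=\frac{s(k-s)-1}{2}$, $(2u+1)v$, $u(2v+1)$ and $2uv$. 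The quotients of the products $\prod(q^{2i}-1)$ give the displayed $q^2$-Gaussian coefficient times a leftover factor: none, $q^{k-s}-1$, $q^{s}-1$, or $(q^{s}-1)(q^{k-s}-1)/(q^{k}-1)$. These combine with the Witt factors through $(q^{2m}-1)/(q^m-\epsilon)=q^m+\epsilon$. Since $\det V=1$, $U^\perp$ has the same discriminant class as $U$. So passing from type $+$ to type $-$ flips the Witt sign of each even-dimensional member of $\{U,U^\perp\}$ but not $\epsilon_V=\eta((-1)^{k/2})$, which gives precisely the stated signs.

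Two points should be made explicit. First, orbit--stabilizer needs each discriminant class to be nonempty. This holds for $1\leq s<k$: take $\Span\{\mathbf{e}_1,\ldots,\mathbf{e}_s\}$, and $\Span\{\mathbf{e}_1,\ldots,\mathbf{e}_{s-1},a\mathbf{e}_s+b\mathbf{e}_{s+1}\}$ with $a^2+b^2=\delta$, the latter requiring $s<k$. Second, the type of an LCD subcode is the square class of $\det(G_UG_U^T)$, which is well defined because a change of generator matrix multiplies it by a nonzero square.

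Compared with the citation, your route is self-contained and uniform. Replacing $\det V=1$ by $\det V=\delta$ means $U^\perp$ has discriminant class $\delta\det U$ and, for even $k$, $\epsilon_V=-\eta((-1)^{k/2})$. With that change the identical computation yields Lemma~\ref{SubLCD-NS}.
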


\begin{lemma}[{\cite[Theorem 4.18]{x-lcd}}]\label{SubLCD-NS}
Let ${\rm L}_-^k$ be an $[n,k]_q$ {\rm LCD} code of type $-$, and let
$1\leq s<k$. Then

\[\N^0_+(s;{\rm L}_-^k)=
\begin{cases}
        \frac{1}{2}  q^{\frac{s(k-s)-1}{2}}   \left (q^{\frac{k}{2}}+ \eta \left( (-1)^{\frac{k}{2}}    \right )  \right )             {\frac{k}{2}-1 \brack \frac{s-1}{2}}_{q^2}, &\text{if $s$ is odd and $k$ is even};
        \\  \frac{1}{2}  q^{\frac{s(k-s)}{2}}   \left (q^{\frac{k-s}{2}}- \eta \left( (-1)^{\frac{k-s}{2}}    \right )  \right )             {\frac{k-1}{2} \brack \frac{s-1}{2}}_{q^2}, &\text{if $s$ is odd and $k$ is odd};
         \\  \frac{1}{2}  q^{\frac{s(k-s)}{2}}   \left (q^{\frac{s}{2}}+ \eta \left( (-1)^{\frac{s}{2}}    \right )  \right )  {\frac{k-1}{2} \brack \frac{s}{2}}_{q^2}, &\text{if $s$ is even and $k$ is odd};
         \\ \frac{1}{2}  q^{\frac{s(k-s)}{2}} \left( \frac{\left(q^{\frac{s}{2}}+ \eta \left( (-1)^{\frac{s}{2}}  \right )  \right ) \left (q^{\frac{k-s}{2}}- \eta \left( (-1)^{\frac{k-s}{2}}    \right )  \right ) }{q^{\frac{k}{2}}- \eta \left( (-1)^{\frac{k}{2}}    \right )}\right) {\frac{k}{2} \brack \frac{s}{2}}_{q^2}, &\text{if $s$ is even and $k$ is even},
    \end{cases}\]

\[\N^0_-(s;{\rm L}_-^k)=
   \begin{cases}
        \frac{1}{2}  q^{\frac{s(k-s)-1}{2}}   \left (q^{\frac{k}{2}}+ \eta \left( (-1)^{\frac{k}{2}}    \right )  \right )             {\frac{k}{2}-1 \brack \frac{s-1}{2}}_{q^2}, &\text{if $s$ is odd and $k$ is even};
         \\  \frac{1}{2}  q^{\frac{s(k-s)}{2}}   \left (q^{\frac{k-s}{2}}+ \eta \left( (-1)^{\frac{k-s}{2}}    \right )  \right )             {\frac{k-1}{2} \brack \frac{s-1}{2}}_{q^2}, &\text{if $s$ is odd and $k$ is odd};
         \\  \frac{1}{2}  q^{\frac{s(k-s)}{2}}   \left (q^{\frac{s}{2}}- \eta \left( (-1)^{\frac{s}{2}}    \right )  \right )  {\frac{k-1}{2} \brack \frac{s}{2}}_{q^2}, &\text{if $s$ is even and $k$ is odd};
         \\  \frac{1}{2}  q^{\frac{s(k-s)}{2}} \left( \frac{\left(q^{\frac{s}{2}}-\eta \left( (-1)^{\frac{s}{2}} \right)   \right ) \left (q^{\frac{k-s}{2}}+ \eta \left( (-1)^{\frac{k-s}{2}}    \right )  \right ) }{q^{\frac{k}{2}}- \eta \left( (-1)^{\frac{k}{2}}    \right )}\right) {\frac{k}{2} \brack \frac{s}{2}}_{q^2}, &\text{if $s$ is even and $k$ is even}.
    \end{cases}\]
\end{lemma}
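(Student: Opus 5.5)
The plan is to translate the count into a count of nondegenerate subspaces of a fixed quadratic space. Then I would apply Witt's theorem and orbit--stabilizer, using the known orders of finite orthogonal groups in odd characteristic.

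\emph{Reduction to the restricted form.} By Proposition~\ref{canonical} and the argument of Lemma~\ref{same-type-count}, which works verbatim in odd characteristic since $DQ$ has the same Gram matrices as $D$, it suffices to treat the code generated by $G_-^{k,n}$. Equip ${\rm L}_-^k$ with the restriction $B$ of the Euclidean form. Since ${\rm L}_-^k$ is {\rm LCD}, $B$ is nondegenerate, and its Gram matrix is $\operatorname{diag}(1,\ldots,1,\delta)$, so its discriminant is $\delta$, a nonsquare. A subcode $D$ is {\rm LCD} exactly when $B|_D$ is nondegenerate. By Proposition~\ref{hull-matrix}, $D$ has type $+$ or $-$ according to whether $\operatorname{disc}(B|_D)$ is a square or a nonsquare. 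For such $D$ we have ${\rm L}_-^k=D\oplus D^{\perp_B}$, where $D^{\perp_B}=D^\perp\cap{\rm L}_-^k$ has dimension $k-s$ and
\[
\operatorname{disc}(D)\operatorname{disc}(D^{\perp_B})\equiv\delta \pmod{(\F_q^*)^2}.
\]
Hence a type-$+$ subcode has a type-$-$ complement, and vice versa.

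\emph{Counting by orbit--stabilizer.} By Witt's extension theorem, the isometry group $\mathbb{O}({\rm L}_-^k,B)$ acts transitively on the $s$-dimensional nondegenerate subspaces of a given discriminant class, provided at least one exists. The stabilizer of $D$ is $\mathbb{O}(D)\times\mathbb{O}(D^{\perp_B})$. Therefore
\[
\N^0_{\pm}(s;{\rm L}_-^k)=\frac{|\mathbb{O}_k(\delta)|}{|\mathbb{O}_s(d)|\,|\mathbb{O}_{k-s}(\delta d)|},
\]
where $d=1$ for type $+$ and $d=\delta$ for type $-$. I would then insert the standard orders. In odd dimension $2m+1$, $|\mathbb{O}|=2q^{m^2}\prod_{i=1}^m(q^{2i}-1)$, independent of the discriminant. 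In even dimension $2m$ with discriminant $c$,
\[
|\mathbb{O}|=2q^{m(m-1)}\bigl(q^m-\eta((-1)^m c)\bigr)\prod_{i=1}^{m-1}(q^{2i}-1).
\]
The sign is $+$ when the space is hyperbolic, i.e. when $(-1)^mc$ is a square.

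\emph{Case analysis.} I would work through the four parity cases of $(s,k)$ separately. For example, when $s$ is odd and $k$ is even, both $D$ and its complement are odd-dimensional and the ambient space is even-dimensional with $c=\delta$. This gives the factor $q^{k/2}-\eta((-1)^{k/2}\delta)=q^{k/2}+\eta((-1)^{k/2})$, and the remaining product collapses to $\frac12 q^{(s(k-s)-1)/2}\genfrac{[}{]}{0pt}{}{k/2-1}{(s-1)/2}_{q^2}$. In the other cases one uses $\eta(\delta)=-1$ in the same way. When $s$ is even, $\eta((-1)^{s/2}d)$ equals $\pm\eta((-1)^{s/2})$. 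When $k-s$ is even, $\eta((-1)^{(k-s)/2}\delta d)$ equals $\mp\eta((-1)^{(k-s)/2})$.

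The main obstacle is bookkeeping rather than ideas. In each case one has to verify that the powers of $q$ combine to $q^{s(k-s)/2}$ or $q^{(s(k-s)-1)/2}$, that the products $\prod(q^{2i}-1)$ assemble into the stated Gaussian coefficients in $q^2$, and that every quadratic-character sign matches the displayed formula. I would first check the case $s$ odd, $k$ even in full as a template. As a consistency test, I would confirm that $\N^0_+(s;{\rm L}_-^k)+\N^0_-(s;{\rm L}_-^k)$ agrees with the total obtained by replacing $\delta$ by $1$ where appropriate, i.e. with Lemma~\ref{SubLCD-S} under the swap $\eta\mapsto-\eta$ on the ambient factor.
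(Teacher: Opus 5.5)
Your proof is correct. There is no in-paper proof to match: the paper quotes this lemma from \cite{x-lcd} and uses it, among other places, as initial data for Proposition~\ref{NS-subcode-odd}. That proposition is recursive. It extends a subcode one self-orthogonal vector at a time (Lemma~\ref{cons}), tracks the type of $D^\perp\cap C$ through Proposition~\ref{otype2}, and counts admissible vectors with Lemma~\ref{sonumber}.

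Your route is of a different kind. Witt's theorem and orbit--stabilizer turn all eight formulas into quotients $|\mathbb{O}_k(\delta)|/(|\mathbb{O}_s(d)|\,|\mathbb{O}_{k-s}(\delta d)|)$. This buys three things. Independence from $n$ and from the representative is automatic. The equality $\N^0_+=\N^0_-$ for $s$ odd and $k$ even is explained by the fact that odd-dimensional orthogonal group orders do not depend on the discriminant. The complement symmetry $\N^0_\pm(s;{\rm L}_-^k)=\N^0_\mp(k-s;{\rm L}_-^k)$ is built in. The price is importing Witt's extension theorem, the classification of nondegenerate quadratic spaces over $\F_q$ by dimension and discriminant, and the orthogonal group orders. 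The paper's vector-by-vector counting stays elementary and is the mechanism it then pushes to positive hull dimension.

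\textbf{Bookkeeping and small completions.} The bookkeeping closes in all four cases. The most delicate case is $s=2a$, $k-s=2b$, $k=2M$, where
\[
\frac{\prod_{i=1}^{M-1}(q^{2i}-1)}{\prod_{i=1}^{a-1}(q^{2i}-1)\,\prod_{i=1}^{b-1}(q^{2i}-1)}
=\genfrac{[}{]}{0pt}{}{M}{a}_{q^2}\,\frac{(q^{2a}-1)(q^{2b}-1)}{q^{2M}-1}.
\]
Here $q^{2a}-1$, $q^{2b}-1$ and $q^{2M}-1$ cancel against $q^a\mp\eta((-1)^a)$, $q^b\pm\eta((-1)^b)$ and $q^M+\eta((-1)^M)$. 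This leaves exactly the displayed quotient, with exponent $M(M-1)-a(a-1)-b(b-1)=2ab$.

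You should also state, rather than leave as a proviso, that both orbits are nonempty for every $1\le s<k$. By the same classification, ${\rm L}_-^k\cong U\perp U'$ with $\dim U=s$, $\operatorname{disc}U=d$, $\dim U'=k-s$ and $\operatorname{disc}U'=\delta d$, for either choice of $d$. The reduction to $G_-^{k,n}$ is harmless but unnecessary, since only the isometry class of $({\rm L}_-^k,B)$ enters.

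\textbf{The sanity check needs replacing.} As stated, your proposed check would mislead you. For $s$ and $k$ both even, $\N^0_++\N^0_-$ equals $q^{s(k-s)/2}\genfrac{[}{]}{0pt}{}{k/2}{s/2}_{q^2}$ for both ${\rm L}_+^k$ and ${\rm L}_-^k$. Flipping only the ambient factor in Lemma~\ref{SubLCD-S} would therefore produce a spurious discrepancy. Use the complement symmetry above as the check instead.
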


Before treating subcodes of an {\rm LCD} parent code, we record the
corresponding enumeration in the whole vector space. This is a direct
reformulation of Theorem 4.11 in \cite{Li-hull-odd} using the notation
of this paper.

\begin{lem}[{\cite[Theorem 4.11]{Li-hull-odd}}]
\label{whole-space-count-odd}
Let $1\leq t\leq s\leq k-t$, and put $s_0=s-t$. Then
\[
\N_+^t(s;\F_q^k)
=
\begin{cases}
\displaystyle
\left(
\prod_{i=1}^{t}
\frac{q^{k-s_0-2i+1}-1}{q^{s_0}(q^i-1)}
\right)
\N_+^0(s_0;\F_q^k),
&\text{if $k-s_0$ is odd},\\[4mm]
\displaystyle
A_1(k,s_0,t)\N_+^0(s_0;\F_q^k),
&\text{if $k-s_0$ is even},
\end{cases}
\]
and
\[
\N_-^t(s;\F_q^k)
=
\begin{cases}
\displaystyle
\left(
\prod_{i=1}^{t}
\frac{q^{k-s_0-2i+1}-1}{q^{s_0}(q^i-1)}
\right)
\N_-^0(s_0;\F_q^k),
&\text{if $k-s_0$ is odd},\\[4mm]
\displaystyle
A_2(k,s_0,t)\N_-^0(s_0;\F_q^k),
&\text{if $k-s_0$ is even},
\end{cases}
\]
where
\[
A_1(k,s_0,t)
=
\frac{
q^{\frac{k-s_0}{2}-t}
+\eta\left((-1)^{\frac{k-s_0}{2}}\right)
}{
q^{\frac{k-s_0}{2}}
+\eta\left((-1)^{\frac{k-s_0}{2}}\right)
}
\prod_{i=1}^{t}
\frac{q^{k-s_0-2i+2}-1}{q^{s_0}(q^i-1)}
\]
and
\[
A_2(k,s_0,t)
=
\frac{
q^{\frac{k-s_0}{2}-t}
-\eta\left((-1)^{\frac{k-s_0}{2}}\right)
}{
q^{\frac{k-s_0}{2}}
-\eta\left((-1)^{\frac{k-s_0}{2}}\right)
}
\prod_{i=1}^{t}
\frac{q^{k-s_0-2i+2}-1}{q^{s_0}(q^i-1)}.
\]
\end{lem}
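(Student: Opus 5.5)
This statement is a reformulation of \cite[Theorem 4.11]{Li-hull-odd}, so the job is to translate notation. To make the translation transparent I would also rederive the formula by a double-counting argument that uses only the tools already assembled. Fix $1\leq t\leq s\leq k-t$, put $s_0=s-t$, and fix a sign $\pm$. I would count pairs $(D,{\rm L})$ in which $D\subseteq\F_q^k$ has dimension $s$, a $t$-dimensional hull and type $\pm$, and ${\rm L}$ is a complement of $\Hull(D)$ in $D$.

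\emph{Counting by $D$.} Every complement ${\rm L}$ of $H=\Hull(D)$ in $D$ is automatically orthogonal to $H$, because $H\subseteq D^\perp$. The Gram matrix argument of Lemma~\ref{exist-oeLCD} and Proposition~\ref{lcd-complement-types}(ii) then shows that ${\rm L}$ is {\rm LCD} of dimension $s_0$. Lemma~\ref{same-type}, applied to the projection along $H$, shows that ${\rm L}$ has the same type as $D$. Since $H$ is fixed by $D$, the number of complements is $q^{s_0t}$. So each $D$ contributes exactly $q^{s_0t}$ pairs.

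\emph{Counting by ${\rm L}$.} First choose an {\rm LCD} code ${\rm L}$ of dimension $s_0$ and type $\pm$; there are $\N^0_\pm(s_0;\F_q^k)$ choices. Then choose a $t$-dimensional totally isotropic subspace $H$ of ${\rm L}^\perp$. Conversely, $D={\rm L}\oplus H$ has Gram matrix $\operatorname{diag}(G_{\rm L}G_{\rm L}^T,O)$. By Lemma~\ref{GG^T} and Proposition~\ref{hull-matrix}, $D$ has hull exactly $H$ and the type of ${\rm L}$. Dividing the two counts gives
\[
\N^t_\pm(s;\F_q^k)=q^{-s_0t}\,I_t({\rm L}^\perp)\,\N^0_\pm(s_0;\F_q^k),
\]
where $I_t(X)$ denotes the number of $t$-dimensional totally isotropic subspaces of the nondegenerate quadratic space $X$. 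The factor $q^{-s_0t}$ accounts for the $q^{s_0}$ in each denominator of the stated products.

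\emph{Evaluating $I_t({\rm L}^\perp)$.} The space ${\rm L}^\perp$ is nondegenerate of dimension $N=k-s_0$, and the hypothesis gives $t\leq N/2$. The following classical counts are the inputs I would take from the literature. If $N$ is odd, $I_t=\prod_{i=1}^t(q^{N-2i+1}-1)/(q^i-1)$, which matches the first case. If $N=2m$ is even, $I_t=\prod_{i=1}^t(q^{m-i}+\epsilon)(q^{m-i+1}-\epsilon)/(q^i-1)$, where $\epsilon=\pm1$ is the Witt sign; this telescopes to
\[
\frac{q^{m-t}+\epsilon}{q^m+\epsilon}\prod_{i=1}^t\frac{q^{2m-2i+2}-1}{q^i-1}.
\]
Since $\F_q^k$ has discriminant $1$ and ${\rm L}$ has discriminant $1$ or $\delta$, the space ${\rm L}^\perp$ has discriminant $1$ or $\delta$ accordingly. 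The Witt sign of a $2m$-dimensional space of discriminant $d$ is $\eta((-1)^m d)$. Hence $\epsilon=\eta((-1)^{m})$ for type $+$, which gives $A_1$, and $\epsilon=-\eta((-1)^m)$ for type $-$, which gives $A_2$.

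I expect the main obstacle to be this last step. One has to pin down the Witt type of ${\rm L}^\perp$ from the type of ${\rm L}$ and use the correct sign convention in the isotropic-subspace count, including the case $-1$ nonsquare via Lemma~\ref{minus-one-square}. The check $t=1$, where the identity $(q^{m-1}+\epsilon)(q^m-\epsilon)(q^m+\epsilon)=(q^{m-1}+\epsilon)(q^{2m}-1)$ holds, confirms the telescoped form.
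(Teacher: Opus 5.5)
Your argument is correct, and it takes a genuinely different route from the one behind this lemma. The paper gives no proof of its own here; it quotes \cite[Theorem 4.11]{Li-hull-odd}, whose recursive method is mirrored in the paper's proof of Proposition~\ref{NS-subcode-odd}. That recursion works as follows.
\begin{enumerate}
\item The hull dimension is raised one step at a time, by adjoining a self-orthogonal vector of $D^\perp\setminus\Hull(D^\perp)$ to a code with parameters $(s_0+i-1,i-1)$.
\item The number of such vectors is read off from Lemma~\ref{sonumber}, according to the type of $D^\perp$.
\item Each new code arises from exactly $(q^i-1)q^{s_0+i-1}$ such pairs.
\item The type of $D^\perp$ has to be tracked through the $q\equiv1$ versus $q\equiv3\pmod 4$ split of Lemma~\ref{otype}, and the resulting products are telescoped only at the end.
\end{enumerate}

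Your double count of pairs $(D,{\rm L})$ collapses this induction into a single step. The only geometry that enters is the nondegenerate space ${\rm L}^\perp$, whose discriminant is forced to equal that of ${\rm L}$. The Witt sign $\eta((-1)^m d)$ therefore absorbs the mod-$4$ bookkeeping automatically. The denominators $q^{s_0}$ are explained transparently by the $q^{s_0t}$ complements of $\Hull(D)$ in $D$.

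The price is that you import two classical facts from finite orthogonal geometry: the count of $t$-dimensional totally isotropic subspaces, and the discriminant criterion for the Witt index. The recursive route needs only the isotropic-vector count, which is your case $t=1$ and essentially Lemma~\ref{sonumber}, and it rebuilds the general $t$ by induction.

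Your method also transfers verbatim to an {\rm LCD} parent $C$ in place of $\F_q^k$. Replace ${\rm L}^\perp$ by ${\rm L}^\perp\cap C$, whose discriminant is $d(C)\,d({\rm L})$ modulo squares. For $C$ of type $-$ this yields Proposition~\ref{NS-subcode-odd} directly, with $A_2$ for subcodes of type $+$ and $A_1$ for subcodes of type $-$, and without needing Proposition~\ref{otype2}.
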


Since $\F_q^k$ is an {\rm LCD} code of type $+$, the two initial values
appearing in Lemma \ref{whole-space-count-odd} are given explicitly
by Lemma \ref{SubLCD-S}.

The preceding result immediately gives the enumeration for a type
$+$ {\rm LCD} parent code.

\begin{prop}\label{S-subcode-odd}
Let ${\rm L}_+^k$ be an $[n,k]_q$ {\rm LCD} code of type $+$. For
$1\leq t\leq s\leq k$, we have
\[
\N_+^t(s;{\rm L}_+^k)
= \N_+^t(s;\F_q^k)
\]
and
\[
\N_-^t(s;{\rm L}_+^k)
= \N_-^t(s;\F_q^k).
\]
\end{prop}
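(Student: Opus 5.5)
The plan is to reduce to the standard representative and then puncture, following the pattern of Propositions~\ref{oe number} and~\ref{oo number}.

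\emph{Reduction to a representative.} By Proposition~\ref{canonical}, every type-$+$ LCD code ${\rm L}_+^k$ can be written as ${\rm L}_+^k=\operatorname{Row}(G_+^{k,n})Q$ for some $Q\in\mathbb{O}_n(q)$. I first claim the odd-characteristic analogue of Lemma~\ref{same-type-count}: the map $D\mapsto DQ$ preserves dimension, hull dimension and type. Indeed, if $G_D$ generates $D$, then $G_DQ$ generates $DQ$, and
\[
(G_DQ)(G_DQ)^T=G_DG_D^T .
\]
Hence Lemma~\ref{GG^T} gives equality of hull dimensions. Moreover, a generator matrix of $D$ realizing the canonical Gram matrix of Proposition~\ref{hull-matrix} is carried to a generator matrix of $DQ$ with the same Gram matrix. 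So the value of $\delta$, and therefore the type, is unchanged. Consequently $D\mapsto DQ$ is a bijection
\[
\mathcal{S}^t_\pm(s;\operatorname{Row}(G_+^{k,n}))\longrightarrow\mathcal{S}^t_\pm(s;{\rm L}_+^k),
\]
and it suffices to treat ${\rm L}=\operatorname{Row}(G_+^{k,n})$.

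\emph{Puncturing.} Let $S=\{k+1,\ldots,n\}$. Every codeword of ${\rm L}$ vanishes on $S$, so ${\rm L}^S=\F_q^k$, dimensions are preserved, and $\langle\mathbf{c}_1,\mathbf{c}_2\rangle=\langle\mathbf{c}_1^S,\mathbf{c}_2^S\rangle$ for all $\mathbf{c}_1,\mathbf{c}_2\in{\rm L}$. By Lemma~\ref{pun matrix}, the induced map $f_S:D\mapsto D^S$ is a bijection from subcodes of ${\rm L}$ to subcodes of $\F_q^k$ that preserves dimension and hull dimension. Since puncturing a generator matrix $G_D$ gives $G_D^S$ with $G_D^S(G_D^S)^T=G_DG_D^T$, the canonical form of Proposition~\ref{hull-matrix} is also carried over. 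Thus $D$ and $D^S$ have the same $\delta$ and hence the same type. This holds in the self-orthogonal case $s=t$ as well, where both codes are of type $+$ by convention. Restricting $f_S$ to $s$-dimensional subcodes with $t$-dimensional hull of each type yields both identities.

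\emph{Main point to check.} The only delicate step is that the type is well defined, independent of the chosen generator matrix. The argument above relies on that: it assumes the type is determined by the congruence class of the nonzero part of the Gram matrix, as implicit in Proposition~\ref{hull-matrix}. Granting that, preserving Gram matrices under both $Q$ and puncturing settles the type-preservation claim immediately, and no further work is needed.
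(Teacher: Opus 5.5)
Your proposal is correct and follows the paper's own route. The paper's proof just says it is parallel to Proposition~\ref{oo number}: reduce to the standard representative $\operatorname{Row}(G_+^{k,n})$ via an orthogonal matrix (Proposition~\ref{canonical}), then puncture on $S=\{k+1,\ldots,n\}$ to identify its subcodes with those of $\F_q^k$, noting that Gram matrices, and hence hull dimensions and types, are preserved at both steps. The well-definedness of the type that you flag is the standard discriminant invariance of the nondegenerate part, which the paper also takes for granted in its definition of type, so granting it is appropriate.
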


\begin{proof}
The proof is parallel to that of Proposition~\ref{oo number} in the
even-characteristic type-${\bf oo}$ case.
\end{proof}

The next two results prepare the recursive construction used below.
They show how, inside an {\rm LCD} parent code, an $s$-dimensional
subcode with a $t$-dimensional hull can be extended by a suitable
self-orthogonal vector to an $s+1$-dimensional subcode with a
$t+1$-dimensional hull, and they also determine the exact multiplicity with
which each resulting subcode is obtained.

\begin{prop}\label{hullequal}
Let $C$ be an $[n,k]_q$ {\rm LCD} code, and let $D$ be an
$[n,s]_q$ subcode of $C$ with a $t$-dimensional hull. Then
$D^\perp\cap C$ is an $[n,k-s]_q$ code with a $t$-dimensional
hull. Moreover,
\[
\Hull(D^\perp\cap C)
=\Hull(D).
\]
\end{prop}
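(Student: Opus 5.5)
The plan is to work inside $C$ with the restricted bilinear form. Since $C$ is {\rm LCD}, this form is nondegenerate on $C$, and we use it in two steps.

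\emph{Dimension.} Put $D'=D^\perp\cap C$, the orthogonal complement of $D$ inside $C$ with respect to the restricted form. Fix a basis $\bm\gamma_1,\ldots,\bm\gamma_k$ of $C$ with generator matrix $G$. By Lemma~\ref{GG^T} with $\ell=0$, the Gram matrix $GG^T$ is nonsingular. A vector $\mathbf{x}G\in C$ lies in $D^\perp$ if and only if $\mathbf{x}GG_D^T=\mathbf{0}$, where $G_D$ is a generator matrix of $D$. Write $G_D=XG$ with $X$ an $s\times k$ matrix of rank $s$. The condition becomes $\mathbf{x}(GG^T)X^T=\mathbf{0}$. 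Since $GG^T$ is invertible, $(GG^T)X^T$ has rank $s$, so the solution space has dimension $k-s$. Hence $\dim D'=k-s$.

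\emph{Double complement.} Next I show $D'^\perp\cap C=D$. Clearly $D\subseteq D'^\perp\cap C$. Applying the dimension count to $D'$ gives $\dim(D'^\perp\cap C)=k-(k-s)=s$, so equality holds.

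\emph{Hulls.} Now I compute $\Hull(D')=D'\cap D'^\perp$. Since $D'\subseteq C$, we have $D'\cap D'^\perp=D'\cap(D'^\perp\cap C)=D'\cap D$. Expanding, $D'\cap D=D^\perp\cap C\cap D=D\cap D^\perp=\Hull(D)$, using $D\subseteq C$. Thus $\Hull(D')=\Hull(D)$, and in particular $D'$ has a $t$-dimensional hull.

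The only point needing care is the nondegeneracy step, namely the injectivity of the map $C\to C^*$ given by $\mathbf{c}\mapsto\langle\mathbf{c},\cdot\rangle$. This is exactly the {\rm LCD} hypothesis $C\cap C^\perp=\{\mathbf{0}\}$, encoded via Lemma~\ref{GG^T} as the invertibility of $GG^T$. Everything else is formal, so I expect no real obstacle.
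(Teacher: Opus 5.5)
Your proof is correct, and its final step is the same intersection chain the paper uses, but you obtain the two ingredients differently.

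The paper argues in the ambient space. From $C^\perp\subseteq D^\perp$ and $\F_q^n=C\oplus C^\perp$ it gets $D^\perp+C=\F_q^n$, and hence $\dim(D^\perp\cap C)=(n-s)+k-n$. It then uses $(D^\perp\cap C)^\perp=D+C^\perp$ and the modular law, $(D+C^\perp)\cap C=D+(C^\perp\cap C)=D$. This is precisely your identity $(D')^\perp\cap C=D$.

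You instead stay inside $C$. Invertibility of $GG^T$ gives $\dim(U^\perp\cap C)=k-\dim U$ for every subcode $U\subseteq C$. Two applications of this, together with the inclusion $D\subseteq (D')^\perp\cap C$, give the relative double-complement property by a dimension count, without computing $(D')^\perp$ in $\F_q^n$.

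The paper's route is shorter and coordinate-free, and it identifies the full dual $D+C^\perp$. Yours uses only the nondegeneracy of the form restricted to $C$; it never needs $\dim D^\perp=n-s$ or the double-dual property in $\F_q^n$. It therefore shows the proposition as an instance of a general fact: in a nondegenerate symmetric bilinear space, taking orthogonal complements is a dimension-reversing involution under which a subspace and its complement have the same radical.
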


\begin{proof}
Since $D\subseteq C$, we have $C^\perp\subseteq D^\perp$.
As $C$ is {\rm LCD},
$
\F_q^n=C\oplus C^\perp,
$
and hence $D^\perp+C=\F_q^n$. Therefore,
\begin{align*}
\dim(D^\perp\cap C)
&=\dim(D^\perp)+\dim(C)-\dim(D^\perp+C)=k-s.
\end{align*}

Moreover, since $D\subseteq C$ and
$C\cap C^\perp=\{\mathbf{0}\}$, the modular law gives
\[
(D+C^\perp)\cap C
=D+(C^\perp\cap C)=D.
\]
Using
$
(D^\perp\cap C)^\perp=D+C^\perp,
$
we obtain
\begin{align*}
\Hull(D^\perp\cap C)
&=(D^\perp\cap C)\cap(D+C^\perp)\\
&=D^\perp\cap\bigl((D+C^\perp)\cap C\bigr)\\
&=D^\perp\cap D.
\end{align*}
Thus
$\Hull(D^\perp\cap C)
=
\Hull(D)$,
and hence $D^\perp\cap C$ has a $t$-dimensional hull.
\end{proof}

\begin{lem}\label{cons}
Let $C$ be an {\rm LCD} code, and let $D_0$ be an
$[n,s-1]_q$ subcode of $C$ with a $(t-1)$-dimensional hull.
Suppose that $\mathbf{x}$ is a self-orthogonal vector satisfying
\[
\mathbf{x}\in
(D_0^\perp\cap C)
\setminus\Hull(D_0^\perp\cap C).
\]
Then
$
D=D_0\oplus\Span\{\mathbf{x}\}
$
is an $[n,s]_q$ subcode of $C$ with a $t$-dimensional hull,
and
\[
\Hull(D)
=
\Hull(D_0)
\oplus\Span\{\mathbf{x}\}.
\]
Moreover, $D_0$ and $D$ have the same type.

Conversely, every fixed $[n,s]_q$ subcode $D\subseteq C$ with a
$t$-dimensional hull is obtained from exactly
$
(q^t-1)q^{s-1}
$
pairs $(D_0,\mathbf{x})$ satisfying the above conditions.
\end{lem}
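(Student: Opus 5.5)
The plan is to verify the forward construction directly with Gram matrices via Lemma~\ref{GG^T}, and to obtain the converse count by identifying the admissible pairs $(D_0,\mathbf{x})$ with pairs formed by a nonzero vector of $\Hull(D)$ and a hyperplane of $D$ avoiding it. Throughout, Proposition~\ref{hullequal} lets me replace the condition $\mathbf{x}\notin\Hull(D_0^\perp\cap C)$ by $\mathbf{x}\notin\Hull(D_0)$.

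\emph{Forward direction.} Since $\mathbf{x}\in C$ and $D_0\subseteq C$, we have $D\subseteq C$. Next I check that $\mathbf{x}\notin D_0$. If $\mathbf{x}\in D_0$, then $\mathbf{x}\in D_0\cap D_0^\perp=\Hull(D_0)$, contradicting the hypothesis. Hence $\dim D=s$.

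Now take a generator matrix $G_0$ of $D_0$ and append the row $\mathbf{x}$. Because $\mathbf{x}\perp D_0$ and $\langle\mathbf{x},\mathbf{x}\rangle=0$, the Gram matrix is $\operatorname{diag}(G_0G_0^T,0)$. Its rank is $(s-1)-(t-1)=s-t$, so Lemma~\ref{GG^T} gives $\dim\Hull(D)=t$.

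For the hull itself, $\Hull(D_0)$ is orthogonal to $D_0$ and to $\mathbf{x}$, because $\Hull(D_0)\subseteq D_0$ and $\mathbf{x}\in D_0^\perp$. Also $\mathbf{x}$ is orthogonal to all of $D$. Thus $\Hull(D_0)+\Span\{\mathbf{x}\}\subseteq\Hull(D)$. This sum is direct since $\mathbf{x}\notin D_0$, and it has dimension $t$, so equality holds.

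For the type (odd $q$), I choose $G_0$ in the normal form of Proposition~\ref{hull-matrix}. Appending $\mathbf{x}$ then yields a generator matrix of $D$ whose Gram matrix is again of that normal form, with the same nonzero block $\operatorname{diag}(1,\ldots,1,\delta)$. Hence $D$ and $D_0$ have the same type. In the degenerate case $s=t$, both codes are self-orthogonal and hence both of type $+$.

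\emph{Converse count.} Suppose $D=D_0\oplus\Span\{\mathbf{x}\}$ arises from an admissible pair. Then $\mathbf{x}\perp D_0$ and $\mathbf{x}\perp\mathbf{x}$, so $\mathbf{x}$ is a nonzero vector of $\Hull(D)$. Moreover $D_0$ is a hyperplane of $D$ not containing $\mathbf{x}$.

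Conversely, fix a nonzero $\mathbf{x}\in\Hull(D)$ and any hyperplane $D_0\subseteq D$ with $\mathbf{x}\notin D_0$. Then $\mathbf{x}$ is self-orthogonal and lies in $D_0^\perp\cap C$. The Gram matrix of $D$ in a basis adapted to $D_0\oplus\Span\{\mathbf{x}\}$ is $\operatorname{diag}(G_0G_0^T,0)$, which has rank $s-t$. Hence $D_0$ has dimension $s-1$ and, by Lemma~\ref{GG^T}, a $(t-1)$-dimensional hull. Finally $\mathbf{x}\notin\Hull(D_0)$ because $\mathbf{x}\notin D_0$, and by Proposition~\ref{hullequal} this is exactly $\mathbf{x}\notin\Hull(D_0^\perp\cap C)$. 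So these are precisely the admissible pairs.

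Counting, there are $q^t-1$ choices of $\mathbf{x}$. For each, the hyperplanes of the $s$-dimensional space $D$ that avoid a fixed line number $\genfrac{[}{]}{0pt}{}{s}{1}_q-\genfrac{[}{]}{0pt}{}{s-1}{1}_q=q^{s-1}$, which is the number of complements of a line. This gives $(q^t-1)q^{s-1}$ pairs in total.

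The only delicate point is the type statement. There one must make sure that the normal form of Proposition~\ref{hull-matrix} for $D_0$ extends to one for $D$, which is exactly what the block-diagonal Gram matrix provides. The rest consists of routine rank computations.
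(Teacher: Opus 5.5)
Your proof is correct, but it takes a more self-contained route than the paper's.

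\textbf{What the paper does.} It uses Proposition~\ref{hullequal} only to rewrite the admissible set as $\bigl(D_0^\perp\setminus\Hull(D_0^\perp)\bigr)\cap C$, that is, the whole-space admissible set intersected with $C$. It then imports both the forward construction (including $\Hull(D)=\Hull(D_0)\oplus\Span\{\mathbf{x}\}$) and the multiplicity $(q^t-1)q^{s-1}$ from the whole-space results of Li, Shi and Ling (their Theorem~2.2 and Proposition~2.3). To do so, it observes that every pair realizing a subcode $D\subseteq C$ automatically lies in $C$. Only the type assertion is argued directly, using the same block-diagonal Gram matrix you use.

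\textbf{What you do differently.} You prove everything from Lemma~\ref{GG^T}. For the forward direction you give the rank computation and explicitly verify that $\Hull(D_0)\oplus\Span\{\mathbf{x}\}\subseteq\Hull(D)$. For the converse you identify the admissible pairs with pairs consisting of a nonzero vector of $\Hull(D)$ and a hyperplane of $D$ that avoids it.

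\textbf{What each approach buys.} The paper's route is shorter and shows that the LCD-parent statement is inherited verbatim from the ambient-space one. Your route avoids the external citations and makes explicit the step $\mathbf{x}\notin D_0$, which the citation hides. It also explains the count transparently as (number of nonzero hull vectors)$\times$(number of complements of a line in $D$). In your version the LCD hypothesis enters only through Proposition~\ref{hullequal}.
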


\begin{proof}
By Proposition \ref{hullequal},
$
\Hull(D_0^\perp\cap C)
=\Hull(D_0).
$
Therefore,
\begin{equation}
(D_0^\perp\cap C)
\setminus\Hull(D_0^\perp\cap C)
=
\left(
D_0^\perp\setminus\Hull(D_0^\perp)
\right)\cap C.
\label{eq:extension-set-in-C}
\end{equation}

Thus $\mathbf{x}$ satisfies the condition in
{\cite[Theorem 2.2]{Li-hull-odd}} and also belongs to $C$.
It follows that
$
D=D_0\oplus\Span\{\mathbf{x}\}
$
is an $[n,s]_q$ subcode of $C$ with a $t$-dimensional hull,
and
$
\Hull(D)
=
\Hull(D_0)
\oplus\Span\{\mathbf{x}\}.
$

Let $G_{D_0}$ be a generator matrix of $D_0$. Since
$\mathbf{x}\in D_0^\perp$ and $\mathbf{x}$ is self-orthogonal,
\[
\begin{pmatrix}
G_{D_0}\\
\mathbf{x}
\end{pmatrix}
\begin{pmatrix}
G_{D_0}\\
\mathbf{x}
\end{pmatrix}^{T}
=
\begin{pmatrix}
G_{D_0}G_{D_0}^T&O\\
O&O
\end{pmatrix}.
\]
Hence $D_0$ and $D$ have the same type.

Finally, let $D\subseteq C$ be a fixed $[n,s]_q$ subcode with a
$t$-dimensional hull. By
{\cite[Proposition 2.3]{Li-hull-odd}}, there are exactly
$
(q^t-1)q^{s-1}
$
pairs $(D_0,\mathbf{x})$ such that
$
D=D_0\oplus\Span\{\mathbf{x}\},
$
where $D_0$ has dimension $s-1$ and hull dimension $t-1$, and
$\mathbf{x}$ is self-orthogonal with
$
\mathbf{x}\in
D_0^\perp\setminus\Hull(D_0^\perp).
$
Since $D_0\subseteq D\subseteq C$ and
$\mathbf{x}\in D\subseteq C$, Equation
\eqref{eq:extension-set-in-C} shows that these pairs satisfy the
conditions in the statement. This completes the proof.
\end{proof}
We shall need to determine the type of $D^\perp\cap C$ inside an
{\rm LCD} parent code $C$. For this purpose, we first recall the following
type relation from \cite{Li-hull-odd}.
\begin{lem}[{\cite[Proposition 4.6]{Li-hull-odd}}]
\label{otype}
Let $C$ be an $[n,k]_q$ linear code with an $\ell$-dimensional
hull, and let $D\subseteq C$ be a $(k-r)$-dimensional subcode
with an $(\ell+r)$-dimensional hull. Then:
\begin{enumerate}
\item if $q\equiv1\pmod4$, then $D$ has the same type as $C$;
\item if $q\equiv3\pmod4$ and $r$ is even, then $D$ has the same
type as $C$;
\item if $q\equiv3\pmod4$ and $r$ is odd, then $D$ and $C$ have
different types.
\end{enumerate}
\end{lem}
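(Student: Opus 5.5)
The plan is to translate ``type'' into the discriminant of the nondegenerate part of the restricted bilinear form, and then to show that $D$ is obtained from $C$ by splitting off exactly $r$ hyperbolic planes. First, note that by Proposition~\ref{hull-matrix}, if $\dim\Hull(C)=\ell<k$, then $C$ has a generator matrix with Gram matrix $\operatorname{diag}(1,\ldots,1,\delta,0,\ldots,0)$. Changing the generator matrix replaces $GG^T$ by $PGG^TP^T$. Hence the induced form on $\overline{C}=C/\Hull(C)$ is a well-defined nondegenerate symmetric form of dimension $k-\ell$. Its determinant, taken modulo $(\F_q^*)^2$, is $1$ or $\delta$. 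So the type of $C$ is exactly the square class of $\operatorname{disc}(\overline{C})$. The self-orthogonal convention (type $+$) matches the empty discriminant $1$.

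Next, I would locate $D$ relative to $H=\Hull(C)$. Put $m=\dim(D\cap H)$ and let $\overline{D}$ be the image of $D$ in $\overline{C}$, so $\dim\overline{D}=k-r-m$. The vectors of $H$ are orthogonal to all of $C$. Hence the Gram rank of $D$ equals the rank of the induced form on $\overline{D}$, and by Lemma~\ref{GG^T} this rank is $(k-r)-(\ell+r)=k-\ell-2r$. The radical $R$ of $\overline{D}$ then has dimension $(k-r-m)-(k-\ell-2r)=\ell+r-m$.

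Now $R$ is totally isotropic in the nondegenerate space $\overline{C}$, and $\overline{D}\subseteq R^\perp$. This gives
\[
k-r-m\le (k-\ell)-(\ell+r-m),
\]
which forces $m\ge\ell$, hence $m=\ell$ and $H\subseteq D$. Consequently $\dim R=r$ and $\dim\overline{D}=k-\ell-r=\dim R^\perp$, so $\overline{D}=R^\perp$. The nondegenerate part of $D$ is therefore $R^\perp/R$. This step is the crux, and I expect it to be the main obstacle; the dimension squeeze above is how I intend to handle it.

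Finally, I would apply Witt's decomposition in odd characteristic. Choose a totally isotropic $R'$ dual to $R$. Then $R\oplus R'$ is an orthogonal sum of $r$ hyperbolic planes, and $\overline{C}=(R\oplus R')\perp W$ with $W\cong R^\perp/R$. Each hyperbolic plane has Gram matrix $J_2$ with determinant $-1$, so
\[
\operatorname{disc}(\overline{C})=(-1)^r\operatorname{disc}(R^\perp/R).
\]
Thus $D$ and $C$ have the same type if and only if $(-1)^r$ is a square. By Lemma~\ref{minus-one-square}, this holds always when $q\equiv1\pmod4$, and only for even $r$ when $q\equiv3\pmod4$. This gives (i)--(iii).

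The degenerate cases remain to be checked. If $k-\ell-2r=0$, then $D$ is self-orthogonal, has type $+$, and $W=0$ has discriminant $1$, so the identity above still applies. If $r=0$, the statement is trivial.
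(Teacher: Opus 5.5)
Your proof is correct, and it is not the paper's route, because the paper gives no proof of this lemma. It imports the lemma from \cite{Li-hull-odd}; as the subsequent remark says, it is an intermediate step in the proof of Proposition~4.6 there. Your proposal therefore supplies a self-contained argument where the paper relies on a citation.

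Your argument is intrinsic, and each step holds:
\begin{enumerate}
\item Identifying the type with the square class of $\operatorname{disc}(C/\Hull(C))$ also shows that the type from Proposition~\ref{hull-matrix} is well defined. The paper tacitly assumes this.
\item The dimension squeeze $k-r-m\le(k-\ell)-(\ell+r-m)$ is valid, since it uses only $\overline{D}\subseteq R^\perp$ and the nondegeneracy of $\overline{C}$. It yields extra structure that the bare citation hides: $\Hull(C)\subseteq D$ and $D=\Hull(D)^\perp\cap C$.
\item Splitting off $r$ hyperbolic planes explains the sign $(-1)^r$ conceptually. Lemma~\ref{minus-one-square} then gives (i)--(iii), and the self-orthogonal convention correctly covers the boundary case $k-\ell=2r$.
\end{enumerate}

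You can avoid Witt's theorem if you prefer. Complete a basis of $R$ by a basis of a complement $W$ of $R$ in $R^\perp$, and then by any $r$ further vectors $y_1,\ldots,y_r$. In the Gram matrix of $\overline{C}$, the blocks pairing $R$ with $R$ and with $W$ vanish. The $r\times r$ block $A$ pairing $R$ with the $y_j$ is invertible by nondegeneracy. Swapping the first and last column blocks, a product of $r$ column transpositions, makes the matrix block lower triangular. Its determinant is therefore $(-1)^r\det(A)^2\det(G_W)$, with $W\cong R^\perp/R$ isometrically, so only linear algebra is needed.

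The citation buys brevity and keeps this part of the paper aligned with \cite{Li-hull-odd}. The neighbouring ingredients of the odd-characteristic recursion, such as Lemma~\ref{sonumber}, are taken from the same source.
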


\begin{remarknonumber}
The preceding statement is the intermediate type relation established
in the proof of Proposition 4.6 in \cite{Li-hull-odd}. We isolate it
because it will be applied to orthogonal complements taken inside a
fixed {\rm LCD} parent code.
\end{remarknonumber}

\begin{prop}\label{otype2}
Let $C$ be an $[n,k]_q$ {\rm LCD} code of type $-$, and let
$D\subseteq C$ be an $s$-dimensional subcode with a
$t$-dimensional hull. Then:
\begin{enumerate}
\item if $q\equiv1\pmod4$, then $D^\perp\cap C$ and $D$ have
different types;
\item if $q\equiv3\pmod4$ and $t$ is even, then
$D^\perp\cap C$ and $D$ have different types;
\item if $q\equiv3\pmod4$ and $t$ is odd, then
$D^\perp\cap C$ and $D$ have the same type.
\end{enumerate}
\end{prop}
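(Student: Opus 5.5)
The plan is to compare discriminants, that is, determinants of Gram matrices modulo squares. Write $H=\Hull(D)$ and $D'=D^\perp\cap C$. By Proposition~\ref{hullequal}, $\dim D'=k-s$ and $\Hull(D')=H$, with $\dim H=t$.

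First I will record that the type is a discriminant invariant. By Proposition~\ref{hull-matrix}, a code $X$ with hull $\Hull(X)$ can be written as $X=X_1\oplus\Hull(X)$, where $X_1$ is LCD and $X_1\subseteq\Hull(X)^\perp$. The type of $X$ is $+$ or $-$ according as $\det(G_{X_1}G_{X_1}^T)$ is a square or a nonsquare. This is independent of the choice of $X_1$ and of its basis. Indeed, any two such complements are identified by the projection along $\Hull(X)$, which preserves inner products (as in Lemma~\ref{same-type}), and a change of basis multiplies the determinant by a square. The self-orthogonal case $X_1=\{\mathbf 0\}$ has determinant $1$, which matches the convention that self-orthogonal codes are of type $+$. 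Since $C$ is LCD of type $-$, $\det$ of any Gram matrix of $C$ is a nonsquare $\delta$ modulo squares.

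Next I will build an orthogonal decomposition of $C$. Choose LCD complements $D=D_1\oplus H$ and $D'=D_2\oplus H$ with $D_1,D_2\subseteq H^\perp$. Since $D_2\subseteq D'\subseteq D^\perp$, we have $D_1\perp D_2$. The sum $D_1\oplus D_2$ is therefore an orthogonal sum of nondegenerate spaces, hence nondegenerate, of dimension $(s-t)+(k-s-t)=k-2t$. Let $P$ be its orthogonal complement inside $C$. Because $C$ is LCD, the restricted form on $C$ is nondegenerate, so $C=D_1\perp D_2\perp P$ with $P$ nondegenerate of dimension $2t$. The space $H$ is orthogonal to $D_1$ and $D_2$, so $H\subseteq P$, and $H$ is totally isotropic of dimension $t$.

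Choose a basis $\mathbf h_1,\dots,\mathbf h_t$ of $H$. By nondegeneracy of $P$, pick $\mathbf y_1,\dots,\mathbf y_t\in P$ with $\langle\mathbf h_i,\mathbf y_j\rangle=\delta_{ij}$. The resulting Gram matrix of $P$ is $\begin{pmatrix}O&E_t\\E_t&M\end{pmatrix}$, whose determinant is $(-1)^t$. Multiplying determinants over the orthogonal decomposition gives
\[
\delta\equiv \det(D_1)\det(D_2)(-1)^t \pmod{(\F_q^*)^2}.
\]
Hence $D$ and $D'$ have different types exactly when $\det(D_1)\det(D_2)$ is a nonsquare. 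This happens exactly when $(-1)^t\delta$ is a nonsquare, that is, when $(-1)^t$ is a square.

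By Lemma~\ref{minus-one-square}, $(-1)^t$ is a square when $q\equiv1\pmod4$, and also when $t$ is even. When $q\equiv3\pmod4$ and $t$ is odd it is a nonsquare. These are precisely cases (i), (ii), and (iii).

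The main point requiring care is the first step: that the type defined through Proposition~\ref{hull-matrix} coincides with the square class of the determinant of the Gram matrix of an arbitrary LCD complement orthogonal to the hull. The degenerate cases $s=t$ and $k-s=t$, where $D_1=\{\mathbf 0\}$ or $D_2=\{\mathbf 0\}$, are covered by the convention $\det=1$.
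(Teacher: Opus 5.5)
Your proof is correct, and it takes a different route from the paper's.

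\textbf{The paper's route.} It chooses an LCD subcode ${\rm L}^{s-t}\subseteq D$ of the same type as $D$. It notes that ${\rm L}^{s-t}$ and its internal complement $({\rm L}^{s-t})^\perp\cap C$ have opposite types because $C$ has type $-$. It observes that $D^\perp\cap C$ is a codimension-$t$ subcode of that complement with a $t$-dimensional hull. It then applies the cited type relation Lemma~\ref{otype} with $r=t$ to transfer the type.

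\textbf{Your route.} You obtain everything from one discriminant identity. The orthogonal splitting $C=D_1\perp D_2\perp P$ has $P$ a $2t$-dimensional nondegenerate space containing the totally isotropic $\Hull(D)$. Hence the discriminant of $P$ is $(-1)^t$, which gives $\det(D_1)\det(D_2)\equiv(-1)^t\delta$ modulo squares directly.

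\textbf{Comparison.} Your argument is self-contained and makes two things explicit:
\begin{enumerate}
\item the multiplicativity of Gram determinants over an orthogonal sum, which the paper uses silently in the step ``$C$ of type $-$ implies ${\rm L}^{s-t}$ and $({\rm L}^{s-t})^\perp\cap C$ have different types'' (this is the $t=0$ case of your identity);
\item the special case of Lemma~\ref{otype} that is actually needed, which you prove inline.
\end{enumerate}
It also shows that the hyperbolic factor $(-1)^t$ is the sole source of the dependence on $q \bmod 4$ and on the parity of $t$. The paper's proof is shorter only because it delegates that computation to the result of Li, Shi and Ling.

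Your preliminary step identifying the type with the square class of the Gram determinant of any complement of the hull is also correct. Any complement of $\Hull(X)$ in $X$ is automatically nondegenerate, and projection along the hull is an isometry between complements. This step is what makes the paper's convention (self-orthogonal codes have type $+$) agree with your convention $\det=1$ in the degenerate cases $s=t$ and $k-s=t$.

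One small unstated check: $\mathbf h_1,\dots,\mathbf h_t,\mathbf y_1,\dots,\mathbf y_t$ really do form a basis of $P$. Pairing a vanishing linear combination against each $\mathbf h_i$ kills the $\mathbf y$-coefficients, and then the $\mathbf h$-coefficients vanish by independence.
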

\begin{proof}
By Proposition \ref{hull-matrix}, $D$ contains an
$(s-t)$-dimensional {\rm LCD} subcode ${\rm L}^{s-t}$ of the same type as $D$.
Then
$
D^\perp\cap C
\subseteq
({\rm L}^{s-t})^\perp\cap C.
$
Since $C$ is of type $-$, the {\rm LCD} codes
$({\rm L}^{s-t})^\perp\cap C$ and ${\rm L}^{s-t}$ have different types.
Moreover, Proposition \ref{hullequal} shows that
$D^\perp\cap C$ has a $t$-dimensional hull. Applying Lemma
\ref{otype} to
$
D^\perp\cap C
\subseteq
({\rm L}^{s-t})^\perp\cap C
$
gives the three asserted type relations. Since $D$ and ${\rm L}^{s-t}$
have the same type, the proof is complete.
\end{proof}

\begin{cor}\label{dual-type}
Let $C$ be an $[n,k]_q$ {\rm LCD} code of type $+$, and let
$D\subseteq C$ be an $s$-dimensional subcode with a
$t$-dimensional hull. Then:
\begin{enumerate}
\item if $q\equiv1\pmod4$, then $D^\perp\cap C$ and $D$ have the
same type;
\item if $q\equiv3\pmod4$ and $t$ is even, then
$D^\perp\cap C$ and $D$ have the same type;
\item if $q\equiv3\pmod4$ and $t$ is odd, then
$D^\perp\cap C$ and $D$ have different types.
\end{enumerate}
\end{cor}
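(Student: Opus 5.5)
The argument will mirror the proof of Proposition~\ref{otype2}, with the type of the parent flipped. By Proposition~\ref{hull-matrix}, $D$ has a generator matrix whose Gram matrix is $\operatorname{diag}(1,\ldots,1,\delta,0,\ldots,0)$. The rows belonging to the nonzero block span an $(s-t)$-dimensional {\rm LCD} subcode ${\rm L}^{s-t}\subseteq D$, and ${\rm L}^{s-t}$ has the same type as $D$. (If $s=t$, then ${\rm L}^{s-t}=\{\mathbf 0\}$ and $D$ is of type $+$, and the degenerate case is handled separately below.)

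\emph{Step 1.} Since $D\supseteq{\rm L}^{s-t}$, we have $D^\perp\cap C\subseteq({\rm L}^{s-t})^\perp\cap C$. By Proposition~\ref{hullequal} applied to ${\rm L}^{s-t}$, the code $({\rm L}^{s-t})^\perp\cap C$ is an {\rm LCD} code of dimension $k-s+t$. Its type is determined as follows. Because $C$ is {\rm LCD} and ${\rm L}^{s-t}$ is {\rm LCD}, we get $C={\rm L}^{s-t}\oplus(({\rm L}^{s-t})^\perp\cap C)$, an orthogonal direct sum. The Gram matrix of $C$ is therefore block diagonal, and the discriminant of $C$ equals the product of the two discriminants modulo squares. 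Since $C$ has type $+$ (discriminant a square), the two summands have the same type, so $({\rm L}^{s-t})^\perp\cap C$ has the same type as ${\rm L}^{s-t}$, hence as $D$. This is exactly the step that in Proposition~\ref{otype2} produced ``different types'' because $C$ had type $-$.

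\emph{Step 2.} Proposition~\ref{hullequal} shows that $D^\perp\cap C$ has dimension $k-s$ and a $t$-dimensional hull. It sits inside the $(k-s+t)$-dimensional {\rm LCD} code $({\rm L}^{s-t})^\perp\cap C$, whose hull has dimension $0$. Apply Lemma~\ref{otype} with parent $({\rm L}^{s-t})^\perp\cap C$, $\ell=0$ and $r=t$. If $q\equiv1\pmod4$, or if $q\equiv3\pmod4$ with $t$ even, then $D^\perp\cap C$ has the same type as the parent, and hence as $D$. If $q\equiv3\pmod4$ with $t$ odd, the types differ. These are assertions (i)--(iii).

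\emph{Step 3.} The case I expect to be the main obstacle is verifying the discriminant multiplicativity in Step 1 together with the boundary cases. For $t=0$ the claim is just Step 1. For $s=t$, where $D$ is self-orthogonal of type $+$, the parent in Step 2 is $C$ itself, and Lemma~\ref{otype} applies directly with $r=t$. For $k-s=t$, the code $D^\perp\cap C$ is self-orthogonal and hence of type $+$ by convention; here I would check that the same lemma's convention is consistent. The product of discriminants follows from the block form of the Gram matrix. Comparing with Proposition~\ref{hull-matrix}, the type of an {\rm LCD} code is the square class of $\det(GG^T)$, and this square class is independent of the basis because a change of basis multiplies the determinant by a square.
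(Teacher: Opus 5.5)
Your proposal is correct and follows the argument the paper intends: the paper states this corollary without proof as the type-$+$ analogue of Proposition~\ref{otype2}, which uses the {\rm LCD} subcode ${\rm L}^{s-t}\subseteq D$, the inclusion $D^\perp\cap C\subseteq({\rm L}^{s-t})^\perp\cap C$, and Lemma~\ref{otype} with $r=t$. Your discriminant-multiplicativity argument, showing that $({\rm L}^{s-t})^\perp\cap C$ has the same type as ${\rm L}^{s-t}$ when $C$ has type $+$, supplies a step that the paper only asserts, and your boundary checks ($s=t$ and $k-s=t$) are consistent with the statement.
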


\begin{lem}[{\cite[Proposition 4.10]{Li-hull-odd}}]
\label{sonumber}
Assume that $C$ is an $[n,k]_q$ linear code with an
$\ell$-dimensional hull, where $k>\ell$. If $C$ is of type $+$
(resp. type $-$), then the number of self-orthogonal codewords in
$C\setminus\Hull(C)$ is
\[
q^\ell M_+(k-\ell)
\quad
\left(\text{resp. }q^\ell M_-(k-\ell)\right),
\]
where
\[
M_+(r)=
\begin{cases}
q^{r-1}-1,
    & \text{if $r$ is odd},\\[1mm]
\left(q^{\frac{r}{2}}
-\eta\left((-1)^{\frac{r}{2}}\right)\right)
\left(q^{\frac{r}{2}-1}
+\eta\left((-1)^{\frac{r}{2}}\right)\right),
    & \text{if $r$ is even},
\end{cases}
\]
and
\[
M_-(r)=
\begin{cases}
q^{r-1}-1,
    & \text{if $r$ is odd},\\[1mm]
\left(q^{\frac{r}{2}}
+\eta\left((-1)^{\frac{r}{2}}\right)\right)
\left(q^{\frac{r}{2}-1}
-\eta\left((-1)^{\frac{r}{2}}\right)\right),
    & \text{if $r$ is even}.
\end{cases}
\]
\end{lem}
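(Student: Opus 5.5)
We argue by reducing to a count of isotropic vectors of a nondegenerate diagonal quadratic form on the {\rm LCD} complement of the hull. Put $r=k-\ell\geq 1$. By Proposition~\ref{lcd-complement-types}(ii), write $C={\rm L}\oplus\Hull(C)$, where ${\rm L}$ is an {\rm LCD} subcode of dimension $r$ with the same type as $C$. Since $\Hull(C)\subseteq C^\perp$, the pair $V={\rm L}$, $W=\Hull(C)$ satisfies the hypotheses of Proposition~\ref{invariance}: $W$ is self-orthogonal and $V\subseteq W^\perp$.

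\emph{Step 1: reduction to ${\rm L}$.} Every $\mathbf{x}\in C$ decomposes uniquely as $\mathbf{x}=\mathbf{v}+\mathbf{w}$ with $\mathbf{v}\in{\rm L}$ and $\mathbf{w}\in\Hull(C)$. Because $\mathbf{w}$ is orthogonal to all of $C$, we have $\langle\mathbf{x},\mathbf{x}\rangle=\langle\mathbf{v},\mathbf{v}\rangle$. Moreover, $\mathbf{x}\notin\Hull(C)$ if and only if $\mathbf{v}\neq\mathbf{0}$. Hence the self-orthogonal codewords of $C\setminus\Hull(C)$ correspond bijectively to pairs $(\mathbf{v},\mathbf{w})$, where $\mathbf{v}$ is a nonzero self-orthogonal vector of ${\rm L}$ and $\mathbf{w}\in\Hull(C)$ is arbitrary. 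This produces the factor $q^\ell$, and it remains to show that the number of nonzero self-orthogonal vectors in ${\rm L}$ is $M_\pm(r)$.

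\emph{Step 2: the quadratic form.} By Proposition~\ref{hull-matrix}, ${\rm L}$ has a generator matrix $G$ with $GG^T=\operatorname{diag}(1,\ldots,1,\delta)$, where $\delta=1$ for type $+$ and $\delta$ is a nonsquare for type $-$. Writing $\mathbf{v}=\mathbf{y}G$ with $\mathbf{y}\in\F_q^r$, the self-orthogonality condition becomes
\[
Q(\mathbf{y})=y_1^2+\cdots+y_{r-1}^2+\delta y_r^2=0 .
\]
This is a nondegenerate quadratic form in $r$ variables with determinant $\delta$. We then invoke the classical count of zeros of such a form (Lidl--Niederreiter, Theorems~6.26--6.27). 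If $r$ is odd, the number of zeros is $q^{r-1}$. If $r$ is even, it is $q^{r-1}+\nu(q-1)q^{r/2-1}$, where $\nu=\eta\bigl((-1)^{r/2}\delta\bigr)$.

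\emph{Step 3: matching with $M_\pm$.} Subtracting the zero vector gives $q^{r-1}-1$ when $r$ is odd, for both types. When $r$ is even, type $+$ gives $\nu=\eta((-1)^{r/2})$, while type $-$ gives $\nu=-\eta((-1)^{r/2})$ because $\eta(\delta)=-1$. Using $\eta^2=1$, one checks directly that
\[
q^{r-1}+\nu q^{r/2}-\nu q^{r/2-1}-1=\bigl(q^{r/2}-\nu\bigr)\bigl(q^{r/2-1}+\nu\bigr).
\]
This is exactly $M_+(r)$ or $M_-(r)$ in the two cases. The only nonroutine ingredient is the quadratic-form zero count, which we quote as a standard result; the rest is bookkeeping.
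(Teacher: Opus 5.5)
Your proof is correct. The decomposition $\mathbf{x}=\mathbf{v}+\mathbf{w}$ gives $\langle\mathbf{x},\mathbf{x}\rangle=\langle\mathbf{v},\mathbf{v}\rangle$ and the factor $q^\ell$. You quote the Lidl--Niederreiter zero counts (Theorems 6.26--6.27) correctly with determinant $\delta$, and your sign $\nu=\eta((-1)^{r/2})\eta(\delta)$ together with the factorization $(q^{r/2}-\nu)(q^{r/2-1}+\nu)$ reproduces $M_\pm(r)$ exactly.

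The paper gives no proof of its own; it imports the lemma from \cite[Proposition~4.10]{Li-hull-odd}. Your argument, which reduces to the {\rm LCD} complement via Propositions~\ref{lcd-complement-types} and~\ref{hull-matrix} and then counts isotropic vectors, is the standard self-contained derivation.
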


With the preceding preparations, we can now derive the corresponding
enumeration formula for subcodes of an {\rm LCD} code of type $-$.
\begin{prop}\label{NS-subcode-odd}
Let ${\rm L}_-^k$ be an $[n,k]_q$ {\rm LCD} code of type $-$. For
$1\leq t\leq s\leq k-t$, put $s_0=s-t$. Then
\[
\N_+^t(s;{\rm L}_-^k)
=
\begin{cases}
\displaystyle
\left(
\prod_{i=1}^{t}
\frac{
q^{k-s_0-2i+1}-1
}{
q^{s_0}(q^i-1)
}
\right)
\N_+^0(s_0;{\rm L}_-^k),
&
\text{if $k-s_0$ is odd},
\\[4mm]
\displaystyle
A_2(k,s_0,t)\N_+^0(s_0;{\rm L}_-^k),
&
\text{if $k-s_0$ is even},
\end{cases}
\]
and
\[
\N_-^t(s;{\rm L}_-^k)
=
\begin{cases}
\displaystyle
\left(
\prod_{i=1}^{t}
\frac{
q^{k-s_0-2i+1}-1
}{
q^{s_0}(q^i-1)
}
\right)
\N_-^0(s_0;{\rm L}_-^k),
&
\text{if $k-s_0$ is odd},
\\[4mm]
\displaystyle
A_1(k,s_0,t)\N_-^0(s_0;{\rm L}_-^k),
&
\text{if $k-s_0$ is even},
\end{cases}
\]
Here $A_1(k,s_0,t)$ and $A_2(k,s_0,t)$ are as defined in
Lemma \ref{whole-space-count-odd}.
\end{prop}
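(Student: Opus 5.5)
We will prove the formulas by induction on $t$, via a double-counting recursion built on Lemma~\ref{cons}, in the same spirit as the whole-space argument of \cite{Li-hull-odd}. The target is a one-step recursion of the form
\[
(q^t-1)q^{s-1}\,\N^t_\pm(s;{\rm L}_-^k)=\sum_{D_0}\#\{\mathbf{x}\},
\]
where $D_0$ runs over the $(s-1)$-dimensional subcodes of ${\rm L}_-^k$ with a $(t-1)$-dimensional hull and the same type as the $D$ being counted. Here $\mathbf{x}$ runs over the self-orthogonal vectors of $(D_0^\perp\cap{\rm L}_-^k)\setminus\Hull(D_0^\perp\cap{\rm L}_-^k)$. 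Lemma~\ref{cons} gives exactly this: every such pair produces a subcode $D$ of dimension $s$, hull dimension $t$ and the same type as $D_0$, and each $D$ arises from exactly $(q^t-1)q^{s-1}$ pairs.

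To count the admissible $\mathbf{x}$ for a fixed $D_0$, we proceed as follows.
\begin{enumerate}
\item Proposition~\ref{hullequal} shows that $D_0^\perp\cap{\rm L}_-^k$ has dimension $k-s+1$ and hull dimension $t-1$, so its nondegenerate rank is $r=k-s+1-(t-1)=k-s_0-2t+2$.
\item Proposition~\ref{otype2} determines the type of $D_0^\perp\cap{\rm L}_-^k$ from the type of $D_0$, the value of $q \bmod 4$, and the parity of $t-1$.
\item Lemma~\ref{sonumber} then gives the number of admissible $\mathbf{x}$ as $q^{t-1}M_{\pm}(r)$, with the sign fixed by the previous step.
\end{enumerate}
This number depends only on the type of $D_0$, so the recursion becomes
\[
\N^t_\pm(s;{\rm L}_-^k)=\frac{q^{t-1}M_{\sigma}(k-s_0-2t+2)}{(q^t-1)q^{s-1}}\,\N^{t-1}_\pm(s-1;{\rm L}_-^k),
\]
with $\sigma$ the appropriate sign. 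Note that $s-1-(t-1)=s_0$ remains fixed along the recursion, so iterating down to $t=0$ expresses everything in terms of $\N^0_\pm(s_0;{\rm L}_-^k)$.

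In the case $k-s_0$ odd, $r$ is odd and $M_\pm(r)=q^{r-1}-1=q^{k-s_0-2t+1}-1$. The $i$-th factor is then $q^{i-1}(q^{k-s_0-2i+1}-1)/((q^i-1)q^{s_0+i-1})$, which telescopes directly into the stated product.

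In the case $k-s_0$ even, $r$ is even and we write $M_\pm(r)$ as a product of two factors, $(q^{r/2}\mp\eta)(q^{r/2-1}\pm\eta)$ with $\eta=\eta((-1)^{r/2})$. Rewriting $q^{r-1}$ and these factors in the form $q^{k-s_0-2i+2}-1$ times a ratio of consecutive terms $(q^{(k-s_0)/2-i}\pm\eta_i)/(q^{(k-s_0)/2-i+1}\pm\eta_{i-1})$ should make the ratio part telescope to the prefactor of $A_1$ or $A_2$. Two bookkeeping facts are needed here. First, $\eta((-1)^{j-1})$ versus $\eta((-1)^j)$ is governed by $\eta(-1)$, which is $1$ exactly when $q\equiv1\pmod 4$ (Lemma~\ref{minus-one-square}). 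Second, Proposition~\ref{otype2} flips the type exactly in the complementary pattern. Together these show that for type $-$ the sign pattern is uniformly that of $A_1$ and for type $+$ it is that of $A_2$, which is the swap relative to Lemma~\ref{whole-space-count-odd}.

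The main obstacle is this sign bookkeeping, that is, checking case by case ($q\equiv1$ or $3\pmod 4$, and $t$ even or odd) that the parent-type sign from Proposition~\ref{otype2} combines with the $\eta$-sign in $M_\pm$ to give a telescoping product. As a consistency check, I would repeat the same computation for ${\rm L}_+^k$ using Corollary~\ref{dual-type} and verify that it reproduces Lemma~\ref{whole-space-count-odd}.
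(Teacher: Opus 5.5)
Your proposal is correct and follows the paper's proof essentially step for step. It uses the same double counting via Lemma~\ref{cons}, with Proposition~\ref{hullequal}, Proposition~\ref{otype2} and Lemma~\ref{sonumber} supplying, respectively, the rank $k-s_0-2i+2$ of $D_0^\perp\cap{\rm L}_-^k$, its type, and the count $q^{i-1}M_\pm(k-s_0-2i+2)$ of extending vectors.

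The only difference is in the final simplification. You carry out the telescoping of the even-rank products yourself, writing $M_\pm(r)=(q^r-1)$ times a ratio of consecutive factors $q^j\pm\eta((-1)^j)$, whereas the paper cites the product identities in \cite{Li-hull-odd}. Your sign bookkeeping does close up in both residue classes of $q$ modulo $4$, giving $A_2$ for type $+$ and $A_1$ for type $-$.
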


\begin{proof}
For each $1\leq i\leq t$ and $\tau\in\{+,-\}$, let
$
D\in\mathcal{S}_\tau^{i-1}(s_0+i-1;{\rm L}_-^k)
$
and put
$
D_{{\rm L}_-^k}^\perp=D^\perp\cap {\rm L}_-^k.
$
Since ${\rm L}_-^k$ is {\rm LCD}, we have
$
\dim(D_{{\rm L}_-^k}^\perp)=k-(s_0+i-1).
$
Moreover, Proposition \ref{hullequal} gives
$
\dim\bigl(\Hull(D_{{\rm L}_-^k}^\perp)\bigr)=i-1.
$
It follows that
\[
\dim(D_{{\rm L}_-^k}^\perp)
-
\dim\bigl(\Hull(D_{{\rm L}_-^k}^\perp)\bigr)
=
k-s_0-2i+2.
\]

By Lemma \ref{cons}, every self-orthogonal vector
$
\mathbf{x}\in
D_{{\rm L}_-^k}^\perp
\setminus
\Hull(D_{{\rm L}_-^k}^\perp)
$
extends $D$ to
$
D\oplus\Span\{\mathbf{x}\},
$
which is an $(s_0+i)$-dimensional subcode of ${\rm L}_-^k$ with an
$i$-dimensional hull and the same type as $D$. Furthermore,
every fixed subcode in
$\mathcal{S}_\tau^i(s_0+i;{\rm L}_-^k)$ is counted exactly
$
(q^i-1)q^{s_0+i-1}
$
times in this construction.

Assume first that $q\equiv1\pmod4$. Since ${\rm L}_-^k$ is of type $-$,
Proposition \ref{otype2} shows that $D_{{\rm L}_-^k}^\perp$ is of type $-$
when $D$ is of type $+$, and is of type $+$ when $D$ is of
type $-$. Hence Lemma \ref{sonumber} gives
\[
\N_+^t(s;{\rm L}_-^k)
=
\left(
\prod_{i=1}^{t}
\frac{
M_-(k-s_0-2i+2)
}{
q^{s_0}(q^i-1)
}
\right)
\N_+^0(s_0;{\rm L}_-^k)
\]
and
\[
\N_-^t(s;{\rm L}_-^k)
=
\left(
\prod_{i=1}^{t}
\frac{
M_+(k-s_0-2i+2)
}{
q^{s_0}(q^i-1)
}
\right)
\N_-^0(s_0;{\rm L}_-^k).
\]

Now assume that $q\equiv3\pmod4$. For $1\leq i\leq t$, define
\[
M_{1,i}(m)
=
\begin{cases}
M_-(m), & \text{if $i$ is even},\\
M_+(m), & \text{if $i$ is odd},
\end{cases}
\]
and
\[
M_{2,i}(m)
=
\begin{cases}
M_+(m), & \text{if $i$ is even},\\
M_-(m), & \text{if $i$ is odd}.
\end{cases}
\]
In this case, Proposition \ref{otype2} shows that $D_{{\rm L}_-^k}^\perp$
has the same type as $D$ when $i$ is even and the opposite type
when $i$ is odd. Therefore,
\[
\N_+^t(s;{\rm L}_-^k)
=
\left(
\prod_{i=1}^{t}
\frac{
M_{2,i}(k-s_0-2i+2)
}{
q^{s_0}(q^i-1)
}
\right)
\N_+^0(s_0;{\rm L}_-^k)
\]
and
\[
\N_-^t(s;{\rm L}_-^k)
=
\left(
\prod_{i=1}^{t}
\frac{
M_{1,i}(k-s_0-2i+2)
}{
q^{s_0}(q^i-1)
}
\right)
\N_-^0(s_0;{\rm L}_-^k).
\]

It remains only to simplify the products above. Replace the whole-space
dimension $n$ in the cited theorem by $k$, and replace its parameters
$(k,\ell,k_0)$ by $(s,t,s_0)$. The resulting products are precisely
those simplified in the proof of
{\cite[Theorem 4.11, equations (6)--(9)]{Li-hull-odd}}.
More precisely, the products involving $M_+$ and $M_{1,i}$
give the factor $A_1(k,s_0,t)$, whereas those involving $M_-$ and
$M_{2,i}$ give the factor $A_2(k,s_0,t)$. Since the parent code
${\rm L}_-^k$ is of type $-$, the factor $A_2(k,s_0,t)$ occurs in
$\N_+^t(s;{\rm L}_-^k)$, while $A_1(k,s_0,t)$ occurs in
$\N_-^t(s;{\rm L}_-^k)$. Applying the product identities
established there gives the asserted formulas.
\end{proof}
\subsection{Subcodes with prescribed hull dimensions in arbitrary linear codes}

We next extend the preceding enumeration from {\rm LCD} parent codes to
arbitrary linear codes by using the LCD-complement decomposition in
Proposition~\ref{lcd-complement-types}.

\begin{theorem}\label{linear-code-count}
Assume that $q$ is odd. Let $C$ be an $[n,k]_q$ linear code with an
$\ell$-dimensional hull and type $\varepsilon$, where
$0\leq\ell\leq\min\{k,n-k\}$. Let $0\leq t\leq s\leq k$, and let
$\tau\in\{+,-\}$ be an admissible type. Then
\[
\N_q\bigl[(s,t)_\tau;(k,\ell)_\varepsilon\bigr]
=
\sum_{m=0}^{\min\{t,\ell\}}
q^{(s-m)(\ell-m)}
\genfrac{[}{]}{0pt}{}{\ell}{m}_q
\N_q\bigl[(s-m,t-m)_\tau;
(k-\ell,0)_\varepsilon\bigr].
\]
As usual, a summand is understood to be zero if the indicated subcode
does not exist.
\end{theorem}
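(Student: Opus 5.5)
We argue exactly as in Lemma~\ref{even-projection-count}, with Lemma~\ref{same-type} playing the role that Lemma~\ref{even-parity-proj} played in even characteristic. Since odd characteristic needs no dual-parity refinement, one projection-and-lift count will suffice.

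First fix a representative $C$ of parameters $(k,\ell)_\varepsilon$. By Proposition~\ref{lcd-complement-types}(ii), write $C={\rm L}\oplus\Hull(C)$, where ${\rm L}$ is an {\rm LCD} subcode of dimension $k-\ell$ and has type $\varepsilon$ when $k>\ell$; if $k=\ell$, take ${\rm L}=\{\mathbf{0}\}$. Put $W=\Hull(C)$. Then $W$ is self-orthogonal of dimension $\ell$, and since $W\subseteq C^\perp$ we also have ${\rm L}\subseteq W^\perp$. Hence the hypotheses of Proposition~\ref{invariance}, Corollary~\ref{invariance-cor} and Lemma~\ref{same-type} all hold for $V={\rm L}$.

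Next consider an $s$-dimensional subcode $D\subseteq C$ with $t$-dimensional hull and type $\tau$, and put $m=\dim(D\cap W)$. Since $W\subseteq D^\perp$, we have $D\cap W\subseteq\Hull(D)$, so $0\le m\le\min\{t,\ell\}$. Corollary~\ref{invariance-cor} shows that $U=\Pi_{\rm L}(D)$ has dimension $s-m$ and hull dimension $t-m$. Lemma~\ref{same-type} shows that $U$ also has type $\tau$; when $U$ is self-orthogonal, both codes are of type $+$ by the convention. Conversely, fix $m$ and a subcode $U\subseteq{\rm L}$ with parameters $(s-m,t-m)_\tau$. Lemma~\ref{proj} gives exactly $q^{(s-m)(\ell-m)}\genfrac{[}{]}{0pt}{}{\ell}{m}_q$ subcodes $D$ with $\Pi_{\rm L}(D)=U$, $\dim D=s$ and $\dim(D\cap W)=m$. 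Each such $D$ has hull dimension $t$ by Corollary~\ref{invariance-cor} and type $\tau$ by Lemma~\ref{same-type}. Summing over $m$ and over $U$ gives
\[
\N^t_\tau(s;C)=\sum_{m=0}^{\min\{t,\ell\}}q^{(s-m)(\ell-m)}\genfrac{[}{]}{0pt}{}{\ell}{m}_q\N^{t-m}_\tau(s-m;{\rm L}).
\]

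The main obstacle is replacing $\N^{t-m}_\tau(s-m;{\rm L})$ by the parameterized symbol $\N_q[(s-m,t-m)_\tau;(k-\ell,0)_\varepsilon]$. This requires the count inside an {\rm LCD} code to depend only on $q$, its dimension and its type, and not on the code or its length. The plan is to use Proposition~\ref{canonical}: ${\rm L}$ is $G^{k-\ell,n}_\varepsilon Q$ for some $Q\in\mathbb{O}_n(q)$. The map $D\mapsto DQ$ preserves Gram matrices, so by the argument of Lemma~\ref{same-type-count} it preserves dimension, hull dimension and type (type is read off the Gram matrix via Proposition~\ref{hull-matrix}). For type $+$, puncturing to $\F_q^{k-\ell}$ reduces to Proposition~\ref{S-subcode-odd} and Lemma~\ref{SubLCD-S}. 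For type $-$, the counts are given by Proposition~\ref{NS-subcode-odd} and Lemma~\ref{SubLCD-NS}, which depend only on $q$, $k-\ell$, $s-m$ and $t-m$.

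Finally, the degenerate cases need checking. When $k=\ell$, the sum reduces to the $m=s=t$ term, with $\N_q[(0,0)_+;(0,0)_+]=1$. Whenever some term has no admissible subcode, it is read as zero, as the statement allows.
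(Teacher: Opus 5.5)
Your proposal is correct and follows the paper's own proof. The paper likewise decomposes $C={\rm L}\oplus\Hull(C)$ via Proposition~\ref{lcd-complement-types}, sorts subcodes by $m=\dim(D\cap\Hull(C))$ and by their projection, counts each fiber with Lemma~\ref{proj}, and transfers hull dimension and type using Corollary~\ref{invariance-cor} and Lemma~\ref{same-type}; your extra paragraph on why the {\rm LCD} counts depend only on $q$, $k-\ell$, the subcode parameters and the types (via Proposition~\ref{canonical} and the explicit formulas) just makes explicit what the paper leaves implicit in its parameterized notation.
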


\begin{proof}
By Proposition~\ref{lcd-complement-types}, each of the codes
$C_+^{k,\ell}$ and $C_-^{k,\ell}$ admits a decomposition
\[
C_\pm^{k,\ell}
=
{\rm L}_\pm^{k-\ell}\oplus \Hull(C_\pm^{k,\ell}),
\]
where the LCD complement has the same type as the parent code. Applying
the projection-counting argument of Theorem~\ref{all-hull}, together
with Lemma~\ref{same-type}, gives the asserted formula for every
admissible pair of types.
\end{proof}

The right-hand side of Theorem~\ref{linear-code-count} depends only on
$q,(s,t)_\tau$, and $(k,\ell)_\varepsilon$. This establishes the
representative and length independence asserted at the beginning of
this section.

\section{Duality, Containment Enumeration, and Asymptotic Distributions}\label{sec-4}

This section studies containment enumeration and asymptotic distributions. Increasing the dimension of a known code is a common step in code constructions and motivates counting codes that contain a prescribed subcode while having a prescribed hull dimension and type. Duality identifies this containment problem with a subcode enumeration problem in the corresponding dual code, thereby providing explicit conditional mass formulas. We then determine the asymptotic distributions of prescribed-hull subcodes and of extensions containing a prescribed subcode. The common sequential convention for all asymptotic limits is introduced after the exact containment formula. The proofs of the main results in
Subsections~\ref{subsec-lcd-finite} and~\ref{subsec-asymptotic-subcodes}
are deferred to Appendix~A.

\subsection{Containment enumeration via duality}\label{subsec-containment-duality}

Let $D$ be a fixed subcode of $\F_q^n$. Motivated by the extension of a
given code to a code of larger dimension, we consider the family of
$[n,k]_q$ codes $C$ such that $D\subseteq C$ and $C$ has a prescribed
hull dimension and type. The identity
\[
D\subseteq C \quad\Longleftrightarrow\quad C^\perp\subseteq D^\perp,
\]
shows that the containment enumeration problem is dual to the subcode
enumeration problem solved above. We record the action of duality on
code types before stating the resulting correspondence.

\begin{Definition}\label{def-dual-type}
    Let $C$ be a linear code with $\dim(\operatorname{Hull}(C))=\ell$ and type $\tau$. We denote the type of $C^\perp$ by $\tau^\perp$.
    \begin{itemize}
        \item If $q$ is even, then
        \[
        ({\bf oo})^\perp={\bf oo},\qquad
        ({\bf oe})^\perp={\bf eo},\qquad
        ({\bf eo})^\perp={\bf oe},\qquad
        ({\bf ee})^\perp={\bf ee}.
        \]
        \item If $q$ is odd, Lemma~\ref{dual-type} gives the following correspondence:
        \begin{itemize}
            \item if $q\equiv 1\pmod 4$, or if $q\equiv 3\pmod 4$ and $\ell$ is even, then $\tau^\perp=\tau$;
            \item if $q\equiv 3\pmod 4$ and $\ell$ is odd, then the two types are interchanged, that is, $(+)^\perp=-$ and $(-)^\perp=+$.
        \end{itemize}
    \end{itemize}
\end{Definition}

Let $D$ have parameters $(s,t)_\tau$. In the notation of
Section~\ref{sec-1}, the number of extensions of type $\varepsilon$
and parameters $(k,\ell)$ is
$\N_{q,n}[(k,\ell)_\varepsilon\mid(s,t)_\tau]$. The next theorem
expresses this count as a parameterized subcode count in $D^\perp$.

The following theorem makes the containment--subcode correspondence
type-preserving after the dual-type involution is taken into account.

\begin{theorem}\label{thm-contain}
Let $D$ be an $[n,s]_q$ linear code with a $t$-dimensional hull and
type $\tau$, where $s\leq k\leq n$, and let
$0\leq\ell\leq\min\{k,n-k\}$. For every extension type
$\varepsilon$, one has
\[
\N_{q,n}\bigl[(k,\ell)_\varepsilon\bigm|(s,t)_\tau\bigr]
=
\N_q\bigl[(n-k,\ell)_{\varepsilon^\perp};
          (n-s,t)_{\tau^\perp}\bigr].
\]
Here $\tau^\perp$ and $\varepsilon^\perp$ are defined in
Definition~\ref{def-dual-type}. Combining this identity with
Corollaries~\ref{HLOO}, \ref{HLEO}, \ref{HLOE}, and \ref{HLEE} when
$q$ is even, and with Theorem~\ref{linear-code-count} when $q$ is odd,
gives explicit formulas for the containment count.
\end{theorem}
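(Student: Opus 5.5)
The plan is to build an explicit bijection and check that it matches all parameters. Define
\[
\Phi:\ \{C\subseteq\F_q^n : D\subseteq C,\ \dim C=k,\ \dim\Hull(C)=\ell,\ C\text{ of type }\varepsilon\}\longrightarrow\{E\subseteq D^\perp\},
\qquad C\longmapsto C^\perp .
\]
Since $D\subseteq C$ if and only if $C^\perp\subseteq D^\perp$, and $(C^\perp)^\perp=C$, the map $\Phi$ is injective. Its image is contained in the set of subcodes of $D^\perp$, and every subcode $E\subseteq D^\perp$ arises as $\Phi(E^\perp)$. So it suffices to track dimension, hull dimension and type under $C\mapsto C^\perp$, both for the extension $C$ and for the fixed code $D$.

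\textbf{Dimension and hull.} We have $\dim C^\perp=n-k$ and $\dim D^\perp=n-s$. Since $\Hull(C^\perp)=C^\perp\cap C=\Hull(C)$, the map $\Phi$ preserves hull dimension. Thus $C^\perp$ has parameters $(n-k,\ell)$, and the parent $D^\perp$ has parameters $(n-s,t)$.

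\textbf{Types.} We need $C$ of type $\varepsilon$ if and only if $C^\perp$ is of type $\varepsilon^\perp$, and likewise $D^\perp$ of type $\tau^\perp$.
\begin{enumerate}
\item For $q$ even, this is immediate from the definitions in Subsection~\ref{subsec:even-types}. The complete type of $C$ records the parity of $C$ and of $C^\perp$; swapping the roles of $C$ and $C^\perp$ swaps the two letters. This gives the involution ${\bf oe}\leftrightarrow{\bf eo}$ with ${\bf oo}$ and ${\bf ee}$ fixed, exactly as in Definition~\ref{def-dual-type}, and the correspondence is bijective on types.
\item For $q$ odd, this is the step I expect to be the main obstacle: showing that the type of $C^\perp$ depends only on $q$, $\ell$ and the type of $C$ in the stated way. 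I would invoke Corollary~\ref{dual-type} with the type-$+$ LCD parent $\F_q^n$ (which is what Definition~\ref{def-dual-type} cites). Here $C^\perp\cap\F_q^n=C^\perp$, so $C^\perp$ and $C$ have the same type when $q\equiv1\pmod 4$, or when $q\equiv3\pmod4$ and $\ell$ is even, and opposite types otherwise.
\end{enumerate}
In the odd case two points need checking. The degenerate self-orthogonal cases ($\ell=k$, or $\ell=n-k$ so that $C^\perp$ is self-orthogonal) use the type-$+$ convention, and these should be checked directly against the parity statement. The map $\tau\mapsto\tau^\perp$ is an involution for fixed $\ell$, so "$C^\perp$ has type $\varepsilon^\perp$" is equivalent to "$C$ has type $\varepsilon$".

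\textbf{Counting.} With the parameters and types matched, $\Phi$ is a bijection between the extensions being counted and the $(n-k)$-dimensional subcodes of $D^\perp$ with $\ell$-dimensional hull and type $\varepsilon^\perp$. The parent $D^\perp$ is an $[n,n-s]_q$ code with $t$-dimensional hull and type $\tau^\perp$. By the representative-independence established in Sections~\ref{sec-2} and~\ref{sec-3}, this count equals $\N_q[(n-k,\ell)_{\varepsilon^\perp};(n-s,t)_{\tau^\perp}]$.

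The standing hypotheses of those results must also hold for the parent $D^\perp$. They require $t\le\min\{n-s,s\}$, which holds because $\Hull(D)\subseteq D\cap D^\perp$. They also require $n-k\le n-s$, which holds because $s\le k$.

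Finally, substituting Corollaries~\ref{HLOO}--\ref{HLEE} (for $q$ even) or Theorem~\ref{linear-code-count} (for $q$ odd) into the right-hand side yields the explicit containment formulas.
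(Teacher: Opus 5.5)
Your proposal is correct and follows the paper's proof. It uses the same bijection $C\mapsto C^\perp$ onto the $(n-k)$-dimensional subcodes of $D^\perp$, with $\Hull(C^\perp)=\Hull(C)$, and the same dual-type rule of Definition~\ref{def-dual-type}, which in odd characteristic is exactly Corollary~\ref{dual-type} applied to the type-$+$ LCD parent $\F_q^n$. The self-orthogonal boundary cases you flag are already covered by that corollary as stated, since it applies to arbitrary subcodes under the type-$+$ convention.
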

\begin{proof}
The equivalence $D\subseteq C$ if and only if
$C^\perp\subseteq D^\perp$ shows that the map $C\mapsto C^\perp$
sends the codes counted on the left to $(n-k)$-dimensional subcodes of
the $(n-s)$-dimensional code $D^\perp$. Moreover,
\[
\Hull(C^\perp)=C^\perp\cap C=\Hull(C),
\qquad
\Hull(D^\perp)=\Hull(D),
\]
so the counted subcode has hull dimension $\ell$ and its parent has
hull dimension $t$. By Definition~\ref{def-dual-type}, $C^\perp$ has
type $\varepsilon^\perp$, while $D^\perp$ has type $\tau^\perp$.
Since duality is an involution, this map is a bijection between the two
families. This proves the identity and also shows that the containment
count depends on the fixed code only through
$q,n,(s,t)_\tau$.
\end{proof}

\subsection{Finite-dimensional hull and type distributions in {\rm LCD} codes}
\label{subsec-lcd-finite}

Throughout the remainder of this section, the field size $q$ is fixed.
All limits are understood along admissible sequences of parameter
tuples. More precisely, a sequence
\[
\bigl((s_\nu,t_\nu)_{\tau_\nu};
      (k_\nu,\ell_\nu)_{\varepsilon_\nu}\bigr)_{\nu\geq1}
\]
is called admissible if, for every $\nu$, there exists a parent code of
type $\varepsilon_\nu$ with parameters $(k_\nu,\ell_\nu)$ containing
a subcode with the indicated parameters and type, and all dimension,
hull, type, and parity conditions in the statement under consideration
are satisfied. If a statement explicitly concerns a type or parity
class that is forced to be empty, its count is interpreted as zero and
only the parent tuple is required to be realizable. Whenever a limit is
displayed without the index $\nu$, it is understood in this sequential
sense, along every admissible sequence satisfying the stated
conditions. Parameters and types declared fixed in a result are
constant along the sequence.

By the enumeration results of Sections~\ref{sec-2} and~\ref{sec-3},
the value of
\[
\N_q\bigl[(s_\nu,t_\nu)_{\tau_\nu};
          (k_\nu,\ell_\nu)_{\varepsilon_\nu}\bigr]
\]
is independent of the particular representatives and their lengths.
Consequently, the asymptotic statements below depend only on the
displayed parameter tuples. In particular, expressions such as
\[
\lim_{\substack{s\to\infty\\k-\ell-s\to\infty}}
A(k,s,\ell)
\]
denote joint sequential limits, not iterated limits.

For every real number $q>1$ and integer $n\geq1$, define
\[
g_{q,n}:=\prod_{i=1}^n\left(1-\frac1{q^i}\right),
\qquad
h_{q,n}:=\prod_{i=1}^n(q^i-1),
\]
with the conventions $g_{q,0}=h_{q,0}=1$. The sequence
$(g_{q,n})_{n\geq1}$ is strictly decreasing and converges to a positive
limit, denoted by $g_{q,\infty}$.

The asymptotic results of Carlet \emph{et al.}~\cite{binaryLCD} and the
prescribed-hull extensions of Li \emph{et al.}~\cite{Li-hull-even,Li-hull-odd}
describe the regime in which both the subcode dimension and its
codimension in the parent tend to infinity. We first give a
finite-dimensional refinement for subcodes of a high-dimensional
{\rm LCD} parent: the integers $s$ and $t$ may remain fixed, and only
$k-s$ is required to tend to infinity. When $s$ is subsequently allowed
to grow, the resulting limits agree with the previously known
large-dimensional constants. In particular, for an even-characteristic
parent of type ${\bf oe}$ and an odd-characteristic parent of type $+$,
the joint large-dimensional limits recover the corresponding results of
Carlet \emph{et al.}; the formulas for arbitrary prescribed hull
dimension recover the extensions of Li \emph{et al.}

For $0\leq t\leq s$, put $s_0=s-t$.

\begin{prop}
\label{prop:asymptotic_limits}
Fix a nonnegative integer $t$. In parts~{\rm (i)} and~{\rm (ii)}, also
fix a nonnegative integer $s$ with $t\leq s$. All limits below are
taken according to the admissible-sequence convention above.
\begin{enumerate}
\item Suppose that $q$ is even.
\begin{enumerate}
    \item If $\varepsilon\in\{{\bf oo},{\bf oe}\}$, then
    \[
    \lim_{k-s\to\infty}
    \frac{\N_q\bigl[(s,t);(k,0)_\varepsilon\bigr]}
    {\genfrac{[}{]}{0pt}{}{k}{s}_q}
    =
    \frac{g_{q,s}}
    {h_{q,t}g_{q^2,\lfloor s_0/2\rfloor}}.
    \]
    Moreover,
    \[
    \lim_{\substack{k-s\to\infty\\s_0\ \text{is odd}}}
    \frac{\N_q\bigl[(s,t)_{{\bf oo}};(k,0)_\varepsilon\bigr]}
    {\genfrac{[}{]}{0pt}{}{k}{s}_q}
    =
    \frac{g_{q,s}}
    {h_{q,t}g_{q^2,(s_0-1)/2}},
    \]
    \[
    \lim_{\substack{k-s\to\infty\\s_0\ \text{is even}}}
    \frac{\N_q\bigl[(s,t)_{{\bf oo}};(k,0)_\varepsilon\bigr]}
    {\genfrac{[}{]}{0pt}{}{k}{s}_q}
    =
    \frac{(1-q^{-s_0})g_{q,s}}
    {h_{q,t}g_{q^2,s_0/2}},
    \]
   
    \[
    \lim_{\substack{k-s\to\infty\\s_0\ \text{is even}}}
    \frac{\N_q\bigl[(s,t)_{{\bf eo}};(k,0)_\varepsilon\bigr]}
    {\genfrac{[}{]}{0pt}{}{k}{s}_q}
    =
    \frac{q^{-s_0}g_{q,s}}
    {h_{q,t}g_{q^2,s_0/2}},
    \]
    and
    \[
    \lim_{k-s\to\infty}
    \frac{\N_q\bigl[(s,t)_{{\bf oe}};(k,0)_\varepsilon\bigr]}
    {\genfrac{[}{]}{0pt}{}{k}{s}_q}
    =
    \lim_{k-s\to\infty}
    \frac{\N_q\bigl[(s,t)_{{\bf ee}};(k,0)_\varepsilon\bigr]}
    {\genfrac{[}{]}{0pt}{}{k}{s}_q}
    =0.
    \]

    \item If the parent type is ${\bf eo}$, then $s_0$ must be even, and
    \[
    \lim_{\substack{k-s\to\infty\\s_0\ \text{is even}}}
    \frac{\N_q\bigl[(s,t);(k,0)_{{\bf eo}}\bigr]}
    {\genfrac{[}{]}{0pt}{}{k}{s}_q}
    =
    \lim_{\substack{k-s\to\infty\\s_0\ \text{is even}}}
    \frac{\N_q\bigl[(s,t)_{{\bf eo}};(k,0)_{{\bf eo}}\bigr]}
    {\genfrac{[}{]}{0pt}{}{k}{s}_q}
    =
    \frac{q^t g_{q,s}}
    {h_{q,t}g_{q^2,s_0/2}},
    \]
    
    while the other three complete-type proportions are zero.
\end{enumerate}

\item Suppose that $q$ is odd and
$\varepsilon\in\{+,-\}$. Then
\[
\lim_{k-s\to\infty}
\frac{\N_q\bigl[(s,t);(k,0)_\varepsilon\bigr]}{\genfrac{[}{]}{0pt}{}{k}{s}_q}
=
\frac{g_{q,s}}
{h_{q,t}g_{q^2,\lfloor s_0/2\rfloor}}.
\]
\[
\lim_{\substack{k-s\to\infty\\s_0\ \text{is odd}}}
\frac{\N_q\bigl[(s,t)_+;(k,0)_\varepsilon\bigr]}{\genfrac{[}{]}{0pt}{}{k}{s}_q}
=
\lim_{\substack{k-s\to\infty\\s_0\ \text{is odd}}}
\frac{\N_q\bigl[(s,t)_{-};(k,0)_\varepsilon\bigr]}{\genfrac{[}{]}{0pt}{}{k}{s}_q}
=
\frac{g_{q,s}}
{2h_{q,t}g_{q^2,(s_0-1)/2}}.
\]
\[
\lim_{\substack{k-s\to\infty\\s_0\ \text{is even}}}
\frac{\N_q\bigl[(s,t)_+;(k,0)_\varepsilon\bigr]}{\genfrac{[}{]}{0pt}{}{k}{s}_q}
=
\frac{\bigl(1+\eta((-1)^{s_0/2})q^{-s_0/2}\bigr)g_{q,s}}
{2h_{q,t}g_{q^2,s_0/2}},
\]
and
\[
\lim_{\substack{k-s\to\infty\\s_0\ \text{is even}}}
\frac{\N_q\bigl[(s,t)_{-};(k,0)_\varepsilon\bigr]}{\genfrac{[}{]}{0pt}{}{k}{s}_q}
=
\frac{\bigl(1-\eta((-1)^{s_0/2})q^{-s_0/2}\bigr)g_{q,s}}
{2h_{q,t}g_{q^2,s_0/2}}.
\]

\item Fix $t$ and let both $s$ and $k-s$ tend to infinity. Then the
finite-dimensional distributions pass to the familiar
large-dimensional limits:
\begin{enumerate}
    \item if $q$ is even and
    $\varepsilon\in\{{\bf oo},{\bf oe}\}$, then
    \[
    \lim_{\substack{s\to\infty\\k-s\to\infty}}
    \frac{\N_q\bigl[(s,t);(k,0)_\varepsilon\bigr]}
    {\genfrac{[}{]}{0pt}{}{k}{s}_q}
    =
    \lim_{\substack{s\to\infty\\k-s\to\infty}}
    \frac{\N_q\bigl[(s,t)_{{\bf oo}};(k,0)_\varepsilon\bigr]}
    {\genfrac{[}{]}{0pt}{}{k}{s}_q}
    =
    \frac{g_{q,\infty}}
    {h_{q,t}g_{q^2,\infty}};
    \]
    \item if $q$ is even and the parent type is ${\bf eo}$, then
    \[
    \lim_{\substack{s\to\infty\\k-s\to\infty\\s-t\ \text{is even}}}
    \frac{\N_q\bigl[(s,t);(k,0)_{{\bf eo}}\bigr]}
    {\genfrac{[}{]}{0pt}{}{k}{s}_q}
    =
    \lim_{\substack{s\to\infty\\k-s\to\infty\\s-t\ \text{is even}}}
    \frac{\N_q\bigl[(s,t)_{{\bf eo}};(k,0)_{{\bf eo}}\bigr]}
    {\genfrac{[}{]}{0pt}{}{k}{s}_q}
    =
    \frac{q^t g_{q,\infty}}
    {h_{q,t}g_{q^2,\infty}};
    \]
    \item if $q$ is odd and $\varepsilon\in\{+,-\}$, then
    \[
    \lim_{\substack{s\to\infty\\k-s\to\infty}}
    \frac{\N_q\bigl[(s,t);(k,0)_\varepsilon\bigr]}{\genfrac{[}{]}{0pt}{}{k}{s}_q}
    =
    \frac{g_{q,\infty}}
    {h_{q,t}g_{q^2,\infty}},
    \]
    and
    \[
    \lim_{\substack{s\to\infty\\k-s\to\infty}}
    \frac{\N_q\bigl[(s,t)_+;(k,0)_\varepsilon\bigr]}{\genfrac{[}{]}{0pt}{}{k}{s}_q}
    =
    \lim_{\substack{s\to\infty\\k-s\to\infty}}
    \frac{\N_q\bigl[(s,t)_{-};(k,0)_\varepsilon\bigr]}{\genfrac{[}{]}{0pt}{}{k}{s}_q}
    =
    \frac{g_{q,\infty}}
    {2h_{q,t}g_{q^2,\infty}}.
    \]
\end{enumerate}
\end{enumerate}
\end{prop}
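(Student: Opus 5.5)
The plan is to reduce every count to the explicit closed forms already available and then take limits factor by factor. In even characteristic, Propositions~\ref{oe number}, \ref{oo number} and~\ref{eo number} identify the counts inside ${\rm L}^k_{\bf oe}$, ${\rm L}^k_{\bf oo}$ and ${\rm L}^k_{\bf eo}$ with whole-space counts $\N^t_\tau(s;\F_q^k)$ or $\N^t_{\mathbf e}(s;\F_q^{k+1})$. These are given by Lemma~\ref{thm:even-whole-space-mass}, with $(n,k,\ell)$ replaced by $(k,s,t)$ or $(k+1,s,t)$, and its initial values come from Lemma~\ref{lem:even-lcd-mass}. In odd characteristic, Proposition~\ref{S-subcode-odd} and Lemma~\ref{whole-space-count-odd} handle a parent of type $+$. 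Proposition~\ref{NS-subcode-odd} together with Lemma~\ref{SubLCD-NS} handles type $-$. (When $t=0$ we use Lemmas~\ref{SubLCD-S} and~\ref{SubLCD-NS} directly.) In each case the count has the form (product of $t$ factors)$\,\times\,$(LCD count with parameters $(k,s_0)$), and I would divide by $\genfrac{[}{]}{0pt}{}{k}{s}_q$.

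\textbf{Step 1: the $t=0$ asymptotics.} Write $\genfrac{[}{]}{0pt}{}{k}{s}_q=q^{s(k-s)}\,g_{q,k}/(g_{q,s}g_{q,k-s})$. Each LCD formula is $q^{(\text{quadratic})}$ times a $q^2$-Gaussian coefficient $\genfrac{[}{]}{0pt}{}{\lfloor k/2\rfloor-\cdots}{\lfloor s/2\rfloor}_{q^2}$, and this coefficient behaves like $q^{2\lfloor s/2\rfloor(\cdots)}/g_{q^2,\lfloor s/2\rfloor}$ as $k-s\to\infty$. I would check in each parity case of $(s,k)$ that the power of $q$ cancels exactly against $q^{s(k-s)}$. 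Then $g_{q,k}/g_{q,k-s}\to1$, leaving $g_{q,s}/g_{q^2,\lfloor s/2\rfloor}$ times the explicit bracket factors. These are $(1-q^{-s})$ for ${\bf oo}$ with $s$ even, $q^{-s}$ for ${\bf eo}$, and $\tfrac12(1\pm\eta((-1)^{s/2})q^{-s/2})$ in odd characteristic. The factor involving $q^{(k-s)/2}$ or $q^{k/2}$ tends to $1$ after normalization. For ${\bf oe}$ and ${\bf ee}$ subcodes, the extra factor of order $q^{-(k-s)}$ makes the proportion go to $0$.

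\textbf{Step 2: the hull factor.} For fixed $t$, each factor $(q^{k-s_0-2i+1}-1)/(q^{s_0}(q^i-1))$ or its $A_1,A_2$ variant grows like $q^{k-s_0-2i+1}/(q^{s_0}(q^i-1))$ times $1+o(1)$. The ratio $\genfrac{[}{]}{0pt}{}{k}{s_0}_q/\genfrac{[}{]}{0pt}{}{k}{s}_q$ grows like $q^{s_0(k-s_0)-s(k-s)}\,g_{q,s}/g_{q,s_0}$. Multiplying, the powers of $q$ should cancel: $\sum_i(k-s_0-2i+1-s_0)=t(k-2s_0-t)$, which matches $s(k-s)-s_0(k-s_0)$. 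What remains is $1/h_{q,t}$, and the $g_{q,s_0}$ from Step 1 is replaced by $g_{q,s}$, which gives the stated constants. For the ${\bf eo}$ parent, the ambient dimension is $k+1$ and the relevant even-like subcount is the ${\bf eo}+{\bf ee}$ sum in $\F_q^{k+1}$. The ${\bf ee}$ term carries the factor $1/q^{s-1}$ from Lemma~\ref{thm:even-whole-space-mass}(iv), and I expect it, combined with the shifted base $q^{-s_0}$ term, to produce the extra $q^t$. This bookkeeping of the $B$ versus $A$ products and the $n$-parity cases is the main obstacle, and I would verify it carefully on the dominant exponent only.

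\textbf{Step 3: part (iii).} This follows by letting $s\to\infty$ in the constants of parts (i) and (ii). We have $g_{q,s}\to g_{q,\infty}$, $g_{q^2,\lfloor s_0/2\rfloor}\to g_{q^2,\infty}$, and $q^{-s_0}$ and $q^{-s_0/2}$ tend to $0$. To justify taking the joint limit rather than an iterated one, I would note that the errors in Steps 1 and 2 are bounded uniformly by $O(q^{-(k-s)})$ after normalization, so the joint limit equals the iterated one.
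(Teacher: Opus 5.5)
Your overall plan (closed forms, then the three-factor cancellation $q^{t(k-2s_0)-t^2}\cdot q^{s_0(k-s_0)}=q^{s(k-s)}$) is the paper's. However, the mechanism you propose for the ${\bf eo}$ parent is wrong.

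\textbf{The ${\bf eo}$ parent.} An even-like {\rm LCD} code has even dimension (Lemma~\ref{gmatrix}(ii) with $\ell=0$). Hence for ${\rm L}^k_{\bf eo}$ the space $\F_q^{k+1}$ of Proposition~\ref{eo number} has odd length, and $\N^t_{\bf ee}(s;\F_q^{k+1})=0$ identically. Lemma~\ref{thm:even-whole-space-mass}(iv) requires even length. Directly, a code of type ${\bf ee}$ would contain $\mathbf{1}_{k+1}$ in its hull, which is impossible since $\langle\mathbf{1}_{k+1},\mathbf{1}_{k+1}\rangle=1$. So the ${\bf ee}$ term cannot supply the factor $q^t$.

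What is missing is the change of normalization. Normalized by $\genfrac{[}{]}{0pt}{}{k+1}{s}_q$, the even-like count in $\F_q^{k+1}$ tends to $q^{-s_0}g_{q,s}/(h_{q,t}g_{q^2,s_0/2})$; this is the ${\bf eo}$ limit of part (i)(a) with $k+1$ in place of $k$. Since $\genfrac{[}{]}{0pt}{}{k+1}{s}_q/\genfrac{[}{]}{0pt}{}{k}{s}_q\to q^s$, you get $q^{s}q^{-s_0}=q^t$. In exponent form, the leading power is $q^{s(k+1-s)-s_0}=q^{s(k-s)+t}$, compared with $q^{s(k-s)}$. This is exactly how the paper obtains the $q^t$.

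\textbf{Smaller points.}
\begin{enumerate}
\item Your Step 1 shows that types ${\bf oe}$ and ${\bf ee}$ are negligible only for $t=0$. For $t\geq1$ you would have to compare Lemma~\ref{thm:even-whole-space-mass}(ii),(iv) with (i),(iii), where the ratio is of order $q^{t-(k-s)}$. The paper's route is shorter and uniform. A subcode with even-like dual contains the all-one vector, so these counts are at most $\genfrac{[}{]}{0pt}{}{k-1}{s-1}_q$. Then $\genfrac{[}{]}{0pt}{}{k-1}{s-1}_q/\genfrac{[}{]}{0pt}{}{k}{s}_q=(q^s-1)/(q^k-1)\leq q^{-(k-s)}$, which also serves part (iii).
\item In Step 3, the odd-characteristic relative errors are only $O(q^{-(k-s-t)/2})$, not $O(q^{-(k-s)})$. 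They come from factors such as $q^{(k-s)/2}\pm\eta(\cdot)$ and from the prefactors of $A_1,A_2$.
\item Only the full $A_1,A_2$ products have the size you state, not their individual factors. Each factor there is larger by $q$, and this is offset by the prefactor tending to $q^{-t}$.
\end{enumerate}
These errors are still uniform in $s$ for fixed $t$, so your joint-limit argument goes through. It is more self-contained than the paper's, which relies on the earlier {\rm LCD}-subcode corollaries for the zero-hull joint limits.
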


\begin{remarknonumber}
The first two parts of Proposition~\ref{prop:asymptotic_limits} keep
$s$ finite and require only $k-s\to\infty$. Thus more than one type
may retain a nonzero limiting proportion. Part~{\rm (iii)} concerns the
different joint limit in which $s\to\infty$ as well; in that regime
type ${\bf oo}$ is dominant for an odd-like parent, type ${\bf eo}$ is
the only possible type for an even-like {\rm LCD} parent, and the two
odd-characteristic types are asymptotically balanced.
\end{remarknonumber}

\subsection{Asymptotic distribution of subcodes with prescribed hull dimensions}\label{subsec-asymptotic-subcodes}

We now combine the {\rm LCD} limits of
Proposition~\ref{prop:asymptotic_limits} with the contribution of the
hull of an arbitrary fixed parent code.

For an {\rm LCD} parent code, $\ell=0$ and hence
$a=k-\ell-s=k-s$. Thus Proposition~\ref{prop:asymptotic_limits}
describes the regime $a\to+\infty$. The corresponding result for a
general parent code is given below.

In the exact enumeration formulas, the summation index $m$ records the
dimension of the intersection between the subcode and the hull of the
fixed code. The following estimate isolates the asymptotically dominant
projection term.

\begin{lem}\label{lem:dominant-projection-term}
Let $m$ be a fixed nonnegative integer with $0\leq m\leq\ell$, and
define
\[
W_m(k,s,\ell)
:=
\frac{q^{(s-m)(\ell-m)}\genfrac{[}{]}{0pt}{}{\ell}{m}_q
\genfrac{[}{]}{0pt}{}{k-\ell}{s-m}_q}{\genfrac{[}{]}{0pt}{}{k}{s}_q}.
\]
If both $s$ and $k-\ell-s$ tend to infinity, then
\[
\lim_{\substack{s\to\infty\\k-\ell-s\to\infty}}
W_0(k,s,\ell)=1,
\qquad
\lim_{\substack{s\to\infty\\k-\ell-s\to\infty}}
W_m(k,s,\ell)=0\quad\text{if $m\geq1$}.
\]
\end{lem}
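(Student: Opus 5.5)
The plan is to write each Gaussian coefficient in product form via $h_{q,\cdot}$ and $g_{q,\cdot}$, and to estimate $W_m$ directly. Recall
\[
\genfrac{[}{]}{0pt}{}{n}{r}_q=\frac{h_{q,n}}{h_{q,r}h_{q,n-r}},\qquad h_{q,n}=q^{n(n+1)/2}g_{q,n},
\]
so that $\genfrac{[}{]}{0pt}{}{n}{r}_q=q^{r(n-r)}\,g_{q,n}/(g_{q,r}g_{q,n-r})$. Substituting this for the two coefficients $\genfrac{[}{]}{0pt}{}{k-\ell}{s-m}_q$ and $\genfrac{[}{]}{0pt}{}{k}{s}_q$ turns $W_m$ into a power of $q$ times a ratio of $g$-factors, times $\genfrac{[}{]}{0pt}{}{\ell}{m}_q$.

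The key step is the exponent bookkeeping. The power of $q$ in $W_m$ is
\[
(s-m)(\ell-m)+(s-m)(k-\ell-s+m)-s(k-s).
\]
A direct expansion should simplify this to $-m(k-\ell-s+m)-m\ell+\ldots$. Concretely, I expect the exponent to equal $-m(k-s-m)$ up to sign checking, i.e.
\[
(s-m)(k-s)-s(k-s)=-m(k-s),
\]
since the two middle terms combine to $(s-m)(k-s)$. Then $\genfrac{[}{]}{0pt}{}{\ell}{m}_q\le q^{m(\ell-m)}/g_{q,\infty}$ gives a total exponent at most
\[
-m(k-s)+m(\ell-m)=-m(k-\ell-s)-m^2.
\]

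The $g$-ratio is
\[
\frac{g_{q,k-\ell}\,g_{q,s}\,g_{q,k-s}}{g_{q,s-m}\,g_{q,k-\ell-s+m}\,g_{q,k}}.
\]
Since every $g_{q,n}$ lies in $[g_{q,\infty},1]$, this ratio is bounded above by $g_{q,\infty}^{-3}$ uniformly.

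For $m\ge1$ this gives $W_m\le g_{q,\infty}^{-4}q^{-m(k-\ell-s)}\to0$ as $k-\ell-s\to\infty$; $s\to\infty$ is not even needed here.

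For $m=0$ the exponent is exactly zero, so $W_0$ equals the $g$-ratio with $m=0$, namely
\[
\frac{g_{q,k-\ell}\,g_{q,s}\,g_{q,k-s}}{g_{q,s}\,g_{q,k-\ell-s}\,g_{q,k}}=\frac{g_{q,k-\ell}\,g_{q,k-s}}{g_{q,k-\ell-s}\,g_{q,k}}.
\]
All four indices $k-\ell$, $k-s$, $k-\ell-s$, $k$ tend to infinity, because $k-\ell-s\to\infty$ and $k-\ell-s$ is the smallest of them. Hence each factor tends to $g_{q,\infty}$ and $W_0\to1$.

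The main obstacle is just the exponent simplification; I will verify it by expanding, expecting the $\ell$-dependence to cancel exactly in the $m=0$ term. The hypothesis $s\to\infty$ appears superfluous for this lemma; presumably it is used later for the LCD limit factors.
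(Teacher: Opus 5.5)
Your proposal is correct and follows essentially the same route as the paper. It uses the product form of the Gaussian coefficients and the exponent $-m(k-s)+m(\ell-m)=-m(k-\ell-s)-m^2$; the intermediate $-m(k-s-m)$ you wrote is only a slip. For $m\geq1$ it bounds the $g$-factors uniformly, and for $m=0$ it arrives at exactly the ratio $g_{q,k-\ell}\,g_{q,k-s}/(g_{q,k-\ell-s}\,g_{q,k})$, which the paper writes as a quotient of two finite products.

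The differences are cosmetic. You bound $W_m$ directly rather than $W_m/W_0$, and you invoke $g_{q,n}\to g_{q,\infty}>0$ instead of the paper's explicit estimate $\prod_j(1-x_j)\geq 1-\sum_j x_j$. You are also right that, as in the paper's own argument, only $k-\ell-s\to\infty$ is actually used.
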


 Since $t$ is fixed in the
applications below, the number of terms
$0\leq m\leq\min\{t,\ell\}$ is uniformly bounded, and the termwise
limits may be summed. Thus, after normalization by
$\genfrac{[}{]}{0pt}{}{k}{s}_q$, the term $m=0$ has weight tending to one, whereas every
fixed term with $m\geq1$ has weight tending to zero whenever
$s\to\infty$ and $k-\ell-s\to\infty$.

This estimate reflects a basic finite-geometric phenomenon. Let $V$ be
an $n$-dimensional vector space over $\F_q$, and let $W$ be a fixed
$k$-dimensional subspace of $V$. The proportion of $s$-dimensional
subspaces $U\subseteq V$ having a nonzero intersection with $W$ tends
to zero, even when $k$ varies, provided that $n-k-s\to\infty$. In our
setting, this phenomenon is applied with
$V=C$ and $W=\operatorname{Hull}(C)$; hence the corresponding condition
is $k-\ell-s\to\infty$. Thus, after normalization by $\genfrac{[}{]}{0pt}{}{k}{s}_q$,
the leading contribution comes from the terms with
$D\cap\operatorname{Hull}(C)=\{\mathbf 0\}$. We can now combine this
observation with the asymptotic limits for {\rm LCD} codes and the
exact projection formulas.

\begin{theorem}\label{thm:asymptotic}
Fix a nonnegative integer $t$.  Suppose that $t\leq s\leq k$. When both $s$ and $k-\ell-s$ tend to infinity, the following statements hold.

\begin{enumerate}
    \item Suppose that $q$ is even.
    \begin{enumerate}
        \item Let $\varepsilon\in\{{\bf oo},{\bf oe}\}$. Then
        \[
        \lim_{\substack{s\to\infty\\k-\ell-s\to\infty}}
        \frac{\N_q\bigl[(s,t);(k,\ell)_\varepsilon\bigr]}
        {\genfrac{[}{]}{0pt}{}{k}{s}_q}
        =
        \lim_{\substack{s\to\infty\\k-\ell-s\to\infty}}
        \frac{\N_q\bigl[(s,t)_{{\bf oo}};
        (k,\ell)_\varepsilon\bigr]}
        {\genfrac{[}{]}{0pt}{}{k}{s}_q}
        =
        \frac{g_{q,\infty}}
        {h_{q,t}\cdot g_{q^2,\infty}}.
        \]

        \item Let $\varepsilon\in\{{\bf eo},{\bf ee}\}$. Then
        \[
        \lim_{\substack{s\to\infty\\k-\ell-s\to\infty\\s-t\ \text{is even}}}
        \frac{\N_q\bigl[(s,t);(k,\ell)_\varepsilon\bigr]}
        {\genfrac{[}{]}{0pt}{}{k}{s}_q}
        =
        \lim_{\substack{s\to\infty\\k-\ell-s\to\infty\\s-t\ \text{is even}}}
        \frac{\N_q\bigl[(s,t)_{{\bf eo}};
        (k,\ell)_\varepsilon\bigr]}
        {\genfrac{[}{]}{0pt}{}{k}{s}_q}
        =
        \frac{q^t\cdot g_{q,\infty}}
        {h_{q,t}\cdot g_{q^2,\infty}}.
        \]
    \end{enumerate}

    \item Suppose that $q$ is odd and
    $\varepsilon\in\{+,-\}$. Then
    \[
    \lim_{\substack{s\to\infty\\k-\ell-s\to\infty}}
    \frac{\N_q\bigl[(s,t)_+;(k,\ell)_\varepsilon\bigr]}
    {\genfrac{[}{]}{0pt}{}{k}{s}_q}
    =
    \lim_{\substack{s\to\infty\\k-\ell-s\to\infty}}
    \frac{\N_q\bigl[(s,t)_{-};(k,\ell)_\varepsilon\bigr]}
    {\genfrac{[}{]}{0pt}{}{k}{s}_q}
    =
    \frac{g_{q,\infty}}
    {2h_{q,t}\cdot g_{q^2,\infty}}.
    \]
\end{enumerate}

\end{theorem}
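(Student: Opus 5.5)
The plan is to start from the exact projection formulas, Theorem~\ref{all-hull} and Corollaries~\ref{HLOO}--\ref{HLEE} for $q$ even, and Theorem~\ref{linear-code-count} for $q$ odd. Divide each formula by $\genfrac{[}{]}{0pt}{}{k}{s}_q$ and write every summand as
\[
W_m(k,s,\ell)\cdot
\frac{\N_q\bigl[(s-m,t-m)_{\tau};(k-\ell,0)_{\varepsilon'}\bigr]}
{\genfrac{[}{]}{0pt}{}{k-\ell}{s-m}_q},
\]
where $\varepsilon'$ is the type of the {\rm LCD} complement supplied by Proposition~\ref{lcd-complement-types}. Here $\varepsilon'=\varepsilon$ for ${\bf oo},{\bf oe},{\bf eo}$ and for $\pm$, while $\varepsilon'={\bf eo}$ when $\varepsilon={\bf ee}$.

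Since $t$ is fixed, the sum has at most $t+1$ terms. The second factor is a proportion of subcodes of an {\rm LCD} code, so it is bounded by $1$. By Lemma~\ref{lem:dominant-projection-term}, every term with $m\geq1$ therefore tends to $0$. The limit is thus the limit of the $m=0$ term, namely $W_0\to1$ times the {\rm LCD} proportion
\[
\N_q\bigl[(s,t)_\tau;(k-\ell,0)_{\varepsilon'}\bigr]\big/\genfrac{[}{]}{0pt}{}{k-\ell}{s}_q .
\]
Because $s\to\infty$ and $(k-\ell)-s\to\infty$, Proposition~\ref{prop:asymptotic_limits}(iii) applies with parent dimension $k-\ell$.

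The type bookkeeping then goes as follows.

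For $q$ odd, Theorem~\ref{linear-code-count} keeps the type $\tau$ unchanged in the $m=0$ term. Part (iii)(c) then gives $g_{q,\infty}/(2h_{q,t}g_{q^2,\infty})$ for each of $\tau=\pm$.

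For $q$ even with $\varepsilon={\bf oo}$, the total count and the ${\bf oo}$ count reduce to the corresponding ${\bf oo}$-complement counts by Theorem~\ref{all-hull}(i)(a) and Corollary~\ref{HLOO}(i). Part (iii)(a) gives the same constant for both.

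For $\varepsilon={\bf oe}$, the total count comes from Theorem~\ref{all-hull}(i)(a). For the ${\bf oo}$ count I use Corollary~\ref{HLOE}(iii), which gives ${\bf oo}=\mathbf{o}-{\bf oe}$. The $\mathbf{o}$ part reduces via Theorem~\ref{all-hull}(i)(b) to the odd-like proportion in ${\rm L}^{k-\ell}_{\bf oe}$. This proportion tends to the ${\bf oo}$ constant, since the ${\bf oe}$ and ${\bf eo}$ contributions vanish by part (iii)(a). The ${\bf oe}$ part, given by Corollary~\ref{HLOE}(i), is dominated termwise by $W_m$ times the ${\bf oe}$ proportion in the complement. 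That proportion tends to $0$, because it is a limit of $\N_{\bf oe}$ and part (iii)(a) shows the total equals the ${\bf oo}$ limit.

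For $\varepsilon\in\{{\bf eo},{\bf ee}\}$, Theorem~\ref{all-hull}(ii) and part (iii)(b) give the total, with the constraint that $s-t$ is even. For ${\bf eo}$ the complete-type count equals the total by Corollary~\ref{HLEO}. For ${\bf ee}$ I use Corollary~\ref{HLEE}(ii), writing ${\bf eo}=\text{total}-{\bf ee}$. The ${\bf ee}$ sum starts at $m=1$, and each of its terms is at most the $m$-th term of the total expansion, since $\genfrac{[}{]}{0pt}{}{\ell-1}{m-1}_q\le\genfrac{[}{]}{0pt}{}{\ell}{m}_q$. So the ${\bf ee}$ sum tends to $0$.

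The main obstacle is justifying the use of the {\rm LCD} limits along a joint sequence in which $\ell$ may also vary. The {\rm LCD} limits in Proposition~\ref{prop:asymptotic_limits}(iii) are stated for parent dimension $k$ with $s$ and $k-s\to\infty$. Here the parent dimension is $k-\ell$, and $(k-\ell)-s\to\infty$ holds exactly, so the sequential convention lets me apply them directly along any admissible sequence.

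A second point needs care. The bound for the $m\geq1$ terms uses only Lemma~\ref{lem:dominant-projection-term} together with proportions bounded by $1$, so no uniformity in $\ell$ beyond what that lemma already provides is required. I would check that the lemma indeed allows $\ell$ to vary along the sequence. If it did not, I would prove directly that $W_m\le C_t\,q^{-m(k-\ell-s)}$ from the Gaussian-coefficient estimates.
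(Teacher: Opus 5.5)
Your proposal is correct and is essentially the paper's argument: write each normalized projection formula as $\sum_m W_m(k,s,\ell)$ times an {\rm LCD} proportion, discard the $m\geq1$ terms by Lemma~\ref{lem:dominant-projection-term} (whose proof already gives a bound uniform in $\ell$, which settles your concern about $\ell$ varying), and evaluate the $m=0$ term with Proposition~\ref{prop:asymptotic_limits}(iii) at parent dimension $k-\ell$. The paper details only part~(i)(a), so your explicit type bookkeeping for the ${\bf oe}$, ${\bf ee}$, and odd-characteristic parents fills in the cases the paper says are handled similarly.
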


\begin{remarknonumber}
Theorem~\ref{thm:asymptotic} describes the asymptotic hull proportions
along an admissible sequence of parent codes. For each member of the
sequence, the quantity
$
\frac{\N_q\bigl[(s,t);(k,\ell)_\varepsilon\bigr]}
{\genfrac{[}{]}{0pt}{}{k}{s}_q}
$
is the proportion of its $s$-dimensional subcodes whose hull dimension
is $t$. The
theorem shows that, under the condition
\[
s\to\infty,\qquad k-\ell-s\to\infty,
\]
this proportion has the same leading limit as in the corresponding
{\rm LCD} component of $C$.

In even characteristic, the dominant type is determined by the parity
of the parent code: for odd-like parent codes, almost all prescribed-hull
subcodes are of type ${\bf oo}$, while for even-like parent codes,
almost all prescribed-hull subcodes are of type ${\bf eo}$. In odd
characteristic, the two types $+$ and $-$ are asymptotically balanced.
Thus the hull part of the parent code affects the exact finite formulas,
but it does not change the leading asymptotic distribution, provided
that the {\rm LCD} component has sufficiently large remaining dimension.
\end{remarknonumber}
With the unified parameter $a=k-\ell-s$, the preceding theorem treats
the regime $a\to+\infty$. We now consider $a\leq0$. For every
$s$-dimensional subcode $D\subseteq C$, the projection identity gives
\[
s-\dim\operatorname{Hull}(D)
=
\dim\Pi_{\rm L}(D)-\dim\operatorname{Hull}(\Pi_{\rm L}(D))
\leq k-\ell.
\]
Thus $\dim\operatorname{Hull}(D)\geq s-(k-\ell)=-a$. The following theorem
counts the subcodes attaining this minimum and describes their
asymptotic distribution.

\begin{theorem}
\label{thm:asymptotic-finite}
Put
\[
k_0=k-\ell,
\qquad
a=k_0-s=k-\ell-s,
\]
and suppose that $a\leq0$.
Let $\varepsilon$ be any admissible parent type.
\begin{enumerate}
    \item If $a$ is fixed and both $k_0$ and $\ell$ tend to infinity,
    then
    \[
    \lim_{\substack{k_0\to\infty\\\ell\to\infty\\s=k_0-a}}
    \frac{\N_q\bigl[(s,-a);(k,\ell)_\varepsilon\bigr]}
    {\genfrac{[}{]}{0pt}{}{k}{s}_q}
    =
    \frac{g_{q,\infty}}{g_{q,-a}}.
    \]
    In particular, in the case $s=k-\ell$,
    \[
    \lim_{\substack{k-\ell\to\infty\\\ell\to\infty\\s=k-\ell}}
    \frac{\N_q\bigl[(s,0);(k,\ell)_\varepsilon\bigr]}
    {\genfrac{[}{]}{0pt}{}{k}{s}_q}
    =g_{q,\infty}.
    \]
 \item If $a=k-\ell-s\to-\infty$, then
    \[
    \lim_{a\to-\infty}
    \frac{\N_q\bigl[(s,-a);(k,\ell)_\varepsilon\bigr]}
    {\genfrac{[}{]}{0pt}{}{k}{s}_q}
    =1.
    \]

    \item Under the assumptions of part~{\rm (ii)}, assume in addition
    that $k-s\to\infty$. If $q$ is even, the normalized count of
    type ${\bf oo}$ tends to one for
    $\varepsilon\in\{{\bf oo},{\bf oe}\}$, while that of type
    ${\bf eo}$ tends to one for
    $\varepsilon\in\{{\bf eo},{\bf ee}\}$. If $q$ is odd, the
    dominant minimum-hull subcodes have type $\varepsilon$.
\end{enumerate}
\end{theorem}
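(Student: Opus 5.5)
The plan is to first derive an exact closed form for the minimum-hull count from the projection formula, and then read off all three limits from a product formula for Gaussian coefficients. Put $r=-a=s-k_0\geq 0$ and write $C={\rm L}\oplus\Hull(C)$ as in Proposition~\ref{lcd-complement-types}.

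\emph{Step 1: exact count.} Let $D$ have dimension $s$ and hull dimension $r$, and put $m=\dim(D\cap\Hull(C))$. By Corollary~\ref{invariance-cor}, $U=\Pi_{\rm L}(D)$ has dimension $s-m$ and hull dimension $r-m$. Hence $\dim U-\dim\Hull(U)=s-r=k_0=\dim{\rm L}$. Since $\dim U\leq k_0$, this forces $\dim U=k_0$ and $\Hull(U)=\{\mathbf 0\}$. Thus $U={\rm L}$ and $m=r$, so only the single term $m=r$ survives in Theorem~\ref{all-hull} (even $q$) or Theorem~\ref{linear-code-count} (odd $q$), with inner count $1$. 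Lemma~\ref{proj} then gives
\[
\N_q\bigl[(s,r);(k,\ell)_\varepsilon\bigr]=q^{k_0(\ell-r)}\genfrac{[}{]}{0pt}{}{\ell}{r}_q .
\]

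\emph{Step 2: limits (i) and (ii).} Use the identity
\[
\genfrac{[}{]}{0pt}{}{n}{j}_q=q^{j(n-j)}\frac{g_{q,n}}{g_{q,j}\,g_{q,n-j}} .
\]
Apply it to the numerator and to $\genfrac{[}{]}{0pt}{}{k}{s}_q$, where $k=k_0+\ell$, $s=k_0+r$ and $k-s=\ell-r$. Both powers of $q$ equal $q^{(k_0+r)(\ell-r)}$, and the factors $g_{q,\ell-r}$ cancel. The ratio is therefore
\[
\frac{g_{q,\ell}\,g_{q,s}}{g_{q,r}\,g_{q,k}} .
\]
For (i), $r$ is fixed while $k_0,\ell\to\infty$. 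Then $\ell,s,k\to\infty$ and the ratio tends to $g_{q,\infty}/g_{q,r}$, which is $g_{q,\infty}$ when $r=0$. For (ii), $r\to\infty$ forces $\ell\geq r$, $s\geq r$ and $k\geq s$ all to tend to infinity. Every factor then tends to $g_{q,\infty}$, and the ratio tends to $1$.

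\emph{Step 3: types, part (iii).} This step requires the extra bookkeeping, and it is the only place where $k-s=\ell-r\to\infty$ is used. Since $\Pi_{\rm L}(D)={\rm L}$ for every minimum-hull $D$, the type is controlled by ${\rm L}$.
\begin{enumerate}
\item For odd $q$, Lemma~\ref{same-type} shows that $D$ has the type of ${\rm L}$, which is $\varepsilon$ by Proposition~\ref{lcd-complement-types}. So the whole count lies in type $\varepsilon$.
\item For even $q$, Lemma~\ref{even-parity-proj} shows that $D$ has the parity label of ${\rm L}$, which is the parity of $C$. It remains to show that the proportion of such $D$ containing $\mathbf 1_n$ tends to $0$, by Lemma~\ref{even-type-bookkeeping}.
\item For parents of type ${\bf oo}$ or ${\bf eo}$, no subcode contains $\mathbf 1_n$, so the claim is immediate.
\item For type ${\bf oe}$, Lemma~\ref{exist-oeLCD} puts $\mathbf 1_n\in{\rm L}$. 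Corollary~\ref{projee} then shows that the subcodes containing $\mathbf 1_n$ form a fraction $q^{-(\ell-r)}\to0$ of the count.
\item For type ${\bf ee}$, $\mathbf 1_n\in\Hull(C)$. Corollary~\ref{projeee} gives the fraction $\genfrac{[}{]}{0pt}{}{\ell-1}{r-1}_q/\genfrac{[}{]}{0pt}{}{\ell}{r}_q=(q^r-1)/(q^\ell-1)$, which is at most of order $q^{-(\ell-r)}\to0$.
\end{enumerate}
Combined with (ii), this shows that type ${\bf oo}$, respectively ${\bf eo}$, has normalized count tending to one.
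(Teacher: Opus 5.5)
Your proposal is correct and follows the paper's proof essentially step for step: the forced equalities $m=-a$ and $\Pi_{\rm L}(D)={\rm L}$, the closed form $q^{k_0(k-s)}\genfrac{[}{]}{0pt}{}{\ell}{-a}_q$, the same $g$-product ratio, and the same type bookkeeping via the $\mathbf{1}_n$-containment counts and Lemma~\ref{same-type}. The differences are only cosmetic: for (ii) you let each $g$-factor tend to $g_{q,\infty}$ instead of using the paper's uniform bound $1-q^{a}/(q-1)$, and for odd $q$ you should mention the trivial case $k_0=0$, where every subcode is self-orthogonal and hence of type $+$.
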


Theorem~\ref{thm:asymptotic-finite} treats $a\leq0$, including the
regime $a\to-\infty$. We next consider fixed $a>0$. Although the
complete hull distribution may involve several projection terms in
this range, the minimum-hull contribution has a simple limiting form.

\begin{theorem}
\label{thm:bounded-gap-minimum}
Put
\[
k_0=k-\ell,
\qquad
a=k_0-s=k-\ell-s,
\]
and suppose that $a$ is a fixed positive integer and that both $k_0$ and
$\ell$ tend to infinity.
\begin{enumerate}
    \item Suppose that $q$ is even.
    \begin{enumerate}
        \item If $\varepsilon\in\{{\bf oo},{\bf oe}\}$, then the
        minimum possible hull dimension is zero and
        \[
        \lim_{\substack{k_0\to\infty\\\ell\to\infty\\s=k_0-a}}
        \frac{\N_q\bigl[(s,0);(k,\ell)_\varepsilon\bigr]}
        {\genfrac{[}{]}{0pt}{}{k}{s}_q}
        =
        \lim_{\substack{k_0\to\infty\\\ell\to\infty\\s=k_0-a}}
        \frac{\N_q\bigl[(s,0)_{{\bf oo}};
        (k,\ell)_\varepsilon\bigr]}
        {\genfrac{[}{]}{0pt}{}{k}{s}_q}
        =
        \frac{g_{q,\infty}}
        {g_{q^2,\lfloor a/2\rfloor}}.
        \]

        \item If $\varepsilon\in\{{\bf eo},{\bf ee}\}$ and $a$ is
        even, then the minimum possible hull dimension is zero and
        \[
        \lim_{\substack{k_0\to\infty\\\ell\to\infty\\s=k_0-a}}
        \frac{\N_q\bigl[(s,0);(k,\ell)_\varepsilon\bigr]}
        {\genfrac{[}{]}{0pt}{}{k}{s}_q}
        =
        \lim_{\substack{k_0\to\infty\\\ell\to\infty\\s=k_0-a}}
        \frac{\N_q\bigl[(s,0)_{{\bf eo}};
        (k,\ell)_\varepsilon\bigr]}
        {\genfrac{[}{]}{0pt}{}{k}{s}_q}
        =
        \frac{g_{q,\infty}}
        {g_{q^2,a/2}}.
        \]

        \item If $\varepsilon\in\{{\bf eo},{\bf ee}\}$ and $a$ is
        odd, then the minimum possible hull dimension is one and
        \[
        \lim_{\substack{k_0\to\infty\\\ell\to\infty\\s=k_0-a}}
        \frac{\N_q\bigl[(s,1);(k,\ell)_\varepsilon\bigr]}
        {\genfrac{[}{]}{0pt}{}{k}{s}_q}
        =
        \lim_{\substack{k_0\to\infty\\\ell\to\infty\\s=k_0-a}}
        \frac{\N_q\bigl[(s,1)_{{\bf eo}};
        (k,\ell)_\varepsilon\bigr]}
        {\genfrac{[}{]}{0pt}{}{k}{s}_q}
        =
        \frac{g_{q,\infty}}
        {g_{q,1}g_{q^2,(a+1)/2}}.
        \]
    \end{enumerate}

    \item Suppose that $q$ is odd and $\varepsilon\in\{+,-\}$. Then
    the minimum possible hull dimension is zero and
    \[
    \lim_{\substack{k_0\to\infty\\\ell\to\infty\\s=k_0-a}}
    \frac{\N_q\bigl[(s,0);(k,\ell)_\varepsilon\bigr]}
    {\genfrac{[}{]}{0pt}{}{k}{s}_q}
    =
    \frac{g_{q,\infty}}
    {g_{q^2,\lfloor a/2\rfloor}}.
    \]
\end{enumerate}
\end{theorem}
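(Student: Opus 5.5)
The plan is to specialize the exact projection formulas to hull dimension $t_{\min}$, the minimum possible value ($t_{\min}=0$, or $t_{\min}=1$ in case (i)(c)). Then only the term $m=0$ survives, since $m\le t$.

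\emph{Step 1 (reduction).} For $t=0$, Theorem~\ref{all-hull} and Theorem~\ref{linear-code-count} give
\[
\N_q\bigl[(s,0);(k,\ell)_\varepsilon\bigr]=q^{s\ell}\,\N_q\bigl[(s,0);(k_0,0)_{\varepsilon'}\bigr],
\]
where $\varepsilon'$ is the type of the LCD complement from Proposition~\ref{lcd-complement-types}. Corollaries~\ref{HLOO}, \ref{HLOE} and \ref{HLEO} refine this by type in the same way. For instance, for $\varepsilon={\bf oe}$ the type-${\bf oe}$ summand carries the factor $q^{(s-1)\ell}$, which is smaller by $q^{-\ell}\to0$; so type ${\bf oo}$ carries the full limit. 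For $\varepsilon={\bf ee}$, Corollary~\ref{HLEE}(i) needs $m\ge1$, so no ${\bf ee}$ subcode has trivial hull.

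The case $t=1$ in (i)(c) has two terms, $m=0$ and $m=1$. The $m=1$ term has weight
\[
q^{(s-1)(\ell-1)}\genfrac{[}{]}{0pt}{}{\ell}{1}_q\,\N_q\bigl[(s-1,0);(k_0,0)_{{\bf eo}}\bigr].
\]
Here $s-1=k_0-a-1$ is even, so LCD subcodes exist.

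\emph{Step 2 (weights).} Next I compare $q^{s\ell}\genfrac{[}{]}{0pt}{}{k_0}{s}_q$ with $\genfrac{[}{]}{0pt}{}{k}{s}_q$, writing $k=k_0+\ell$ and $s=k_0-a$. Writing Gaussian coefficients via $g$ gives
\[
\frac{q^{s\ell}\genfrac{[}{]}{0pt}{}{k_0}{s}_q}{\genfrac{[}{]}{0pt}{}{k}{s}_q}
=\frac{g_{q,k_0}\,g_{q,k-s}}{g_{q,a}\,g_{q,k}}
\longrightarrow\frac{g_{q,\infty}}{g_{q,a}}
\qquad(k_0,\ell\to\infty).
\]
This uses that the powers of $q$ cancel exactly: $s(k-s)=s\ell+s a$ and $\dim$ of the LCD Grassmannian is $s a$. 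For the $m=1$ term in (i)(c), an extra factor of order $q^{-(s-1)}\genfrac{[}{]}{0pt}{}{\ell}{1}_q\,q^{\,\cdot}$ appears. I will compute its exponent carefully; I expect it to tend to a constant and not to zero, since $(s-1)(\ell-1)+(\ell-1)$ versus $s\ell$ loses $s$ but the LCD Grassmannian gains $a+1$ dimensions. This bookkeeping is the main obstacle: deciding whether the $m=0$ or the $m=1$ term dominates in (i)(c). The target constant has a factor $g_{q,1}^{-1}$, which suggests the $m=1$ term also survives at least partly. I would first compute both terms exactly and check that their sum matches.

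\emph{Step 3 (LCD densities at fixed codimension).} It then remains to find $\lim \N_q[(s,0);(k_0,0)]/\genfrac{[}{]}{0pt}{}{k_0}{s}_q$ with codimension $a$ fixed. By Proposition~\ref{hullequal}, $D\mapsto D^\perp\cap{\rm L}$ is a bijection on LCD subcodes that swaps dimension $s$ with codimension $a$. So this count equals the count of $a$-dimensional LCD subcodes, and Proposition~\ref{prop:asymptotic_limits} with $t=0$ and $s$ replaced by $a$ gives the density $g_{q,a}/g_{q^2,\lfloor a/2\rfloor}$. Multiplying by Step 2 gives $g_{q,\infty}/g_{q^2,\lfloor a/2\rfloor}$, as claimed in (i)(a), (i)(b) and (ii).

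For type ${\bf eo}$ parents the dual-in-${\rm L}$ of an ${\bf eo}$ subcode is again even-like, so the same argument applies. The type refinements, namely ${\bf oo}$ dominance and ${\bf eo}$ being the only even-like type, follow from the vanishing type counts in Proposition~\ref{prop:asymptotic_limits} (types ${\bf oe}$ and ${\bf ee}$ have density zero) together with the corollaries in Step 1.

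For (i)(c) the same transfer sends the $t=1$ LCD count to the $(a,1)$ count. Proposition~\ref{prop:asymptotic_limits}(i)(b) gives $q\,g_{q,a}/(h_{q,1}g_{q^2,(a-1)/2})$, and I will combine this with the two weights from Step 2. I expect the sum to simplify, via $g_{q^2,(a+1)/2}=g_{q^2,(a-1)/2}(1-q^{-a-1})$ and $h_{q,1}=q\,g_{q,1}$, to $g_{q,\infty}/(g_{q,1}g_{q^2,(a+1)/2})$.
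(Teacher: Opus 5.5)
You handle the total counts the way the paper does. For $t=0$ only the $m=0$ projection term survives. The weight is $q^{s\ell}\genfrac{[}{]}{0pt}{}{k_0}{s}_q/\genfrac{[}{]}{0pt}{}{k}{s}_q=g_{q,k_0}g_{q,\ell+a}/(g_{q,a}g_{q,k})\to g_{q,\infty}/g_{q,a}$. Internal complementation $D\mapsto D^\perp\cap{\rm L}$ moves the LCD count to the fixed dimension $a$, and Proposition~\ref{prop:asymptotic_limits} supplies the density. In (i)(c) the two-term sum is also the paper's route.

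To finish (i)(c) you still need two ingredients. The first is the $m=1$ weight, which tends to $q^{-(a+1)}g_{q,\infty}/(g_{q,1}g_{q,a+1})$. The second is the density $g_{q,a+1}/g_{q^2,(a+1)/2}$ of $\N^0(a+1;{\rm L}^{k_0}_{\mathbf{eo}})$, which your Step~3 never names. With both, your simplification gives the stated constant. You should also prove that the minimum hull dimension in (i)(c) is one. Every subcode of an even-like parent has an alternating Gram matrix of even rank, so $s-t$ is even, while $s=k_0-a$ is odd.

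\textbf{The genuine gap is in the complete-type equalities.}

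For an odd-like parent, the trivial-hull subcodes have type $\mathbf{oo}$, $\mathbf{oe}$ or $\mathbf{eo}$. You dispose only of $\mathbf{oe}$, via the factor $q^{-\ell}$. Your conclusion ``so type $\mathbf{oo}$ carries the full limit'' ignores the even-like subcodes of type $\mathbf{eo}$.

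Appealing to Proposition~\ref{prop:asymptotic_limits} cannot cover them, for two reasons. That proposition concerns a fixed subcode dimension, and at the fixed dimension $a$ with $a$ even it gives the \emph{positive} $\mathbf{eo}$-density $q^{-a}g_{q,a}/g_{q^2,a/2}$. Moreover, internal complementation does not preserve parity labels. In ${\rm L}_{\mathbf{oe}}$, where $\mathbf{1}_n\in{\rm L}$, a subcode $D$ is even-like if and only if $\mathbf{1}_n\in D^\perp\cap{\rm L}$, so complementation interchanges types $\mathbf{eo}$ and $\mathbf{oe}$.

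You need an estimate at the growing dimension $s$. The paper uses $\N^0_{\mathbf{eo}}(s;{\rm L}^{k_0}_{\mathbf{oo}})\leq \N^0_{\mathbf{oo}}(s;{\rm L}^{k_0}_{\mathbf{oo}})/(q^s-1)$, obtained from Lemma~\ref{lem:even-lcd-mass} and Proposition~\ref{oo number}; the count vanishes for odd $s$. It then uses Corollary~\ref{oe=oo} for the $\mathbf{oe}$ parent.

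Alternatively, you can stay inside your framework by tracking the swap just described. The $\mathbf{eo}$-count at dimension $s$ in ${\rm L}_{\mathbf{oe}}$ equals the $\mathbf{oe}$-count at dimension $a$, whose density does tend to zero. Corollary~\ref{oe=oo}, together with the absence of $\mathbf{ee}$ LCD codes, transfers this to ${\rm L}_{\mathbf{oo}}$.

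A similar omission occurs in (i)(c) for parent type $\mathbf{ee}$. Subcodes of type $\mathbf{ee}$ with one-dimensional hull do occur there: they come from the $m=1$ term with $D\cap\Hull(C)=\Span\{\mathbf{1}_n\}$. So your claim that $\mathbf{eo}$ is the only even-like type is false in this case. Corollary~\ref{HLEE}(i) bounds their share of $\N^1(s;C)$ by $1/\genfrac{[}{]}{0pt}{}{\ell}{1}_q\to0$. Your Step~1 treats $\mathbf{ee}$ only for $t=0$ and omits this step.
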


\begin{remarknonumber}
When $a=k-\ell-s$ is fixed and positive, the full prescribed-hull distribution can
still be obtained from the exact projection formulas. In general,
however, several intersection dimensions have nonzero limiting
contributions. For the minimum hull dimension the sum reduces to one
term, except in the odd-dimensional alternating case, where exactly
two terms contribute. This is why the minimum-hull distribution admits
the simpler formulas in Theorem~\ref{thm:bounded-gap-minimum}.

In odd characteristic there is, in general, no corresponding single
dominant type when $a>0$ is fixed. Internal orthogonal complementation
reduces the type distribution to the finite-dimensional type limits
in Proposition~\ref{prop:asymptotic_limits}. Both types may therefore
retain nonzero limiting proportions; when the relevant nondegenerate
dimension is odd they are balanced, while in even dimension their two
limits contain the nonzero correction factors
$1\pm\eta((-1)^{a/2})q^{-a/2}$, with the appropriate parity adjustment.
\end{remarknonumber}

\subsection{Asymptotic distribution under a fixed containment constraint}
\label{subsec-asymptotic-containment}

We now transfer the preceding asymptotic results to the containment
problem. Let $D$ be an $[n,s]_q$ linear code with a $t$-dimensional
hull and type $\tau$, and let $s\leq k\leq n$. The preceding sequence
convention extends to containment tuples
$(n,s,t,k,\ell,\tau,\varepsilon)$: the length $n$ may vary, and for
each index the fixed contained code and the counted extension
parameters must satisfy the stated feasibility conditions. By
Theorem~\ref{thm-contain}, for every admissible hull dimension $u$ and
every extension type $\varepsilon$,
\[
\N_{q,n}\bigl[(k,u)_\varepsilon\bigm|(s,t)_\tau\bigr]
=
\N_q\bigl[(n-k,u)_{\varepsilon^\perp};
          (n-s,t)_{\tau^\perp}\bigr],
\]
and
\[
\genfrac{[}{]}{0pt}{}{n-s}{k-s}_q
=
\genfrac{[}{]}{0pt}{}{n-s}{n-k}_q.
\]
The code $D^\perp$ has dimension $n-s$, hull dimension $t$, and an
{\rm LCD} complement of dimension $n-s-t$. Under the
containment--subcode correspondence, the selected subcode $C^\perp$
has dimension $n-k$. Consequently, the difference parameter for the
subcode problem inside $D^\perp$ is
\[
b=(n-s-t)-(n-k)=k-s-t.
\]
Thus $b=(k-s)-t$ measures the difference between the extension
dimension $k-s$ and the hull dimension $t$ of $D$.

When $q$ is even, $D^\perp$ is odd-like if
$\mathbf{1}_n\notin D$ and even-like if
$\mathbf{1}_n\in D$. The relevant complete types are transferred by
\[
({\bf oo})^\perp={\bf oo},
\qquad
({\bf eo})^\perp={\bf oe}.
\]
Thus the three ranges $b\to+\infty$, fixed $b>0$, and $b\leq0$
follow directly from
Theorems~\ref{thm:asymptotic}, \ref{thm:asymptotic-finite} and
\ref{thm:bounded-gap-minimum}
 respectively. We state the resulting
containment distributions without separate proofs.

\begin{cor}
\label{cor-asymptotic-contain}
Let $D$ be an $[n,s]_q$ linear code with a $t$-dimensional hull and
type $\tau$, and fix a nonnegative integer $\ell$. Suppose that both $n-k$ and
$k-s-t$ tend to infinity.
\begin{enumerate}
    \item Suppose that $q$ is even.
    \begin{enumerate}
        \item If $\mathbf{1}_n\notin D$, then
        \[
        \lim_{\substack{n-k\to\infty\\
                        k-s-t\to\infty}}
        \frac{\N_{q,n}\bigl[(k,\ell)\bigm|(s,t)_\tau\bigr]}
        {\genfrac{[}{]}{0pt}{}{n-s}{k-s}_q}
        =
        \lim_{\substack{n-k\to\infty\\
                        k-s-t\to\infty}}
        \frac{\N_{q,n}\bigl[(k,\ell)_{{\bf oo}}
        \bigm|(s,t)_\tau\bigr]}
        {\genfrac{[}{]}{0pt}{}{n-s}{k-s}_q}
        =
        \frac{g_{q,\infty}}
        {h_{q,\ell}g_{q^2,\infty}}.
        \]

        \item If $\mathbf{1}_n\in D$, then
        \[
        \lim_{\substack{n-k\to\infty\\
                        k-s-t\to\infty\\
                        n-k-\ell\ \text{is even}}}
        \frac{\N_{q,n}\bigl[(k,\ell)\bigm|(s,t)_\tau\bigr]}
        {\genfrac{[}{]}{0pt}{}{n-s}{k-s}_q}
        =
        \lim_{\substack{n-k\to\infty\\
                        k-s-t\to\infty\\
                        n-k-\ell\ \text{is even}}}
        \frac{\N_{q,n}\bigl[(k,\ell)_{{\bf oe}}
        \bigm|(s,t)_\tau\bigr]}
        {\genfrac{[}{]}{0pt}{}{n-s}{k-s}_q}
        =
        \frac{q^\ell g_{q,\infty}}
        {h_{q,\ell}g_{q^2,\infty}}.
        \]
    \end{enumerate}

    \item Suppose that $q$ is odd. Whether $D$ is of type $+$ or
    type $-$,
    \[
    \lim_{\substack{n-k\to\infty\\
                    k-s-t\to\infty}}
    \frac{\N_{q,n}\bigl[(k,\ell)\bigm|(s,t)_\tau\bigr]}
    {\genfrac{[}{]}{0pt}{}{n-s}{k-s}_q}
    =
    \frac{g_{q,\infty}}
    {h_{q,\ell}g_{q^2,\infty}},
    \]
    and the two extension types are asymptotically balanced:
    \[
    \lim_{\substack{n-k\to\infty\\
                    k-s-t\to\infty}}
    \frac{\N_{q,n}\bigl[(k,\ell)_+\bigm|(s,t)_\tau\bigr]}
    {\genfrac{[}{]}{0pt}{}{n-s}{k-s}_q}
    =
    \lim_{\substack{n-k\to\infty\\
                    k-s-t\to\infty}}
    \frac{\N_{q,n}\bigl[(k,\ell)_{-}\bigm|(s,t)_\tau\bigr]}
    {\genfrac{[}{]}{0pt}{}{n-s}{k-s}_q}
    =
    \frac{g_{q,\infty}}
    {2h_{q,\ell}g_{q^2,\infty}}.
    \]
\end{enumerate}
\end{cor}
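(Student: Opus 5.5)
The plan is to transfer Theorem~\ref{thm:asymptotic} through the duality identity of Theorem~\ref{thm-contain}. For each admissible index,
\[
\N_{q,n}\bigl[(k,\ell)_\varepsilon\bigm|(s,t)_\tau\bigr]
=\N_q\bigl[(n-k,\ell)_{\varepsilon^\perp};(n-s,t)_{\tau^\perp}\bigr],
\qquad
\genfrac{[}{]}{0pt}{}{n-s}{k-s}_q=\genfrac{[}{]}{0pt}{}{n-s}{n-k}_q .
\]
So each normalized containment count equals a normalized subcode count inside the parent $D^\perp$. This parent has parameters $(n-s,t)_{\tau^\perp}$, and the counted subcodes have parameters $(n-k,\ell)$.

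Under the dictionary
\[
\text{subcode dimension } n-k,\qquad \text{subcode hull } \ell,\qquad \text{parent } (n-s,t),
\]
the hypotheses of Theorem~\ref{thm:asymptotic} hold. The subcode dimension $n-k$ tends to infinity. The gap $(n-s)-t-(n-k)=k-s-t$ also tends to infinity. The subcode hull dimension $\ell$ is fixed, as that theorem requires; its $t$ becomes our $\ell$, and the parent hull dimension $t$ is allowed to vary there. Hence each limit in Theorem~\ref{thm:asymptotic} becomes the corresponding limit here, with $h_{q,t}$ replaced by $h_{q,\ell}$. Admissible containment sequences map to admissible subcode sequences, because the correspondence $C\mapsto C^\perp$ is a bijection preserving feasibility.

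Next I identify the parent type. When $q$ is even, Lemma~\ref{1_n} shows that $D^\perp$ is even-like if and only if $\mathbf{1}_n\in D$.
\begin{enumerate}
\item If $\mathbf{1}_n\notin D$, then $\tau^\perp\in\{{\bf oo},{\bf oe}\}$. Theorem~\ref{thm:asymptotic}(i)(a) gives the total limit, and shows that type ${\bf oo}$ carries it. Since $({\bf oo})^\perp={\bf oo}$, the dominant extensions are of type ${\bf oo}$.
\item If $\mathbf{1}_n\in D$, then $\tau^\perp\in\{{\bf eo},{\bf ee}\}$. Part (i)(b) applies under the parity condition that (subcode dimension) $-$ (subcode hull) $=n-k-\ell$ is even. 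The limit is $q^\ell g_{q,\infty}/(h_{q,\ell}g_{q^2,\infty})$, carried by subcodes of type ${\bf eo}$. Since $({\bf eo})^\perp={\bf oe}$, the extension type is $\varepsilon={\bf oe}$.
\end{enumerate}

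When $q$ is odd, Theorem~\ref{thm:asymptotic}(ii) gives each of the two types the limit $g_{q,\infty}/(2h_{q,\ell}g_{q^2,\infty})$, for either parent type $\tau^\perp$. The dual-type map of Definition~\ref{def-dual-type} is a bijection on $\{+,-\}$, for fixed $\ell$ and $q$. It therefore sends the two balanced subcode types to the two extension types, so both extension types get the same limit. Summing the two gives the total.

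The main obstacle is bookkeeping rather than analysis. Theorem~\ref{thm:asymptotic} is stated with the parent hull dimension free, and I must check that its proof does not secretly need that parameter bounded. Its proof only uses Lemma~\ref{lem:dominant-projection-term} together with the bounded number of terms $m\le\min\{\ell,t\}\le\ell$, where $\ell$ is fixed, so this is fine. I must also confirm that every type with a vanishing limit, such as ${\bf ee}$ for an even-like parent, corresponds under $^\perp$ to the stated zero classes.
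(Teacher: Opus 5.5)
Your proposal is correct and takes the same route as the paper. The paper derives this corollary without a separate proof: it applies the duality identity of Theorem~\ref{thm-contain} together with Theorem~\ref{thm:asymptotic}, using the same dictionary you set up (subcode parameters $(n-k,\ell)$, parent $D^\perp$ with parameters $(n-s,t)_{\tau^\perp}$, gap $k-s-t$), Lemma~\ref{1_n} to read off the parity of $D^\perp$, and the dual-type involution to convert ${\bf oo}$ to ${\bf oo}$ and ${\bf eo}$ to ${\bf oe}$. Your additional check is a detail the paper leaves implicit: Theorem~\ref{thm:asymptotic} still applies when the parent hull dimension $t$ is unbounded, because only the fixed subcode hull dimension $\ell$ bounds the number of projection terms.
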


The preceding corollary treats the range in which the number of added
dimensions exceeds $t$ by an unbounded amount. We next consider the
transition range in which this excess is fixed.

\begin{cor}
\label{cor:bounded-gap-containment}
Let $D$ be an $[n,s]_q$ linear code with a $t$-dimensional hull and
type $\tau$, and let $s\leq k\leq n$. Suppose that
\[
b=k-s-t
\]
is a fixed positive integer and that both $n-s-t$ and $t$ tend to
infinity.
\begin{enumerate}
    \item Suppose that $q$ is even.
    \begin{enumerate}
        \item If $\mathbf{1}_n\notin D$, then the minimum possible hull
        dimension of an extension is zero and
        \[
        \lim_{\substack{n-s-t\to\infty\\
                        t\to\infty\\
                        k-s-t=b}}
        \frac{\N_{q,n}\bigl[(k,0)\bigm|(s,t)_\tau\bigr]}
        {\genfrac{[}{]}{0pt}{}{n-s}{k-s}_q}
        =
        \lim_{\substack{n-s-t\to\infty\\
                        t\to\infty\\
                        k-s-t=b}}
        \frac{\N_{q,n}\bigl[(k,0)_{{\bf oo}}
        \bigm|(s,t)_\tau\bigr]}
        {\genfrac{[}{]}{0pt}{}{n-s}{k-s}_q}
        =
        \frac{g_{q,\infty}}
        {g_{q^2,\lfloor b/2\rfloor}}.
        \]

        \item If $\mathbf{1}_n\in D$ and $b$ is even, then the minimum
        possible hull dimension of an extension is zero and
        \[
        \lim_{\substack{n-s-t\to\infty\\
                        t\to\infty\\
                        k-s-t=b}}
        \frac{\N_{q,n}\bigl[(k,0)\bigm|(s,t)_\tau\bigr]}
        {\genfrac{[}{]}{0pt}{}{n-s}{k-s}_q}
        =
        \lim_{\substack{n-s-t\to\infty\\
                        t\to\infty\\
                        k-s-t=b}}
        \frac{\N_{q,n}\bigl[(k,0)_{{\bf oe}}
        \bigm|(s,t)_\tau\bigr]}
        {\genfrac{[}{]}{0pt}{}{n-s}{k-s}_q}
        =
        \frac{g_{q,\infty}}
        {g_{q^2,b/2}}.
        \]

        \item If $\mathbf{1}_n\in D$ and $b$ is odd, then the minimum
        possible hull dimension of an extension is one and
        \[
        \lim_{\substack{n-s-t\to\infty\\
                        t\to\infty\\
                        k-s-t=b}}
        \frac{\N_{q,n}\bigl[(k,1)\bigm|(s,t)_\tau\bigr]}
        {\genfrac{[}{]}{0pt}{}{n-s}{k-s}_q}
        =
        \lim_{\substack{n-s-t\to\infty\\
                        t\to\infty\\
                        k-s-t=b}}
        \frac{\N_{q,n}\bigl[(k,1)_{{\bf oe}}
        \bigm|(s,t)_\tau\bigr]}
        {\genfrac{[}{]}{0pt}{}{n-s}{k-s}_q}
        =
        \frac{g_{q,\infty}}
        {g_{q,1}g_{q^2,(b+1)/2}}.
        \]
    \end{enumerate}

    \item Suppose that $q$ is odd. Then the minimum possible hull
    dimension of an extension is zero and
    \[
    \lim_{\substack{n-s-t\to\infty\\
                    t\to\infty\\
                    k-s-t=b}}
    \frac{\N_{q,n}\bigl[(k,0)\bigm|(s,t)_\tau\bigr]}
    {\genfrac{[}{]}{0pt}{}{n-s}{k-s}_q}
    =
    \frac{g_{q,\infty}}
    {g_{q^2,\lfloor b/2\rfloor}}.
    \]
\end{enumerate}
\end{cor}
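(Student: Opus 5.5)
The plan is to transfer Theorem~\ref{thm:bounded-gap-minimum} through the containment--subcode correspondence of Theorem~\ref{thm-contain}. For each admissible hull dimension $u\in\{0,1\}$ and each extension type $\varepsilon$ we have
\[
\N_{q,n}\bigl[(k,u)_\varepsilon\bigm|(s,t)_\tau\bigr]
=\N_q\bigl[(n-k,u)_{\varepsilon^\perp};(n-s,t)_{\tau^\perp}\bigr],
\qquad
\genfrac{[}{]}{0pt}{}{n-s}{k-s}_q=\genfrac{[}{]}{0pt}{}{n-s}{n-k}_q .
\]
Summing over $\varepsilon$ gives the same identity for the type-summed counts, since $\varepsilon\mapsto\varepsilon^\perp$ is a bijection of the admissible type set. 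Each normalized containment count is therefore literally equal to a normalized subcode count inside a parent with parameters
\[
K=n-s,\qquad L=t,\qquad S=n-k.
\]
The corresponding gap parameter is $K-L-S=k-s-t=b$, the LCD-complement dimension is $K_0=n-s-t$, and the hull dimension of the parent is $L=t$. The hypotheses $b$ fixed positive, $n-s-t\to\infty$ and $t\to\infty$ are then exactly the hypotheses $a=b$ fixed, $k_0\to\infty$, $\ell\to\infty$ of Theorem~\ref{thm:bounded-gap-minimum}. Also, $S=K_0-b$ holds automatically.

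Next I identify the parent type $\tau^\perp$, which decides which case of Theorem~\ref{thm:bounded-gap-minimum} applies.

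For $q$ even, Lemma~\ref{1_n} shows that $D^\perp$ is even-like if and only if $\mathbf 1_n\in D$. So $\mathbf 1_n\notin D$ makes $D^\perp$ odd-like, with $\tau^\perp\in\{{\bf oo},{\bf oe}\}$. Part~(i)(a) of Theorem~\ref{thm:bounded-gap-minimum} then gives minimum hull dimension zero and the limit $g_{q,\infty}/g_{q^2,\lfloor b/2\rfloor}$, with dominant subcode type ${\bf oo}$. By Definition~\ref{def-dual-type}, $\varepsilon^\perp={\bf oo}$ forces $\varepsilon={\bf oo}$.

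If $\mathbf 1_n\in D$, then $\tau^\perp\in\{{\bf eo},{\bf ee}\}$. Parts~(i)(b) and~(i)(c) apply according to the parity of $b$, with dominant subcode type ${\bf eo}$. Since $({\bf eo})^\perp={\bf oe}$, the dominant extension type is ${\bf oe}$.

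For $q$ odd, part~(ii) applies to either parent type. This gives the stated limit for the type-summed count.

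The minimum-hull assertions also transfer. The minimum hull dimension of $C^\perp$ as a subcode of $D^\perp$ equals that of $C$, because $\Hull(C)=\Hull(C^\perp)$ and the correspondence $C\mapsto C^\perp$ is bijective.

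The only delicate step is the bookkeeping of admissibility. An admissible sequence of containment tuples must yield an admissible sequence of subcode tuples in the sense of Subsection~\ref{subsec-lcd-finite}. This holds because $D^\perp$ is a realizable parent of type $\tau^\perp$ with parameters $(n-s,t)$. In the parity-restricted even cases, the condition on $b$ is exactly the parity condition on $a$, so the correct subcase is selected.

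With this in place, every displayed limit is a term-by-term rewriting of Theorem~\ref{thm:bounded-gap-minimum}, and no new estimates are needed.
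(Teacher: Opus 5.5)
Your proposal is correct and follows the paper's own route: the paper gives no separate argument. It transfers Theorem~\ref{thm:bounded-gap-minimum} through the duality identity of Theorem~\ref{thm-contain}, with parent $D^\perp$ of parameters $(n-s,t)_{\tau^\perp}$, gap $a=k-s-t=b$, and dominant types translated by $({\bf oo})^\perp={\bf oo}$ and $({\bf eo})^\perp={\bf oe}$, exactly as you do. (You correctly match the fixed-$b>0$ range to Theorem~\ref{thm:bounded-gap-minimum}; the paper's sentence listing the theorems ``respectively'' has the last two swapped.)
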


We finally consider $b\leq0$. In this range, every extension
$C\supseteq D$ satisfies
\[
\dim\operatorname{Hull}(C)\geq -b.
\]
Thus $-b$ is the minimum possible hull dimension. The following
corollary gives the proportion of extensions attaining this minimum.

\begin{cor}
\label{cor-asymptotic-contain-finite}
Let $D$ be an $[n,s]_q$ linear code with a $t$-dimensional hull and
type $\tau$, and let $s\leq k\leq n$. Put
\[
b=k-s-t,
\]
and suppose that $b\leq0$.
\begin{enumerate}
    \item If $b$ is fixed and both $n-s-t$ and $t$ tend to infinity,
    then
    \[
    \lim_{\substack{n-s-t\to\infty\\
                    t\to\infty\\
                    k-s-t=b}}
    \frac{\N_{q,n}\bigl[(k,-b)\bigm|(s,t)_\tau\bigr]}
    {\genfrac{[}{]}{0pt}{}{n-s}{k-s}_q}
    =
    \frac{g_{q,\infty}}
    {g_{q,-b}}.
    \]

    \item If $k-s-t\to-\infty$, then
    \[
    \lim_{k-s-t\to-\infty}
    \frac{\N_{q,n}\bigl[(k,-b)\bigm|(s,t)_\tau\bigr]}
    {\genfrac{[}{]}{0pt}{}{n-s}{k-s}_q}
    =1.
    \]

    \item Under the assumptions of part~{\rm (ii)}, assume in addition
    that $k-s\to\infty$. If $q$ is even, then
    \[
    \lim_{\substack{k-s-t\to-\infty\\
                    k-s\to\infty}}
    \frac{\N_{q,n}\bigl[(k,-b)_{{\bf oo}}
    \bigm|(s,t)_\tau\bigr]}
    {\genfrac{[}{]}{0pt}{}{n-s}{k-s}_q}
    =1
    \]
    if $\mathbf{1}_n\notin D$, whereas
    \[
    \lim_{\substack{k-s-t\to-\infty\\
                    k-s\to\infty}}
    \frac{\N_{q,n}\bigl[(k,-b)_{{\bf oe}}
    \bigm|(s,t)_\tau\bigr]}
    {\genfrac{[}{]}{0pt}{}{n-s}{k-s}_q}
    =1
    \]
    if $\mathbf{1}_n\in D$.

    If $q$ is odd and $D$ has type $\tau$, then the dominant extensions
    have type $\tau$ when $q\equiv1\pmod4$. When
    $q\equiv3\pmod4$, they have type $\tau$ if $k-s$ is even and the
    opposite type if $k-s$ is odd.
\end{enumerate}
\end{cor}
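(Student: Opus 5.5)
The plan is to transfer Theorem~\ref{thm:asymptotic-finite} to the containment setting through the duality identity of Theorem~\ref{thm-contain}. For every extension type $\varepsilon$ and every hull dimension $u$ we have
\[
\N_{q,n}\bigl[(k,u)_\varepsilon\bigm|(s,t)_\tau\bigr]
=\N_q\bigl[(n-k,u)_{\varepsilon^\perp};(n-s,t)_{\tau^\perp}\bigr],
\qquad
\genfrac{[}{]}{0pt}{}{n-s}{k-s}_q=\genfrac{[}{]}{0pt}{}{n-s}{n-k}_q .
\]
Hence each normalized containment count equals a normalized subcode count inside a parent $D^\perp$ with the following data: dimension $k'=n-s$, hull dimension $\ell'=t$, subcode dimension $s'=n-k$, and type $\tau^\perp$. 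The difference parameter is $a'=k'-\ell'-s'=k-s-t=b$, so $b\leq0$ corresponds exactly to $a'\leq0$, and the minimum hull $-b$ equals $-a'$. As $\varepsilon$ runs over all types, $\varepsilon^\perp$ does as well, because $\perp$ is an involution. Therefore the type-summed counts also correspond.

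Parts (i) and (ii) follow by matching the limit regimes. In (i), $b$ is fixed and $n-s-t=k'-\ell'\to\infty$, $t=\ell'\to\infty$. This is precisely the regime of Theorem~\ref{thm:asymptotic-finite}(i), which gives the limit $g_{q,\infty}/g_{q,-b}$. In (ii), $a'=b\to-\infty$, and Theorem~\ref{thm:asymptotic-finite}(ii) gives the limit $1$.

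For (iii), the additional hypothesis $k-s\to\infty$ is $k'-s'=(n-s)-(n-k)\to\infty$, which is the extra condition in Theorem~\ref{thm:asymptotic-finite}(iii). The remaining work is type bookkeeping.

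\emph{Even $q$.} By Lemma~\ref{1_n}, $D^\perp$ is even-like iff $\mathbf{1}_n\in D$. If $\mathbf{1}_n\notin D$, the parent is odd-like and the dominant subcodes $C^\perp$ have type $\mathbf{oo}$, so $C$ has type $(\mathbf{oo})^\perp=\mathbf{oo}$. If $\mathbf{1}_n\in D$, the parent is even-like and the dominant subcodes have type $\mathbf{eo}$, so $C$ has type $(\mathbf{eo})^\perp=\mathbf{oe}$.

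\emph{Odd $q$.} The dominant subcodes $C^\perp$ have the parent's type $\tau^\perp$, where $\tau^\perp$ is computed with hull dimension $t$. The extension $C$ then has type $(\tau^\perp)^\perp$, computed with hull dimension $-b$. When $q\equiv1\pmod4$ both involutions are trivial, so the dominant type is $\tau$. When $q\equiv3\pmod4$, the net effect is a swap if and only if $t+(-b)=t-(k-s-t)=2t-(k-s)$ is odd, that is, iff $k-s$ is odd. This is the claimed rule.

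The step I expect to be the main obstacle is exactly this odd-characteristic parity check. One must make sure that Definition~\ref{def-dual-type} is applied with the correct hull dimension at each of the two dualizations: $t$ for $D\mapsto D^\perp$ and $-b$ for $C^\perp\mapsto C$. One must also check that the admissible-sequence convention transfers, since feasibility of $(n,s,t,k,-b)$ for containment is equivalent to feasibility of the dual subcode tuple.
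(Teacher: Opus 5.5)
Your proposal is correct and follows the paper's route: you transfer Theorem~\ref{thm:asymptotic-finite} to the containment setting through Theorem~\ref{thm-contain}, with difference parameter $a'=k-s-t=b$ and normalization $\genfrac{[}{]}{0pt}{}{n-s}{n-k}_q=\genfrac{[}{]}{0pt}{}{n-s}{k-s}_q$. Your two-step type bookkeeping in odd characteristic (dualizing first with hull dimension $t$, then with $-b$) gives a net swap exactly when $q\equiv3\pmod4$ and $k-s$ is odd, which correctly fills in a detail the paper leaves implicit.
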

The parameter \(b=(k-s)-t\) therefore organizes the principal extension regimes considered above.
When \(b\to+\infty\), the extension distribution is governed by the {\rm LCD} part of \(D^\perp\).
For fixed \(b>0\), the minimum-hull proportion has a nonzero explicit limit, with an additional
parity obstruction for even-like codes in even characteristic. When \(b\leq0\), at least \(-b\)
hull dimensions must remain, and as \(b\to-\infty\), almost all extensions attain this minimum.
\section{Conclusion and perspectives}\label{sec-conclusion}
In this paper, we determined the number of subcodes with prescribed
dimensions and hull dimensions inside a fixed linear code over finite
fields of both even and odd characteristic.
In even characteristic, the enumeration was further refined according
to the four complete types, whereas in odd characteristic the types
$+$ and $-$ were considered.
The resulting parameterized count
\[
\N_q\bigl[(s,t)_\tau;(k,\ell)_\varepsilon\bigr]
\]
depends only on $q,(s,t)_\tau$, and $(k,\ell)_\varepsilon$, and not on
the particular parent code or its length.

By decomposing the fixed code into an {\rm LCD} part and its hull, we
reduced the enumeration problem to the corresponding formulas for
{\rm LCD} codes. This approach recovers the usual whole-space mass
formulas as a special case. Motivated by dimension-extending procedures
in code constructions, we also formulated the problem of counting codes
with a prescribed hull dimension that contain a fixed code. The
containment--subcode duality then yields explicit conditional mass
formulas for
$\N_{q,n}[(k,\ell)_\varepsilon\mid(s,t)_\tau]$.

The asymptotic analysis is organized by the single difference
$a=k-\ell-s$. For an {\rm LCD} parent, fixed $s$ and $t$ with
$k-s\to\infty$ give explicit limiting formulas for the hull and
complete-type proportions. When $s$ also tends to infinity, these
formulas recover the corresponding large-dimensional constants of
Carlet \emph{et al.} and Li \emph{et al.} For a general parent, when
both $s$ and $a$ grow, the limiting distribution is
governed by the {\rm LCD} component; in even characteristic one type is
dominant, whereas in odd characteristic the types $+$ and $-$ are
asymptotically balanced. When $a>0$ is fixed and both $k-\ell$ and
$\ell$ grow, the minimum-hull proportion has the explicit limits in
Theorem~\ref{thm:bounded-gap-minimum}. When $a\leq0$, the minimum hull dimension is $-a$; for fixed $a$,
when both $k-\ell$ and $\ell$ tend to infinity, its proportion tends to
$g_{q,\infty}/g_{q,-a}$, while for $a\to-\infty$ almost all subcodes
attain this minimum. The corresponding fixed-difference and
concentration results for extensions are governed, through duality, by
the parameter $b=k-s-t$.

It would be natural to extend this framework to Hermitian and
symplectic hulls. Such extensions would require the corresponding
decompositions, type classifications, and mass formulas under these
inner products.

\appendix
\section{Proofs}
In this appendix, we give the proofs of the main results stated in
Subsections~\ref{subsec-lcd-finite} and~\ref{subsec-asymptotic-subcodes}.
\begin{proof}
The case $s=0$ is immediate, so we may assume $s\geq1$.
We first establish the total finite-dimensional limit in the whole
vector space. Put $s_0=s-t$. The exact prescribed-hull formulas in
Lemma~\ref{thm:even-whole-space-mass} when $q$ is even and
Lemma~\ref{whole-space-count-odd} when $q$ is odd give
\[
\lim_{k-s\to\infty}
\frac{\operatorname{N}^t(s;\F_q^k)}
{\operatorname{N}^0(s_0;\F_q^k)}
q^{-t(k-2s_0)+t^2}
=\frac1{h_{q,t}}.
\]
The corresponding {\rm LCD} formulas and the Gaussian product formula give
\[
\lim_{k-s\to\infty}
\frac{\operatorname{N}^0(s_0;\F_q^k)}
{q^{s_0(k-s_0)}}
=
\frac1{g_{q^2,\lfloor s_0/2\rfloor}},
\quad
\lim_{k-s\to\infty}
\frac{\genfrac{[}{]}{0pt}{}{k}{s}_q}{q^{s(k-s)}}
=\frac1{g_{q,s}}.
\]
Since
\[
t(k-2s_0)-t^2+s_0(k-s_0)=s(k-s),
\]
these three limits yield
\[
\lim_{k-s\to\infty}
\frac{\operatorname{N}^t(s;\F_q^k)}{\genfrac{[}{]}{0pt}{}{k}{s}_q}
=
\frac{g_{q,s}}
{h_{q,t}g_{q^2,\lfloor s_0/2\rfloor}}.
\]

Suppose that $q$ is even. Proposition~\ref{oe number} identifies all
four complete-type counts in ${\rm L}_{\mathbf{oe}}^k$ with the
corresponding whole-space counts, while Corollary~\ref{oe=oo} gives the
same total and parity-label counts for ${\rm L}_{\mathbf{oo}}^k$.
The even-like whole-space formulas give
\[
\lim_{\substack{k-s\to\infty\\s_0\ \text{is even}}}
\frac{\operatorname{N}_{\mathbf e}^t(s;\F_q^k)}{\genfrac{[}{]}{0pt}{}{k}{s}_q}
=
\frac{q^{-s_0}g_{q,s}}
{h_{q,t}g_{q^2,s_0/2}},
\quad
\text{and}
\quad
\lim_{\substack{k-s\to\infty\\s_0\ \text{is odd}}}
\frac{\operatorname{N}_{\mathbf e}^t(s;\F_q^k)}{\genfrac{[}{]}{0pt}{}{k}{s}_q}
=0.
\]
A subcode has even-like dual precisely when it contains the all-one
vector, and
\[
\lim_{k-s\to\infty}
\frac{\genfrac{[}{]}{0pt}{}{k-1}{s-1}_q}{\genfrac{[}{]}{0pt}{}{k}{s}_q}
=
\lim_{k-s\to\infty}
\frac{q^s-1}{q^k-1}
=0.
\]
Consequently, the types ${\bf oe}$ and ${\bf ee}$ have limiting
proportion zero. Subtracting the even-like contribution from the total
limit proves part~{\rm (i)(a)} for a parent of type ${\bf oe}$.
For a parent of type ${\bf oo}$, Proposition~\ref{oo number} shows that
the same two limiting contributions have types ${\bf oo}$ and
${\bf eo}$, respectively, while the other two types are absent. This
proves part~{\rm (i)(a)} in both odd-like cases.

For an even-like {\rm LCD} parent, Proposition~\ref{eo number} gives
\(
\operatorname{N}_{\mathbf{eo}}^t
(s;{\rm L}_{\mathbf{eo}}^k)
=
\operatorname{N}_{\mathbf e}^t(s;\F_q^{k+1}),
\)
and all other complete-type counts are zero. Moreover,
\[
\lim_{k-s\to\infty}
\frac{\genfrac{[}{]}{0pt}{}{k+1}{s}_q}{\genfrac{[}{]}{0pt}{}{k}{s}_q}=q^s.
\]
Applying the preceding even-like whole-space limit with $k+1$ in
place of $k$ gives the factor $q^sq^{-s_0}=q^t$, and hence proves
part~{\rm (i)(b)}.

Suppose next that $q$ is odd. The {\rm LCD}-subcode formulas in
Lemmas~\ref{SubLCD-S}--\ref{SubLCD-NS}, together with the positive-hull
recurrences in Propositions~\ref{S-subcode-odd} and
\ref{NS-subcode-odd}, give the same three-factor calculation as above.
The factors involving the parent dimension are of the form
$q^M+1$ or $q^M-1$, with $M\to\infty$, and their ratios to $q^M$
have limit one. When $s_0$ is odd, the two subcode types therefore have
equal limiting weights. When $s_0$ is even, the two remaining fixed
factors are
\[
\frac12\left(1+\eta((-1)^{s_0/2})q^{-s_0/2}\right)
\quad\text{and}\quad
\frac12\left(1-\eta((-1)^{s_0/2})q^{-s_0/2}\right).
\]
Multiplying these factors by
$g_{q,s}/(h_{q,t}g_{q^2,\lfloor s_0/2\rfloor})$ proves
part~{\rm (ii)}. This argument is identical for parent types $+$ and
$-$, since their remaining parent-type factors have limit one.

Finally, the joint limits in part~{\rm (iii)} follow from the same
product calculations with both $s$ and $k-s$ tending to infinity.
For the zero-hull initial terms these are the limits in
\cite[Corollaries 3.18 and 4.20]{x-lcd}; the positive-hull factors are
given by Lemma~\ref{thm:even-whole-space-mass} and
Lemma~\ref{whole-space-count-odd}. The calculations are the same as
those above, with
\[
\lim_{s\to\infty}g_{q,s}=g_{q,\infty},
\qquad
\lim_{s\to\infty}g_{q^2,\lfloor s_0/2\rfloor}
=g_{q^2,\infty},
\]
and, in the type refinements,
\[
\lim_{s\to\infty}q^{-s_0}=0,
\qquad
\lim_{s\to\infty}q^{-s_0/2}=0.
\]
This proves all the stated joint limits.
\end{proof}

\subsection{Proof of  Lemma~\ref{lem:dominant-projection-term}}
\begin{proof}
For $m=0$, the product formula for Gaussian coefficients gives
\[
W_0(k,s,\ell)
=
\frac{\prod_{j=k-\ell-s+1}^{k-\ell}(1-q^{-j})}
{\prod_{j=k-\ell-s+1}^{k-\ell}(1-q^{-(j+\ell)})}.
\]
Using
\[
\prod_j(1-x_j)\geq1-\sum_jx_j
\qquad\text{if $0\leq x_j\leq1$},
\]
we obtain
\[
1-\sum_{j=k-\ell-s+1}^{k-\ell}q^{-j}
\leq
\prod_{j=k-\ell-s+1}^{k-\ell}(1-q^{-j})
\leq1.
\]
Moreover,
\[
\sum_{j=k-\ell-s+1}^{k-\ell}q^{-j}
\leq
\sum_{j=k-\ell-s+1}^{\infty}q^{-j}
=\frac{q^{-(k-\ell-s)}}{q-1}
.
\]
Since
\[
\lim_{k-\ell-s\to\infty}
\frac{q^{-(k-\ell-s)}}{q-1}=0,
\]
the numerator product tends to one. The denominator product tends
to one by the same argument, which proves the asserted limit for
$W_0(k,s,\ell)$.

Now let $m\geq1$ be fixed. The product formula gives
\[
\frac{W_m(k,s,\ell)}{W_0(k,s,\ell)}
=q^{-m(s+\ell-m)}\genfrac{[}{]}{0pt}{}{\ell}{m}_q
\prod_{j=0}^{m-1}
\frac{q^{s-j}-1}{q^{k-\ell-s+j+1}-1}.
\]
First,
\[
\genfrac{[}{]}{0pt}{}{\ell}{m}_q
=q^{m(\ell-m)}
\prod_{i=0}^{m-1}
\frac{1-q^{-(\ell-i)}}{1-q^{-(m-i)}}
\leq
\frac{q^{m(\ell-m)}}{\prod_{i=1}^{m}(1-q^{-i})}.
\]
For all sufficiently large parameters,
\[
q^{k-\ell-s+j+1}-1
\geq(1-q^{-1})q^{k-\ell-s+j+1}
\qquad\text{if $0\leq j\leq m-1$}.
\]
Consequently,
\[
0\leq
\frac{W_m(k,s,\ell)}{W_0(k,s,\ell)}
\leq
\frac{q^{-m(k-\ell-s)-m^2}}
{(1-q^{-1})^m\prod_{i=1}^{m}(1-q^{-i})},
\]
which tends to zero because $m$ is fixed and
$k-\ell-s\to\infty$. Combining this with the asserted limit for
$W_0(k,s,\ell)$ proves the result.
\end{proof}
\subsection{Proof of  Theorem~\ref{thm:asymptotic}}
\begin{proof}
We prove only part {\rm (i)(a)}, since the other cases are handled similarly.

If the dual of $C_{\bf o}^{k,\ell}$ is even-like, then
by Corollary~\ref{HLOE},
\begin{align*}
\frac{
\operatorname{N}_{{\bf oe}}^t
(s;C_{{\bf oe}}^{k,\ell})
}{
\genfrac{[}{]}{0pt}{}{k}{s}_q
}&=
\frac{
\displaystyle\sum_{m=0}^{\min\{t,\ell\}}
q^{(s-m-1)(\ell-m)}
\genfrac{[}{]}{0pt}{}{\ell}{m}_q\operatorname{N}_{{\bf oe}}^{t-m}
(s-m;{\rm L}_{{\bf oe}}^{k-\ell})
}{
\genfrac{[}{]}{0pt}{}{k}{s}_q
}\\
&=\displaystyle\sum_{m=0}^{\min\{t,\ell\}}\frac{W_m(k,s,\ell)}{q^{\ell-m}} \cdot \frac{\operatorname{N}_{{\bf oe}}^{t-m}
(s-m;{\rm L}_{{\bf oe}}^{k-\ell})}{\genfrac{[}{]}{0pt}{}{k-\ell}{s-m}_q}\\
&\leq \displaystyle\sum_{m=0}^{\min\{t,\ell\}}W_m(k,s,\ell) \cdot \frac{\operatorname{N}_{{\bf oe}}^{t-m}
(s-m;{\rm L}_{{\bf oe}}^{k-\ell})}{\genfrac{[}{]}{0pt}{}{k-\ell}{s-m}_q}.
\end{align*}

Hence, by Proposition \ref{prop:asymptotic_limits} and Lemma \ref{lem:dominant-projection-term}, we have
\[
        \lim_{\substack{s\to\infty\\k-\ell-s\to\infty}}
        \frac{\operatorname{N}^t_{{\bf oe}}(s;C_{\bf oe}^{k,\ell})}
        {\genfrac{[}{]}{0pt}{}{k}{s}_q}=0.
\]

Similarly, we deduce that
\[
        \lim_{\substack{s\to\infty\\k-\ell-s\to\infty}}
        \frac{\operatorname{N}^t_{{\bf ee}}(s;C_{\bf oe}^{k,\ell})}
        {\genfrac{[}{]}{0pt}{}{k}{s}_q}=
        \lim_{\substack{s\to\infty\\k-\ell-s\to\infty}}
        \frac{\operatorname{N}^t_{{\bf eo}}(s;C_{\bf oe}^{k,\ell})}
        {\genfrac{[}{]}{0pt}{}{k}{s}_q}=0
\]
and 
\begin{align*}
\frac{
\operatorname{N}^t
(s;C_{{\bf oe}}^{k,\ell})
}{
\genfrac{[}{]}{0pt}{}{k}{s}_q
}&=
\frac{
\displaystyle\sum_{m=0}^{\min\{t,\ell\}}
q^{(s-m)(\ell-m)}
\genfrac{[}{]}{0pt}{}{\ell}{m}_q\operatorname{N}^{t-m}
(s-m;{\rm L}_{{\bf oe}}^{k-\ell})
}{
\genfrac{[}{]}{0pt}{}{k}{s}_q
}\\
&= \displaystyle\sum_{m=0}^{\min\{t,\ell\}}W_m(k,s,\ell) \cdot \frac{\operatorname{N}^{t-m}
(s-m;{\rm L}_{{\bf oe}}^{k-\ell})}{\genfrac{[}{]}{0pt}{}{k-\ell}{s-m}_q}.
\end{align*}
Hence
\[
        \lim_{\substack{s\to\infty\\k-\ell-s\to\infty}}
        \frac{\operatorname{N}^t(s;C_{\bf o}^{k,\ell})}
        {\genfrac{[}{]}{0pt}{}{k}{s}_q}=\frac{g_{q,\infty}}{g_{q^2,\infty}\cdot h_{q,t}}.
\]

If the dual of $C_{\bf o}^{k,\ell}$ is odd-like, by Theorem \ref{all-hull}, we have $\operatorname{N}_{\bf o}^{t}(s;C_{\bf oe}^{k,\ell})=\operatorname{N}_{\bf o}^{t}(s;C_{\bf oo}^{k,\ell})$. Therefore, \[
        \lim_{\substack{s\to\infty\\k-\ell-s\to\infty}}
        \frac{\operatorname{N}^t(s;C_{\bf oo}^{k,\ell})}
        {\genfrac{[}{]}{0pt}{}{k}{s}_q}
        =
        \lim_{\substack{s\to\infty\\k-\ell-s\to\infty}}
        \frac{\operatorname{N}_{{\bf oo}}^t
        (s;C_{\bf oo}^{k,\ell})}
        {\genfrac{[}{]}{0pt}{}{k}{s}_q}
        =
        \frac{g_{q,\infty}}
        {h_{q,t}\cdot g_{q^2,\infty}}.
        \]

This proves the assertion in part (i)(a).
\end{proof}

\subsection{Proof of  Theorem~\ref{thm:asymptotic-finite}}
\begin{proof}
By Proposition~\ref{lcd-complement-types}, write
\[
C^{k,\ell}={\rm L}^{k_0}\oplus\operatorname{Hull}(C^{k,\ell}).
\]
Let $D\subseteq C^{k,\ell}$ be an $s$-dimensional subcode with
$(-a)$-dimensional hull and put
$m=\dim(D\cap\operatorname{Hull}(C^{k,\ell}))$. The projection formula
gives $m\leq -a$, while
\[
\dim\Pi_{\rm L}(D)=s-m\leq k_0
\]
gives $m\geq s-k_0=-a$. Hence $m=-a$ and
$\Pi_{\rm L}(D)={\rm L}^{k_0}$. The projection-counting formula therefore yields
\[
\operatorname{N}^{-a}(s;C^{k,\ell})
=q^{k_0(k-s)}\genfrac{[}{]}{0pt}{}{\ell}{-a}_q.
\]
Using the product formula for Gaussian coefficients, we obtain
\[
\frac{\operatorname{N}^{-a}(s;C^{k,\ell})}{\genfrac{[}{]}{0pt}{}{k}{s}_q}
=
\frac{g_{q,\ell}g_{q,s}}{g_{q,-a}g_{q,k}}
=
\frac{\prod_{j=-a+1}^{\ell}(1-q^{-j})}
{\prod_{j=s+1}^{k}(1-q^{-j})}.
\]
If $a$ is fixed and $k_0,\ell\to\infty$, then
$s=k_0-a\to\infty$ and $k=k_0+\ell\to\infty$, which proves part~{\rm (i)}.

 From the product expression above,
\[
\frac{\operatorname{N}^{-a}(s;C^{k,\ell})}{\genfrac{[}{]}{0pt}{}{k}{s}_q}
=
\frac{\prod_{j=-a+1}^{\ell}(1-q^{-j})}
{\prod_{j=s+1}^{k}(1-q^{-j})}.
\]
This quotient is a proportion and hence is at most one. Since its
denominator is at most one, while
\[
\prod_{j=-a+1}^{\ell}(1-q^{-j})
\geq
1-\sum_{j=-a+1}^{\ell}q^{-j}
\geq
1-\sum_{j=-a+1}^{\infty}q^{-j}
=
1-\frac{q^{a}}{q-1},
\]
we obtain
\[
1-\frac{q^a}{q-1}
\leq
\frac{\operatorname{N}^{-a}(s;C^{k,\ell})}{\genfrac{[}{]}{0pt}{}{k}{s}_q}
\leq1.
\]
Letting $a\to-\infty$ proves part~{\rm (ii)}.

Now assume that $q$ is even. If $C^{k,\ell}$ is of type
${\bf oo}$, then Lemma~\ref{even-type-bookkeeping} and
Lemma~\ref{even-parity-proj} show that every subcode counted above is
of type ${\bf oo}$. If $C^{k,\ell}$ is of type ${\bf oe}$,
then every subcode counted above is odd-like. Moreover, by
Corollary~\ref{HLOE},
\[
\frac{
\operatorname{N}_{{\bf oe}}^{-a}(s;C_{{\bf oe}}^{k,\ell})
}
{\genfrac{[}{]}{0pt}{}{k}{s}_q}
=
\frac{
q^{(k_0-1)(k-s)}
\genfrac{[}{]}{0pt}{}{\ell}{-a}_q
}
{\genfrac{[}{]}{0pt}{}{k}{s}_q}
=
q^{-(k-s)}
\frac{g_{q,\ell}g_{q,s}}
{g_{q,-a}g_{q,k}},
\]
which tends to $0$. Since every subcode counted above is odd-like in
this case, the remaining counted subcodes are of type
${\bf oo}$. Hence, for every odd-like parent code,
\[
\lim_{\substack{a\to-\infty\\k-s\to\infty}}
\frac{\operatorname{N}_{{\bf oo}}^{-a}
(s;C_{\mathbf{o}}^{k,\ell})}
{\genfrac{[}{]}{0pt}{}{k}{s}_q}
=1.
\]

If $C^{k,\ell}$ is of type ${\bf eo}$, then every subcode of
$C^{k,\ell}$ is of type ${\bf eo}$. If $C^{k,\ell}$ is of type
${\bf ee}$, then every subcode counted above is even-like, and
Corollary~\ref{HLEE} gives
\[
\frac{
\operatorname{N}_{{\bf ee}}^{-a}(s;C_{{\bf ee}}^{k,\ell})
}
{\genfrac{[}{]}{0pt}{}{k}{s}_q}
=
\frac{
q^{k_0(k-s)}
\genfrac{[}{]}{0pt}{}{\ell-1}{-a-1}_q
}
{\genfrac{[}{]}{0pt}{}{k}{s}_q}.
\]
Since
\[
\lim_{\substack{a\to-\infty\\k-s\to\infty}}
\frac{\genfrac{[}{]}{0pt}{}{\ell-1}{-a-1}_q}{\genfrac{[}{]}{0pt}{}{\ell}{-a}_q}
=
\lim_{\substack{a\to-\infty\\k-s\to\infty}}
\frac{q^{-a}-1}{q^\ell-1}
=0,
\]
the type-${\bf ee}$ contribution is negligible. Since every subcode
counted above is even-like in this case, we obtain, for every even-like
parent code,
\[
\lim_{\substack{a\to-\infty\\k-s\to\infty}}
\frac{\operatorname{N}_{{\bf eo}}^{-a}
(s;C_{\mathbf{e}}^{k,\ell})}
{\genfrac{[}{]}{0pt}{}{k}{s}_q}
=1.
\]

When $q$ is odd, Lemma~\ref{same-type} shows that the type of each
subcode counted above is the same as the type of the projected subcode
${\rm L}^{k_0}$. By Proposition~\ref{lcd-complement-types}, ${\rm L}^{k_0}$ has the same
type as $C^{k,\ell}$ whenever $k_0>0$. When $k_0=0$, the parent code is
self-orthogonal and the assertion is immediate. This proves
part~{\rm (iii)}.
\end{proof}
\subsection{Proof of  Theorem~\ref{thm:bounded-gap-minimum}}
\begin{proof}
By Proposition~\ref{lcd-complement-types}, write
$
C^{k,\ell}={\rm L}^{k_0}\oplus\operatorname{Hull}(C^{k,\ell}).
$ 
Since ${\rm L}^{k_0}$ is {\rm LCD}, internal orthogonal complementation in
${\rm L}^{k_0}$ preserves the hull dimension. Hence, for every $u$ and $v$,
$
\operatorname{N}^v(u;{\rm L}^{k_0})
=
\operatorname{N}^v(k_0-u;{\rm L}^{k_0}).
$

The fixed-dimensional limits needed below are precisely those in
Proposition~\ref{prop:asymptotic_limits}. In the nonalternating
even-characteristic case, and in either type in odd characteristic,
that proposition gives, for every fixed nonnegative integer $u$,
\[
\lim_{k_0\to\infty}
\frac{\operatorname{N}^0(u;{\rm L}^{k_0})}{\genfrac{[}{]}{0pt}{}{k_0}{u}_q}
=\frac{g_{q,u}}{g_{q^2,\lfloor u/2\rfloor}}.
\]
For an even-like {\rm LCD} parent in even characteristic it gives
\[
\lim_{k_0\to\infty}
\frac{\operatorname{N}^0(2j;{\rm L}_{\mathbf{eo}}^{k_0})}
{\genfrac{[}{]}{0pt}{}{k_0}{2j}_q}
=\frac{g_{q,2j}}{g_{q^2,j}}
\quad
\text{and}
\quad
\lim_{k_0\to\infty}
\frac{\operatorname{N}^1(2j+1;{\rm L}_{\mathbf{eo}}^{k_0})}
{\genfrac{[}{]}{0pt}{}{k_0}{2j+1}_q}
=\frac{g_{q,2j+1}}{g_{q,1}g_{q^2,j}}.
\]

Suppose first that the restricted form on ${\rm L}^{k_0}$ is
nonalternating, or that $q$ is odd. A subcode with zero-dimensional
hull has zero intersection with $\operatorname{Hull}(C^{k,\ell})$.
The projection-counting formula and internal orthogonal complementation
therefore give
\[
\operatorname{N}^0(s;C^{k,\ell})
=q^{s\ell}\operatorname{N}^0(s;{\rm L}^{k_0})
=q^{s\ell}\operatorname{N}^0(a;{\rm L}^{k_0}).
\]
Moreover,
\[
\lim_{\substack{k_0\to\infty\\\ell\to\infty\\s=k_0-a}}
\frac{q^{s\ell}\genfrac{[}{]}{0pt}{}{k_0}{s}_q}{\genfrac{[}{]}{0pt}{}{k}{s}_q}
=
\lim_{\substack{k_0\to\infty\\\ell\to\infty}}
\frac{g_{q,k_0}g_{q,\ell+a}}
{g_{q,a}g_{q,k}}
=
\frac{g_{q,\infty}}{g_{q,a}}.
\]
Combining this limit with the first finite-dimensional limit proves
part~{\rm (i)(a)} and part~{\rm (ii)}. The same argument, using
the second limit, proves part~{\rm (i)(b)}.

It remains to consider an even-like parent code when $q$ is even and
$a=2j+1$ is odd. In this case $k_0$ is even and $s=k_0-a$ is odd. Every
subcode of $C_{\mathbf{e}}^{k,\ell}$ is even-like, so its Gram matrix
is alternating and has even rank. Its hull dimension therefore has
the same parity as $s$, and hence is at least one.

For hull dimension one, the projection formula has only the two terms
corresponding to intersection dimensions zero and one. Thus
\begin{align*}
\operatorname{N}^1(s;C_{\mathbf{e}}^{k,\ell})
={}q^{s\ell}\operatorname{N}^1(s;{\rm L}_{\mathbf{eo}}^{k_0})
+q^{(s-1)(\ell-1)}\genfrac{[}{]}{0pt}{}{\ell}{1}_q
\operatorname{N}^0(s-1;{\rm L}_{\mathbf{eo}}^{k_0}).
\end{align*}
Internal orthogonal complementation gives
\[
\operatorname{N}^1(s;{\rm L}_{\mathbf{eo}}^{k_0})
=\operatorname{N}^1(a;{\rm L}_{\mathbf{eo}}^{k_0}),
\qquad
\operatorname{N}^0(s-1;{\rm L}_{\mathbf{eo}}^{k_0})
=\operatorname{N}^0(a+1;{\rm L}_{\mathbf{eo}}^{k_0}).
\]
The two corresponding projection weights satisfy
\[
\lim_{\substack{k_0\to\infty\\\ell\to\infty\\s=k_0-a}}
\frac{q^{s\ell}\genfrac{[}{]}{0pt}{}{k_0}{a}_q}{\genfrac{[}{]}{0pt}{}{k}{s}_q}
=
\frac{g_{q,\infty}}{g_{q,a}}
\quad
\text{and}
\quad
\lim_{\substack{k_0\to\infty\\\ell\to\infty\\s=k_0-a}}
\frac{q^{(s-1)(\ell-1)}\genfrac{[}{]}{0pt}{}{\ell}{1}_q
\genfrac{[}{]}{0pt}{}{k_0}{a+1}_q}{\genfrac{[}{]}{0pt}{}{k}{s}_q}
=
\frac{q^{-(a+1)}g_{q,\infty}}
{g_{q,1}g_{q,a+1}}.
\]
Using the two even-like limits from
Proposition~\ref{prop:asymptotic_limits}, the required limit is therefore
\[
\frac{g_{q,\infty}}{g_{q,1}g_{q^2,j}}
+
\frac{q^{-(a+1)}g_{q,\infty}}
{g_{q,1}g_{q^2,j+1}}=\frac{g_{q,\infty}}
{g_{q,1}g_{q^2,j+1}},
\]
which proves part~{\rm (i)(c)}.

It remains to prove the complete-type assertions in even
characteristic. First let the parent be odd-like. If it has type
${\bf oo}$, Corollary~\ref{HLOO} shows that only types ${\bf oo}$ and
${\bf eo}$ occur among the subcodes counted above. If $s$ is odd, the
type-${\bf eo}$ count is zero. If $s$ is even, Lemma~\ref{lem:even-lcd-mass}
and Proposition~\ref{oo number} give
\[
\frac{\operatorname{N}_{\mathbf{eo}}^0
(s;{\rm L}_{\mathbf{oo}}^{k_0})}
{\operatorname{N}_{\mathbf{oo}}^0
(s;{\rm L}_{\mathbf{oo}}^{k_0})}
\leq\frac1{q^s-1}.
\]
Since $s=k_0-a\to\infty$, the type-${\bf eo}$ contribution tends to
zero. If the parent has type ${\bf oe}$, Corollary~\ref{oe=oo} gives
the same parity-label proportions. In addition,
Corollary~\ref{HLOE} shows that the type-${\bf oe}$ contribution has an
extra factor $q^{-\ell}$ relative to the corresponding total
projection term. Proposition~\ref{prop:asymptotic_limits}, applied after
internal orthogonal complementation to the fixed dimension $a$, shows
that the normalized {\rm LCD} count has a positive limit. Hence the
type-${\bf oe}$ contribution also tends to zero. Thus type ${\bf oo}$
is dominant for every odd-like parent.

If the parent has type ${\bf eo}$, Corollary~\ref{HLEO} shows that every
subcode has type ${\bf eo}$. If the parent has type ${\bf ee}$ and $a$
is even, Corollary~\ref{HLEE} gives
$\operatorname{N}_{\mathbf{ee}}^0(s;C_{\mathbf{ee}}^{k,\ell})=0$.
Finally, suppose that the parent has type ${\bf ee}$ and $a$ is odd.
For hull dimension one, Corollary~\ref{HLEE} and the total projection
formula give
\[
\frac{\operatorname{N}_{\mathbf{ee}}^1
(s;C_{\mathbf{ee}}^{k,\ell})}
{\operatorname{N}^1(s;C_{\mathbf{ee}}^{k,\ell})}
\leq\frac1{\genfrac{[}{]}{0pt}{}{\ell}{1}_q},
\]
which tends to zero as $\ell\to\infty$. Consequently type ${\bf eo}$
is dominant for every even-like parent. This completes the proof.
\end{proof}

\section{Examples}
The following binary and ternary examples illustrate the prescribed-hull subcode distributions and compare the best minimum distances attained in the resulting subcode layers with the corresponding global fixed-hull optima.

\begin{example}\label{ex:binary-ee-distance}
Let $C_{\bf ee}^{7,3}$ be the binary linear code generated by
\[
G=
\begin{pmatrix}
1&0&0&0&0&0&0&1&1&1&0&0\\
0&1&0&0&0&0&0&0&1&1&0&1\\
0&0&1&0&0&0&0&1&1&0&0&1\\
0&0&0&1&0&0&0&1&0&1&0&1\\
0&0&0&0&1&0&0&0&1&1&1&0\\
0&0&0&0&0&1&0&1&1&0&1&0\\
0&0&0&0&0&0&1&1&0&1&1&0
\end{pmatrix}.
\]
A direct computation gives
\[
C_{\bf ee}^{7,3}=[12,7,4]_2,
\qquad
\dim\Hull(C_{\bf ee}^{7,3})=3.
\]

Both $C_{\bf ee}^{7,3}$ and its dual are even-like, so the parent code
has type ${\bf ee}$.

We enumerated all proper nonzero subcodes of $C_{\bf ee}^{7,3}$ in
Magma \cite{magma} and classified them according to their dimensions,
hull dimensions, and complete types. The five counting columns in
Table~\ref{tab:binary-ee-distance} agree entry by entry with
Corollary~\ref{HLEE} and the corresponding total-count formula. We also
record the global fixed-hull optimum $d_2(s,t;\F_2^{12})$ and the largest
minimum distance $d_2(s,t;C_{\bf ee}^{7,3})$ attained inside the fixed
parent. The global fixed-hull optimal distances in the last column are
taken from the classification of binary codes with prescribed positive
hull dimensions in Li and Shi~\cite{Li-hull-even}, together with the
known optimal distances of binary {\rm LCD} codes for $t=0$ in
Galvez \emph{et al.}~\cite{Galvez-LCD}.
A superscript ${}^\ast$
means
\[
d_2(s,t;C_{\bf ee}^{7,3})=d_2(s,t;\F_2^{12}).
\]
All unlisted values are zero.

\begin{table}[htbp]
\centering
\caption{Subcodes of $C_{\bf ee}^{7,3}$ classified by hull dimension
and complete type, together with fixed-hull optimal-distance
comparisons.}
\label{tab:binary-ee-distance}

\renewcommand{\arraystretch}{1.15}
\scriptsize

\begin{tabular*}{\linewidth}{
@{}
c@{\hspace{0.55em}}
c@{\hspace{4.0em}}
r@{\extracolsep{\fill}}
rrrr
@{\extracolsep{0pt}\hspace{5.0em}}
rr
@{}
}
\hline
$s$ & $t$
& ${\rm N}_{\bf oo}^t(s;C_{\bf ee}^{7,3})$
& ${\rm N}_{\bf oe}^t(s;C_{\bf ee}^{7,3})$
& ${\rm N}_{\bf eo}^t(s;C_{\bf ee}^{7,3})$
& ${\rm N}_{\bf ee}^t(s;C_{\bf ee}^{7,3})$
& ${\rm N}^t(s;C_{\bf ee}^{7,3})$
& $d_2(s,t;C_{\bf ee}^{7,3})$
& $d_2(s,t;\F_2^{12})$\\
\hline

1&1&0&0&126&1&127&$12^\ast$&12\\
\hline

2&0&0&0&1280&0&1280&6&7\\
2&2&0&0&1324&63&1387&$8^\ast$&8\\
\hline

3&1&0&0&9600&320&9920&$6^\ast$&6\\
3&3&0&0&1560&331&1891&$6^\ast$&6\\
\hline

4&0&0&0&4096&0&4096&4&5\\
4&2&0&0&6080&1200&7280&$6^\ast$&6\\
4&4&0&0&240&195&435&$4^\ast$&4\\
\hline

5&1&0&0&1536&256&1792&$4^\ast$&4\\
5&3&0&0&480&380&860&$4^\ast$&4\\
5&5&0&0&0&15&15&$4^\ast$&4\\
\hline

6&2&0&0&64&48&112&$4^\ast$&4\\
6&4&0&0&0&15&15&$4^\ast$&4\\
\hline

\end{tabular*}
\end{table}
There are thirteen nonzero $(s,t)$ layers among the proper subcodes of
$C_{\bf ee}^{7,3}$. Eleven of them attain the corresponding global
fixed-hull optimum. The only exceptions are
\[
(s,t)=(2,0)
\qquad\text{and}\qquad
(s,t)=(4,0),
\]
and in both cases the gap is exactly one:
\[
d_2(2,0;C_{\bf ee}^{7,3})=6=d_2(2,0;\F_2^{12})-1,
\]
\[
d_2(4,0;C_{\bf ee}^{7,3})=4=d_2(4,0;\F_2^{12})-1.
\]

\end{example}

\begin{example}\label{ex:ternary-ns-12}
Let $C_{-}^{6,1}$ be the ternary linear code generated by
\[
G=
\begin{pmatrix}
1&0&0&0&0&0&1&0&1&0&2&2\\
0&1&0&0&0&0&0&0&1&1&1&1\\
0&0&1&0&0&0&0&1&2&1&1&0\\
0&0&0&1&0&0&1&1&1&1&0&0\\
0&0&0&0&1&0&1&2&1&0&1&0\\
0&0&0&0&0&1&1&2&0&1&0&1
\end{pmatrix}.
\]
A direct computation gives
\[
C_{-}^{6,1}=[12,6,5]_3,
\qquad
\dim\Hull(C_{-}^{6,1})=1.
\]
The determinant square class of the induced nondegenerate part is
represented by $2$. Since $2$ is a nonsquare in $\F_3$, the parent code
has type $-$. 

We exhaustively enumerated all
\[
\sum_{s=1}^{5}\genfrac{[}{]}{0pt}{}{6}{s}_3=56630
\]
proper nonzero subcodes of $C_{-}^{6,1}$ in Magma \cite{magma}.
For every occurring pair $(s,t)$, Table~\ref{tab:ternary-ns-types}
lists the numbers of type-$+$ and type-$-$ subcodes and their sum.
These counts agree entry by entry with Theorem~\ref{linear-code-count}.
In particular, if ${\rm L}_{-}^{5}$ is an LCD complement of the hull,
the computation verifies
\[
\N_3\bigl[(s,t)_\tau;(6,1)_{-}\bigr]
=
\sum_{m=0}^{\min\{t,1\}}
3^{(s-m)(1-m)}
\genfrac{[}{]}{0pt}{}{1}{m}_3
\N_3\bigl[(s-m,t-m)_\tau;(5,0)_{-}\bigr]
\]
for both $\tau=+$ and $\tau=-$.

The last two columns compare the global fixed-hull optimum
$d_3(s,t;\F_3^{12})$ with $d_3(s,t;C_{-}^{6,1})$. The benchmark values
are obtained from the known optimal distances of ternary {\rm LCD}
codes for $t=0$~\cite{A3}, the classification of ternary
one-dimensional-hull codes for $t=1$~\cite{Tong-hull-one}, and, for
$t\geq2$, the known hull distributions of optimal ternary
codes~\cite{optimal-hulls}, together with standard upper bounds and the
explicit subcodes found in our computation. For the remaining case
$d_3(5,2;\F_3^{12})$, the classification of optimal $[12,5,6]_3$ codes
in \cite{optimal-hulls} gives the upper bound $d_3(5,2;\F_3^{12})\leq5$,
whereas our exhaustive computation exhibits a hull-$2$ $[12,5,5]_3$
subcode of $C_{-}^{6,1}$. Hence
\[
d_3(5,2;\F_3^{12})=5.
\]
As before, a superscript ${}^\ast$ marks equality with the global
fixed-hull optimum, and all unlisted values are zero.

\begin{table}[htbp]
\centering
\caption{Subcodes of $C_{-}^{6,1}$ classified by hull dimension and
type, together with fixed-hull optimal-distance comparisons.}
\label{tab:ternary-ns-types}

\renewcommand{\arraystretch}{1.15}
\scriptsize

\begin{tabular*}{\linewidth}{
@{}
c@{\hspace{0.55em}}
c@{\hspace{10em}}
r@{\extracolsep{\fill}}
rr
@{\extracolsep{0pt}\hspace{10em}}
rr
@{}
}
\hline
$s$ & $t$
& ${\rm N}_{+}^t(s;C_{-}^{6,1})$
& ${\rm N}_{-}^t(s;C_{-}^{6,1})$
& ${\rm N}^t(s;C_{-}^{6,1})$
& $d_3(s,t;C_{-}^{6,1})$
& $d_3(s,t;\F_3^{12})$\\
\hline
1&0&108&135&243&$11^\ast$&11\\
1&1&121&0&121&$12^\ast$&12\\
\hline

2&0&2430&4860&7290&$8^\ast$&8\\
2&1&1116&2205&3321&$8^\ast$&8\\
2&2&400&0&400&$9^\ast$&9\\
\hline

3&0&14580&7290&21870&$7^\ast$&7\\
3&1&3510&7020&10530&$7^\ast$&7\\
3&2&120&1320&1440&$8^\ast$&8\\
3&3&40&0&40&$6^\ast$&6\\
\hline

4&0&3645&2916&6561&$6^\ast$&6\\
4&1&3780&270&4050&$6^\ast$&6\\
4&2&120&240&360&$6^\ast$&6\\
4&3&0&40&40&5&6\\
\hline

5&0&0&243&243&$5^\ast$&5\\
5&1&45&36&81&5&6\\
5&2&40&0&40&$5^\ast$&5\\
\hline

\end{tabular*}
\end{table}

Among the sixteen nonzero $(s,t)$ layers occurring in the proper
subcodes of $C_{-}^{6,1}$, fourteen attain the global fixed-hull
optimum. The only exceptions are
\[
(s,t)=(4,3)
\qquad\text{and}\qquad
(s,t)=(5,1),
\]
for which
\[
d_3(4,3;C_{-}^{6,1})=5=d_3(4,3;\F_3^{12})-1,
\]
\[
d_3(5,1;C_{-}^{6,1})=5=d_3(5,1;\F_3^{12})-1.
\]

\end{example}

\bigskip 

\noindent\textbf{Acknowledgements.}\\
The work of Linhan Xiong and Hongwei Liu was supported by NSFC (Grant Nos. 12441102 and 12271199). The work of Liren Lin was supported by NSFC (Grant Nos. 12201192).

\end{document}